\newcommand{\mynote}[3][]{\todo[caption={\sf #3}, color={%
    \ifnum#2=0 green!20
    \else\ifnum#2=1 orange!30
    \else\ifnum#2=2 yellow!20
    \else\ifnum#2=3 cyan!20
    \else magenta!20\fi\fi\fi\fi}, size=\tiny, #1]{\renewcommand{\baselinestretch}{1}\selectfont\sf#3}\xspace}
\newcommand{\imprint}{imprint\xspace}
\newcommand{\imprints}{imprints\xspace}
\newcommand{\Imprints}{Imprints\xspace}
\newcommand{\tame}{multiplicative\xspace}
\newcommand{\ratms}{rating maps\xspace}
\newcommand{\Nice}{Nice\xspace}
\newcommand{\nice}{nice\xspace}
\newcommand{\mratm}{multiplicative rating map\xspace}
\newcommand{\mratms}{multiplicative rating maps\xspace}
\newcommand{\Mratms}{Multiplicative rating maps\xspace}
\newcommand{\drat}[1]{\ensuremath{\tau^{#1}_{\alpha,\rho}}\xspace}
\newcommand{\drata}{\drat{\pbpol{\Cs}}}
\newcommand{\prin}[2]{\ensuremath{\Is[#1](#2)}\xspace}
\newcommand{\ptriv}[1]{\ensuremath{\Ps_{\mathit{triv}}[#1]}\xspace}
\newcommand{\opti}[2]{\ensuremath{\Is_{#1}[#2]}\xspace}
\newcommand{\popti}[2]{\ensuremath{\Ps_{#1}[#2]}\xspace}
\newcommand{\pocopti}{\popti{\pol{\Cs}}{\alpha,\rho}}
\newcommand{\pbpopti}{\popti{\pbpol{\Cs}}{\alpha,\rho}}
\newcommand{\tpocopti}{\popti{\pol{\Cs}}{\beta,\tau}}
\newcommand{\gpocopti}{\popti{\pol{\Cs}}{\beta,\gamma}}
\newcommand{\Hb}{\ensuremath{\mathbf{H}}\xspace}
\newcommand{\Kb}{\ensuremath{\mathbf{K}}\xspace}
\newcommand{\Lb}{\ensuremath{\mathbf{L}}\xspace}
\newcommand{\Ub}{\ensuremath{\mathbf{U}}\xspace}
\newcommand{\quotienting}{quotienting\xspace}
\newcommand{\vari}{\quotienting Boolean algebra\xspace}
\newcommand{\pvari}{quotienting lattice\xspace}
\newcommand{\pvaris}{quotienting lattices\xspace}
\newcommand{\at}{\ensuremath{\textup{AT}}\xspace}
\newcommand{\bool}[1]{\ensuremath{{Bool\/}(#1)}\xspace}
\newcommand{\pol}[1]{\ensuremath{{Pol\/}(#1)}\xspace}
\newcommand{\bpol}[1]{\ensuremath{{BPol\/}(#1)}\xspace}
\newcommand{\pbpol}[1]{\ensuremath{PBPol(#1)}\xspace}
\newcommand{\polp}[2]{\ensuremath{Pol_{#2}(#1)}\xspace}
\newcommand{\polk}[1]{\polp{#1}{k}}
\newcommand{\compc}[1]{\ensuremath{{co\/}\textup{-}#1}}
\definecolor{my1}{cmyk}{0,.6,0,0}
\definecolor{my2}{cmyk}{.3,.0,.0,.0}
\newcommand{\nat}{\ensuremath{\mathbb{N}}\xspace}
\newcommand{\Cs}{\ensuremath{\mathcal{C}}\xspace}
\newcommand{\Ds}{\ensuremath{\mathcal{D}}\xspace}
\newcommand{\Is}{\ensuremath{\mathcal{I}}\xspace}
\newcommand{\Ps}{\ensuremath{\mathcal{P}}\xspace}
\newcommand{\Fs}{\ensuremath{\mathcal{F}}\xspace}
\newcommand{\Ts}{\ensuremath{\mathcal{T}}\xspace}
\newcommand{\sic}[1]{\ensuremath{\Sigma_{#1}}\xspace}
\newcommand{\siw}[1]{\ensuremath{\Sigma_{#1}(<)}\xspace}
\newcommand{\siws}[1]{\ensuremath{\Sigma_{#1}(<,+1,min,max,\varepsilon)}\xspace}
\newcommand{\pic}[1]{\ensuremath{\Pi_{#1}}\xspace}
\newcommand{\piw}[1]{\ensuremath{\Pi_{#1}(<)}\xspace}
\newcommand{\piws}[1]{\ensuremath{\Pi_{#1}(<,+1,min,max,\varepsilon)}\xspace}
\newcommand{\bsc}[1]{\ensuremath{\mathcal{B}\Sigma_{#1}}\xspace}
\newcommand{\bsw}[1]{\ensuremath{\mathcal{B}\Sigma_{#1}(<)}\xspace}
\newcommand{\bsws}[1]{\ensuremath{\mathcal{B}\Sigma_{#1}(<,+1,min,max,\varepsilon)}\xspace}
\newcommand{\sicu}{\ensuremath{\Sigma_{1}}\xspace}
\newcommand{\picu}{\ensuremath{\Pi_{1}}\xspace}
\newcommand{\bscu}{\ensuremath{\mathcal{B}\Sigma_{1}}\xspace}
\newcommand{\sicd}{\ensuremath{\Sigma_{2}}\xspace}
\newcommand{\picd}{\ensuremath{\Pi_{2}}\xspace}
\newcommand{\bscd}{\ensuremath{\mathcal{B}\Sigma_{2}}\xspace}
\newcommand{\sict}{\ensuremath{\Sigma_{3}}\xspace}
\newcommand{\siwt}{\ensuremath{\Sigma_{3}(<)}\xspace}
\newcommand{\pict}{\ensuremath{\Pi_{3}}\xspace}
\newcommand{\bsct}{\ensuremath{\mathcal{B}\Sigma_{3}}\xspace}
\newcommand{\fo}{\ensuremath{\textup{FO}}\xspace}
\newcommand{\fow}{\ensuremath{\textup{FO}(<)}\xspace}
\newcommand{\fows}{\ensuremath{\textup{FO}(<,+1,min,max,\varepsilon)}\xspace}
\newcommand{\typ}[2]{\ensuremath{[#1]_{#2}}\xspace}
\newcommand{\ctype}[1]{\typ{#1}{\Cs}}
\DeclareMathOperator{\uclos}{\uparrow}
\newcommand{\canoc}{\ensuremath{\leqslant_\Cs}\xspace}
\newcommand{\canod}{\ensuremath{\leqslant_\Ds}\xspace}
\newcommand{\polrelp}[1]{\ensuremath{\leqslant_{#1}}\xspace}
\newcommand{\polrelk}{\polrelp{k}}
\newcommand{\inv}{\ensuremath{^{-1}}}
\tikzstyle{nor}=[minimum size=0.35cm,draw,rounded rectangle,inner sep=2pt]
\tikzstyle{nod}=[minimum size=0.35cm,draw,circle,inner sep=2pt]
\tikzstyle{nok}=[minimum size=0.45cm,draw,circle,inner sep=0pt]
\tikzstyle{nof}=[minimum size=0.35cm,draw,circle,double,double
\tikzstyle{port}=[minimum size=0.35cm,draw,thick,rectangle,inner sep=2pt]
\tikzstyle{nop}=[minimum size=0.35cm,draw,thick,rectangle,inner sep=1pt,rotate=90]
\tikzstyle{nol}=[minimum size=0.35cm,draw,rounded rectangle,inner sep=1pt,rotate=90]
\tikzstyle{ar}=[line width=0.5pt,->,double]
\tikzstyle{siar}=[line width=1.5pt,->]
\tikzstyle{ars}=[line width=1.5pt,->,double]
\theoremstyle{plain}
\newtheorem{theorem}{Theorem}[section]
\newtheorem{corollary}[theorem]{Corollary}
\newtheorem{proposition}[theorem]{Proposition}
\newtheorem{example}[theorem]{Example}
\newtheorem{lemma}[theorem]{Lemma}
\newtheorem{fct}[theorem]{Fact}
\newtheorem{remark}[theorem]{Remark}
\newtheorem*{claim}{Claim}
\title{Separating regular languages with two quantifier alternations}
\author{Thomas~Place}
\address{LaBRI, Universit\'e de Bordeaux, Institut Universitaire de France}
\email{tplace@labri.fr}
\keywords{Words, regular languages, concatenation hierarchies, first-order logic, quantifier alternation, membership, separation} 
\subjclass{F.4.1,F.4.3}
\begin{document}
\begin{abstract}
	We investigate a famous decision problem in automata theory: separation. Given a class of language \Cs, the separation problem for \Cs takes as input two regular languages and asks whether there exists a third one which belongs to \Cs, includes the first one and is disjoint from the second. Typically, obtaining an algorithm for separation yields a deep understanding of the investigated class \Cs. This explains why a lot of effort has been devoted to finding algorithms for the most prominent classes.
	
	Here, we are interested in classes within concatenation hierarchies. Such hierarchies are built using a generic construction process: one starts from an initial class called the basis and builds new levels by applying generic operations. The most famous one, the dot-depth hierarchy of Brzozowski and Cohen, classifies the languages definable in first-order logic. Moreover, it was shown by Thomas that it corresponds to the quantifier alternation hierarchy of first-order logic: each level in the dot-depth corresponds to the languages that can be defined with a prescribed number of quantifier blocks. Finding separation algorithms for all levels in this hierarchy is among the most famous open problems in automata theory.
	
	Our main theorem is generic: we show that separation is decidable for the level $\frac{3}{2}$ of any concatenation hierarchy whose basis is finite. Furthermore, in the special case of the dot-depth, we push this result to the level $\frac{5}{2}$. In logical terms, this solves separation for \sict: first-order sentences having at most three quantifier blocks starting with an existential one.
\end{abstract}

\maketitle

\section{Introduction}
\label{sec:intro}
\noindent
{\bf Context.} This paper is part of a research program whose objective is to precisely understand prominent classes of regular languages and in particular those corresponding to descriptive formalisms (such as a logic). Naturally, ``understanding'' a class \Cs is an informal objective.  In the literature,  one usually approaches this question by considering a decision problem: \emph{membership}. For a given class \Cs, one looks for procedure deciding whether some input regular language belongs to \Cs. In practice, it turns out that such an algorithm yields a deep understanding of the class \Cs. Indeed, it is an effective description of \emph{all} languages in \Cs.

This approach was originally inspired by a theorem of Sch\"utzenberger~\cite{sfo}. From this theorem, one obtains an algorithm which decides whether an input regular language is star-free, \emph{i.e.}, can be expressed with a regular expression using union, complement and concatenation, but not Kleene star. This result was highly influential for several reasons:
\begin{itemize}
	\item The star-free languages are among the most prominent sub-classes of the regular languages. In particular, it was later shown by McNaughton and Papert~\cite{mnpfo} they are exactly those which can be defined in first-order logic (\fo).
	\item Being the first result of its kind, Sch\"utzenberger theorem cemented membership as the ``right'' question when aiming to ``understand'' a given class of languages.
	\item Sch\"utzenberger developed a general methodology for tackling membership problems (which he applied to the star-free languages).
\end{itemize}
Such a success motivated researchers to investigate membership for other important classes of languages. This was quite fruitful and the question is now well-understood for several important classes. Prominent examples include the class of piecewise testable languages which was solved by Simon~\cite{simon75} or the two-variables fragment of first-order logic  which was solved by Th\'erien and Wilke~\cite{twfodeux}. However, for some other classes, membership remains wide open, despite a wealth of research work spanning several decades.

\medskip
\noindent
{\bf Concatenation hierarchies.} In the paper, we investigate one of the most famous among these open questions: the \emph{dot-depth problem} (see~\cite{jep-dd45} for a recent survey on the topic). Sch\"utzenberger's results motivated Brzozowski and Cohen~\cite{BrzoDot} to define a classification of all star-free languages: the \emph{dot-depth hierarchy}. Intuitively, one classifies the star-free languages according to the number of alternations between concatenations and complements required to define them. More precisely, the dot-depth is a particular instance of a generic construction process (which was formalized later) named \emph{concatenation hierarchies}. Such a hierarchy has only one parameter: a ``level $0$ class'' called its \emph{basis}. Once it is fixed, one builds new levels by applying two generic operations: polynomial and Boolean closure. There are two kinds of levels: half levels $\frac12,\frac32,\frac52\dots$ and full levels $0,1,2,3\dots$. For any full level $n$, the next half and full levels are built as follows:
\begin{itemize}
	\item Level $n+\frac12$ is the \emph{polynomial closure} of level~$n$. Given a class \Cs, its polynomial closure \pol\Cs is the least class of languages containing \Cs and closed under union, intersection and marked concatenation ($K,L\mapsto KaL$, where $K,L \subseteq A^*$ and $a \in A$). 
	\item Level $n+1$ is the \emph{Boolean closure} of level $n+\frac12$. Given a class \Cs, its Boolean closure \bool\Cs is the least class containing \Cs and closed under union and complement.
\end{itemize}
Thus, a concatenation hierarchy is fully determined by its basis. In the paper, we carry out a generic investigation of hierarchies with a \emph{finite} basis.

There are essentially two prominent hierarchies of this kind: the dot-depth hierarchy~\cite{BrzoDot} and Straubing-Th\'erien hierarchy~\cite{StrauConcat,TheConcat} which are both classifications of the star-free languages. This status is partially explained by their connection with the quantifier alternation hierarchies within first-order logic. One may classify first-order sentences as follows. A sentence is \sic{n} either if its prenex normal form has $n$ quantifier blocks and starts with an existential one or has strictly less than $n$ quantifier blocks. Furthermore, a sentence is \bsc{n} if it is a finite Boolean combination of \sic{n} sentences. It was shown by Thomas~\cite{ThomEqu} that the dot-depth coincides with quantifier alternation: for any $n \in \nat$, dot-depth $n$ corresponds to $\bsc{n}$ and dot-depth $n+\frac{1}{2}$ to $\sic{n+1}$. This result was later lifted to the Straubing-Th\'erien hierarchy~\cite{PPOrder}: it corresponds to the alternation hierarchy within another variant of first-order logic equipped with a slightly different signature (this does not change the overall expressive power but impacts the languages that one may define at a given level of its alternation hierarchy).

\medskip

The \emph{dot-depth problem} is to obtain membership algorithms for all levels in these two hierarchies (and therefore, for the alternation hierarchies of first-order logic as well). However, progress has been slow: until recently, only level $\frac12$~\cite{arfi87,pwdelta}, level~1~\cite{simon75,knast83} and level $\frac32$~\cite{arfi87,pwdelta,gssig2} were solved. See~\cite{dgk-smallfragments} for a survey. Following these results, membership for level~2 remained open for a long time and became famous under the name ``dot-depth two problem''.

\smallskip\noindent\textbf{Separation.} Recently solutions were found for the levels $2$ and $\frac52$ by relying on a new approach~\cite{pzqalt}. The techniques involve considering a new ingredient: a stronger decision problem called \emph{separation}. Rather than asking whether a single input language belongs to the class \Cs under investigation, the \Cs-separation problem takes as input \emph{two} regular languages. It asks whether there exists a third language \emph{belonging to \Cs} which contains the first language and is disjoint from the second. The interest in separation is recent. However, it has quickly replaced membership as the central question. This newly acquired status is explained by two main reasons.

First, separation serves as the key ingredient in all recent membership results (see~\cite{PZ:Siglog15,pzcsr17} for an overview). A simple but crucial example is the membership algorithm for the level $\frac52$ in the Straubing-Th\'erien hierarchy~\cite{pzqalt}. It is based on two distinct results:
\begin{enumerate}
	\item A generic reduction from membership for the level $n+1$ to separation for the level $n$ for any {\bf half} level $n$ in the Straubing-Th\'erien hierarchy.
	\item A separation algorithm for the level $\frac32$.
\end{enumerate}
When combined, these two results yield a membership algorithm for the level $\frac52$.

However, there is another deeper reason for working with separation instead of membership. Our primary motivation is to thoroughly understand the classes that we investigate: this is why we considered membership in the first place. In this respect, while harder, separation is also far more rewarding than membership. Intuitively, this is easily explained. A membership algorithm for some class \Cs only yields benefits for the languages of \Cs: we are able to detect them and to build a description witnessing this membership. Instead, separation algorithms are universal: their benefits apply to \textbf{\emph{all}} languages. An insightful perspective is to view separation as an approximation problem: given an input pair $(L_{1},L_{2})$, we want to over-approximate $L_{1}$ by a language in \Cs, on the other hand $L_{2}$ is the specification of what an acceptable approximation is. Altogether, the key idea is that separation yields a more robust understanding of the classes.

Separation is known to be decidable for the lower levels in the Straubing-Th\'erien and dot-depth hierarchies. This was first shown to be decidable for the levels $\frac12$, $1$, $\frac32$ and $2$ in the Straubing-Th\'erien hierarchy~\cite{martens,pvzmfcs13,pzqalt,pzboolpol}. Let us point out that while these results were first formulated independently, it was recently proved in~\cite{pzboolpol} that the four of them are corollaries of only two generic theorems:
\begin{itemize}
	\item For any \emph{finite} class \Cs, \pol\Cs-separation is decidable. The levels $\frac12$ and $\frac32$ in the Straubing-Th\'erien hierarchy are of this form (by definition for the former and by a result of~\cite{pin-straubing:upper} for the latter).
	\item For any \emph{finite} class \Cs, \bool{\pol\Cs}-separation is decidable. The levels $1$ and $2$ in the Straubing-Th\'erien hierarchy are of this form (by~\cite{pin-straubing:upper} again for the latter).
\end{itemize}
This generic approach was introduced in~\cite{pzboolpol} and we shall continue it in the paper. Finally, it is known that any separation solution for a given level in the Straubing-Th\'erien hierarchy may be lifted to the corresponding level in the dot-depth via a generic reduction~\cite{pzsucc}. Thus, separation is also decidable for the levels $\frac12$, $1$, $\frac32$ and $2$ in the dot-depth.

\medskip
\noindent
\textbf{Contributions.} Our most important result in the paper is a separation algorithm for the level 5/2 in the Straubing-Th\'erien hierarchy.  Note that from a logical point of view, this is the level \sict in the quantifier alternation hierarchy of first-order logic: sentences with at most three blocks of quantifiers (i.e. two alternations between quantifiers) and an existential leftmost block. Moreover, we also obtain two important corollaries ``for free'' from previously known results. First, we are able  to lift the result to dot-depth 5/2 using the aforementioned transfer theorem~\cite{pzsucc}. Second, we obtain from the reduction of~\cite{pzqalt} that \emph{membership} is decidable for the level 7/2 in the Straubing-Th\'erien hierarchy (this may be lifted to dot-depth as well using a result of Straubing~\cite{StrauVD}).

A crucial point is that the above result is {\bf not} our main theorem but only its most important corollary. The actual main result is more general for two distinct reasons. First, we actually consider an even stronger problem that separation: \emph{covering}. This third problem was introduced in~\cite{pzcovering,pzcovering2} as a generalization of separation. It takes two objects as input, a single regular language $L$ and a finite set of regular languages \Lb. Given some class \Cs, the \Cs-covering problem asks whether there exists a \Cs-cover of $L$ (i.e. a finite set of languages in \Cs whose union includes $L$) such that no language in this \Cs-cover intersects all languages in \Lb. It is simple to show that separation is the special case of covering when the set \Lb is a singleton. We prove that covering (and therefore separation as well) is decidable for the level 5/2 in the Straubing-Th\'erien hierarchy. Considering covering has benefits: it is arguably even more rewarding than separation and it comes with an elegant framework~\cite{pzcovering,pzcovering2} designed for tackling it (which we use to formulate all algorithms presented in the paper). However, a more fundamental motivation for considering covering is that we need to: no ``direct'' separation algorithm is known for the level 5/2.

Additionally, we follow a generic approach similar to the one used in~\cite{pzboolpol} for \pol{\Cs} and \bool{\pol{\Cs}}. Specifically, our main theorem states that covering and separation are decidable for any class of the form $\pol{\bool{\pol\Cs}}$ when \Cs is a \emph{finite} class (it follows from the result of~\cite{pin-straubing:upper} mentioned above that the level 5/2 in the Straubing-Th\'erien hierarchy is such a class). Being generic, this approach yields algorithms for a whole family of classes. Moreover, we are able to pinpoint the key hypotheses which are critical in order to solve separation and covering for the level 5/2.

Finally, we shall also reprove the theorem of~\cite{pzboolpol} for classes of the form  \pol{\Cs} when \Cs is finite: the \pol{\Cs}-covering problem is decidable. In fact, we prove a stronger theorem that we require for handling $\pol{\bool{\pol\Cs}}$.

\medskip\noindent\textbf{Organization.} Section~\ref{sec:prelims} gives preliminary definitions. We introduce classes of languages and present the decision problems that we consider. In Section~\ref{sec:concat}, we define concatenation hierarchies, state our generic separation theorem and discuss its consequences. The remainder of the paper is then devoted to proving this theorem. In Section~\ref{sec:tools}, we introduce mathematical tools that we shall need. Section~\ref{sec:covers} is then devoted to presenting the framework that we use for formulating our algorithms. Finally, we present and prove our algorithms for  \pol\Cs- and \pol{\bool{\pol{\Cs}}}-separation (when \Cs is finite) in Sections~\ref{sec:polc} and~\ref{sec:pbpol} respectively

\medskip

This paper is the journal version of~\cite{pseps3}. There are several differences between the two versions. In the conference version, the point of view was purely logical: the main theorem is the decidability of \sict-separation and concatenation hierarchies are not discussed. Here, we introduce the language theoretic point of view and concatenation hierarchies which allows us to state a more general theorem: \pol{\bool{\pol\Cs}}-separation when \Cs is finite. This statement better highlights the crucial hypotheses that are needed for solving \sict-separation. Finally, while the core ideas remain the same as those used in~\cite{pseps3}, the proof arguments and their formulation have been significantly modified in order to simplify the presentation and better pinpoint the key ideas.


\section{Preliminaries}
\label{sec:prelims}
In this section, we define classes of languages as well as the decision problems that we shall consider. Moreover, we introduce some standard terminology that we shall use.

\subsection{Classes of languages}

For the whole paper, we fix an arbitrary finite alphabet $A$. We shall denote by $A^*$ the set of all words over $A$, including the empty word~$\varepsilon$. We let $A^{+} = A^{*} \setminus \{\varepsilon\}$. If $u,v \in A^*$ are words, we write $u \cdot v$ or $uv$ for their concatenation.

A subset of $A^*$ is called a \emph{language}. For the sake of avoiding clutter we often denote the singleton language $\{u\}$ by $u$. It is standard to extend the concatenation operation to languages: for $K,L \subseteq A^*$, $KL$ denotes the language $KL = \{uv \mid u \in K \text{ and } v \in L\}$. Moreover, we also consider \emph{marked concatenation}, which is less standard. Given $K,L \subseteq A^*$, a \emph{marked concatenation of $K$ with $L$} is a language of the form $KaL$ for some $a \in A$.

A \emph{class of languages} \Cs is simply a set of languages. All classes that we consider in the paper satisfy robust \emph{closure properties} which we define now.
\begin{itemize}
	\item A \emph{lattice} is a class of languages \Cs closed under finite union and finite intersection, and such that $\emptyset \in \Cs$ and $A^* \in \Cs$.
	\item A \emph{Boolean algebra} is a lattice closed under complement.
	\item Finally, we say that a class \Cs is \emph{quotienting} when it is closed under quotients, \emph{i.e.}, when for all $L \in \Cs$ and all $w \in A^*$, the following two languages both belong to \Cs,
	\[
	w^{-1}L \stackrel{\text{def}}= \{u \in A^* \mid wu \in L\} \quad \text{and} \quad Lw^{-1} \stackrel{\text{def}}= \{u \in A^* \mid uw \in L\}
	\]
\end{itemize}

In the paper, all classes that we consider are at least \emph{quotienting lattices}. Moreover, we only consider classes which are included in the class of \emph{regular languages} (which is something that we shall implicitly assume from now on). These are the languages that can be equivalently defined by monadic second-order logic, finite automata or finite monoids (we come back to this point in Section~\ref{sec:tools}).

\subsection{Separation and covering}

Our objective in the paper is to study several specific classes of languages (which we present in Section~\ref{sec:concat}). For this purpose, we shall rely on two decision problems: separation and covering. Both of them are parametrized by an arbitrary class of languages \Cs. Let us start with the definition of separation.

\medskip
\noindent
{\bf Separation.} Given three languages $K,L_1,L_2$, we say that $K$ \emph{separates} $L_1$ from $L_2$ if $L_1 \subseteq K \text{ and } L_2 \cap K = \emptyset$. Given a class of languages \Cs, we say that $L_1$ is \emph{$\Cs$-separable} from $L_2$ if some language in \Cs separates $L_1$ from $L_2$. Observe that when \Cs is not closed under complement (which is the case for all classes investigated in the paper), the definition is not symmetrical: $L_1$ could be \Cs-separable from $L_2$ while $L_2$ is not \Cs-separable from $L_1$. The separation problem associated to a given class \Cs is as follows:

\medskip

\begin{tabular}{rl}
	{\bf INPUT:}  &  Two regular languages $L_1$ and $L_2$. \\
	{\bf OUTPUT:} &  Is $L_1$ $\Cs$-separable from $L_2$ ? 
\end{tabular}

\medskip

We use separation as a mathematical tool whose purpose is to investigate classes of languages: given a fixed class \Cs, obtaining a \Cs-separation algorithm usually requires a solid understanding of~\Cs. A typical objective when considering separation is to not only get an algorithm that decides it, but also a generic method for computing a separator, if it exists.

\begin{remark}
	\Cs-separation generalizes another classical decision problem: \Cs-membership which asks whether a single regular language $L$ belongs to \Cs. Indeed, it is simple to verify that asking whether $L \in \Cs$ is equivalent to asking whether $L$ is $\Cs$-separable from its complement (in this case, the only candidate for being a separator is $L$ itself).
\end{remark}

\noindent
{\bf Covering.} Our second problem is more general and was originally defined in~\cite{pzcovering,pzcovering2}. Given a language $L$, a \emph{cover of $L$} is a \emph{\bf finite} set of languages \Kb such that,
\[
L \subseteq \bigcup_{K \in \Kb} K
\]
Moreover, given a class \Cs, a \Cs-cover of $L$  is a cover \Kb of $L$ such that all $K \in \Kb$ belong to \Cs.

Covering takes as input a language $L$ and a \emph{finite set of languages} $\Lb$. A \emph{separating cover} for the pair $(L,\Lb)$ is a cover \Kb of $L$ such that for every $K \in \Kb$, there exists $L' \in \Lb$ which satisfies $K \cap L' = \emptyset$. Finally, given a class \Cs, we say that the pair $(L,\Lb)$ is \Cs-coverable when there exists a separating \Cs-cover. The \Cs-covering problem is now defined as follows:

\medskip

\begin{tabular}{rl}
	{\bf INPUT:}  &  A regular language $L$ and a finite set of regular languages \Lb.\\
	{\bf OUTPUT:} &  Is $(L,\Lb)$ $\Cs$-coverable ? 
\end{tabular}

\begin{remark}
	This definition is slightly different from the one of~\cite{pzcovering2} in which \Lb is defined as a ``multiset'' of languages (in the sense that it may contain several instances of the same language). This is natural since each input language in \Lb is actually given by a recognizer (such as an NFA or a monoid morphism) and two distinct recognizers may define the same language. This change is harmless: it is immediate that the two definitions are equivalent. 
\end{remark}

Let us complete this definition by explaining why covering generalizes separation: the latter is special case of the former when the set \Lb is a singleton.

\begin{fct} \label{fct:septocove}
	Let \Cs be a lattice and $L_1,L_2$ two languages. Then $L_1$ is \Cs-separable from $L_2$, if and only if $(L_1,\{L_2\})$ is \Cs-coverable.
\end{fct}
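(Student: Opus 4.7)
The plan is to prove both directions directly from the definitions, using the closure properties of the lattice \Cs only in the backward direction.

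For the forward implication, suppose $K \in \Cs$ separates $L_1$ from $L_2$, i.e., $L_1 \subseteq K$ and $K \cap L_2 = \emptyset$. I would then exhibit the singleton $\Kb = \{K\}$ as a separating \Cs-cover of the pair $(L_1, \{L_2\})$: it is finite, all its elements lie in \Cs, the union $\bigcup_{K' \in \Kb} K' = K$ contains $L_1$, and for the unique element $K \in \Kb$ the witness $L' = L_2 \in \Lb$ satisfies $K \cap L' = \emptyset$. This direction requires no closure hypotheses on \Cs.

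For the backward implication, suppose $(L_1, \{L_2\})$ is \Cs-coverable via a finite set $\Kb \subseteq \Cs$. Since $\Lb = \{L_2\}$ is a singleton, the covering condition forces $K \cap L_2 = \emptyset$ for every $K \in \Kb$. I would then define $H = \bigcup_{K \in \Kb} K$ and verify that $H$ is a \Cs-separator of $L_1$ from $L_2$: membership $H \in \Cs$ uses closure of the lattice \Cs under finite unions (together with $\emptyset \in \Cs$ to handle the corner case $\Kb = \emptyset$, in which case $L_1 = \emptyset$ and $\emptyset \in \Cs$ is itself a separator); the inclusion $L_1 \subseteq H$ is the cover condition; and $H \cap L_2 = \bigcup_{K \in \Kb}(K \cap L_2) = \emptyset$ follows from the per-element disjointness.

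There is no real obstacle here: the only subtle point is remembering the empty-cover case, which is handled because a lattice contains $\emptyset$ by definition. The proof essentially unpacks the definitions and observes that a singleton \Lb collapses the per-element disjointness requirement into a single global disjointness, which is then preserved by taking unions within \Cs.
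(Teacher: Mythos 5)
Your proof is correct and follows essentially the same route as the paper: exhibit the singleton $\{K\}$ in the forward direction and take the union $\bigcup_{K\in\Kb} K$ of the cover in the backward direction, relying on closure under finite unions. The only addition is your explicit handling of the empty-cover corner case, which the paper glosses over but is harmless in either treatment.
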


\begin{proof}
	Assume first that $L_1$ is \Cs-separable from $L_2$ and let $K\in\Cs$ be a separator. Then, it is immediate that $\Kb = \{K\}$ is a separating \Cs-cover for  $(L_1,\{L_2\})$. Conversely, assume that $(L_1,\{L_2\})$ is \Cs-coverable and let \Kb be a separating \Cs-cover. Moreover, let $K$ be the union of all languages in \Kb. We have $K \in \Cs$ since \Cs is a lattice. Clearly, $L_1 \subseteq K$ since \Kb was a cover of $L_1$. Moreover, we know that no language in \Kb intersects $L_2$ since \Kb was separating. Thus, $L_2 \cap K = \emptyset$ which means that $K \in \Cs$ separates $L_1$ from $L_2$. 
\end{proof}

\begin{remark}
	While covering is a natural generalization of separation, we consider it out of necessity. For the classes that we investigate, no ``direct'' separation algorithm is known. Our techniques require considering the covering problem.
\end{remark}

\subsection{Finite lattices and stratifications}

All classes that we consider in the paper are built from an arbitrary finite lattice (i.e. one that contains finitely many languages) using a generic construction process (presented in Section~\ref{sec:concat}). Consequently, finite lattices will be important in the paper. They have several convenient properties that we present here.

\medskip
\noindent
{\bf Canonical preorder relations.} Consider a finite lattice \Cs. One may associate a \emph{canonical preorder relation over $A^*$} to \Cs. The definition is as follows. Given $w,w' \in A^*$, we write $w \canoc w'$ if and only if the following holds:
\[
\text{For all $L \in \Cs$,} \quad w \in L \ \Rightarrow\ w' \in L.
\]
It is immediate from the definition that \canoc is transitive and reflexive, making it a preorder.

\begin{example} \label{ex:ateq}
	We let \at as the class of languages consisting of all Boolean combinations of languages $A^*aA^*$, for some $a \in A$. Though this terminology is not standard, ``\at'' stands for ``alphabet testable'': $L \in \at$ if and only if membership of a word $w$ in $L$ depends only on the set of letters occurring in $w$. Clearly, \at is a finite Boolean algebra. In that case, $\leqslant_\at$ is an equivalence relation which we denote by $\sim_\at$: one may verify that $w \sim_\at w'$ if and only if $w$ and $w'$ have the same alphabet (i.e. contain the same set of letters).
\end{example}

The relation \canoc has many applications. We start with an important lemma, which relies on the fact that \Cs is finite. We say that a language $L \subseteq A^*$ is an \emph{upper set} (for \canoc) when for any two words $u,v \in A^*$, if $u \in L$ and $u \canoc v$, then $v \in L$.

\begin{lemma} \label{lem:canosatur}
	Let $\Cs$ be a finite lattice. Then, for any $L \subseteq A^*$, we have $L \in \Cs$ if and only if $L$ is an upper set for \canoc. In particular, \canoc has finitely many upper sets.
\end{lemma}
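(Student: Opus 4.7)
The plan is to prove the two implications separately, noting that the harder direction is showing that every upper set for $\canoc$ belongs to $\Cs$.

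The forward direction is essentially by definition. If $L \in \Cs$, and $u \in L$ with $u \canoc v$, then by the very definition of $\canoc$ applied to the witness $L \in \Cs$, we obtain $v \in L$. So $L$ is an upper set.

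For the converse, the key construction is, for each word $w \in A^*$, the \emph{principal upper set} $\uclos w = \{v \in A^* \mid w \canoc v\}$. The first observation is that $\uclos w \in \Cs$ for every $w$. Indeed, let $\Cs_w = \{L \in \Cs \mid w \in L\}$; unfolding the definition of $\canoc$ gives the equality
\[
\uclos w \;=\; \bigcap_{L \in \Cs_w} L.
\]
Since $\Cs$ is finite, so is $\Cs_w$, and since $\Cs$ is a lattice (closed under finite intersection with $A^* \in \Cs$ as the empty-intersection case), we conclude $\uclos w \in \Cs$. Now suppose $L \subseteq A^*$ is an arbitrary upper set for $\canoc$. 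Using reflexivity of $\canoc$ on one side and the upper-set hypothesis on the other, one verifies that
\[
L \;=\; \bigcup_{w \in L} \uclos w.
\]
A priori this union is infinite, which is the main technical point to address. The resolution is that the map $w \mapsto \Cs_w$ takes values in the finite power set of $\Cs$, so the family $\{\uclos w \mid w \in A^*\}$ contains only finitely many distinct languages. Thus the union above can be rewritten as a \emph{finite} union of sets $\uclos w$, each lying in $\Cs$. Closure of $\Cs$ under finite union (with $\emptyset \in \Cs$ handling the empty case $L = \emptyset$) then yields $L \in \Cs$, finishing the equivalence.

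The ``in particular'' clause is an immediate consequence: the set of upper sets for $\canoc$ coincides with $\Cs$, which is finite by hypothesis. I expect no serious obstacle; the only nonroutine step is recognizing that finiteness of $\Cs$ must be used \emph{twice}, once to make each $\uclos w$ a finite intersection and once to collapse the union $\bigcup_{w \in L}\uclos w$ to a finite one.
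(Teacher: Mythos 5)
Your proof is correct and follows essentially the same route as the paper: the forward direction is immediate from the definition of $\canoc$, and the converse is argued by showing each principal upper set $\uclos w$ is a finite intersection of languages in $\Cs$ and then writing $L$ as a union of such sets. The only (welcome) difference is that you make explicit the step the paper leaves implicit, namely that $\bigcup_{w \in L} \uclos w$ has only finitely many distinct terms because the map $w \mapsto \{L \in \Cs \mid w \in L\}$ has finite image; this is precisely what is needed for the union to lie in the finite lattice $\Cs$.
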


\begin{proof}
	Assume first that $L \in \Cs$. Then, for all $w \in L$ and all $w'$ such that $w \canoc w'$, we have $w' \in L$ by definition of \canoc. Hence, $L$ is an upper set.	Assume now that $L$ is an upper set. For any word $w$, we write $\uclos w$ for the  upper set $\uclos w = \{u \mid w \canoc u\}$. By definition of \canoc $\uclos w$ is the intersection of all $L \in \Cs$ such that $w \in L$. Therefore, $\uclos w \in \Cs$ since \Cs is a finite lattice (and is therefore closed under intersection). Finally, since $L$ is an upper set, we have,
	\[
	L = \bigcup_{w \in L} \uclos w.
	\]
	Hence, since \Cs is closed under union and is finite, $L$ belongs to \Cs.
\end{proof}

While Lemma~\ref{lem:canosatur} states an equivalence, we mainly use the left to right implication (or rather its contrapositive). One may apply it to show that a language does not belong to \Cs. Indeed, by the lemma, proving that $L \not\in \Cs$ is the same as proving that $L$ is not an upper set for \canoc. In other words, one needs  to exhibit $u,v \in A^*$ such that $u \canoc v$, $u \in L$ and $v \not\in L$.

\begin{example} \label{ex:notat}
	Assume that $A = \{a,b\}$ and consider the class \at of Example~\ref{ex:ateq}. The language $L = A ^*aA^*bA^*$ does not belong to \at. Indeed, $ab \sim_\at ba$, $ab \in L$ and $ba \not\in L$.	
\end{example}

Let us complete these definitions with a few additional useful results. First, as we observed for \at in Example~\ref{ex:ateq}, when the finite lattice \Cs is actually a Boolean algebra, it turns out that \canoc is an equivalence relation, which we shall denote by $\sim_\Cs$.

\begin{lemma} \label{lem:canoequiv}
	Let $\Cs$ be a finite Boolean algebra. Then for any alphabet $A$, the canonical preorder \canoc is an equivalence relation $\sim_\Cs$ which admits the following direct definition :
	\[
	\text{$w \sim_\Cs w'$ if and only if for all $L \in \Cs$,\quad $w \in L \ \Leftrightarrow\ w' \in L$}
	\]
	Thus, for any $L \subseteq A^*$, we have $L \in \Cs$ if and only if $L$ is a union of $\sim_\Cs$-classes. In particular, $\sim_\Cs$ has finite index.
\end{lemma}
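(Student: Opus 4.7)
The plan is to derive everything from symmetry of \canoc, which is the only new ingredient beyond what Lemma~\ref{lem:canosatur} already provides. Concretely, I would first show that when \Cs is closed under complement, the preorder \canoc is symmetric and hence an equivalence relation. The argument is a one-liner: given $w \canoc w'$, to prove the reverse, pick any $L \in \Cs$ with $w' \in L$ and consider its complement $A^* \setminus L$, which belongs to \Cs since \Cs is a Boolean algebra. If $w$ were not in $L$, it would lie in $A^* \setminus L$, and then $w \canoc w'$ would force $w' \in A^* \setminus L$, contradicting $w' \in L$. So $w \in L$, giving $w' \canoc w$.

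Once symmetry is established, the direct definition in the statement is immediate, because the defining implication $w \in L \Rightarrow w' \in L$ together with its symmetric counterpart $w' \in L \Rightarrow w \in L$ is just the biconditional.

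For the characterization of membership in \Cs, I would simply invoke Lemma~\ref{lem:canosatur}: a language $L$ belongs to \Cs iff it is an upper set for \canoc. Since \canoc is now an equivalence relation, being an upper set is the same as being a union of equivalence classes, which yields the stated form. For finite index, the same lemma tells us that \canoc has finitely many upper sets; since each $\sim_\Cs$-class is itself an upper set (namely $\uclos w$ for any representative $w$) and distinct classes are disjoint, there can only be finitely many of them.

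There is no real obstacle here: the only non-formal step is the symmetry argument, and it is routine once one remembers to complement the witnessing language. The rest is a direct specialization of Lemma~\ref{lem:canosatur} to the equivalence-relation setting. A cleaner alternative finite-index argument, if desired, is to note that the map $w \mapsto \{L \in \Cs \mid w \in L\}$ takes values in the finite set $2^\Cs$ and determines the $\sim_\Cs$-class of $w$, so the number of classes is bounded by $2^{|\Cs|}$.
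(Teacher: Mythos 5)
Your proof is correct and follows essentially the same route as the paper's: the key step in both is to exploit closure under complement, applying $w \canoc w'$ to both $L$ and $A^* \setminus L$ to recover the reverse implication (you phrase it as a contradiction, the paper as a contrapositive, but these are the same argument). The concluding remarks about unions of classes and finite index are, as you note, direct consequences of Lemma~\ref{lem:canosatur}; the paper leaves them implicit.
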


\begin{proof}
	It is clear that when $w \sim_\Cs w'$, we have $w \canoc w'$ as well. We prove the reverse implication. Let $w,w' \in A^*$ be such that $w \canoc w'$. Let $L \in \Cs$ and observe that by closure under complement, we know that $A^* \setminus L \in \Cs$. Therefore, by definition of $w \canoc w'$,
	\[
	\begin{array}{lll}
	w \in L               & \Rightarrow & w' \in L               \\
	w \in A^* \setminus L & \Rightarrow & w' \in A^* \setminus L
	\end{array}
	\]
	One may now combine the first implication with the contrapositive of the second, which yields $w \in L \ \Leftrightarrow\ w' \in L$. We conclude that $w \sim_\Cs w'$: \canoc and $\sim_\Cs$ are the same relation.
\end{proof}

Another important and useful property is that when \Cs is \quotienting, the canonical preorder \canoc is compatible with word concatenation. 

\begin{lemma} \label{lem:canoquo}
	A finite lattice \Cs is \quotienting if and only if its associated canonical preorder \canoc is compatible with word concatenation. That is, for any words $u,v,u',v'$,
	\[
	u \canoc u' \quad \text{and} \quad v \canoc v' \quad \Rightarrow \quad uv \canoc u'v'.
	\]		
\end{lemma}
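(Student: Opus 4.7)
The plan is to prove the two implications separately, using for the reverse direction the characterization of membership in $\Cs$ by upper sets established in Lemma~\ref{lem:canosatur}.

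For the forward implication, assume that $\Cs$ is quotienting. Given $u \canoc u'$ and $v \canoc v'$, I would prove $uv \canoc u'v'$ by picking an arbitrary $L \in \Cs$ with $uv \in L$ and showing $u'v' \in L$. The idea is to transfer through each of the two hypotheses separately by switching to an appropriate quotient of $L$. First, the quotient $u^{-1}L$ belongs to $\Cs$ (by the quotienting hypothesis) and contains $v$; applying $v \canoc v'$ to this language yields $v' \in u^{-1}L$, that is, $uv' \in L$. Next, the quotient $L(v')^{-1}$ belongs to $\Cs$ and contains $u$; applying $u \canoc u'$ to this language yields $u' \in L(v')^{-1}$, that is, $u'v' \in L$, as desired.

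For the reverse implication, assume compatibility of $\canoc$ with concatenation, fix $L \in \Cs$ and $w \in A^*$, and show that both $w^{-1}L$ and $Lw^{-1}$ belong to $\Cs$. By Lemma~\ref{lem:canosatur}, it suffices to check that these two languages are upper sets for $\canoc$ (noting that $\Cs$ is a finite lattice, so the lemma applies and $L$ itself is an upper set). For $w^{-1}L$, take $u \in w^{-1}L$ and $u \canoc u'$; applying compatibility to the pair $(w,u) \canoc (w,u')$ (using $w \canoc w$ reflexively) gives $wu \canoc wu'$, and since $wu \in L$ and $L$ is an upper set, we get $wu' \in L$, i.e.\ $u' \in w^{-1}L$. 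The argument for $Lw^{-1}$ is symmetric, applying compatibility to $(u,w) \canoc (u',w)$.

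There is no real obstacle here; the proof is essentially a bookkeeping exercise. The only point requiring a bit of care is remembering that, in the reverse direction, one must invoke Lemma~\ref{lem:canosatur} to convert the upper-set property back into membership in $\Cs$, and this is precisely why the finiteness of the lattice is needed in the statement.
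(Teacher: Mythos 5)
Your proof is correct and follows essentially the same route as the paper: for the forward direction, transferring $uv \in L$ first through $u^{-1}L$ and then through $L(v')^{-1}$, and for the reverse direction, invoking Lemma~\ref{lem:canosatur} to reduce membership of $w^{-1}L$ in $\Cs$ to the upper-set property, which compatibility with concatenation yields directly. No discrepancies worth noting.
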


\begin{proof}
	First assume that \Cs is closed under quotients and let $u,u',v,v'$ be four words such that $u \canoc u'$ and $v \canoc v'$. We have to prove that $uv \canoc u'v'$. Let $L \in \Cs$ and assume that $uv \in L$. We use closure under left quotients to prove that $uv' \in L$ and then closure under right quotients to prove that $u'v' \in L$ which terminates the proof of this direction. Since $uv \in L$, we have $v \in u^{-1} \cdot L$. By closure under left quotients, we have $u^{-1} \cdot L \in \Cs$, hence, since $v \canoc v'$, we obtain that $v'\in u^{-1} \cdot L$ and therefore that $uv' \in L$. It now follows that $u \in L \cdot (v')^{-1}$. Using closure under right quotients, we obtain that $L \cdot (v')^{-1} \in \Cs$. Therefore, since $u \canoc u'$, we conclude that $u' \in L \cdot (v')^{-1}$ which means that $u'v' \in L$, as desired.
	
	Conversely, assume that \canoc is a precongruence.	Let $L \in \Cs$ and $w \in A^*$, we prove that $w^{-1} \cdot L \in \Cs$ (the proof for right quotients is symmetrical). By Lemma~\ref{lem:canosatur}, we have to prove that $w^{-1} \cdot L$ is an upper set. Let $u \in w^{-1} \cdot L$ and $u' \in A^*$ such that $u \canoc u'$. Since \canoc is a precongruence, we have $wu \canoc wu'$. Hence, since $L$ is an upper set (it belongs to \Cs) and $wu \in L$, we have $wu' \in L$. We conclude that $u' \in w^{-1} \cdot L$, which terminates the proof.
\end{proof}

\medskip
\noindent
{\bf Stratifications.} While the above notions are useful, the downside is that they only apply to \emph{finite} lattices. However, it is possible to lift their benefits to infinite classes with the notion of \emph{stratification}. Consider an arbitrary \emph{infinite} \pvari \Cs. A \emph{stratification} of \Cs is an infinite sequence $\Cs_0,\dots,\Cs_k,\dots$ of \emph{finite} \pvaris such that,
\[
\text{For all $k$,}\quad \Cs_k \subseteq \Cs_{k+1} \quad \text{and} \quad \Cs = \bigcup_{k \in \nat} \Cs_k.
\]
The point here is that once we have a stratification of \Cs in hand, we may now associate a canonical preorder \polrelk to each stratum $\Cs_k$.  Proving that a language $L$ does not belong to \Cs now amounts to proving that it does not belong to any stratum $\Cs_k$: for all $k \in \nat$, one needs to exhibit $u,v \in A^*$ such that $u \polrelk v$, $u \in L$ and $v \not\in L$.


\section{Concatenation hierarchies and main theorem}
\label{sec:concat}
In this section, we present the classes that we investigate in the paper. We actually consider a family of classes which are all \emph{built using a simple generic construction process}. As we shall see, this family includes several prominent classes coming from mathematical logic.

We start by defining the family of classes that we consider. Then, we present the main theorem of the paper: covering (and therefore separation as well) is decidable for any class in the family.  We illustrate this theorem by discussing its consequences and presenting the most prominent classes to which it applies.

\subsection{Closure operations}

Our generic theorem states that covering is decidable for any class which is built from a {\bf finite} \vari using a generic construction process that we present now. This construction is based on two operations that one may apply to classes of languages: \emph{Boolean closure and polynomial closure}.

\medskip
\noindent
{\bf Boolean closure.} As the name, suggests, the Boolean closure of a class \Cs (denoted by \bool{\Cs}) is simply the least Boolean algebra which contains \Cs.  The following lemma is immediate as one may verify that quotients commute with Boolean operations. 

\begin{lemma} \label{lem:bclos}
	For any \pvari \Cs, its Boolean closure \bool{\Cs} is a \vari.
\end{lemma}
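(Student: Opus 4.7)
The plan is a direct verification splitting into two independent parts. First, that $\bool{\Cs}$ is a Boolean algebra is immediate from the definition of Boolean closure: $\bool{\Cs}$ is by construction the least class containing $\Cs$ that is closed under union and complement, so intersection is derivable from these, and the emptyset and $A^*$ are inherited from $\Cs$ (which is a lattice). The only substantive content of the lemma is therefore closure of $\bool{\Cs}$ under left and right quotients.

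For that part, I would exploit the set-theoretic fact that quotients commute with all Boolean operations. Specifically, for every $w \in A^*$ and every $K, L \subseteq A^*$, one has the identities
\[
w^{-1}(K \cup L) = w^{-1}K \cup w^{-1}L, \qquad w^{-1}(A^* \setminus L) = A^* \setminus w^{-1}L,
\]
and the symmetric ones for the right quotient $Lw^{-1}$, each obtained by simply unfolding the definitions $w^{-1}L = \{u \mid wu \in L\}$ and $Lw^{-1} = \{u \mid uw \in L\}$.

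With these identities in hand, the cleanest way to conclude is the following. Fix an arbitrary $w \in A^*$ and let $\Ds_w$ be the family of all languages $L \subseteq A^*$ such that both $w^{-1}L$ and $Lw^{-1}$ belong to $\bool{\Cs}$. The identities above, together with the fact that $\bool{\Cs}$ is itself closed under union and complement, show that $\Ds_w$ is a Boolean algebra. Since $\Cs$ is quotienting, $\Cs \subseteq \Ds_w$. By minimality of the Boolean closure, $\bool{\Cs} \subseteq \Ds_w$, which is precisely the statement that every $L \in \bool{\Cs}$ has its left and right quotients by $w$ back in $\bool{\Cs}$. As $w$ was arbitrary, $\bool{\Cs}$ is quotienting.

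I expect no serious obstacle: the lemma is essentially a compatibility check between Boolean combinations and quotienting. The only trap to avoid is forgetting that membership of $\emptyset$ and $A^*$ in $\bool{\Cs}$ really comes from the hypothesis that $\Cs$ is a lattice, rather than from the Boolean closure operation itself.
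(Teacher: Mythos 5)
Your proof is correct and follows the same idea the paper leaves as a remark: the paper simply states the lemma is immediate "as one may verify that quotients commute with Boolean operations," which is exactly the content of your two identities. Your auxiliary family $\Ds_w$ is a standard and clean way to make the "by minimality" step precise, but it is the same argument, merely written out in full.
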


\medskip
\noindent
{\bf Polynomial closure.} Consider a class \Cs, its \emph{polynomial closure} (denoted by \pol{\Cs}) is the least class containing \Cs and \emph{closed under union and marked concatenation}:
\[
\text{For all $K,L \in \pol{\Cs}$ and all $a \in A$}, \quad K \cup L \in \pol{\Cs} \text{ and } KaL \in \pol{\Cs}.
\]
When the input class \Cs is a \pvari (which is the only case that we consider here), its polynomial closure \pol{\Cs} has robust properties. In particular, it was shown by Arfi~\cite{arfi87} that \pol{\Cs} is also a \pvari (the difficulty here is to prove that \pol{\Cs} is closed under intersection which is not apparent in the definition). We summarize the properties that we shall need in the following theorem whose proof is available in~\cite{pzgenconcat} (see Lemma~27 and Theorem~29).

\begin{theorem} \label{thm:pclos}
	Let \Cs be a \pvari. Then, \pol{\Cs} is a \pvari closed under concatenation and marked concatenation.
\end{theorem}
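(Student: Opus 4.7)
The plan is to establish, in order, a normal form for languages of $\pol{\Cs}$, closure under quotients, closure under concatenation, and finally closure under intersection, the last being the main obstacle. Since $\pol{\Cs}$ contains $\emptyset$ and $A^{*}$ (they already belong to $\Cs$) and is closed under union and marked concatenation by construction, these four items together will give the theorem. First I would prove the following normal form: every $L \in \pol{\Cs}$ is a finite union of \emph{marked products} $L_0 a_1 L_1 \cdots a_n L_n$ with $n \geq 0$, $L_i \in \Cs$ and $a_i \in A$ (the case $n = 0$ being a single element of $\Cs$). This is an induction on the inductive construction of $\pol{\Cs}$, using the distributive law $\bigl(\bigcup_i K_i\bigr) a \bigl(\bigcup_j L_j\bigr) = \bigcup_{i,j} K_i a L_j$. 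Once this is in hand, every remaining closure property only has to be verified on single marked products and then distributed over union.

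For closure under quotients, it suffices to compute $a^{-1}(L_0 a_1 L_1 \cdots a_n L_n)$ for $a \in A$: it equals $(a^{-1}L_0) a_1 L_1 \cdots a_n L_n$, together with the suffix $L_1 a_2 \cdots a_n L_n$ if $\varepsilon \in L_0$ and $a = a_1$. Since $\Cs$ is quotienting, the $\Cs$-factors produced remain in $\Cs$, and iterating handles arbitrary left quotients; right quotients are symmetric. For closure under (ordinary) concatenation, the product of two marked products is almost a marked product except for the gluing $L_m \cdot M_0$ of two $\Cs$-factors without a letter separator. I would dispose of this using the identity
\[
L_m M_0 \;=\; E \;\cup\; \bigcup_{a \in A} L_m \cdot a \cdot (a^{-1} M_0),
\]
where $E = L_m$ if $\varepsilon \in M_0$ and $E = \emptyset$ otherwise; each $a^{-1} M_0$ lies in $\Cs$ by the quotienting assumption, so $L_m M_0$ is a finite union of marked products, and this is enough.

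The hard part is closure under intersection. I would prove that for two marked products $P = L_0 a_1 L_1 \cdots a_m L_m$ and $Q = M_0 b_1 M_1 \cdots b_n M_n$, the intersection $P \cap Q$ is a finite union of marked products over $\Cs$, by induction on $m + n$. A word $w \in P \cap Q$ must admit both decompositions simultaneously; one case-analyses the ``first event'' in $w$, that is, whether the earliest marker position consumed is the $a_1$ of $P$, the $b_1$ of $Q$, or a common one (requiring $a_1 = b_1$). Each case exhibits $P \cap Q$ as a finite union of terms of the form $R \cdot c \cdot (P' \cap Q')$, where $c \in A$, the factor $R$ is an intersection of finitely many quotients of some of the $L_i$ and $M_j$ (thus in $\Cs$ because $\Cs$ is a quotienting lattice, hence closed under both quotients and finite intersection), and $P', Q'$ are marked products with strictly smaller total length. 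Applying the induction hypothesis together with the closure under concatenation and union already established finishes the step. The delicate point, and the main obstacle of the whole proof, is exactly this combinatorial bookkeeping of the interleavings of the two marker sequences: one must check that the decomposition terminates in a genuinely \emph{finite} union and that every $\Cs$-component produced at every recursive call remains in $\Cs$, which is precisely where the hypothesis that $\Cs$ is a quotienting lattice is used. With intersection in place, $\pol{\Cs}$ is a quotienting lattice closed under concatenation and marked concatenation, as required.
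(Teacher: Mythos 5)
For this particular theorem the paper does not supply its own proof; it points to Lemma~27 and Theorem~29 of the cited reference on generic concatenation hierarchies. So you are supplying a self-contained argument of your own, and it is essentially correct. The normal form as finite unions of marked products, the quotient computation
$a^{-1}(L_0a_1L_1\cdots a_nL_n)$, and the reduction of plain concatenation to marked concatenation via $L_mM_0 = E \cup \bigcup_{a\in A} L_m\,a\,(a^{-1}M_0)$ are all standard and check out.

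The one place where your sketch is terse enough that a careful reader would stop is the intersection step, and it is worth recording exactly how the factor $R$ lands in $\Cs$, because the naive version of the ``first event'' recursion does not decouple. Suppose the first event is the $a_1$ of $P$, strictly before the $b_1$ of $Q$, so that $w = u_0\,a_1\,z$ with $u_0\in L_0$, $z\in P' := L_1a_2\cdots a_mL_m$, and $z = v_0'\,b_1\,v_1\cdots b_nv_n$ with $u_0a_1v_0'\in M_0$, $v_j\in M_j$. The constraint $u_0a_1v_0'\in M_0$ couples $u_0$ with $v_0'$, so one cannot take $R = L_0$ and a fixed $Q'$. The fix: enumerate the finitely many distinct left quotients $M_0^{(1)},\dots,M_0^{(K)}$ of the form $(ua_1)^{-1}M_0$ (finiteness uses that all languages are regular, the paper's standing convention), and for each $k$ set
\[
R_k \;=\; L_0 \;\cap\; \bigcap_{v\in M_0^{(k)}} M_0(a_1v)^{-1},
\qquad
Q'_k \;=\; M_0^{(k)}\,b_1\,M_1\cdots b_nM_n,
\]
where the intersection defining $R_k$ again reduces to finitely many distinct right quotients of $M_0$, hence $R_k\in\Cs$. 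Observe that $R_k = \{u\in L_0 : M_0^{(k)}\subseteq (ua_1)^{-1}M_0\}$: it is crucial to use containment rather than equality, since equality would require complementation, which a quotienting lattice does not have. With this choice, if $u_0\in R_k$ and $z\in P'\cap Q'_k$ then the implied $v_0'\in M_0^{(k)}$ gives $u_0a_1v_0'\in M_0$ and hence $w\in P\cap Q$ (soundness), while conversely every Case-2 decomposition of a $w\in P\cap Q$ falls into the $k$ with $(u_0a_1)^{-1}M_0 = M_0^{(k)}$ (coverage). Since $P'\cap Q'_k$ has total marker count $(m-1)+n < m+n$, the induction closes; the remaining cases (coinciding markers, $b_1$ first, $m=0$ or $n=0$) are analogous or easier.

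So your outline is right, and the claim that ``$R$ is an intersection of finitely many quotients'' is exactly what one needs. But two things you only gesture at are genuinely load-bearing and should be said explicitly: that the union is finite only because the $L_i,M_j$ are regular (so they have finitely many quotients), and that the $\Cs$-membership of $R$ hinges on phrasing its defining condition as a containment rather than an equality of quotients.
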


For the sake of avoiding clutter, we shall write $\bpol{\Cs}$ for $\bool{\pol{\Cs}}$ and $\pbpol{\Cs}$ for $\pol{\bpol{\Cs}}$. It was shown in~\cite{pzboolpol} that for any {\bf finite} \vari \Cs, separation\footnote{In fact, while this is not explicitly stated in the paper, the arguments of~\cite{pzboolpol} apply to covering as well.} is decidable for both \pol{\Cs} and \bpol{\Cs}.

Our main theorem in the paper extends these results: we show that for any {\bf finite} \vari \Cs, separation and covering are decidable for the class \pbpol{\Cs}.  Moreover,  we also obtain a new proof for the decidability of \pol{\Cs}-covering as a side result (handling \pbpol{\Cs} requires a stronger result for \pol{\Cs} than the decidability of covering).

\begin{remark}
	The arguments that we use here for handling \pbpol{\Cs} are mostly independent from the ones used in~\cite{pzboolpol} for \bpol{\Cs}. The key idea is that both results build on our knowledge of the simpler class \pol{\Cs}. However, they do so in orthogonal directions.	  
\end{remark}

Before we state our main theorem properly, let us finish the definitions with two useful results about classes of the form \pbpol{\Cs} that we shall need later. A first result is that such classes always contain all finite languages.

\begin{lemma} \label{lem:pbpfinite}
	Let \Cs be an arbitrary lattice. Then, \bpol{\Cs} (and therefore \pbpol{\Cs} as well) contains all finite languages.
\end{lemma}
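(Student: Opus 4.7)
\textbf{Plan of proof for Lemma~\ref{lem:pbpfinite}.} Since $\bpol{\Cs}$ is a Boolean algebra (hence closed under finite union), it suffices to show that every singleton language $\{w\} \subseteq A^*$ belongs to $\bpol{\Cs}$; the statement for $\pbpol{\Cs}$ then follows because $\bpol{\Cs} \subseteq \pol{\bpol{\Cs}} = \pbpol{\Cs}$ by the very definition of polynomial closure.

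The idea is to intersect a ``subword'' language coming from $\pol{\Cs}$ with a length bound coming from $\bpol{\Cs}$. First I would exploit that $A^* \in \Cs \subseteq \pol{\Cs}$ (since $\Cs$ is a lattice) and that $\pol{\Cs}$ is closed under marked concatenation. Iterating marked concatenation, for any word $w = a_1 \cdots a_n$ the language
\[
S_w \stackrel{\text{def}}= A^* a_1 A^* a_2 A^* \cdots A^* a_n A^*
\]
lies in $\pol{\Cs}$, and is exactly the set of words admitting $w$ as a subsequence. In particular for $n=1$ this shows $A^* a A^* \in \pol{\Cs}$ for every $a \in A$.

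Next I would verify that, for every $n \in \nat$, the language $A^{\leq n}$ of words of length at most $n$ belongs to $\bpol{\Cs}$. Indeed its complement $A^{>n}$ is the finite union $\bigcup_{b_1,\dots,b_{n+1}\in A} S_{b_1\cdots b_{n+1}}$, which lies in $\pol{\Cs}$ (finite unions are allowed in $\pol{\Cs}$ and the alphabet is finite); complementation is permitted in $\bpol{\Cs}$, giving $A^{\leq n} \in \bpol{\Cs}$.

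The final step is the identity
\[
\{w\} = S_w \cap A^{\leq n}, \quad \text{where } n = |w|,
\]
which holds because any word of length at most $n$ that contains the fixed length-$n$ word $w$ as a subsequence must coincide with $w$. The right-hand side lies in $\bpol{\Cs}$: it is the intersection of a member of $\pol{\Cs} \subseteq \bpol{\Cs}$ with a member of $\bpol{\Cs}$, and $\bpol{\Cs}$ is closed under intersection (being a Boolean algebra). Since arbitrary finite languages are finite unions of singletons, they too lie in $\bpol{\Cs}$, and hence in $\pbpol{\Cs}$. There is no real obstacle here; the only point requiring care is to observe that the length bound $A^{\leq n}$ is already obtainable at the Boolean-closure stage, rather than needing to live in $\pol{\Cs}$ itself (which in general it need not).
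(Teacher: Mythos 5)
Your proof is correct and follows essentially the same route as the paper: the paper's proof uses the identity $\{w\} = L_w \setminus \bigl(\bigcup_{|u| = n+1} L_u\bigr)$ where $L_u = A^*a_1A^*\cdots A^*a_nA^*$, which is literally $S_w \cap A^{\leq n}$ in your notation. Your writeup merely makes the intermediate object $A^{\leq n} \in \bpol{\Cs}$ explicit before intersecting, but the decomposition and the key observation are identical.
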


\begin{proof}
	Since \bpol{\Cs} is closed under union, it suffices to show that for any $w \in A^*$, the singleton $\{w\}$ belongs to \bpol{\Cs}. Consider a word $u \in A^*$, and let $u = a_1 \cdots a_n$ be the decomposition of $u$ as a concatenation of letters. Since $\Cs$ is a lattice, we have $A^* \in \Cs$. Thus, $L_u = A^*a_1A^*a_2 \cdots A^*a_nA^* \in \pol{\Cs}$. One may now verify that,
	\[
	\{w\} = L_w \setminus \left(\bigcup_{\{u \in A^* \mid |u| = n+1\}}L_u\right) \in \bpol{\Cs}
	\]
	This concludes the proof.
\end{proof}

The second result is used to bypass Boolean closure in the definition of \pbpol{\Cs}. Given any class \Ds, we write $\compc{\Ds}$ the \emph{complement class} of \Ds. That is, $\compc{\Ds}$ contains all complement languages $A^* \setminus L$ where $L \in \Ds$. One may verify the following fact.

\begin{fct} \label{fct:compclass}
	For any \pvari \Ds, the complement class $\compc{\Ds}$ is a \pvari.
\end{fct}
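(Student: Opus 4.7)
The plan is to verify each defining closure property of a \pvari for $\compc{\Ds}$ by reducing it to the corresponding property of $\Ds$ via De Morgan's laws. A language belongs to $\compc{\Ds}$ precisely when its complement in $A^*$ lies in $\Ds$, so every check boils down to complementing the relevant identity.

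First I would handle the lattice requirements. The inclusions $\emptyset, A^* \in \compc{\Ds}$ follow from $\emptyset = A^* \setminus A^*$ and $A^* = A^* \setminus \emptyset$, using that $\Ds$ is a lattice and thus contains both $\emptyset$ and $A^*$. For closure under union and intersection, given $K_1 = A^* \setminus L_1$ and $K_2 = A^* \setminus L_2$ in $\compc{\Ds}$ with $L_1,L_2 \in \Ds$, De Morgan's laws yield
\[
K_1 \cup K_2 = A^* \setminus (L_1 \cap L_2) \quad\text{and}\quad K_1 \cap K_2 = A^* \setminus (L_1 \cup L_2),
\]
and the right-hand sides belong to $\compc{\Ds}$ because $\Ds$ is closed under both union and intersection.

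It then remains to check closure under quotients. The key identity is that for every $w \in A^*$ and every $L \subseteq A^*$,
\[
w^{-1}(A^* \setminus L) = A^* \setminus (w^{-1}L) \quad\text{and}\quad (A^* \setminus L)w^{-1} = A^* \setminus (Lw^{-1}),
\]
which both follow directly from the definition of left and right quotients (a word $u$ satisfies $wu \notin L$ iff $u \notin w^{-1}L$, and symmetrically). Applied to $K = A^* \setminus L \in \compc{\Ds}$, these identities together with the fact that $\Ds$ is quotienting show $w^{-1}K, Kw^{-1} \in \compc{\Ds}$. There is no real obstacle: the proof is entirely mechanical, relying only on De Morgan duality and the observation that complementation commutes with taking quotients.
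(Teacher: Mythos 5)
Your proof is correct and is exactly the routine verification the paper leaves implicit ("one may verify the following fact"): you check the lattice axioms via De Morgan's laws and closure under quotients via the identity $w^{-1}(A^*\setminus L) = A^*\setminus(w^{-1}L)$ and its right-quotient analogue, both of which hold by a direct unfolding of the definitions. Nothing to add.
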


We may now state our second result:  Boolean closure may be replaced by complement in the definition of \pbpol{\Cs}.

\begin{lemma} \label{lem:comptrick}
	Let \Cs be a Boolean algebra. Then, $\pbpol{\Cs} = \pol{\compc{\pol{\Cs}}}$.
\end{lemma}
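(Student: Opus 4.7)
The plan is to prove two inclusions. The easy direction is $\pol{\compc{\pol{\Cs}}} \subseteq \pbpol{\Cs}$: since every complement of a language in $\pol{\Cs}$ is trivially a Boolean combination of languages in $\pol{\Cs}$, we have $\compc{\pol{\Cs}} \subseteq \bpol{\Cs}$, and applying $\pol{\cdot}$ (which is monotone) gives the inclusion.

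For the reverse inclusion $\pbpol{\Cs} \subseteq \pol{\compc{\pol{\Cs}}}$, the strategy is to first prove the stronger statement $\bpol{\Cs} \subseteq \pol{\compc{\pol{\Cs}}}$, and then apply $\pol{\cdot}$ to both sides (using that $\pol{\cdot}$ is idempotent when applied to a class that is already closed under union and marked concatenation, which $\pol{\compc{\pol{\Cs}}}$ is by Theorem~\ref{thm:pclos}). To establish $\bpol{\Cs} \subseteq \pol{\compc{\pol{\Cs}}}$, I first observe that $\pol{\compc{\pol{\Cs}}}$ contains both $\pol{\Cs}$ and $\compc{\pol{\Cs}}$. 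The second is trivial. For the first, I use the hypothesis that $\Cs$ is a \emph{Boolean algebra}: given $L \in \Cs$, the complement $A^* \setminus L$ also lies in $\Cs \subseteq \pol{\Cs}$, so $L = A^* \setminus (A^* \setminus L) \in \compc{\pol{\Cs}} \subseteq \pol{\compc{\pol{\Cs}}}$. This exhibits $\Cs \subseteq \compc{\pol{\Cs}}$, whence $\pol{\Cs} \subseteq \pol{\compc{\pol{\Cs}}}$ by monotonicity.

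Now I would close the argument via a standard disjunctive normal form calculation. By Fact~\ref{fct:compclass}, $\compc{\pol{\Cs}}$ is a \pvari, so by Theorem~\ref{thm:pclos} the class $\pol{\compc{\pol{\Cs}}}$ is a \pvari as well, in particular a lattice (closed under finite unions and finite intersections). Any $H \in \bpol{\Cs}$ can be written, using de Morgan, as a finite union of finite intersections of languages from $\pol{\Cs} \cup \compc{\pol{\Cs}}$. Since both of these are contained in $\pol{\compc{\pol{\Cs}}}$ and the latter is closed under finite unions and intersections, we conclude $H \in \pol{\compc{\pol{\Cs}}}$, as desired.

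The only subtlety is making sure the containment $\Cs \subseteq \compc{\pol{\Cs}}$ really uses the Boolean algebra hypothesis; without it, $\Cs$ need not be closed under complement and this step breaks down. No step is technically hard, the proof is essentially a bookkeeping exercise once the right observation is made.
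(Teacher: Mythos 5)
Your proof is correct and follows essentially the same route as the paper's: both directions, the reduction to $\bpol{\Cs} \subseteq \pol{\compc{\pol{\Cs}}}$, the observation that $\Cs \subseteq \compc{\pol{\Cs}}$ (using the Boolean algebra hypothesis), and the DeMorgan/closure-under-lattice-operations finish are all the same. You merely spell out in a bit more detail why $\Cs \subseteq \compc{\pol{\Cs}}$ and why $\pol{\cdot}$ is idempotent here, which the paper leaves as short parenthetical remarks.
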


\begin{proof}
	By definition, it is clear that $\compc{\pol{\Cs}} \subseteq \bpol{\Cs}$. Hence, the right to left inclusion $\pol{\compc{\pol{\Cs}}} \subseteq \pbpol{\Cs}$ is immediate and we may concentrate on the converse one.
	
	We show that $\bpol{\Cs} \subseteq \pol{\compc{\pol{\Cs}}}$. It will then follow that we have $\pbpol{\Cs} \subseteq \pol{\pol{\compc{\pol{\Cs}}}}= \pol{\compc{\pol{\Cs}}}$ as desired. Let $L$ be a language in \bpol{\Cs}. By definition, $L$ is a Boolean combination of languages in \pol{\Cs}. Therefore, it follows from DeMorgan's laws that $L$ is built by applying unions and intersections to languages that are either in \pol{\Cs} or in $\compc{\pol{\Cs}}$. Moreover, observe that $\pol{\Cs} \subseteq \pol{\compc{\pol{\Cs}}}$ (since \Cs is a Boolean algebra, we have $\Cs \subseteq \compc{\pol{\Cs}}$) and $\compc{\pol{\Cs}} \subseteq \pol{\compc{\pol{\Cs}}}$. Therefore, since $\pol{\compc{\pol{\Cs}}}$ is closed under union and intersection, we have $L \in \pol{\compc{\pol{\Cs}}}$.
\end{proof}

\subsection{Main theorem} We are now ready to state the main theorem of the paper: \pbpol{\Cs}-covering is decidable when \Cs is a finite \vari.

\begin{theorem} \label{thm:maintheo}
	Let \Cs be a finite \vari. Then covering and separation are both decidable for \pol{\Cs} and \pbpol{\Cs}.
\end{theorem}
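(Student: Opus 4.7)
The plan is to build everything inside the covering framework that the paper is clearly setting up in Sections~\ref{sec:tools} and~\ref{sec:covers}: for each covering instance $(L,\Lb)$, fix a monoid morphism $\alpha\colon A^*\to M$ recognizing $L$ and a rating map $\rho$ encoding $\Lb$, and reduce $\Cs$-coverability to computing a finite object $\popti{\Cs}{\alpha,\rho}$, the \emph{optimal imprint}, which records exactly the rating elements that separating $\Cs$-covers can avoid. Decidability of $\Cs$-covering then amounts to computing this optimal imprint effectively from $\alpha$ and $\rho$ and checking a trivial condition on it. The whole theorem thereby splits into computing $\popti{\pol{\Cs}}{\alpha,\rho}$ and $\popti{\pbpol{\Cs}}{\alpha,\rho}$.

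For $\pol{\Cs}$ with $\Cs$ a finite \vari, I would give a fixpoint characterization of $\popti{\pol{\Cs}}{\alpha,\rho}$. The base case uses that $\sim_\Cs$ has finite index (Lemma~\ref{lem:canoequiv}) so that $\Cs$-imprint elements can be enumerated from the $\Cs$-type of words. The inductive step expresses closure under marked concatenation: imprint elements witnessed by words $u$ and $v$ combine into an imprint element for $uav$ whenever $\alpha(u)$, $\alpha(v)$, and the $\Cs$-types of the boundary factors are compatible. Because both $M$ and the rating monoid are finite, iterating this saturation rule terminates, and one shows it converges to the optimal imprint; this is the statement that Section~\ref{sec:polc} is meant to establish, in a form strong enough to be reused for $\pbpol{\Cs}$.

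For $\pbpol{\Cs}$, the guiding idea is Lemma~\ref{lem:comptrick}: rewrite $\pbpol{\Cs}=\pol{\compc{\pol{\Cs}}}$ and view this as another polynomial closure. The new basis $\compc{\pol{\Cs}}$ is a \pvari (Fact~\ref{fct:compclass}) but is \emph{infinite}, so the previous finite-basis argument does not apply directly. My plan is to work with a stratification $\Ds_0\subseteq\Ds_1\subseteq\cdots$ of $\compc{\pol{\Cs}}$ by finite sub-\pvaris, obtaining canonical preorders $\polrelk$, and to adapt the saturation scheme of $\pol{\Cs}$ by parametrizing the concatenation step with $\polrelk$ instead of $\sim_\Cs$. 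The key leverage is that deciding $u\polrelk v$ can be reformulated as a $\pol{\Cs}$-separation question, which is already decidable by the first half of the theorem; hence each stratum is effectively usable. One then shows that the sequence of stratum-level optimal imprints is monotone, lives in a finite lattice bounded by the rating monoid, and therefore stabilizes at some explicit $k$ computable from $\alpha$ and $\rho$. At that $k$ the computed object equals $\popti{\pbpol{\Cs}}{\alpha,\rho}$.

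The main obstacle I expect is precisely this coupling between an infinite basis and a finite termination bound: proving that the approximation by strata actually converges, and does so at a computable stage. This is what forces a strictly stronger statement about $\pol{\Cs}$ than bare decidability of covering, namely a uniform, effectively computable description of what it means for $u$ and $v$ to agree on all $\pol{\Cs}$-separators definable at a given stratum. Without such a refined $\pol{\Cs}$ theorem, the stratification strategy for $\pbpol{\Cs}$ would be purely existential rather than algorithmic; with it, the decidability of $\pbpol{\Cs}$-covering follows by a single, if technically involved, reduction.
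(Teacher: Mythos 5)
Your treatment of $\pol{\Cs}$ is broadly aligned with the paper's: fix a $\Cs$-compatible morphism $\alpha$ and a \nice \mratm $\rho$, reduce covering to computing $\pocopti$, and characterize $\pocopti$ as a least fixpoint. Note, however, that your description of the inductive step as ``closure under marked concatenation'' only amounts to the multiplication rule, which is generic (Lemma~\ref{lem:satur}). The operation specific to $\pol{\Cs}$ is the insertion rule for idempotent pairs, $(e,f)\in S \Rightarrow (e, f\cdot\rho(\ctype{e})\cdot f)\in S$; without something of that shape the least fixpoint is too small. That rule is also where the period of $\Cs$ and the stratification $\polk{\Cs}$ enter the soundness proof, via Lemma~\ref{lem:hintro:propreo2}.

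Your route to $\pbpol{\Cs}$ is genuinely different from the paper's, and it has a gap that I do not see how to close. You stratify $\compc{\pol{\Cs}}$ by finite \pvaris $\Ds_k=\compc{\polk{\Cs}}$, compute $\popti{\pol{\Ds_k}}{\alpha,\rho}$ for each $k$ with the $\pol{\Cs}$-algorithm, observe that the sequence is decreasing in a finite lattice, and conclude that it stabilizes at the right value. The problem is detecting convergence. Monotonicity plus finiteness guarantees \emph{eventual} stabilization at $\pbpopti$, but a decreasing sequence in a finite lattice can plateau and then drop again; nothing in your sketch rules out such false plateaus, and nothing gives an a priori computable bound $K$ with $\popti{\pol{\Ds_K}}{\alpha,\rho}=\pbpopti$. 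Without either a no-false-plateau lemma or an explicit bound, the procedure you describe never knows when to stop, so it does not yield decidability. You flag this as ``the main obstacle'' but then assert it can be overcome; it is precisely the crux, and resolving it is where all the work would have to go.

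You have also misidentified what the ``stronger result about $\pol{\Cs}$'' needs to be. You propose it is an effective description of the stratum preorders $\polrelk$; but that information is already computable since each $\Ds_k$ is a finite class. What the paper actually needs, and proves, is that the characterization of $\pocopti$ holds for \emph{arbitrary} (in particular, non-\nice) \mratms. This matters because the paper's $\pbpol{\Cs}$ argument introduces an auxiliary \mratm $\tau=\drata\colon 2^{A^*}\to 2^{M\times R}$ encoding the $\pbpol{\Cs}$-optimal imprints of the sets $K\cap\alpha\inv(s)$, and then characterizes $\pbpopti$ and $\tpocopti$ by a single mutually recursive least fixpoint. The map $\tau$ is not \nice (Remark~\ref{ex:weak:notnice}), so a merely effective $\pol{\Cs}$-covering algorithm would be useless here; one needs the non-effective but fully general Theorem~\ref{thm:half:mainpolc}. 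That mutual recursion is also what replaces your external convergence argument: the fixpoint is computed over the finite set $M\times R$ together with $R\times 2^{M\times R}$ and terminates for the usual saturation reasons, with soundness and completeness proved directly rather than by limiting over strata.
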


All remaining sections are devoted to the proof of Theorem~\ref{thm:maintheo}. We use Section~\ref{sec:tools} to introduce some mathematical tools that we shall need in our proofs: Simon's factorization theorem~\cite{simonfacto,kfacto} and a generic stratification for \pol{\Cs} when \Cs is a finite \pvari. Then, we devote  Section~\ref{sec:covers} to presenting a general framework which is designed for handling covering problems (it was originally introduced in~\cite{pzcovering,pzcovering2}). We shall use this framework to formulate our two covering algorithms. Finally, Section~\ref{sec:polc} and~\ref{sec:pbpol} are devoted to presenting and proving our covering algorithms for \pol{\Cs} and \pbpol{\Cs} respectively. However, before we turn to the proof of Theorem~\ref{thm:maintheo}, let us discuss its applications.

\subsection{Concatenation hierarchies} The polynomial and Boolean closure operations are important as they are involved in the definition of natural hierarchies of classes of languages: \emph{concatenation hierarchies}. Here, we briefly recall what they are (we refer the reader to~\cite{pinbridges,pwdelta,jep-dd45,StrauDD2,pzgenconcat} for details). Such a hierarchy depends on a single parameter: a \vari of regular languages~\Cs, called its \emph{basis}. Once the basis is chosen, the construction is uniform. Languages are classified into levels of two kinds: full levels (denoted by 0, 1, 2,$\dots$) and half levels (denoted by 1/2, 3/2, 5/2,$\dots$):
\begin{itemize}
	\item Level $0$ is the basis (\emph{i.e.}, our parameter class \Cs).
	\item Each \emph{half level} $n+1/2$, for $n\in\nat$, is the \emph{polynomial closure} of the previous full level, \emph{i.e.}, of level $n$.
	\item Each \emph{full level} $n+1$, for $n\in\nat$, is the \emph{Boolean closure} of the previous half level, \emph{i.e.}, of level $n+1/2$.
\end{itemize}

\begin{figure}[!htb]
	\centering
	\begin{tikzpicture}[scale=.9]
	\node[anchor=east] (l00) at (0.0,0.0) {{\large $0$}};
	
	\node[anchor=east] (l12) at (2.0,0.0) {\large $1/2$};
	\node[anchor=east] (l11) at (4.0,0.0) {\large $1$};
	\node[anchor=east] (l32) at (6.0,0.0) {\large $3/2$};
	\node[anchor=east] (l22) at (8.0,0.0) {\large $2$};
	\node[anchor=east] (l52) at (10.0,0.0) {\large $5/2$};
	
	\draw[very thick,->] (l00) to node[above] {$Pol$} (l12);
	\draw[very thick,->] (l12) to node[below] {$Bool$} (l11);
	\draw[very thick,->] (l11) to node[above] {$Pol$} (l32);
	\draw[very thick,->] (l32) to node[below] {$Bool$} (l22);
	\draw[very thick,->] (l22) to node[above] {$Pol$} (l52);
	
	\draw[very thick,dotted] (l52) to ($(l52)+(1.0,0.0)$);
	\end{tikzpicture}
	\caption{A concatenation hierarchy}
	\label{fig:hintro:concat}
\end{figure}

Therefore, Theorem~\ref{thm:maintheo} may be reformulated as follows: if \Cs is a {\bf finite} basis, separation and covering are decidable for the levels 1/2 and 3/2 in the associated concatenation hierarchy. There are two prominent examples of concatenation hierarchies with a finite basis:
\begin{itemize}
	\item The \emph{Straubing-Th\'erien} hierarchy~\cite{StrauConcat,TheConcat}, its basis is $\{\emptyset,A^*\}$.
	\item The \emph{dot-depth} hierarchy of Brzozowski and Cohen~\cite{BrzoDot}, its basis is $\{\emptyset,\{\varepsilon\},A^+,A^*\}$.
\end{itemize}

These two hierarchies are strict~\cite{BroKnaStrict}. Solving separation (and membership) for all their levels is a longstanding open problem. While it was already known that separation is decidable for the levels 1/2, 1, 3/2 and 2 for both hierarchies~\cite{martens,pvzmfcs13,pzqalt,pzboolpol,pzsucc}, it turns out that for these two hierarchies, we are able to lift these results to the level 5/2.

Recall that \at denotes the class of alphabet testable languages: it consists of all Boolean combinations of languages $A^*aA^*$, for some letter $a \in A$ (see Example~\ref{ex:ateq}). It is straightforward to verify that \at is a finite \vari. The following theorem was shown in~\cite{pin-straubing:upper} (see also Theorem~67 in~\cite{pzgenconcat} for a recent proof). 

\begin{theorem} \label{thm:alphatrick}
	Every level $n \geq$ 3/2 (half or full) in the Straubing-Th\'erien hierarchy corresponds exactly to the level $n-1$ in the concatenation hierarchy of basis \at.
\end{theorem}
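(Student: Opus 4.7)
I would proceed by induction on $n$, reducing the whole theorem to the single base case $n = 3/2$, namely the identity
\[
\pol{\bpol{\{\emptyset,A^*\}}} = \pol{\at}.
\]
The inductive step is purely formal: both hierarchies are built by alternately applying $\pol$ (at half levels) and $\bool$ (at full levels), so once ST level $n$ equals AT level $n-1$, the same operator applied to both sides yields ST level $n+\frac12$ = AT level $n-\frac12$ (respectively ST level $n+1$ = AT level $n$), using Theorem~\ref{thm:pclos} and Lemma~\ref{lem:bclos}. Thus the entire content of the theorem is concentrated in the base case.

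\textbf{Base case, easy direction.} Since $A^*aA^* \in \pol{\{\emptyset,A^*\}}$ for every letter $a \in A$ and $\at$ is by definition the Boolean algebra generated by these languages, we have $\at \subseteq \bpol{\{\emptyset,A^*\}}$. Applying $\pol$ yields $\pol{\at} \subseteq \pol{\bpol{\{\emptyset,A^*\}}}$.

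\textbf{Base case, hard direction.} I would prove the stronger inclusion $\bpol{\{\emptyset,A^*\}} \subseteq \pol{\at}$; applying $\pol$ to both sides and invoking the idempotence $\pol{\pol{\Ds}} = \pol{\Ds}$ (both sides are the least class containing $\Ds$ closed under union and marked concatenation) then gives the desired inclusion. Now $\bpol{\{\emptyset,A^*\}}$ is precisely the class of piecewise testable languages, i.e.\ Boolean combinations of the generators $A^*a_1A^* \cdots A^*a_kA^*$. The key ingredient is the classical normal form: every piecewise testable language is a finite union of languages of the form
\[
B_0^* \, a_1 \, B_1^* \, a_2 \cdots a_n \, B_n^*
\]
with $B_i \subseteq A$ and $a_i \in A$. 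Granting this, each $B_i^*$ lies in $\at$ (for instance $B_i^* = \bigcap_{b \notin B_i}(A^* \setminus A^*bA^*)$), so each such language is a marked concatenation of $\at$-languages and therefore belongs to $\pol{\at}$. Closure of $\pol{\at}$ under union then concludes the argument.

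\textbf{Main obstacle.} The crux is the normal form in the hard direction. It can be derived either from Simon's theorem (piecewise testable $\Leftrightarrow$ $\mathcal{J}$-trivial syntactic monoid) by explicitly reconstructing each element of the syntactic monoid as a concatenation of alphabet tests, or, more elementarily, by showing that the piecewise equivalence $\sim_k$ (two words sharing the same subsequences of length at most $k$) has finite index and describing each $\sim_k$-class via a ``skeleton'' $a_1 \cdots a_n$ decorated by the alphabets $B_i$ occurring between consecutive skeleton positions. The technical core is verifying that this combinatorial description genuinely accounts for \emph{every} Boolean combination of piecewise generators; this is precisely where the identity trades the use of complement for the finer structure of marked concatenations of $\at$-languages, and everything else in the proof is essentially formal manipulation of closure operators.
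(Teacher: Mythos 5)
The paper itself does not prove Theorem~\ref{thm:alphatrick}; it cites the result from~\cite{pin-straubing:upper} and points to a proof in~\cite{pzgenconcat}, so there is no in-paper argument to compare against. Your proposal is nonetheless sound: reducing the whole statement to the base case $\pol{\bpol{\{\emptyset,A^*\}}} = \pol{\at}$ is the right move, since both hierarchies propagate by alternating the same two operators, and your easy direction together with the appeal to $\pol{\pol{\Ds}}=\pol{\Ds}$ is correct.

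On the crux you flagged: invoking the monomial normal form for piecewise testable languages is correct but heavier than necessary, and in fact it is very close to the statement being proved (it is the level-$3/2$ characterization applied to level~$1$). For the inclusion $\bpol{\{\emptyset,A^*\}} \subseteq \pol{\at}$ it suffices to show that each generator $A^*a_1A^*\cdots A^*a_kA^*$ of $\pol{\{\emptyset,A^*\}}$ \emph{and its complement} lie in $\pol{\at}$; every Boolean combination is then a finite union of finite intersections of these, which is absorbed because $\pol{\at}$ is a lattice by Theorem~\ref{thm:pclos}. The generators themselves are marked concatenations of $A^*\in\at$, hence trivially in $\pol{\at}$; for the complements use the elementary recursion on $k$,
\[
A^*\setminus(A^*a_1A^*\cdots A^*a_kA^*) \;=\; (A\setminus\{a_1\})^*\;\cup\;(A\setminus\{a_1\})^*\,a_1\,\bigl(A^*\setminus(A^*a_2A^*\cdots A^*a_kA^*)\bigr),
\]
with base case $(A\setminus\{a_1\})^*\in\at$, which exhibits the complement as a finite union of marked concatenations of $\at$-languages. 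This two-line induction replaces the full normal form and sidesteps the detour through Simon's theorem or the combinatorics of $\sim_k$. In short, your decomposition and your identification of the obstacle are both right; the machinery you propose to clear the obstacle is simply more than is needed.
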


Theorem~\ref{thm:alphatrick} implies that the level 5/2 of the Straubing-Th\'erien hierarchy is exactly the level 3/2 in the concatenation hierarchy of basis \at, i.e. the class \pbpol{\at} by definition. Therefore, the following corollary is an immediate consequence of Theorem~\ref{thm:maintheo}.

\begin{corollary} \label{cor:seventwo}
	The separation and covering problems are decidable for the level 5/2 of the Straubing-Th\'erien hierarchy
\end{corollary}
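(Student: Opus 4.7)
The plan is to derive the corollary by chaining together Theorem~\ref{thm:alphatrick} and the main result Theorem~\ref{thm:maintheo}. The structure is straightforward once the bookkeeping about hierarchy levels is unpacked, so the proof will be short and its only substantive content is identifying level 5/2 of the Straubing-Th\'erien hierarchy with a class to which Theorem~\ref{thm:maintheo} applies.

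First, I would verify that \at is a finite \vari. The class \at consists of all Boolean combinations of the finitely many languages $A^*aA^*$ for $a \in A$, so it is finite and by construction closed under the Boolean operations. For \quotienting, I would observe that for every $a \in A$ and every $w \in A^*$, both $w^{-1}(A^*aA^*)$ and $(A^*aA^*)w^{-1}$ are again Boolean combinations of the generators $A^*bA^*$ (in fact, either $A^*aA^*$ itself or $A^*$), and quotients commute with unions, intersections and complements; hence \at is closed under quotients. Therefore \at is a finite \vari, as claimed right before Theorem~\ref{thm:alphatrick}.

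Next, I would unwind the definition of the concatenation hierarchy whose basis is \at. By definition, level $0$ is \at, level $1/2$ is $\pol{\at}$, level $1$ is $\bool{\pol{\at}} = \bpol{\at}$, and level $3/2$ is $\pol{\bpol{\at}} = \pbpol{\at}$. On the other hand, Theorem~\ref{thm:alphatrick} says that every level $n \geq 3/2$ of the Straubing-Th\'erien hierarchy coincides with level $n-1$ in the concatenation hierarchy of basis \at. Applying this with $n = 5/2$, level $5/2$ of the Straubing-Th\'erien hierarchy equals level $3/2$ of the hierarchy with basis \at, that is, $\pbpol{\at}$.

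To finish, I would apply Theorem~\ref{thm:maintheo} with $\Cs = \at$: since \at is a finite \vari, both covering and separation are decidable for $\pbpol{\at}$. Combining this with the identification from the previous step yields decidability of covering and separation for level $5/2$ of the Straubing-Th\'erien hierarchy. There is no real obstacle here; all the mathematical content lies in Theorem~\ref{thm:maintheo} (which the remainder of the paper is devoted to proving) and in Theorem~\ref{thm:alphatrick} (which is cited from~\cite{pin-straubing:upper, pzgenconcat}). The only thing to be careful about is the half-level indexing convention, which must be matched correctly between the two theorems.
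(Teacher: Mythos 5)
Your proof is correct and follows exactly the same route as the paper: identify level $5/2$ of the Straubing-Th\'erien hierarchy with $\pbpol{\at}$ via Theorem~\ref{thm:alphatrick}, then apply Theorem~\ref{thm:maintheo} with $\Cs = \at$. The paper treats the verification that \at is a finite \vari as already established (Example~\ref{ex:ateq} and the remark before Theorem~\ref{thm:alphatrick}), so your extra check of the quotienting property is simply a bit more explicit than what the paper writes out.
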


Additionally, it was shown in~\cite{pzqalt}, that for every {\bf half} level $n$ in Straubing-Th\'erien hierarchy, membership for the level $n+1$ reduces to separation for the level $n$. Therefore, we also obtain the following corollary.

\begin{corollary} \label{cor:ninetwo}
	It decidable to test whether a regular language belongs to the level 7/2 in the Straubing-Th\'erien hierarchy
\end{corollary}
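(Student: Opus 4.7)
The plan is to obtain this corollary as an immediate consequence of Corollary~\ref{cor:seventwo} combined with the reduction of~\cite{pzqalt} recalled just above the statement. That reduction asserts that for every half level $n$ in the Straubing-Th\'erien hierarchy, the membership problem at level $n+1$ is effectively reducible to the separation problem at level $n$. So the whole task is to instantiate this reduction correctly and to plug in the separation result we have just obtained.

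First, I would instantiate the reduction with $n = 5/2$. One must check that $5/2$ is indeed a half level of the Straubing-Th\'erien hierarchy, which is immediate since the half levels are exactly $1/2,\,3/2,\,5/2,\,7/2,\dots$. Once applied, the reduction produces, from any decision procedure for separation at level $5/2$, an effective decision procedure for membership at level $7/2$. Second, I would feed in Corollary~\ref{cor:seventwo}, which supplies precisely such a separation algorithm at level $5/2$ (itself obtained from Theorem~\ref{thm:maintheo} via Theorem~\ref{thm:alphatrick}, since level $5/2$ in the Straubing-Th\'erien hierarchy coincides with $\pbpol{\at}$ and \at is a finite \vari). Composing the reduction with this separation procedure yields the desired membership algorithm for level $7/2$.

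There is no real obstacle: the substantive content lies entirely in Theorem~\ref{thm:maintheo} (via Corollary~\ref{cor:seventwo}) and in the external reduction of~\cite{pzqalt}, and the corollary is literally the composition of these two pieces. The only point worth flagging is that the chain stops here: pushing one step further, to membership at level $9/2$, would require a separation algorithm at level $7/2$, and by Theorem~\ref{thm:alphatrick} that level is no longer of the form $\pbpol{\Cs}$ for a finite \vari \Cs, so it falls outside the reach of Theorem~\ref{thm:maintheo}.
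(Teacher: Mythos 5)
Your proposal matches the paper's argument exactly: instantiate the reduction of~\cite{pzqalt} (membership at level $n+1$ reduces to separation at level $n$ for half levels $n$) with $n=5/2$ and compose with the separation algorithm from Corollary~\ref{cor:seventwo}. Nothing further is needed.
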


Finally, it is known that these two results may be lifted to the dot-depth hierarchy using generic transfer theorems (see~\cite{StrauVD} for membership and~\cite{pzsucc,pzsucc2} for separation and covering). Therefore, we also get the two following additional corollaries.

\begin{corollary} \label{cor:seventwodd}
	The separation and covering problems are decidable for the level 5/2 of the dot-depth hierarchy.
\end{corollary}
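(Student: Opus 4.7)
The plan is to obtain Corollary~\ref{cor:seventwodd} as a direct consequence of Corollary~\ref{cor:seventwo} via a black-box application of the generic transfer theorem of~\cite{pzsucc,pzsucc2}. Recall that Corollary~\ref{cor:seventwo} already establishes decidability of covering (and thus separation) for level 5/2 of the Straubing-Th\'erien hierarchy, and that the Straubing-Th\'erien and dot-depth hierarchies are linked: each level of the dot-depth corresponds, under a suitable correspondence, to the analogous level of the Straubing-Th\'erien hierarchy. The transfer theorem of~\cite{pzsucc,pzsucc2} makes this precise at the algorithmic level by reducing covering (resp.\ separation) for a given half or full level $n$ in the dot-depth to covering (resp.\ separation) for the same level in the Straubing-Th\'erien hierarchy.

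Concretely, I would proceed as follows. First, I would cite the transfer theorem of~\cite{pzsucc,pzsucc2} in the precise form that asserts: for every level $n$ of the concatenation hierarchies considered, an effective reduction sends any instance $(L,\Lb)$ of the dot-depth covering problem at level $n$ to an instance $(L',\Lb')$ of the Straubing-Th\'erien covering problem at level $n$ such that the former is coverable if and only if the latter is. Second, I would instantiate this reduction at $n = 5/2$. Third, I would invoke Corollary~\ref{cor:seventwo} to decide the resulting Straubing-Th\'erien instance, yielding decidability of the original dot-depth instance. Finally, I would note that, by Fact~\ref{fct:septocove}, decidability of covering subsumes decidability of separation, so the separation statement follows as well.

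Since the bulk of the work (the transfer) is an invocation of an existing result, there is no deep technical obstacle to overcome in the proof itself. The only point that deserves care is to check that the version of the transfer theorem in~\cite{pzsucc,pzsucc2} is stated for covering and not only for separation, and that it applies at the half level 5/2; this is indeed the case in~\cite{pzsucc2}, which generalizes~\cite{pzsucc} from separation to covering. I would therefore write the proof as a single short paragraph: apply Corollary~\ref{cor:seventwo} and the transfer theorem of~\cite{pzsucc,pzsucc2}. If anything, the most delicate aspect is purely expository, namely making sure the reader sees why the relevant transfer applies to \emph{half} levels (not just full ones), so I would add a brief sentence pointing to the statement in~\cite{pzsucc2} that handles this case explicitly.
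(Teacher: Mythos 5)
Your proposal matches the paper's argument exactly: the paper derives Corollary~\ref{cor:seventwodd} from Corollary~\ref{cor:seventwo} by invoking the generic transfer theorem of~\cite{pzsucc,pzsucc2}, which reduces separation and covering at any (half or full) level of the dot-depth hierarchy to the corresponding level of the Straubing-Th\'erien hierarchy. Your extra care about verifying that the transfer applies to covering and to half levels is a reasonable expository addition, but the underlying approach is identical to the paper's.
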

\begin{corollary} \label{cor:ninetwodd}
	It decidable to test whether a regular language belongs to the level 7/2 in the dot-depth hierarchy
\end{corollary}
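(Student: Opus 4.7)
The plan is to combine the decidability of level 7/2 membership in the Straubing-Th\'erien hierarchy (Corollary~\ref{cor:ninetwo}) with the classical transfer theorem of Straubing~\cite{StrauVD}. That theorem provides, for every level $n$ (half or full) above the base, an effective reduction between membership in the dot-depth hierarchy and membership in the Straubing-Th\'erien hierarchy: from any regular $L \subseteq A^*$ one constructs, via a standard boundary-marker trick, a regular language $L'$ over an enlarged alphabet $B \supseteq A$ such that $L$ lies in level $n$ of the dot-depth hierarchy if and only if $L'$ lies in level $n$ of the Straubing-Th\'erien hierarchy.

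Given this, I would prove Corollary~\ref{cor:ninetwodd} in two steps. First, starting from the regular input language $L$ (presented by a finite monoid morphism or an NFA), compute the translated language $L'$ using Straubing's construction; this is effective and preserves regularity. Second, apply the decision procedure of Corollary~\ref{cor:ninetwo} to $L'$, which tests whether $L'$ belongs to level 7/2 of the Straubing-Th\'erien hierarchy. By the transfer theorem instantiated at $n = 7/2$, the output precisely answers the original question about $L$, yielding the claimed decidability.

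Since both ingredients are already in place — the membership algorithm for Straubing-Th\'erien level 7/2 being obtained in this paper, and the transfer theorem being the content of~\cite{StrauVD} — there is no genuine obstacle left. The argument consists purely in chaining the two results; no new combinatorial analysis of words, monoids, or the structure of dot-depth 7/2 is required. The only point one should be mindful of is verifying that Straubing's transfer theorem is stated (or easily deduced) at the half level 7/2 rather than only at full levels, but this is indeed how it appears in the cited reference.
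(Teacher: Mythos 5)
Your proposal is correct and is exactly the argument the paper gives: Corollary~\ref{cor:ninetwodd} is obtained by combining Corollary~\ref{cor:ninetwo} (decidable membership for level 7/2 in the Straubing-Th\'erien hierarchy) with Straubing's transfer theorem~\cite{StrauVD}, which effectively reduces dot-depth membership at each level to Straubing-Th\'erien membership at the same level. No further argument is needed, and the one caveat you flag (that the transfer applies at half levels such as $7/2$) is indeed addressed by the cited reference.
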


\subsection{First-order logic and quantifier alternation} We finish the section by discussing the connection with quantifier alternation hierarchies. 

Let us briefly recall the definition of first-order logic over words. One may view a finite word as a logical structure made of a sequence of positions. Each position carries a label in the alphabet~$A$ and can be quantified. We denote by ``$<$'' the linear order over the positions. We consider first-order logic, which is equipped with the following signature: 
\begin{itemize}
	\item For each $a \in A$, a unary predicate $P_a$ selecting positions labeled with the letter ``$a$''.
	\item A binary predicate ``$<$'' interpreted as the linear order.
	\item A binary predicate ``$+1$'' interpreted as the successor relation.
	\item  Unary predicates ``$min$'' and ``$max$'' selecting the leftmost and rightmost positions.
	\item A constant ``$\varepsilon$''  which holds when the word is empty.
\end{itemize}
We shall write \fows for this variant of first-order logic (we consider another variant with a restricted signature below). For every sentence $\varphi$, one may associate the language $\{w \in A^* \mid w \models \varphi\}$ of words satisfying $\varphi$.  Therefore, \fows defines the class of languages that can be defined with an \fows sentence. It is known that this class is a \vari.

Here, we are not interested in first-order logic itself: we consider its quantifier alternation hierarchy. One may classify sentences by counting their number of quantifier alternations. Let $n \in \nat$. We say that a sentence is \siws{n} (resp. \piws{n}) if it can be rewritten into a sentence in prenex normal form which has either,
\begin{itemize}
	\item \emph{exactly} $n -1$ quantifier alternations (\emph{i.e.}, exactly $n$ blocks of quantifiers) starting with an $\exists$ (resp.\ $\forall$), or
	\item \emph{strictly less} than $n -1$ quantifier alternations (\emph{i.e.}, strictly less than $n$ blocks of	quantifiers).
\end{itemize}
For example, consider the following sentence (already in prenex normal form)
\[
\forall x_1 \exists x_2 \forall x_3 \forall x_4
\ \varphi(x_1,x_2,x_3,x_4) \quad \text{(with $\varphi$ quantifier-free)}
\]
\noindent
This sentence is \piws{3}. With this definition, the $\siws{0}$ and $\piws{0}$ sentences are just the quantifier-free ones. In general, the negation of a \siws{n} sentence is not a \siws{n} sentence (it is a \piws{n} sentence). Hence it is relevant to define \bsws{n} sentences as the Boolean combinations of \siws{n} sentences. This gives a hierarchy of classes of languages as presented in Figure~\ref{fig:hiera}.

\tikzstyle{non}=[inner sep=1pt]
\tikzstyle{tag}=[draw,fill=white,sloped,circle,inner sep=1pt]
\begin{figure}[!htb]
	\centering
	\begin{tikzpicture}
	
	\node[non] (b0) at (-0.5,0.0) {$\sic{0} = \pic{0} = \bsc{0}$};
	
	\node[non] (s1) at (1.0,-0.8) {\sicu};
	\node[non] (p1) at (1.0,0.8) {\picu};
	\node[non] (b1) at (2.5,0.0) {\bscu};
	
	\node[non] (s2) at ($(b1)+(1.5,-0.8)$) {\sicd};
	\node[non] (p2) at ($(b1)+(1.5,0.8)$) {\picd};
	\node[non] (b2) at ($(b1)+(3.0,0.0)$) {\bscd};
	
	\node[non] (s3) at ($(b2)+(1.5,-0.8)$) {\sict};
	\node[non] (p3) at ($(b2)+(1.5,0.8)$) {\pict};
	\node[non] (b3) at ($(b2)+(3.0,0.0)$) {\bsct};
	
	\node[non] (s4) at ($(b3)+(1.5,-0.8)$) {\sic{4}};
	\node[non] (p4) at ($(b3)+(1.5,0.8)$) {\pic{4}};

	\draw[thick] (b0.-60) to [out=-90,in=180] node[tag] {\scriptsize
		$\subsetneq$} (s1.west);
	\draw[thick] (b0.60) to [out=90,in=-180] node[tag] {\scriptsize
		$\subsetneq$} (p1.west);
	
	\draw[thick] (s1.east) to [out=0,in=-90] node[tag] {\scriptsize
		$\subsetneq$} (b1.-120);
	\draw[thick] (p1.east) to [out=0,in=90] node[tag] {\scriptsize
		$\subsetneq$} (b1.120);

	\draw[thick] (b1.-60) to [out=-90,in=180] node[tag] {\scriptsize
		$\subsetneq$} (s2.west);
	\draw[thick] (b1.60) to [out=90,in=-180] node[tag] {\scriptsize
		$\subsetneq$} (p2.west);
	\draw[thick] (s2.east) to [out=0,in=-90] node[tag] {\scriptsize
		$\subsetneq$} (b2.-120);
	\draw[thick] (p2.east) to [out=0,in=90] node[tag] {\scriptsize
		$\subsetneq$} (b2.120);

	\draw[thick] (b2.-60) to [out=-90,in=180] node[tag] {\scriptsize
		$\subsetneq$} (s3.west);
	\draw[thick] (b2.60) to [out=90,in=-180] node[tag] {\scriptsize
		$\subsetneq$} (p3.west);
	\draw[thick] (s3.east) to [out=0,in=-90] node[tag] {\scriptsize
		$\subsetneq$} (b3.-120);
	\draw[thick] (p3.east) to [out=0,in=90] node[tag] {\scriptsize
		$\subsetneq$} (b3.120);
	
	\draw[thick] (b3.-60) to [out=-90,in=180] node[tag] {\scriptsize
		$\subsetneq$} (s4.west);
	\draw[thick] (b3.60) to [out=90,in=-180] node[tag] {\scriptsize
		$\subsetneq$} (p4.west);

	\draw[thick,dotted] ($(s4.east)+(0.1,0.0)$) to
	($(s4.east)+(0.6,0.0)$);
	\draw[thick,dotted] ($(p4.east)+(0.1,0.0)$) to
	($(p4.east)+(0.6,0.0)$);
	
	\end{tikzpicture}
	\caption{Quantifier Alternation Hierarchy}
	\label{fig:hiera}
\end{figure}

It was shown by Thomas~\cite{ThomEqu} that this hierarchy  corresponds exactly to the dot-depth hierarchy. For any $n \in \nat$, dot-depth $n$ corresponds to $\bsws{n}$ and dot-depth $n+1/2$ to $\siws{n+1}$. Naturally, this means that the logic  \piws{n+1} corresponds to the complement class of dot-depth $n+1/2$ (a language is definable in \piws{n+1} if and only if its complement has dot-depth $n+1/2$). Therefore, Corollaries~\ref{cor:seventwodd} and~\ref{cor:ninetwodd} may be translated as follows from a logical point of view.

\begin{corollary} 
	Separation and covering are decidable for $\siws{3}$.
\end{corollary}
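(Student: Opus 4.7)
The plan is to observe that this corollary is a purely logical reformulation of Corollary~\ref{cor:seventwodd}, which has already been established as a consequence of the main theorem together with the Straubing-Th\'erien to dot-depth transfer of~\cite{pzsucc,pzsucc2}. The only new ingredient needed is Thomas's theorem from~\cite{ThomEqu}, already recalled in the excerpt: for every $n \in \nat$, the class of languages definable by a $\siws{n+1}$ sentence coincides exactly with dot-depth $n+1/2$. Setting $n = 2$, this identifies $\siws{3}$ with level $5/2$ of the dot-depth hierarchy.

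Once this identification is in place, the separation and covering problems for $\siws{3}$ are \emph{literally the same problems} as the separation and covering problems for dot-depth $5/2$: a language $K$ belongs to $\siws{3}$ if and only if it belongs to dot-depth $5/2$, so the existence of a separator in $\siws{3}$ (respectively, of a separating cover in $\siws{3}$) for a given input is equivalent to the existence of a separator (respectively, separating cover) at dot-depth $5/2$. The decidability statement for $\siws{3}$ therefore follows immediately by invoking the algorithms provided by Corollary~\ref{cor:seventwodd}.

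In short, the proof consists of exactly two steps: first, apply Thomas's correspondence to translate the logical class $\siws{3}$ into the language-theoretic class at level $5/2$ of the dot-depth hierarchy; second, invoke Corollary~\ref{cor:seventwodd} on this class. There is no real obstacle here, since all the substantive work has already been carried out in the proof of Theorem~\ref{thm:maintheo} and its transfer from the Straubing-Th\'erien hierarchy to the dot-depth hierarchy. The corollary is stated separately only to make the logical consequence explicit.
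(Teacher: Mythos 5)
Your proposal is correct and matches the paper's reasoning exactly: the paper derives this corollary by citing Thomas's correspondence (dot-depth $n+1/2$ equals $\siws{n+1}$, so $\siws{3}$ is dot-depth $5/2$) and then invoking Corollary~\ref{cor:seventwodd}. Nothing more is needed.
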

\begin{corollary} 
	It decidable to test whether a regular language may be defined by a sentence of  $\siws{4}$.
\end{corollary}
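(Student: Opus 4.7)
The plan is to derive this corollary as an immediate translation of Corollary~\ref{cor:ninetwodd} through Thomas's correspondence theorem~\cite{ThomEqu} between the quantifier alternation hierarchy of \fows and the dot-depth hierarchy. That theorem asserts, for every $n \in \nat$, that the regular languages definable by a \siws{n+1} sentence coincide as a set with level $n+1/2$ of the dot-depth hierarchy. Instantiating at $n=3$, the class of \siws{4}-definable regular languages is exactly the class of languages of dot-depth $7/2$.

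Given this equality, the membership problem ``is the regular language $L$ definable in \siws{4}?'' and the membership problem ``does $L$ have dot-depth $7/2$?'' are literally the same decision problem on literally the same inputs. A decision procedure for the latter therefore serves as one for the former, and Corollary~\ref{cor:ninetwodd} provides exactly that. Effectiveness of the translation requires no additional work, because Thomas's theorem identifies the two classes as identical sets of subsets of $A^*$, with no dependence on how $L$ is represented (DFA, syntactic morphism, etc.): one simply feeds the same input into the procedure guaranteed by Corollary~\ref{cor:ninetwodd}.

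For completeness, the full pipeline backing this one-line argument is the following chain, each link of which is already established: Theorem~\ref{thm:maintheo} yields decidability of covering for \pbpol{\Cs} whenever \Cs is a finite \vari; via Theorem~\ref{thm:alphatrick} applied to $\Cs=\at$, this gives covering at level $5/2$ of the Straubing-Th\'erien hierarchy (Corollary~\ref{cor:seventwo}); the transfer theorem of~\cite{pzsucc,pzsucc2} lifts this to dot-depth $5/2$ (Corollary~\ref{cor:seventwodd}); the covering-to-membership reduction of~\cite{pzqalt} at half levels, combined with Straubing's transfer~\cite{StrauVD}, promotes this to dot-depth $7/2$ membership (Corollary~\ref{cor:ninetwodd}); and finally Thomas's theorem performs the present logical translation to \siws{4}.

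Accordingly, the main obstacle is not in this corollary itself, which is essentially a renaming and can be written in two lines, but lies entirely upstream in Theorem~\ref{thm:maintheo}, whose proof forms the bulk of the remainder of the paper. The present statement is merely the logical face of that deeper combinatorial result.
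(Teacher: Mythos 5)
Your proposal is correct and follows the same route as the paper: Corollary~\ref{cor:ninetwodd} gives decidability of membership for dot-depth $7/2$, and Thomas's correspondence~\cite{ThomEqu} (dot-depth $n+1/2 = \siws{n+1}$, instantiated at $n=3$) identifies this class with the $\siws{4}$-definable languages, so the two membership problems coincide. The paper treats this as an immediate translation without further argument, exactly as you do.
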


Finally, it is possible to adapt the result of Thomas to obtain a similar correspondence between the Straubing-Th\'erien hierarchy and the quantifier alternation hierarchy within a variant of first-order logic using a smaller signature. We denote by \fow the variant of first-order logic whose signature only contains the label predicates and the linear order (i.e. $+1,min,max$ and $\varepsilon$ are disallowed).

\begin{remark}
	This restriction does not change the expressive power of first-order logic as a whole: $+1,min,max$ and $\varepsilon$ can be defined from ``$<$''. For example, the formula $y = x +1$ is equivalent to $(x < y) \wedge \neg(\exists z\ x < z < y)$. However, this definition costs quantifier alternations. Thus, \fow and \fows have different quantifier alternation hierarchies.
\end{remark}

One may define a quantifier alternation hierarchy for \fow in the natural way. We shall denote by \siw{n}, \piw{n} and \bsw{n} its levels. It was shown by Perrin and Pin~\cite{PPOrder} that this second alternation hierarchy corresponds exactly to the Straubing-Th\'erien hierarchy. Therefore, Corollaries~\ref{cor:seventwo} and~\ref{cor:ninetwo} may be translated as follows.

\begin{corollary} \label{cor:siwt}
	Separation and covering are decidable for \siwt.
\end{corollary}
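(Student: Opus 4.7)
The plan is to derive Corollary~\ref{cor:siwt} as a direct logical translation of Corollary~\ref{cor:seventwo}, using the correspondence between the quantifier alternation hierarchy of \fow and the Straubing-Th\'erien hierarchy established by Perrin and Pin~\cite{PPOrder}. First, I would spell out the correspondence at the precise level required: by Perrin and Pin, for every $n \in \nat$, the level $n+1/2$ of the Straubing-Th\'erien hierarchy coincides as a class of languages with the class of languages definable by a \siw{n+1} sentence. Specializing to $n=2$, this means that \siwt-definable languages are exactly the languages of level $5/2$ of the Straubing-Th\'erien hierarchy.

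Next I would observe that both the separation and the covering problems are defined purely in terms of the class of languages under consideration: whether a regular pair $(L_1,L_2)$ is $\Cs$-separable, or whether a regular input $(L,\Lb)$ is $\Cs$-coverable, depends only on the set of languages $\Cs$ and not on any syntactic description. Consequently, once the identification of \siwt with Straubing-Th\'erien level $5/2$ is in hand, any algorithm solving separation (resp.\ covering) for the latter solves separation (resp.\ covering) for the former on exactly the same inputs, and conversely.

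The conclusion is then immediate: given a \siwt-separation instance $(L_1,L_2)$ (resp.\ a \siwt-covering instance $(L,\Lb)$) with $L_1,L_2$ (resp.\ $L$ and the elements of $\Lb$) regular, feed it to the algorithm from Corollary~\ref{cor:seventwo} for level $5/2$ of the Straubing-Th\'erien hierarchy and return its answer. There is no real obstacle here beyond invoking the Perrin--Pin correspondence correctly; the entire content of the corollary is the translation step, since the heavy lifting (the existence of a covering algorithm for $\pbpol{\at}$, equivalently for Straubing-Th\'erien level $5/2$ via Theorem~\ref{thm:alphatrick}) has already been supplied by Theorem~\ref{thm:maintheo} and Corollary~\ref{cor:seventwo}.
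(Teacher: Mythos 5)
Your proof is correct and follows exactly the same route as the paper: invoke the Perrin--Pin correspondence to identify \siwt with level $5/2$ of the Straubing-Th\'erien hierarchy, note that separation and covering are class-level problems, and then apply Corollary~\ref{cor:seventwo}. The paper states this as an immediate translation; you merely spell out the (correct) indexing detail that level $n+1/2$ corresponds to \siw{n+1}, hence $5/2$ to \siwt.
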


\begin{corollary} \label{cor:siwq}
	It decidable to test whether a regular language is definable in \siw{4}.
\end{corollary}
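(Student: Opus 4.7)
The plan is to reduce Corollary~\ref{cor:siwq} directly to Corollary~\ref{cor:ninetwo} via the correspondence of Perrin and Pin~\cite{PPOrder}, exactly as was done for the dot-depth case immediately above. The key observation is that \siw{4} is not a new class to analyze from scratch: it already coincides with a level of the Straubing-Th\'erien hierarchy whose membership problem we have just shown to be decidable.

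More precisely, I would first invoke the Perrin-Pin correspondence, which asserts that the quantifier alternation hierarchy of \fow matches the Straubing-Th\'erien hierarchy level by level, in the same way that Thomas' theorem matches the alternation hierarchy of \fows with the dot-depth. Under this correspondence, for every $n \in \nat$, the class \siw{n+1} is exactly the level $n+1/2$ in the Straubing-Th\'erien hierarchy. Instantiating this with $n = 3$ yields that \siw{4} coincides with the level $7/2$ of the Straubing-Th\'erien hierarchy. In particular, a regular language $L$ is definable by an \siw{4} sentence if and only if $L$ belongs to the level $7/2$ of the Straubing-Th\'erien hierarchy.

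With this equivalence in hand, the corollary is immediate: Corollary~\ref{cor:ninetwo} already states that membership in the level $7/2$ of the Straubing-Th\'erien hierarchy is decidable, and membership is precisely the decision problem asked about in Corollary~\ref{cor:siwq}. Thus I would simply chain the two facts: given a regular input $L$, decide whether $L$ lies in level $7/2$ of the Straubing-Th\'erien hierarchy using Corollary~\ref{cor:ninetwo}, and return that answer as the answer to \siw{4}-definability.

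There is essentially no obstacle here: the entire content sits in Corollary~\ref{cor:ninetwo}, which itself is a corollary of Theorem~\ref{thm:maintheo} together with the reduction of~\cite{pzqalt} from level $n+1$ membership to level $n$ separation (for half levels $n$ in Straubing-Th\'erien). The only point that must be made explicit in the write-up is that the Perrin-Pin correspondence is an exact level-by-level equality of classes, so that decidability transfers in both directions; once this is recorded, the proof reduces to a one-line citation.
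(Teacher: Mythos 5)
Your proposal is correct and coincides with what the paper does: the paper derives Corollary~\ref{cor:siwq} by directly citing the Perrin--Pin level-by-level correspondence between the quantifier alternation hierarchy of \fow and the Straubing--Th\'erien hierarchy, identifying \siw{4} with level $7/2$, and then invoking Corollary~\ref{cor:ninetwo}. You have simply spelled out the indexing ($\siw{n+1}$ is level $n+\tfrac12$, so $n=3$ gives level $7/2$), which the paper leaves implicit.
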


\begin{remark}
	These two correspondences between alternation and concatenation hierarchies are not coincidental. It turns out that given  any basis \Cs, one may define  a set of first-order predicates $S$ such that the concatenation hierarchy of basis \Cs corresponds exactly to the alternation hierarchy within the variant of first-order logic with signature is $S$ (see~\cite{pzgenconcat}).
\end{remark}


\section{Mathematical tools}
\label{sec:tools}
In this section, we present mathematical tools which we shall use later when proving the correctness of our covering algorithms. First, we briefly recall the algebraic definition of regular languages based on monoid morphisms. We complete this definition with the \emph{factorization forest theorem of Simon}, a combinatorial result about finite monoids. Finally, we present a generic stratification for classes of the form \pol{\Cs} when \Cs is a finite \pvari.

\subsection{Monoids and regular languages} A \emph{semigroup} is a set $S$ equipped with an associative multiplication (usually denoted by ``$\cdot$''). Recall that an idempotent within a semigroup $S$ is an element $e \in S$ such that $ee = e$. It is well-known that when a semigroup $S$ is \emph{finite}, there exists a number $\omega(S)$ (denoted by $\omega$ when $S$ is understood) such that for any $s \in S$, $s^\omega$ is an idempotent.

A \emph{monoid} $M$ is a semigroup in which there exists a neutral element denoted $1_M$. Observe that $A^*$ is a monoid with concatenation as the multiplication and $\varepsilon$ as the neutral element. Hence, given a monoid $M$, we may consider morphisms $\alpha: A^* \to M$ (i.e. $\alpha(\varepsilon) = 1_M$ and $\alpha(uv) = \alpha(u) \cdot \alpha(v)$ for all $u,v \in A^*$). Given a morphism $\alpha$, we say that a language $L \subseteq A^*$ is \emph{recognized} by $\alpha$ if there exists $F \subseteq M$ such that $L = \alpha\inv(F)$. We call $F$ the accepting set of $L$. It is well-known that a language $L \subseteq A^*$ is regular if and only if it is recognized by a morphism into a \emph{finite} monoid. Moreover, one may compute such a morphism from any representation of $L$ (such as an automaton).

\begin{remark}
	For all morphisms $\alpha: A^* \to M$ that we consider in the paper, $M$ will be finite. Hence, for the sake of avoiding clutter, we make this implicit: whenever we say that ``$\alpha: A^* \to M$ is a morphism'', this implicitly means that $M$ is a finite monoid.
\end{remark}

\medskip
\noindent
{\bf \Cs-compatible morphisms.} We complete these definitions with an additional notion which is specific to the paper. Our main objective is to prove Theorem~\ref{thm:maintheo}: we want an algorithm for \pbpol{\Cs}-covering where \Cs is an arbitrary finite \vari. This will require working with morphisms which satisfy a specific property called \emph{\Cs-compatibility}. We fix the finite \vari \Cs for the definition. 

Recall that since \Cs is a finite Boolean algebra, one may associate a canonical equivalence $\sim_\Cs$ over $A^*$ to \Cs. Two words are equivalent when they belong to the same languages in \Cs:
\[
w \sim_\Cs w' \quad \text{if and only if} \quad \forall L\in\Cs,\ w \in L \Leftrightarrow w' \in L.
\]
Given a word $w$, we denote by \ctype{w} its $\sim_\Cs$-equivalence class. Recall that by Lemma~\ref{lem:canoequiv}, $\sim_\Cs$ has finite index and the languages in \Cs are exactly the unions of $\sim_\Cs$-classes. Moreover, since \Cs is \quotienting, we know from Lemma~\ref{lem:canoquo} that $\sim_\Cs$ is a congruence for word concatenation. It follows that the quotient set ${A^*}/{\sim_\Cs}$ is a finite monoid and the map $w \mapsto \ctype{w}$ is a morphism from $A^*$ to ${A^*}/{\sim_\Cs}$.

We may now define \Cs-compatibility. Consider an arbitrary morphism $\alpha: A^* \to M$. We say that $\alpha$ is \emph{\Cs-compatible} when for any $s \in M$, there exists a $\sim_\Cs$-class denoted by $\ctype{s}$ such that $\alpha\inv(s) \subseteq \ctype{s}$ (i.e. $\ctype{w} = \ctype{s}$ for any $w \in A^*$ such that $\alpha(w) = s$).

\begin{remark}
	Given $s \in M$, the $\sim_{\Cs}$-class \ctype{s} is fully determined by $\alpha$ when $\alpha\inv(s) \neq \emptyset$ ($\ctype{s} = \ctype{w}$ for any $w \in \alpha\inv(s)$). If $\alpha\inv(s) = \emptyset$, we may choose any $\sim_\Cs$-class as \ctype{s}. When we consider a \Cs-compatible morphism, we implicitly assume that the map $s \mapsto \ctype{s}$ is fixed.
\end{remark}

We now prove that for any regular language $L$, one may compute a \Cs-compatible morphism which recognizes $L$.

\begin{lemma} \label{lem:compat}
	Given a regular language $L \subseteq A^*$, one may compute a \Cs-compatible morphism $\alpha: A^* \to M$ recognizing $L$.
\end{lemma}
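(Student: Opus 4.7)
The plan is to build $\alpha$ as a product of two morphisms: one that recognizes $L$ and one whose image encodes the $\sim_\Cs$-class of the input word.

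First, since $L$ is regular, one may compute from any representation of $L$ a morphism $\beta: A^* \to N$ into a finite monoid that recognizes $L$, say with accepting set $F \subseteq N$. Second, since \Cs is a given finite \vari, the equivalence $\sim_\Cs$ has finite index by Lemma~\ref{lem:canoequiv}, and the quotient $Q = A^*/{\sim_\Cs}$ is a finite monoid with the canonical morphism $\eta: A^* \to Q$ sending $w$ to \ctype{w} (this uses that $\sim_\Cs$ is a congruence, which holds by Lemma~\ref{lem:canoquo} since \Cs is \quotienting). Both $N$ and $Q$, together with their multiplication tables and the morphisms $\beta$ and $\eta$, are computable from the input and from \Cs.

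Now take $M = N \times Q$ with pointwise multiplication and define $\alpha: A^* \to M$ by $\alpha(w) = (\beta(w),\eta(w)) = (\beta(w),\ctype{w})$. This is clearly a morphism into a finite monoid, and it recognizes $L$: setting $F' = F \times Q \subseteq M$, one has $\alpha^{-1}(F') = \beta^{-1}(F) = L$.

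It remains to verify \Cs-compatibility. Given $s = (n,c) \in M$, define $\ctype{s} = c$. If $\alpha^{-1}(s) = \emptyset$ there is nothing to check, and otherwise for any $w \in \alpha^{-1}(s)$ we have $\ctype{w} = \eta(w) = c = \ctype{s}$, so $\alpha^{-1}(s) \subseteq \ctype{s}$ as required. No step presents a real obstacle; the only mild point is ensuring effectivity of the construction of $Q$ and $\eta$, which follows immediately from the finiteness and decidability of \Cs.
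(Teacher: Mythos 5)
Your proof is correct and takes essentially the same approach as the paper: both form the product of a morphism recognizing $L$ with the canonical projection onto $A^*/{\sim_\Cs}$ and verify \Cs-compatibility componentwise. You spell out the verification of recognition and compatibility a bit more explicitly than the paper does, but the construction is identical.
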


\begin{proof}
	Since $L$ is regular, we may compute a finite monoid $N$ and a morphism $\beta: A^* \to N$ (not necessarily \Cs-compatible) recognizing $L$. Since we know that the quotient set ${A^*}/{\sim_\Cs}$ is a finite monoid, the Cartesian product $M = N \times ({A^*}/{\sim_\Cs})$ is a finite monoid for the componentwise multiplication. It now suffices to define the morphism $\alpha: A^* \to M$ by $\alpha(w) = (\beta(w),\ctype{w})$ for any $w \in A^*$. Clearly, $\alpha$ is a morphism which recognizes $L$ and one may verify that it is \Cs-compatible.
\end{proof}

\subsection{Factorization forests} The factorization theorem of Simon is a combinatorial result which applies to \emph{finite semigroups}. Here, we briefly recall this theorem. We refer the reader to~\cite{kfacto,bfacto,cfacto} for more details and a proof.

Consider a morphism $\alpha: A^* \rightarrow M$. An \emph{$\alpha$-factorization forest} is an ordered unranked rooted tree whose nodes are labeled by words in $A^*$. For any inner node $x$ with label $w \in A^*$, if $w_1,\dots,w_n \in A^*$ are the labels of its children listed from left to right, then $w = w_1\cdots w_n$. Moreover, all nodes $x$ in the forest must be of the three following kinds:

\begin{itemize}
	\item \emph{Leaves} which are labeled by either a single letter or
	the empty word.
	\item \emph{Binary inner nodes} which have exactly two children.
	\item \emph{Idempotent inner nodes} which have three or more children. The labels $w_1,\dots,w_n$ of these children must satisfy $\alpha(w_1) = \cdots = \alpha(w_n) = e$ where $e$ is an idempotent element of $M$.
\end{itemize}

Given a word $w \in A^*$, an \emph{$\alpha$-factorization forest for $w$} is an $\alpha$-factorization forest whose root is labeled by $w$. The \emph{height} of a factorization forest is the largest $h \in \nat$ such that it contains a branch with $h$ inner nodes (a single leaf has height $0$). We shall also consider the notion of \emph{idempotent height}. The idempotent height of an $\alpha$-factorization forest is the largest number $m \in \nat$ such that there exists a branch containing $m$ idempotent nodes in the forest. We shall use the following notations. Let $s \in M$ and $h,m \in \nat$:
\begin{enumerate}
	\item $F^\alpha(s,h,m) \subseteq A^*$ denotes the language of all words $w \in \alpha^{-1}(s)$ admitting an $\alpha$-factorization forest of height at most $h$ and idempotent height at most $m$.
	\item $F^\alpha_B(s,h,m) \subseteq A^*$ denotes the language of all words $w \in \alpha^{-1}(s)$ admitting an $\alpha$-factorization forest of height at most $h$, of idempotent height at most $m$ and whose root is a \emph{binary node}.
	\item $F^\alpha_I(s,h,m) \subseteq A^*$ denotes the language of all words $w \in \alpha^{-1}(s)$ admitting an $\alpha$-factorization forest of height at most $h$, of idempotent height at most $m$ and whose root is a \emph{idempotent node}.	
\end{enumerate}

We have the following fact.

\begin{fct} \label{fct:factounion}
	For every morphism $\alpha: A^* \rightarrow M$, every $s \in M$ and every $h,m \in \nat$, we have,
	\[
	F^\alpha(s,h,m) = F^\alpha_B(s,h,m) \cup F^\alpha_I(s,h,m) \cup F^\alpha(s,0,0)
	\]
\end{fct}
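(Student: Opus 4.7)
The plan is to prove both inclusions by directly inspecting the definitions, using a case analysis on the root of the factorization forest. The statement is essentially a structural decomposition: the three types of roots allowed in an $\alpha$-factorization forest (leaf, binary inner, idempotent inner) correspond exactly to the three terms of the union.

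For the right-to-left inclusion, I would observe that $F^\alpha_B(s,h,m)$ and $F^\alpha_I(s,h,m)$ are by definition subsets of $F^\alpha(s,h,m)$ (they impose an additional constraint on the root but no other change). For $F^\alpha(s,0,0)$, any word admitting an $\alpha$-factorization forest of height $0$ and idempotent height $0$ also admits one of height at most $h$ and idempotent height at most $m$ (a single leaf works for any $h,m$), hence $F^\alpha(s,0,0) \subseteq F^\alpha(s,h,m)$. The union of the three is therefore contained in $F^\alpha(s,h,m)$.

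For the left-to-right inclusion, let $w \in F^\alpha(s,h,m)$ and fix an $\alpha$-factorization forest $T$ for $w$ witnessing membership. I split on the type of the root of $T$. If the root is a leaf, then $T$ has height $0$, hence idempotent height $0$ as well (there are no inner nodes on any branch), so $w \in F^\alpha(s,0,0)$. If the root is a binary inner node, then $T$ itself witnesses $w \in F^\alpha_B(s,h,m)$. If the root is an idempotent inner node, then $T$ witnesses $w \in F^\alpha_I(s,h,m)$. In all three cases $w$ lies in the right-hand union, which finishes the proof.

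I do not expect any real obstacle: the statement is purely a bookkeeping observation about how the definitions of $F^\alpha$, $F^\alpha_B$, $F^\alpha_I$ partition forests according to the root type. The only small subtlety is noting that a leaf forest has both height $0$ and idempotent height $0$ simultaneously, which justifies landing in $F^\alpha(s,0,0)$ rather than needing a separate ``leaf'' term indexed by $h$ and $m$.
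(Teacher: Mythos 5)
Your proof is correct and follows the same approach as the paper's: the right-to-left inclusion is immediate from the definitions, and the left-to-right inclusion is a case split on the root type of a witnessing forest. You simply spell out a few bookkeeping details (e.g.\ why $F^\alpha(s,0,0)\subseteq F^\alpha(s,h,m)$ and why a leaf root forces height and idempotent height both $0$) that the paper leaves implicit.
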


\begin{proof}
	The right to left inclusion is immediate by definition. For the converse inclusion, consider $w \in F^\alpha(s,h,m)$. By definition, $w \in \alpha^{-1}(s)$ and admits an $\alpha$-factorization forest of height at most $h$ and idempotent height at most $m$. The root of this forest is either a binary node, an idempotent node or a leaf. In the first case, $w \in F^\alpha_B(s,h,m)$. In the second one, $w \in F^\alpha_I(s,h,m)$. Finally, the third case implies that $w \in F^\alpha(s,0,0)$.	
\end{proof}

We turn to the factorization forest theorem of Simon: there exists a bound depending only on $M$ such that any word admits an $\alpha$-factorization forest of height at most this bound.

\begin{theorem}[\cite{simonfacto,kfacto}] \label{thm:facto}
	Consider a morphism $\alpha: A^* \rightarrow M$. For all words $w \in A^*$, there exists an $\alpha$-factorization forest for $w$ of height at most $3|M|-1$.
\end{theorem}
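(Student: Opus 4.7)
The plan is to prove the theorem by induction on the size of the image monoid $M$, following Simon's original approach. It is convenient to strengthen the statement as follows: for any sub-semigroup $T \subseteq M$ and any word $w \in A^{*}$ such that every non-empty factor of $w$ has image in $T$, there exists an $\alpha$-factorization forest for $w$ of height at most $3|T|-1$. The theorem then follows by taking $T = \alpha(A^{+})$, since every non-empty factor of any $w \in A^{*}$ clearly lands in this sub-semigroup.

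The induction is on $|T|$. In the base case $|T|=1$, the unique element of $T$ must be idempotent (since $T$ is closed under product), so one may group all letters of $w$ under a single idempotent node (when $|w| \geq 3$), under a single binary node (when $|w|=2$), or as a single leaf (when $|w|\leq 1$); in all cases the height is at most $1 \leq 3|T|-1$. For the inductive step, I would pick a $\mathcal{J}$-maximal element $s \in T$ that is realized as $\alpha(u)$ for some non-empty factor $u$ of $w$, and split $w$ at the boundaries of the maximal factors whose image lies in the $\mathcal{J}$-class $J$ of $s$. Any factor strictly between two consecutive \emph{marked} positions has all its non-empty sub-factors mapping outside $J$ (otherwise the marking would extend), hence into the strictly smaller sub-semigroup $T' = T \setminus J$, which is indeed closed under product by $\mathcal{J}$-maximality of $J$. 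The induction hypothesis produces sub-forests for these intervening pieces of height at most $3|T'|-1 \leq 3|T|-4$. The marked factors themselves have images in the single $\mathcal{J}$-class $J$, and using the classical structure of Green's relations (a regular $\mathcal{J}$-class is a disjoint union of $\mathcal{H}$-classes, each of which is a group when it contains an idempotent), one combines them with the intervening sub-forests using at most three additional levels, yielding a factorization forest of height at most $3|T|-1$.

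The main obstacle is the delicate combinatorial accounting at the top of the forest: one must assemble the marked factors together with the intervening sub-forests into a single valid $\alpha$-factorization forest while spending only three additional levels, and one must arrange the grouping so that the children of each idempotent node introduced have equal image under $\alpha$, as the definition requires. This manipulation of the $\mathcal{H}$-class group structure to achieve the optimal constant $3$ is the technical heart of Simon's theorem and is what distinguishes a full proof from the sketch above. Since the result is classical and several self-contained proofs are available in the literature \cite{simonfacto,kfacto,bfacto,cfacto}, I would defer the detailed combinatorial argument to those references rather than reproduce it here.
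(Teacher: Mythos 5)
The paper does not prove Theorem~\ref{thm:facto}: it is stated as a cited result from \cite{simonfacto,kfacto}, and the sentence immediately before it explicitly refers the reader to \cite{kfacto,bfacto,cfacto} ``for more details and a proof.'' So there is no proof in the paper against which to compare yours, and deferring the combinatorial core to those references is exactly how the paper itself handles the theorem.

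Your sketch is the standard Kufleitner/Colcombet-style induction (strengthen to sub-semigroups, induct on $|T|$ via a maximal $\mathcal{J}$-class), which is indeed what the cited sources do, and you correctly flag the three-level assembly at the top as the technical heart. If you ever wanted to make the sketch self-contained, one step needs care: the assertion that $T \setminus J$ is a sub-semigroup ``by $\mathcal{J}$-maximality of $J$'' is only valid when $J$ is maximal among \emph{all} $\mathcal{J}$-classes of $T$, not merely among those realized by factors of $w$, since there could be unrealized elements whose classes lie strictly above $J$ and whose products fall into $J$. With the weaker choice of $J$ that you actually make, the correct set to recurse into is $T' = \{t \in T \mid J \not\leq_{\mathcal{J}} t\}$, which is a proper sub-semigroup and still contains every image of a non-empty factor of the intermediate pieces (any such image $a$ is realized, lies outside $J$, and by maximality of $J$ among realized classes cannot satisfy $J <_{\mathcal{J}} a$, hence $J \not\leq_{\mathcal{J}} a$). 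The other delicate point is that idempotent nodes require all children to have the \emph{same} image, not merely images in the same $\mathcal{H}$-class, so the group-theoretic bookkeeping at the top must be spelled out; again, this is precisely what the references do.
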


Observe that an immediate consequence of Theorem~\ref{thm:facto} is that for any $s \in M$, we have $\alpha^{-1}(s) = F^\alpha(s,3|M|-1,3|M|-1)$. We shall use Theorem~\ref{thm:facto} conjointly with a second result on factorization forests which takes the idempotent height into account. Given a word $w \in A^*$, an \emph{infix} of $w$ is a second word $u \in A^*$ such that $w = v_1uv_2$ for some $v_1,v_2 \in A^*$.

\begin{proposition} \label{prop:idheight}
	Consider a morphism $\alpha: A^* \to M$ and let $h,m \in \nat$. Let $w \in A^*$ admitting an $\alpha$-factorization forest of height at most $h$ and idempotent height at most $m$. Then any infix of $w$ admits an $\alpha$-factorization forest of height at most $h + 2$ and idempotent height at most $m$.
\end{proposition}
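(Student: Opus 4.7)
The plan is to prove Proposition~\ref{prop:idheight} by first establishing two tighter symmetric sub-lemmas, one for suffixes and one for prefixes, each giving the improved bound $h+1$, and then combining them. The suffix sub-lemma reads: if $w$ admits an $\alpha$-factorization forest of height at most $h$ and idempotent height at most $m$, then any suffix of $w$ admits an $\alpha$-factorization forest of height at most $h+1$ and idempotent height at most $m$. The prefix sub-lemma is symmetric. Given these two statements, the proposition follows immediately: any infix $u$ of $w$ is a prefix of some suffix $s$ of $w$, so the suffix lemma applied to $w$ yields a forest for $s$ of height at most $h+1$ with idempotent height at most $m$, and the prefix lemma applied to that forest for $s$ then yields a forest for $u$ of height at most $(h+1)+1 = h+2$ with idempotent height still at most $m$.

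I would prove the suffix lemma by induction on $h$. The base case $h=0$ is trivial since the forest is a leaf, so $w$ is a single letter or $\varepsilon$, and its only suffixes are $w$ or $\varepsilon$. For the inductive step, I split on the shape of the root of the forest of $w$. If the root is a binary node with children $w_1,w_2$, a non-empty suffix $s$ is either a suffix of $w_2$ (apply the induction hypothesis directly, giving height at most $h$) or of the form $s_1 w_2$ with $s_1$ a suffix of $w_1$; in the latter case I combine the induction-hypothesis forest for $s_1$ (height at most $h$, idempotent height at most $m$) with the subforest for $w_2$ (height at most $h-1$) using a binary node, obtaining height at most $h+1$. If the root is an idempotent node with children $w_1,\ldots,w_n$ all of $\alpha$-value equal to the idempotent $e$, a non-empty suffix has the form $s_k w_{k+1}\cdots w_n$ with $s_k$ a suffix of $w_k$. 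When $s_k$ equals $w_k$ or $\varepsilon$, $s$ is a concatenation of consecutive children of the root and admits a direct forest (idempotent node if at least three terms, binary if exactly two, or the subforest itself if only one) of height at most $h$ and idempotent height at most $m$. When $s_k$ is a proper non-empty suffix of $w_k$, the induction hypothesis gives a forest for $s_k$ of height at most $h$ and idempotent height at most $m-1$; combining this with the forest for the tail $w_{k+1}\cdots w_n$ via a binary node yields height at most $h+1$ and idempotent height at most $m$. The prefix lemma is symmetric.

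The main obstacle I anticipate is not in executing the inductive cases but in choosing the right intermediate statement. A direct induction targeting the $h+2$ bound for arbitrary infixes breaks down precisely at the idempotent-root case when the infix $u$ spans three or more children with both endpoints interior to the outermost children: writing $u = u_1 v u_2$ with $v = w_{i+1}\cdots w_{j-1}$ and $\alpha(v)=e$, the induction hypothesis would yield sub-forests of heights $h+1$, $h$, and $h+1$, and combining three subtrees using only binary nodes forces total height $h+3$. Gathering them under a single idempotent node is not an option either, because $\alpha(u_1)$ and $\alpha(u_2)$ need not equal $\alpha(v)=e$. The suffix/prefix detour sidesteps this obstruction: in each sub-lemma exactly one endpoint of the factor is free while the other is pinned at the boundary of $w$, so at every recursive step at most two sub-forests need to be merged, which costs exactly one extra level per step rather than two.
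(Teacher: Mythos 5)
Your proposal is correct and follows essentially the same route as the paper: it isolates the same prefix/suffix sub-lemma with the tighter $h+1$ bound, proves it by induction on the forest (with the same case split on the root node, using one binary node to reattach the truncated factor to the untouched tail), and composes the two sub-lemmas to get $h+2$ for infixes. Your closing paragraph explaining why a direct induction targeting $h+2$ breaks at idempotent nodes accurately pinpoints the obstruction that the prefix/suffix factorization is designed to avoid.
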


Let us prove Proposition~\ref{prop:idheight}. The argument is an induction on the $\alpha$-factorization forest of $w$: we show that it may be ``repaired'' into another forest for some infix of $w$ without adding any idempotent node. We first consider the special case of prefixes and suffixes.

\begin{lemma} \label{lem:idheight}
	Let $h,m \in \nat$ and consider a word $w \in A^*$ admitting an  $\alpha$-factorization forest of height at most $h$ and idempotent height at most $m$. Then any prefix or suffix of $w$ admits an $\alpha$-factorization forest of height at most $h+1$ and idempotent height at most $m$.
\end{lemma}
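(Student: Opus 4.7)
The plan is to proceed by induction on $h$, handling the case of prefixes (the case of suffixes is symmetric by the mirror argument). The base case $h = 0$ is immediate: $w$ is then a single letter or $\varepsilon$, so its only non-trivial prefix is $w$ itself, which already has an $\alpha$-factorization forest of height $0 \leq h+1$ and idempotent height $0 \leq m$.

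For the inductive step, fix an $\alpha$-factorization forest $F$ for $w$ of height $\leq h$ and idempotent height $\leq m$, and a prefix $u$ of $w$. Let $w_1, \dots, w_n$ be the labels of the children of the root of $F$, and let $F_1, \dots, F_n$ be the corresponding subforests; each $F_j$ has height $\leq h-1$. Choose $i \in \{0,\dots,n\}$ and $u'$ with $u'$ a prefix of $w_{i+1}$ (and $u' = \varepsilon$ if $i = n$) such that $u = w_1 \cdots w_i \cdot u'$. The plan is to build a forest for $u$ by assembling two pieces under a new binary root: a forest $G$ for $w_1 \cdots w_i$, and a forest $H$ for $u'$ obtained by the induction hypothesis applied to $F_{i+1}$. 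The piece $G$ is constructed as follows: if $i = 0$, it is empty (and $u = u'$, so $H$ alone suffices); if $i = 1$, take $G = F_1$; if $i = 2$, create a binary node whose children are $F_1, F_2$; if $i \geq 3$ (which can only arise when the root of $F$ is an idempotent node, so all $\alpha(w_j)$ equal the same idempotent $e$), create an idempotent node whose children are $F_1, \dots, F_i$. In every case $G$ has height $\leq h$.

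The accounting of idempotent heights is the delicate point. Split the inductive step according to the type of the root of $F$. If that root is binary, then every $F_j$ already has idempotent height $\leq m$; by induction $H$ has idempotent height $\leq m$, and $G$ (built from the $F_j$ using only binary nodes, since we never reach $i \geq 3$ when $n = 2$) also has idempotent height $\leq m$, so the final binary root yields a forest for $u$ of height $\leq h+1$ and idempotent height $\leq m$. If the root of $F$ is idempotent, then each $F_j$ has idempotent height $\leq m - 1$, and by induction $H$ has idempotent height $\leq (m-1)+1 \text{ but actually } \leq m-1$ from the hypothesis applied to $F_{i+1}$: more precisely, the induction gives $H$ of height $\leq h$ and idempotent height $\leq m-1$. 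The piece $G$ has idempotent height $\leq m - 1$ when it is built with a binary top node ($i \leq 2$) and $\leq (m-1) + 1 = m$ when it is built with an idempotent top node ($i \geq 3$). Placing a binary root above $G$ and $H$ then gives a forest for $u$ of height $\leq h + 1$ and idempotent height $\leq m$, as required.

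The main obstacle is ensuring that $G$ is always a valid factorization forest: the possibility $i = 2$ forces us to downgrade an idempotent root to a binary node, which is legal precisely because two children are allowed at a binary node, and the case $i \geq 3$ is legal only under an idempotent root of $F$ (guaranteeing all $w_1, \dots, w_i$ share the idempotent value $e$). The extra $+1$ in the height bound is consumed exactly once, by the new binary root that glues $G$ to $H$; no new idempotent node is introduced beyond those already present inside $G$, which is why the idempotent bound $m$ is preserved.
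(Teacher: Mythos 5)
Your proof is correct and takes essentially the same approach as the paper: structural induction on the forest, splitting at the child straddling the end of the prefix, applying the hypothesis to that child, and gluing the reconstructed left part to the recursive piece with a single new binary node. You are in fact slightly more careful than the paper's write-up about the cases $i \le 2$, where the left piece must be rebuilt with a binary node (or no new node at all) rather than an idempotent one.
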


Observe that any infix of a word $w \in A^*$ is by definition the prefix of a suffix of $w$. Hence, Proposition~\ref{prop:idheight} is obtained by applying Lemma~\ref{lem:idheight} twice (once for prefixes and once for suffixes). Therefore, we may concentrate on proving Lemma~\ref{lem:idheight}. We only treat the case of prefixes (the argument for suffixes is symmetrical).

Let $w \in A^*$ admitting an $\alpha$-factorization forest of height at most $h$ and idempotent height at most $m$. We shall denote this forest by \Fs. Consider a prefix $u$ of $w$. We construct an $\alpha$-factorization forest for $u$ using structural induction on \Fs. There are three cases depending on the root node of \Fs. If the root of \Fs is a leaf node, then $w = a$ or $w = \varepsilon$. In particular, $0 \leq h$ and $0 \leq m$. Since $u$ is a prefix of $w$, we have $u = a$ or $u = \varepsilon$. Hence $u$ admits an $\alpha$-factorization forest of height $0 \leq h$ and idempotent height $0 \leq m$.
	
If the root of \Fs is a binary node, it has two children labeled with words $w_1,w_2 \in A^*$ such that $w = w_1w_2$. Moreover, $w_1,w_2$ admit $\alpha$-factorization forests of heights at most $h-1$ and idempotent heights at most $m$. Since $u$ is a prefix of $w = w_1w_2$ by hypothesis, there are two possibles cases. First, $u$ may be a prefix of $w_1$. In this case, it suffices to apply induction to the forest of $w_1$ to get the desired forest for $u$. Otherwise, there exists a prefix $u'$ of $w_2$ such that $u = w_1u'$. We may apply induction to the forest of $w_2$ to get a forest of height at most $h-1+1 = h$ and idempotent height at most $m$ for $u'$. Using one binary node, one may then combine the forests of $w_1$ and $u'$ into a single forest for $u =w_1 u'$. By construction, this new forest has height at most $h + 1$ and idempotent height at most $m$.

We finish with the case when the root of \Fs is an idempotent node. Its children are labeled with words $w_1,\dots,w_n$ such that $w = w_1 \cdots w_n$ and $\alpha(w_1) = \cdots = \alpha(w_n) = e$ for some idempotent $e \in M$. Moreover, the words $w_1,\cdots,w_n$ admit $\alpha$-factorization forests of heights at most $h-1$ and idempotent heights at most $m-1$. Let $i$ be the smallest natural such that $u$ is a prefix of $w_1 \cdots w_i$. If $i = 1$, $u$ is a prefix of $w_1$ and it suffices to apply induction to the forest of $w_1$ to get the desired forest for $u$. Otherwise, there exists a prefix $u'$ of $w_i$ such that $u = w_1 \cdots w_{i-1}u'$. We know that, 
\begin{itemize}
\item $u'$ admits an $\alpha$-factorization forest of height at most $(h-1) + 1 = h$ and idempotent height at most $m-1$ (this is by	induction).
\item $w_{1} \cdots w_{i-1}$ admit an $\alpha$-factorization forest of height at most $h$ and idempotent height at most $m$ (the root is an idempotent node whose children are labeled with $w_{1}, \dots, w_{i-1}$).
\end{itemize}
These two forests can be combined into a single forest for $u$ with one binary node. By definition, this forest has height at most $h+1$ and idempotent height at most $m$.

\subsection{A stratification for \pol{\Cs}}

We turn to the second mathematical tool that we shall need: a stratification for classes of the form \pol{\Cs} when \Cs is a finite \pvari.

\begin{remark}
	We do not present a stratification for \pbpol{\Cs} (even if our ultimate objective is to solve \pbpol{\Cs}-covering). Indeed, it turns out that the stratification of \pol{\Cs} that we present here suffices to handle both \pol{\Cs} and \pbpol{\Cs}.
\end{remark}

We fix an arbitrary finite \pvari \Cs and define a finite \pvari \polk{\Cs} for each $k \in \nat$. The definition uses induction on $k$ and counts the number of marked concatenations that are necessary to define each language in \pol{\Cs}.
\begin{itemize}
	\item When $k = 0$, we simply define $\polp{\Cs}{0} = \Cs$.
	\item When $k \geq 1$, \polk{\Cs} is the least lattice such that:
	\begin{enumerate}
		\item $\polp{\Cs}{k-1} \subseteq \polk{\Cs}$.
		\item For any $a \in A$ and $L_1,L_2 \in \polp{\Cs}{k-1}$, we have $L_1 a L_2 \in \polk{\Cs}$.
	\end{enumerate}
\end{itemize}

This concludes the definition. Since \Cs was a finite lattice, it is immediate that all classes \polk{\Cs} are finite lattices as well. Moreover, by Theorem~\ref{thm:pclos}, we indeed have,
\[
\text{For all $k \in \nat$, } \polk{\Cs} \subseteq \polp{\Cs}{k+1} \quad \text{and} \quad \pol{\Cs} = \bigcup_{k \in \nat} \polk{\Cs}
\]
Finally, it is straightforward to verify that all classes \polk{\Cs} are \quotienting (the argument is identical to the one used for proving that the whole class \pol{\Cs} is \quotienting, see Lemma~27 in~\cite{pzgenconcat} for a proof). Thus, we did define a stratification of \pol{\Cs}. We now prove a few properties of this stratification that we shall need.

\medskip

Consider the canonical preorder relations associated to the strata. For any $k \in \nat$, we denote by \polrelk the preorder associated to \polk{\Cs}. Recall that for any $w,w' \in A^*$, we have,
\[
w \polrelk w'   \quad \text{ if and only if} \quad  \text{for all languages $L \in \polk{\Cs}$,\ $w \in L \Rightarrow w' \in L$} 
\]
We showed in Lemma~\ref{lem:canosatur} that the languages in \polk{\Cs} are exactly the upper sets for the relation $\polrelk$. Moreover, since the lattices \polk{\Cs} are \quotienting, we know from Lemma~\ref{lem:canoquo} that the relations \polrelk are compatible with word concatenation.

\medskip

We present two specific properties of the preorders \polrelk. We start with an alternate definition of \polrelk which is easier to manipulate when proving these two properties. Recall that we write \canoc for the canonical preorder associated to the finite \pvari \Cs.

\begin{lemma}\label{lem:hintro:preoinduc}
	Let $k \in \nat$. For any $w,w' \in A^*$, we have $w \polrelk w'$ if and only if the two following properties hold:
	\begin{enumerate}
		\item $w \canoc w'$
		\item If $k \geq 1$, for any decomposition $w = uav$ with $u,v \in A^*$ and $a \in A$, there exist $u',v' \in A^*$ such that $w' = u'av'$, $u \polrelp{k-1} u'$ and $v \polrelp{k-1} v'$.
	\end{enumerate}
\end{lemma}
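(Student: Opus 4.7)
The plan is to proceed by induction on $k$, handling the two implications separately in the inductive step. The base case $k=0$ should be immediate: since $\polp{\Cs}{0}=\Cs$, the relation $\polrelp{0}$ coincides with $\canoc$, which is exactly condition~(1), and condition~(2) is vacuously required.

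For the inductive step ($k \geq 1$), I would first tackle the forward direction. Assuming $w \polrelk w'$, condition~(1) follows immediately from $\Cs \subseteq \polk{\Cs}$. For condition~(2), given a decomposition $w = uav$, the idea is to construct an explicit witnessing language in $\polk{\Cs}$. I would set $\upclos{k-1}{u} = \{x \in A^* \mid u \polrelp{k-1} x\}$ and $\upclos{k-1}{v} = \{y \in A^* \mid v \polrelp{k-1} y\}$. By transitivity of $\polrelp{k-1}$, both are upper sets for this preorder, so Lemma~\ref{lem:canosatur} (applied to the finite lattice $\polp{\Cs}{k-1}$) places them in $\polp{\Cs}{k-1}$. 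Consequently $K = \upclos{k-1}{u} \cdot a \cdot \upclos{k-1}{v}$ lies in $\polk{\Cs}$ by the defining clause of the stratification, and clearly $w = uav \in K$. From $w \polrelk w'$ I get $w' \in K$, and reading off the induced factorization of $w'$ produces the required $u', v'$.

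For the backward direction, I would assume (1) and (2) hold for $k$ and show $w' \in L$ for every $L \in \polk{\Cs}$ containing $w$, by structural induction on how $L$ is built inside the least lattice defining $\polk{\Cs}$. The union and intersection steps are routine, leaving two generating cases. When $L \in \polp{\Cs}{k-1}$, I would first observe that conditions (1) and (2) for $k$ imply the analogous conditions for $k-1$: (1) is identical, and for (2) the relation $\polrelp{k-1}$ is finer than $\polrelp{k-2}$ because $\polp{\Cs}{k-2} \subseteq \polp{\Cs}{k-1}$. The outer induction hypothesis then delivers $w \polrelp{k-1} w'$, so $w' \in L$. When instead $L = L_1 a L_2$ with $L_1, L_2 \in \polp{\Cs}{k-1}$, I would write $w = xay$ with $x \in L_1$, $y \in L_2$, apply condition~(2) to obtain $x', y'$ with $w' = x'ay'$, $x \polrelp{k-1} x'$ and $y \polrelp{k-1} y'$, and conclude $x' \in L_1$, $y' \in L_2$ since $L_1$ and $L_2$ are upper sets for $\polrelp{k-1}$.

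The main obstacle will be the forward implication for condition~(2): one must guess and verify the separator $\upclos{k-1}{u} \cdot a \cdot \upclos{k-1}{v}$. The critical input is Lemma~\ref{lem:canosatur} combined with the finiteness of $\polp{\Cs}{k-1}$, which ensures that these upward closures are themselves in the lattice; without finiteness the strategy would collapse. Everything else reduces to careful bookkeeping once the right induction hypothesis is in place.
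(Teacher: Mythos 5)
Your proposal is correct and mirrors the paper's own proof essentially step for step: the forward direction constructs the witnessing language $\upclos{k-1}{u}\,a\,\upclos{k-1}{v}$ using Lemma~\ref{lem:canosatur} and the finiteness of $\polp{\Cs}{k-1}$, and the backward direction proceeds by structural induction on how $L$ is built from $\polp{\Cs}{k-1}$ and marked concatenations, with the same three cases. The only cosmetic difference is that you frame the entire argument as one induction on $k$ whereas the paper only invokes the outer induction hypothesis in the backward direction, but that is just presentation.
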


\begin{proof}
	Assume first that $w \polrelk w'$. We have to prove that the two items in the lemma hold. For the first item, observe that by definition, $\Cs \subseteq \polk{\Cs}$. Therefore, $w \polrelk w' \Rightarrow w \canoc w'$. We turn to the second item. Assume that $k \geq 1$ and consider a decomposition $w = uav$ of $w$. We have to find an appropriate decomposition of $w'$. Let $K_u = \{u' \in A^* \mid u \polrelp{k-1} u'\}$ and $K_v = \{v' \in A^* \mid v \polrelp{k-1} v'\}$. By definition $K_u,K_v$ are upper sets for \polrelp{k-1} and it follows from Lemma~\ref{lem:canosatur} that $K_u,H_v \in \polp{\Cs}{k-1}$. Hence, $K_u aK_v \in \polk{\Cs}$ by definition. Moreover, since $w = uav \in  K_u aK_v$ and $w \polrelk w'$, it follows that $w' \in K_u aK_v$. Therefore, we obtain $u' \in K_u$ and $v' \in K_v$ such that $w' = u'av'$. It is then immediate by definition of $K_u$ and $K_v$ that $u \polrelp{k-1} u'$ and $v \polrelp{k-1} v'$.
	
	We turn to the other direction. Let $k \in \nat$. Assuming that the two assertions in the lemma hold for $k$, we prove that $w \polrelk w'$. The argument is an induction  using induction on $k$. When $k = 0$, this is immediate from the first item since $\polp{\Cs}{0} = \Cs$ (and therefore, $\polrelp{0}$ and \canoc are the same relation). We now assume that $k \geq 1$. Let $L \in \polk{\Cs}$, we have to prove that $w \in L \Rightarrow w' \in L$. By definition, $L$ is constructed by applying finitely many unions and intersections to the two following kinds of languages:
	\begin{enumerate}
		\item Languages in \polp{\Cs}{k-1}.
		\item Languages of the form $L_1aL_2$ with $L_1,L_2 \in \polp{\Cs}{k-1}$.	
	\end{enumerate}
	We use a sub-induction on this construction.
	\begin{itemize}
		\item When $L \in \polp{\Cs}{k-1}$ the implication is immediate. Since \polrelk is finer than \polrelp{k-1}, the assertions in the lemma also hold for $k-1$. We get $w \polrelp{k-1} w'$ by induction on $k$.
		\item Assume now that $L = L_1aL_2$ with $L_1,L_2 \in \polp{\Cs}{k-1}$. If $w \in L = L_1aL_2$, then it admits a decomposition $w = uav$ with $u \in L_1$ and $v \in L_2$. By the second item, we obtain $u',v' \in A^*$ such that $w' = u'av'$, $u \polrelp{k-1} u'$ and $v \polrelp{k-1} v'$. In particular, since $L_1,L_2 \in \polp{\Cs}{k-1}$, it follows by definition of $\polrelp{k-1}$ that $u' \in L_1$ and $v' \in L_2$, i.e. $w' =u'av' \in L_1aL_2 = L$.
		\item Finally, if $L = L_1 \cup L_2$ or $L = L_1 \cap L_2$, induction yields that $w \in L_1 \Rightarrow w' \in L_1$ and $w \in L_2 \Rightarrow w' \in L_2$ and therefore, $w \in L \Rightarrow w' \in L$.			
	\end{itemize}
	This terminates the proof of Lemma~\ref{lem:hintro:preoinduc}.
\end{proof}

We now use Lemma~\ref{lem:hintro:preoinduc} to present and prove two characteristic properties of the preorders \polrelk which we shall use multiple times. Let us first present the following fact which introduces a characteristic natural number of \Cs that is involved in both properties.

\begin{fct}\label{fct:omegapower}
	There exists a natural number $p \geq 1$ such that for any word $u \in A^*$ and natural numbers $m,m' \geq 1$, we have $u^{pm}  \canoc  u^{pm'}$.
\end{fct}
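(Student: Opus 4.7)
The plan is to leverage the algebraic structure of $\Cs$ that is already established in the preceding lemmas. Since $\Cs$ is a finite \vari, Lemma~\ref{lem:canoequiv} tells us that $\canoc$ coincides with the equivalence relation $\sim_\Cs$, which has finite index. Moreover, Lemma~\ref{lem:canoquo} ensures that $\sim_\Cs$ is a congruence for word concatenation. Consequently, the quotient $M_\Cs \stackrel{\text{def}}= A^*/{\sim_\Cs}$ is a \emph{finite} monoid, and the map $w \mapsto \ctype{w}$ is a morphism from $A^*$ onto $M_\Cs$.

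Now I would define $p \stackrel{\text{def}}= \omega(M_\Cs)$, the integer associated to the finite monoid $M_\Cs$ for which $t^p$ is idempotent for every $t \in M_\Cs$ (recalled at the start of Section~\ref{sec:tools}). Fix an arbitrary $u \in A^*$ and let $t = \ctype{u} \in M_\Cs$. Since $t^p$ is idempotent, a trivial induction gives $(t^p)^m = t^p$ for every $m \geq 1$. Because $w \mapsto \ctype{w}$ is a morphism, this yields
\[
\ctype{u^{pm}} = t^{pm} = (t^p)^m = t^p = (t^p)^{m'} = t^{pm'} = \ctype{u^{pm'}}
\]
for all $m,m' \geq 1$. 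In other words, $u^{pm} \sim_\Cs u^{pm'}$, which by Lemma~\ref{lem:canoequiv} is the same as $u^{pm} \canoc u^{pm'}$, as required.

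There is no real obstacle: the statement is a direct consequence of finiteness of $M_\Cs$ combined with the fact that every element of a finite monoid admits an idempotent power. The only subtlety worth highlighting is that $p$ must be chosen \emph{uniformly} in $u$, which is why we pick it as $\omega(M_\Cs)$ rather than as an $\omega$-power depending on the individual element $\ctype{u}$.
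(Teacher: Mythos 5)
There is a genuine gap in the setup of your argument, even though the core idea — pass to a finite quotient monoid and pick $p$ as its idempotent power — is exactly what the paper does. In the subsection where Fact~\ref{fct:omegapower} lives, \Cs is only assumed to be a finite \emph{\pvari} (a quotienting lattice), not a finite \vari (a quotienting Boolean algebra). You invoke Lemma~\ref{lem:canoequiv} to conclude that \canoc coincides with the equivalence $\sim_\Cs$, but that lemma requires \Cs to be a Boolean algebra; for a mere lattice, \canoc is a preorder that need not be symmetric, so there is no well-defined quotient $A^*/{\canoc}$. This matters for how the Fact is actually used later: for instance, in the soundness proof for \pbpol{\Cs} the period is taken for $\compc{\Ds}$, which is a finite \pvari but not a Boolean algebra.

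The paper's fix is small but necessary: instead of quotienting by \canoc itself, quotient by its symmetrization $\sim$, defined by $w \sim w'$ iff $w \canoc w'$ \emph{and} $w' \canoc w$. Lemmas~\ref{lem:canosatur} and~\ref{lem:canoquo} give that \canoc has finitely many upper sets and is compatible with concatenation, so $\sim$ is a congruence of finite index and $A^*/{\sim}$ is a finite monoid. Taking $p = \omega(A^*/{\sim})$, your computation goes through verbatim with $\sim$ in place of $\sim_\Cs$, and from $u^{pm} \sim u^{pm'}$ you get in particular $u^{pm} \canoc u^{pm'}$. So the route you chose is correct once you drop the reliance on Lemma~\ref{lem:canoequiv} and work with the generated equivalence rather than the raw preorder.
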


\begin{proof}
	Let $\sim$ be the equivalence generated by \canoc. That is, for any $w,w' \in A^*$, $w \sim w'$ if and only if $w \canoc w'$ and $w' \canoc w$. We know from Lemmas~\ref{lem:canosatur} and~\ref{lem:canoquo} that \canoc has finitely many upper sets and is compatible with word concatenation. Therefore, $\sim$ is a congruence of finite index for word concatenation. It follows that the quotient set ${A^*}/{\sim}$ is a finite monoid. We let $p = \omega({A^*}/{\sim})$ (the idempotent power of ${A^*}/{\sim}$). The fact is now immediate.
\end{proof}

We shall call the natural number $p \geq 1$ described in Fact~\ref{fct:omegapower} the \emph{period of the \pvari \Cs}. We are now ready to state the first of our two properties.

\begin{lemma} \label{lem:hintro:propreo1}
	Let $p$ be the period of \Cs and $k \in \nat$. Then, for any $m,m' \geq 2^{k+1}-1$ and any word $u \in A^*$, we have $u^{pm} \polrelk u^{pm'}$.
\end{lemma}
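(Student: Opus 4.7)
The plan is to proceed by induction on $k$, exploiting the inductive characterisation of the preorder $\polrelk$ given by Lemma~\ref{lem:hintro:preoinduc}. The base case $k = 0$ reduces to $u^{pm} \canoc u^{pm'}$ for $m, m' \geq 1$, which is precisely the defining property of the period $p$ stated in Fact~\ref{fct:omegapower}. For the inductive step I will fix $k \geq 1$, assume as induction hypothesis that $u^{pn} \polrelp{k-1} u^{pn'}$ holds whenever $n, n' \geq 2^k - 1$, and prove $u^{pm} \polrelk u^{pm'}$ for $m, m' \geq 2^{k+1} - 1$. Clause~(1) of Lemma~\ref{lem:hintro:preoinduc} is immediate from the base case, so the work lies entirely in clause~(2).

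To verify clause~(2), I will consider an arbitrary decomposition $u^{pm} = XaY$ with $a \in A$; the case $u = \varepsilon$ is vacuous, so I assume $u \neq \varepsilon$. The letter $a$ then lies inside one copy of $u$, giving a factorisation $u = u_1 a u_2$ and integers $q, r \geq 0$ with $q + r = pm - 1$, $X = u^q u_1$ and $Y = u_2 u^r$. The goal will be to find $q', r' \geq 0$ satisfying $q' + r' = pm' - 1$, $u^q \polrelp{k-1} u^{q'}$ and $u^r \polrelp{k-1} u^{r'}$; the witnesses for clause~(2) are then $X' = u^{q'} u_1$ and $Y' = u_2 u^{r'}$, with the outer $u_1$ and $u_2$ absorbed by the precongruence of $\polrelp{k-1}$ (Lemma~\ref{lem:canoquo}).

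To pick $q'$ and $r'$, I will write $q = pa + b$ and $r = pc + d$ with $0 \leq b, d < p$. Since $b + d \leq 2p - 2$, the equality $p(a+c) + (b+d) = pm - 1$ forces $b + d = p - 1$ and $a + c = m - 1$. The analysis then splits into three cases. If $a \leq 2^k - 1$, I take $q' = q$; then $r' = pm' - 1 - q = p(m' - a - 1) + (p - 1 - b)$ shares the residue of $r$ modulo $p$, and its quotient $m' - a - 1$ is at least $2^k - 1$ thanks to $m' \geq 2^{k+1} - 1$, so the induction hypothesis together with precongruence yields $u^r \polrelp{k-1} u^{r'}$. The symmetric case $c \leq 2^k - 1$ works with $r' = r$. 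Otherwise $a, c \geq 2^k$, and I set $q' = p(2^k - 1) + b$ and $r' = p(m' - 2^k) + d$; a direct check gives $q' + r' = p(m' - 1) + (p - 1) = pm' - 1$, and both quotients $2^k - 1$ and $m' - 2^k$ are at least $2^k - 1$ (using $m' \geq 2 \cdot 2^k - 1$), so two applications of the induction hypothesis combined with precongruence deliver both required comparisons.

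The hard part will be this last case, where neither side of the split is small enough to be matched directly with the corresponding side of some split of $u^{pm'}$. The arithmetic identity $a + c = m - 1$, forced by the computation modulo $p$, is precisely what makes the bound $2^{k+1} - 1 = 2(2^k - 1) + 1$ tight: one needs enough copies of $u^p$ in $u^{pm'}$ to place $2^k - 1$ of them on the left of the split while still retaining at least $2^k - 1$ on the right. Everything else is bookkeeping.
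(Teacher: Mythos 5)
Your proposal is correct and is essentially the paper's proof: induction on $k$ via Lemma~\ref{lem:hintro:preoinduc}, decomposing $u^{pm}$ around the factor containing the marked letter and adjusting exponents to produce the matching decomposition of $u^{pm'}$, with a case split on whether a side is already short enough to be kept verbatim. The only cosmetic difference is that you parametrize the split at the level of single $u$-blocks and recover the $u^p$-block alignment from the forced arithmetic identity $b+d=p-1$, $a+c=m-1$, where the paper writes the $u^p$-block decomposition $u^{pm}=u^{pm_1}v_1av_2u^{pm_2}$ directly and then invokes symmetry to reduce the case count.
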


\begin{proof}
	Let $m,m' \geq 2^{k+1}-1$ and $u$ some word, we prove that $u^{pm} \polrelk u^{pm'}$. This amounts to proving that the two items in Lemma~\ref{lem:hintro:preoinduc} hold. The argument is an induction on $k$. For the first item, it suffices to prove that $u^{pm} \canoc u^{pm'}$. This is immediate by choice of $p$ in Fact~\ref{fct:omegapower}. This concludes the case $k = 0$.
	
	We now consider the second item (which may only happen when $k \geq 1$). Consider a decomposition $u^{pm}  = w_1aw_2$. We have to find a decomposition $u^{pm'} = w'_1aw'_2$ such that $w_1 \polrelp{k-1} w'_1$ and $w_2 \polrelp{k-1} w'_2$. By definition, the letter $a$ in the decomposition $u^{pm}  = w_1aw_2$ falls within some factor $u^p$ of $u^{pm}$. Let us refine the decomposition to isolate this factor. We have $u^{pm}  = u^{pm_1}v_1av_2u^{pm_2}$ where,
	\begin{itemize}
		\item $m = m_1+1+m_2$
		\item $v_1av_2 = u^p$.
		\item $u^{pm_1}v_1 = w_1$ and $v_2u^{pm_2} = w_2$.
	\end{itemize}
	Since $m \geq 2^{k+1}-1$ by hypothesis and $m = m_1+1+m_2$, either $m_1 \geq 2^{k}-1$ or $m_2 \geq 2^{k}-1$ (possibly both). By symmetry, let us assume that $m_1 \geq 2^{k} - 1$. We use the following claim.
	
	\begin{claim}
		There exist $m'_1,m'_2 \geq 1$ such that $m' = m'_1+1+m'_2$, $u^{pm_1} \polrelp{k-1} u^{pm'_1}$ and $u^{pm_2} \polrelp{k-1} u^{pm'_2}$.
	\end{claim}
	
	\begin{proof}
		There are two cases depending on whether $m_2 \geq 2^{k}-1$ or not. Assume first that, $m_2 \geq 2^{k}-1$. Since $m' \geq 2^{k+1}-1$, we may choose $m'_1,m'_2 \geq 2^k-1$ such that $m' = m'_1 + 1 + m'_2$. It is now immediate from induction on $k$ that $u^{pm_1} \polrelp{k-1}{n} u^{pm'_1}$ and $u^{pm_2} \polrelp{k-1} u^{pm'_2}$. Otherwise, $m_2 < 2^{k}-1$. We let $m'_2 = m_2$ and $m'_1 = m' - 1 - m'_2$. Clearly, $m'_1 \geq 2^{k} -1$ since $m' \geq 2^{k+1}-1$. Hence, we get $u^{pm_1} \polrelp{k-1} u^{pm'_1}$ from induction on $k$. Furthermore, $u^{pm_2} \polrelp{k-1} u^{pm'_2}$ is immediate since $m_2 = m'_2$ by definition.
	\end{proof}
	
	We may now finish the proof of Item~2. Let $m'_1,m'_2 \geq 1$ be as defined in the claim. We let $w'_1 = u^{pm'_1}v_1$ and $w'_2 = v_2u^{pm'_2}$. Clearly, $w'_1 aw'_2 = u^{pm'}$ since $v_1 av_2 = u^p$ and $m' = m'_1+1+m'_2$. Moreover, since \polrelp{k-1} is compatible with multiplication, we have
	\[
	\begin{array}{rll}
	w_1 = u^{pm_1}v_1 & \polrelp{k-1} & u^{pm'_1}v_1 = w'_1 \\
	w_2 = v_2u^{pm_2} & \polrelp{k-1} & v_2u^{pm'_2} = w'_2
	\end{array}
	\]
	This terminates the proof of Item~2.
\end{proof}

We turn to the second lemma which states a characteristic property of classes built with polynomial closure.

\begin{lemma} \label{lem:hintro:propreo2}
	Let $p$ be the period of \Cs and $k \in \nat$. Let $u,v \in A^*$ such that $u^p \canoc v$. Then, for any $m,m'_1,m'_2 \geq 2^{k+1}-1$, we have $u^{pm} \polrelk u^{pm'_1}vu^{pm'_2}$.
\end{lemma}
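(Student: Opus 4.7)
The plan is to prove Lemma~\ref{lem:hintro:propreo2} by induction on $k$, using the alternate characterization of $\polrelk$ provided by Lemma~\ref{lem:hintro:preoinduc}. That lemma reduces checking $u^{pm} \polrelk u^{pm'_1} v u^{pm'_2}$ to verifying that $u^{pm} \canoc u^{pm'_1} v u^{pm'_2}$ together with, for every decomposition $u^{pm} = w_1 a w_2$, the existence of a matching decomposition $u^{pm'_1} v u^{pm'_2} = w'_1 a w'_2$ with $w_1 \polrelp{k-1} w'_1$ and $w_2 \polrelp{k-1} w'_2$.

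The $\canoc$ part (which simultaneously settles the base case $k = 0$ and Item~1 of the inductive step) is handled by chaining Fact~\ref{fct:omegapower}, which yields $u^{pm} \canoc u^{p(m'_1 + 1 + m'_2)}$, with the concatenation-compatibility of $\canoc$ (Lemma~\ref{lem:canoquo}) and the hypothesis $u^p \canoc v$, which give $u^{p(m'_1 + 1 + m'_2)} \canoc u^{pm'_1} v u^{pm'_2}$. The heart of the argument is Item~2 of the inductive step. Given a decomposition $u^{pm} = w_1 a w_2$, the letter $a$ must lie inside some $u^p$ factor, so I may write $u^{pm} = u^{pn_1} v_1 a v_2 u^{pn_2}$ with $v_1 a v_2 = u^p$, $n_1 + 1 + n_2 = m$, $w_1 = u^{pn_1} v_1$, and $w_2 = v_2 u^{pn_2}$. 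The bound $m \geq 2^{k+1} - 1$ forces at least one of $n_1, n_2$ to be $\geq 2^k - 1$, and by symmetry I assume $n_1 \geq 2^k - 1$.

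The choice of matching decomposition of $u^{pm'_1} v u^{pm'_2}$ then splits on the size of $n_2$. If $n_2 \geq 2^k - 1$ as well, I place $a$ inside the $i$-th $u^p$ factor of the prefix $u^{pm'_1}$ for some $i$ with $2^k \leq i \leq m'_1 - 2^k + 1$, which exists because $m'_1 \geq 2^{k+1} - 1$. The $w_1$-side comparison then reduces, after right-multiplication by $v_1$, to Lemma~\ref{lem:hintro:propreo1} at level $k-1$, and the $w_2$-side comparison reduces, after left-multiplication by $v_2$, to the inductive hypothesis of the present lemma at level $k-1$ applied to the triple $(n_2,\,m'_1 - i,\,m'_2)$. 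If instead $n_2 < 2^k - 1$, then $n_1 \geq 2^k$, and I mirror the placement on the right by inserting $a$ inside the $u^p$ factor of $u^{pm'_2}$ located at offset $n_2$ from its end. This choice makes $w'_2$ coincide exactly with $w_2$, while the $w_1$-side comparison becomes the inductive hypothesis applied to $(n_1,\,m'_1,\,m'_2 - n_2 - 1)$, whose three components are all $\geq 2^k - 1$ thanks to the hypotheses on $m'_1, m'_2$ and the bound $n_2 < 2^k - 1$.

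The main obstacle I expect is picking the correct placement of $a$ in the right-hand word: a symmetric middle-placement inside $u^{pm'_1}$ only works when both $n_1$ and $n_2$ are large, and the asymmetric case forces the matching to be \emph{exact} on the short side so that the entire insertion of $v$ can be absorbed by the inductive hypothesis on the long side. Once this case split is set up, verifying the numerical thresholds ($\geq 2^k - 1$) on every parameter fed into Lemma~\ref{lem:hintro:propreo1} and into the inductive hypothesis is routine.
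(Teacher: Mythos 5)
Your proof is correct and follows essentially the same strategy as the paper's: induction on $k$ via Lemma~\ref{lem:hintro:preoinduc}, isolating the $u^p$ factor containing the distinguished letter, then a case split on whether the short exponent exceeds $2^k-1$, invoking Lemma~\ref{lem:hintro:propreo1} on one half and the inductive hypothesis on the $v$-containing half. The only minor deviation is that when both $n_1,n_2\geq 2^k-1$ you split the block $u^{pm'_1}$ and route $v$ to the suffix $w'_2$, whereas the paper uniformly splits $u^{pm'_2}$ and keeps $v$ in $w'_1$ across both sub-cases; both choices satisfy the required thresholds and the proof goes through.
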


\begin{proof}
	The proof is similar to that of Lemma~\ref{lem:hintro:propreo1}. Let $k \in \nat$, $u,v$ satisfying $v \canoc u^p$, and $m,m'_1,m'_2 \geq 2^{k+1}-1$. We prove that $u^{pm} \polrelk u^{pm'_1}vu^{pm'_2}$. This amounts to proving that the two items in Lemma~\ref{lem:hintro:preoinduc} hold. The argument is an induction on $k$.
	
	For Item~1, we prove that $u^{pm} \canoc u^{pm'_1}vu^{pm'_2}$. By hypothesis on $u,v$, we know that $u^p \canoc v$. Therefore, since \canoc is compatible with concatenation, we get that $u^{p(m'_1+ 1 + m'_2)} \canoc u^{pm'_1}vu^{pm'_2}$. Finally, we obtain by choice of $p$ in Fact~\ref{fct:omegapower} that $u^{pm} \canoc u^{p(m'_1+ 1 + m'_2)}$. This finishes the proof of Item~1 (and the case $k = 0$) by transitivity.
	
	We now consider the second item (which may only happen when $k \geq 1$). Consider a decomposition $u^{pm}  = w_1aw_2$. We have to find a decomposition $u^{pm'_1}vu^{pm'_2} = w'_1aw'_2$ such that $w_1 \polrelp{k-1} w'_1$ and $w_2 \polrelp{k-1} w'_2$. By definition, the letter $a$ in the decomposition $u^{pm}  = w_1aw_2$ falls within some factor $u^p$ of $u^{pm}$. Let us refine the decomposition to isolate this factor. We have $u^{pm}  = u^{pm_1}v_1av_2u^{pm_2}$ where,
	\begin{itemize}
		\item $m = m_1+1+m_2$
		\item $v_1av_2 = u^p$.
		\item $u^{pm_1}v_1 = w_1$ and $v_2u^{pm_2} = w_2$.
	\end{itemize}
	Since $m \geq 2^{k+1}-1$ by hypothesis and $m = m_1+1+m_2$, either $m_1 \geq 2^{k}-1$ or $m_2 \geq 2^{k}-1$ (possibly both). By symmetry, let us assume that $m_1 \geq 2^{k} - 1$. We use the following claim.
	
	\begin{claim}
		There exist $\ell'_1,\ell'_2 \in \nat$ such that $m'_2 = \ell'_1+1+\ell'_2$, $u^{pm_1} \polrelp{k-1} u^{pm'_1}vu^{p\ell'_1}$ and $u^{pm_2} \polrelp{k-1} u^{p\ell'_2}$.
	\end{claim}
	
	\begin{proof}
		There are two cases depending on whether $m_2 \geq 2^{k}-1$ or not. Assume first that, $m_2 \geq 2^{k}-1$. Since $m'_2 \geq 2^{k+1}-1$, we may choose $\ell'_1,\ell'_2 \geq 2^k-1$ such that $m'_2 = \ell'_1 + 1 + \ell'_2$. That $u^{pm_1} \polrelp{k-1} u^{pm'_1}vu^{p\ell'_1}$ follows from induction on $k$. Moreover, we know that the inequality $u^{pm'_2} \polrelp{k-1} u^{p\ell'_2}$ holds thanks to Lemma~\ref{lem:hintro:propreo1}.
		
		Otherwise, $m_2 < 2^{k}-1$. We let $\ell'_2 = m_2$ and $\ell'_1 = m'_2 - 1 - \ell'_2$. Clearly, $\ell'_1 \geq 2^{k} -1$ since $m'_2 \geq 2^{k+1}-1$. Hence, we get $u^{pm_1} \polrelp{k-1} u^{pm'_1}vu^{p\ell'_1}$ from induction on $k$. Furthermore, $u^{pm_2} \polrelp{k-1} u^{p\ell'_2}$ is immediate since $m_2 = \ell'_2$ by definition.	
	\end{proof}
	
	We may now finish the proof of Item~2. Let $\ell'_1,\ell'_2 \geq 1$ be as defined in the claim. We let $w'_1 = u^{pm'_1}vu^{p\ell'_1}v_1$ and $w'_2 = v_2u^{p\ell'_2}$. Clearly, $w'_1 aw'_2 = u^{pm'_1}vu^{pm'_2}$ since $v_1 av_2 = u^p$ and $m_2' = \ell'_1+1+\ell'_2$. Moreover, since \polrelp{k-1} is compatible with multiplication, we have
	\[
	\begin{array}{rll}
	w_1 = u^{pm_1}v_1 & \polrelp{k-1} & u^{pm'_1}vu^{p\ell'_1}v_1 = w'_1 \\
	w_2 = v_2u^{pm_2} & \polrelp{k-1} & v_2u^{p\ell'_2} = w'_2
	\end{array}
	\]
	This terminates the proof of Item~2.
\end{proof}


\section{Framework: \ratms and optimal covers}
\label{sec:covers}
We now present the general framework that we use for formulating our covering algorithms. The notions that we introduce here were originally designed in~\cite{pzcovering2} and we only recall what we need for presenting our results. We refer the reader to~\cite{pzcovering2} for details and background on these notions.

Given a lattice \Ds, our input for the \Ds-covering problem is a pair $(L,\Lb)$ where $L$ is a regular language and \Lb a finite set of regular languages: we want to know whether there exists a \Ds-cover of $L$ which is separating for \Lb. The key idea behind our framework is to replace the set \Lb by a (more general) algebraic object called \emph{\mratm}. Intuitively, \mratms are designed to measure the quality of \Ds-covers. Given a \mratm $\rho$ and a language $L$, we use $\rho$ to rank the existing \Ds-covers of $L$. This leads to the definition of ``{\bf optimal}'' \Ds-cover of $L$. We are able to reformulate our problem with these notions. The key idea is that instead of deciding whether $(L,\Lb)$ is \Ds-coverable, we compute an optimal \Ds-cover for $L$ for a \mratm $\rho$ that we build from $\Lb$. We have two motivations for relying on this approach,
\begin{enumerate}
	\item It yields elegant formulations for covering algorithms. Beyond the two that we present in the paper, we refer the reader to~\cite{pzcovering2} for more examples. 
	\item More importantly, recall that our main goal in the paper is \pbpol{\Cs}-covering.   As we already explained, this will require to first prove a preliminary result for \pol{\Cs} which is stronger than the decidability of \pol{\Cs}-covering (this is why we reprove the result of~\cite{pzboolpol} for \pol{\Cs}-covering along the way). The framework presented here is exactly what we need in order to precisely formulate this stronger result.
\end{enumerate}

We start by defining \mratms. Then, we explain how they are used to measure the quality of a cover and define optimal covers. Finally, we connect these notions to the covering problem. Let us point out that several statements presented here are without proof, we refer the reader to~\cite{pzcovering2} for these proofs.

\subsection{\Mratms}

In order to present \mratms, we need to introduce a new algebraic structure: \emph{hemirings}. A hemiring is a set $R$ equipped with two binary operations called addition (``$+$'') and multiplication (``$\cdot$'') respectively. Moreover, the following axioms have to be satisfied:
\begin{itemize}
	\item $R$ is a commutative monoid for addition. The neutral element is denoted by $0_R$.
	\item $R$ is a semigroup for multiplication.
	\item Multiplication distributes over addition: for all $r,s,t \in R$ we have,
	\[
	\begin{array}{lll}
	t \cdot (r + s) & = & (t \cdot r) + (t \cdot s) \\
	(r + s) \cdot t & = & (r \cdot t) + (s \cdot t)
	\end{array}
	\]
	\item $0_R$ is a zero for the multiplication: for any $r \in R$:
	\[
	0_R \cdot r = r \cdot 0_R = 0_R
	\]	
\end{itemize}

\begin{remark}
	Hemirings are a generalization of the more standard notion of semiring which additionally asks for the multiplication to have a neutral element.  
\end{remark}

Finally, a hemiring $R$ is \emph{idempotent} when all elements are idempotents for addition: for all $r \in R$, we have $r + r = r$. In the paper, we only work with idempotent hemirings. Observe that when $R$ is such a hemiring, one may define a \emph{canonical partial order} on $R$. Given $r,s \in R$, we shall write $r \leq s$ when $s+r = s$. One may verify that this is indeed a partial order (the fact that $R$ is idempotent is required here) which is compatible with both addition and multiplication.

\begin{example}
	The most simple example of an idempotent hemiring (which is crucial here) is the set $2^{A^*}$ of all languages over $A$. Indeed, it suffices to use union as the addition (the neutral element is $\emptyset$) and language concatenation as the multiplication. Observe that the canonical partial order is inclusion ($L \subseteq H$ if and only if $H \cup L = H$). In fact, $2^{A^*}$ has even more structure: it is a semiring ($\{\varepsilon\}$ is neutral for multiplication).
\end{example}

We may now define \mratms. We call a hemiring morphism $\rho: 2^{A^*} \to R$ into a \emph{finite} idempotent hemiring $R$ a \emph{\mratm}. Specifically, $\rho$ has to satisfy the following axioms: 
\begin{enumerate}
	\item\label{itm:bgen:fzer} $\rho(\emptyset) = 0_R$.
	\item\label{itm:bgen:fadd} For any $K_1,K_2 \subseteq A^*$, $\rho(K_1 \cup K_2) = \rho(K_1) + \rho(K_2)$.	
	\item\label{itm:bgen:fmul} For any $K_1,K_2 \subseteq A^*$, $\rho(K_1K_2) = \rho(K_1) \cdot \rho(K_2)$.
\end{enumerate}

\begin{remark}
	In~\cite{pzcovering2}, the definition of \mratms is slightly more restrictive: $R$ must be a finite idempotent semiring and $\rho$ must be a semiring morphism. This change is harmless and could actually be avoided: working with semirings suffices to make the connection with covering as seen in~\cite{pzcovering2}. We make it for convenience: the proof of our \pbpol{\Cs}-covering algorithm (in Section~\ref{sec:pbpol}) involves auxiliary \mratms and allowing hemirings simplifies their presentation (see Remark~\ref{rem:standmult}).
\end{remark}

For the sake of improved readability, when applying a \mratm $\rho$ to a singleton language $K = \{w\}$, we shall write $\rho(w)$ for $\rho(\{w\})$. Note that $\rho$ is increasing for the canonical orders on $2^{A^*}$ and $R$ (this is true for any morphism of idempotent hemirings).

\begin{fct} \label{fct:fadd}
	Consider a \mratm $\rho: 2^{A^*} \to R$. Then, for any two languages $K_1,K_2 \subseteq A^*$, the following property holds,	$K_1 \subseteq K_2 \Rightarrow \rho(K_1) \leq \rho(K_2)$.
\end{fct}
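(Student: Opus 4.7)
The plan is to exploit the characterization of the canonical partial order on $R$ and the additivity axiom of a \mratm.

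First I would rewrite the containment hypothesis as an equality: if $K_1 \subseteq K_2$, then $K_2 = K_1 \cup K_2$. This is the only way the hypothesis will interact with the structure of $\rho$, which only ``sees'' union and concatenation.

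Next I would apply $\rho$ to both sides and invoke additivity (axiom~\ref{itm:bgen:fadd} in the definition of a \mratm) to obtain
\[
\rho(K_2) \;=\; \rho(K_1 \cup K_2) \;=\; \rho(K_1) + \rho(K_2).
\]
Since addition in $R$ is commutative, this equality is exactly the defining condition $\rho(K_2) + \rho(K_1) = \rho(K_2)$ for the canonical partial order on the idempotent hemiring $R$, so $\rho(K_1) \leq \rho(K_2)$.

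No step here is really an obstacle; the whole content is that the canonical order on an idempotent hemiring is precisely what makes a hemiring morphism monotone with respect to inclusion. The only subtlety worth mentioning is that idempotence of $R$ is implicitly used so that ``$\leq$'' is actually a partial order, but for this fact only the defining equivalence $r \leq s \iff s + r = s$ is needed.
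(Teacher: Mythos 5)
Your proof is correct and is exactly the standard argument the paper has in mind (the paper states the fact without proof, noting only that any morphism of idempotent hemirings is increasing for the canonical orders). Rewriting inclusion as $K_2 = K_1 \cup K_2$, applying additivity, and reading off the order from $\rho(K_2) + \rho(K_1) = \rho(K_2)$ is precisely the intended reasoning.
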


\medskip
\noindent
{\bf \Cs-compatible \mratms.} We lift the notion of \Cs-compatibility to \mratms. Consider a \mratm $\rho: 2^{A^*} \to R$ and a finite \vari \Cs. We say that $\rho$ is \Cs-compatible when for any $r \in R$, there exists a $\sim_{\Cs}$-class \ctype{r} such that for any $w \in A^*$ satisfying $\rho(w) = r$, we have $w \in \ctype{r}$. When we manipulate \Cs-compatible \mratms, we shall implicitly assume that the map $r \mapsto \ctype{r}$ is fixed.

\begin{remark}
	This definition is distinct from the one of~\cite{pzcovering2}. We make this choice because our definition is simpler and suffices for presenting our results. However, this is harmless: one may show that the two definitions coincide for \nice \mratms (defined below) and this is the only situation in which we use \Cs-compatibility here.
\end{remark}

\medskip
\noindent
{\bf \Nice \mratms.} We now define a special class of \mratms which is crucial: it is used for the connection with covering. Consider a \mratm $\rho: 2^{A^*} \to R$. We say that $\rho$ is {\bf \nice} if for any language $K \subseteq A^*$, there exist words $w_1,\dots,w_n \in K$ such that,
\[
\rho(K) =  \rho(w_1) + \cdots + \rho(w_n)
\]
\begin{remark}
	There exist \mratms which are not \nice. For example, let $R = \{0,1,2\}$ be the semiring equipped with the following addition and multiplication. For $i,j \in R$, we let $i + j = max(i,j)$, $0 \cdot i = i \cdot 0 = 0$, $1 \cdot i = i \cdot 1 = i$ and $2 \cdot 2 = 2$. Moreover, consider the \mratm $\rho: 2^{A^*} \to R$ defined $\rho(\emptyset) = 0$, $\rho(K) = 1$ when $K$ is non-empty and finite, and $\rho(K) = 2$ when $K$ is infinite. Clearly, $\rho(A^*) = 2$. However, given finitely many words $w_1,\dots,w_n \in A^*$, we have $\rho(w_1) + \cdots + \rho(w_n) = 1 \neq 2$.	 
\end{remark}

A key point is that \nice \mratms are finitely representable (which is not the case in general). Let us explain why. Given any \mratm $\rho: 2^{A^*} \to R$ we associate a canonical morphism $\beta: A^* \to M$ (where $M$ is a monoid computed from $R$). We then argue than when $\rho$ is \nice, it is characterized by this canonical morphism $\beta$.

Consider an arbitrary \mratm $\rho: 2^{A^*} \to R$. We associate a canonical monoid morphism $\beta$ to $\rho$. Essentially, $\beta$ is the restriction of $\rho$ to $A^*$. However, since $R$ need not be a monoid for multiplication, we also have to restrict the output set $R$.

We let $R_{A^*} \subseteq R$ as the set $R_{A^*} = \{\rho(w) \mid w \in A^*\}$. Clearly, since $\rho$ is a \mratm, $R_{A^*}$ is a sub-semigroup of $R$ for multiplication.  Moreover, it is a monoid whose neutral element is $\rho(\varepsilon)$. We now define the canonical morphism $\beta$ associated to $\rho$ as the following monoid morphism $\beta: A^* \to R_{A^*}$:
\[\begin{aligned}[t]
  \beta: A^* &\to R_{A^*} \\
  w   &\mapsto \rho(w)
\end{aligned}\]
When $\rho: 2^{A^*} \to R$ is \nice, it is fully determined by the associated canonical morphism $\beta: A^* \to R_{A^*}$ and the finite hemiring $R$. Indeed, the definition of \nice \mratms exactly states that $\rho(K) = \sum_{w \in K} \rho(w)$. Hence, $\beta$ and the addition of $R$ determine $\rho(K)$ for any language $K$. Consequently, it makes sense to speak of algorithms which take a \nice \mratm as input.

\begin{remark}
	Observe that when we have a \nice \mratm $\rho$ in hand, it is possible to evaluate $\rho(K)$ when $K$ is a regular language. Indeed, $\rho(K)$ is the sum of all elements $\rho(w)$ for $w \in K$ which is simple to evaluate (see Lemma~5.7 in~\cite{pzcovering2} for details).
\end{remark}

Finally, we have the following fact which is immediate from the definitions. 

\begin{fct} \label{fct:canoiscompat}
	Consider a \Cs-compatible \mratm $\rho: 2^{A^*} \to R$. Then the associated canonical morphism $\beta: A^* \to R_{A^*}$ is \Cs-compatible.
\end{fct}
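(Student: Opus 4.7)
The plan is to unwind the two definitions and observe that they align on singletons, so the statement reduces to a trivial rewriting. First, I would recall the two notions at play. On the \mratm side, $\rho : 2^{A^*} \to R$ being \Cs-compatible means that to every $r \in R$ we may associate a $\sim_{\Cs}$-class $\ctype{r}$ such that every word $w \in A^*$ with $\rho(w) = r$ satisfies $w \in \ctype{r}$. On the morphism side, $\beta : A^* \to R_{A^*}$ being \Cs-compatible means that to every $s \in R_{A^*}$ we may associate a $\sim_{\Cs}$-class $\ctype{s}$ with $\beta^{-1}(s) \subseteq \ctype{s}$.

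Now I would fix an arbitrary $s \in R_{A^*}$. Since $R_{A^*} \subseteq R$, we also have $s \in R$, so \Cs-compatibility of $\rho$ hands us a $\sim_{\Cs}$-class $\ctype{s}$ such that every $w \in A^*$ with $\rho(w) = s$ lies in $\ctype{s}$. I would then use exactly this $\sim_{\Cs}$-class as the one witnessing \Cs-compatibility for $\beta$ at $s$.

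It remains to check the inclusion $\beta^{-1}(s) \subseteq \ctype{s}$. By the very definition of the canonical morphism, $\beta(w) = \rho(w)$ for every $w \in A^*$, so
\[
  \beta^{-1}(s) \;=\; \{\,w \in A^* \mid \beta(w) = s\,\} \;=\; \{\,w \in A^* \mid \rho(w) = s\,\}.
\]
By the choice of $\ctype{s}$, each word in the right-hand set belongs to $\ctype{s}$, yielding $\beta^{-1}(s) \subseteq \ctype{s}$, as required.

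There is no real obstacle here: the statement is essentially a tautology once one observes that $\beta$ is the restriction of $\rho$ to singletons (with codomain trimmed to $R_{A^*}$), and that the \Cs-compatibility axiom for \mratms is already phrased in terms of the values $\rho(w)$ for single words $w$. The only mild subtlety worth flagging is that the codomain of $\beta$ is $R_{A^*}$ rather than $R$, but this is harmless: every $s \in R_{A^*}$ is by definition of the form $\rho(w)$ for some $w$, so the preimage $\beta^{-1}(s)$ is non-empty and the class $\ctype{s}$ supplied by the \Cs-compatibility of $\rho$ is in fact uniquely determined by $s$.
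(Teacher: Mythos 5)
Your argument is correct and is precisely the ``immediate from the definitions'' unwinding that the paper leaves implicit: for $s \in R_{A^*} \subseteq R$, the class $\ctype{s}$ from the \Cs-compatibility of $\rho$ works verbatim for $\beta$ since $\beta^{-1}(s) = \{w \mid \rho(w)=s\}$. Nothing more needs to be said.
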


\subsection{Connection with finite sets of regular languages.}  We now explain how one may associate a (computable) \mratm to a finite set of regular languages. This will later be useful to reformulate the covering problem with our framework.

Consider a finite set of languages \Lb and a \mratm $\rho: 2^{A^*} \to R$. Observe that $2^\Lb$ is an idempotent commutative monoid for union. We say that \emph{$\rho$ extends $\Lb$} when there exists a morphism $\delta: R \to 2^\Lb$ for the addition of $R$ (i.e. $\delta(0_R) = \emptyset$ and $\delta(r_1+r_2) = \delta(r_1) \cup \delta(r_2)$ for any $r_1,r_2 \in R$) such that for every language $K$,
\[
\delta(\rho(K)) = \{L \in \Lb \mid K \cap L \neq \emptyset\} \in 2^\Lb
\]
We call $\delta$ an \emph{extending morphism}. Let us point out that $\delta$ is finitely representable since both \Lb and $R$ are finite. We connect this definition to covering with the following fact.

\begin{fct} \label{fct:extend}
	Consider a finite set of languages \Lb together with a \mratm $\rho: 2^{A^*} \to R$ which extends \Lb for the extending morphism $\delta: R \to 2^\Lb$.  Then, a finite set of languages \Kb is separating for \Lb if and only if $\delta(\rho(K)) \neq \Lb$ for every $K \in \Kb$.
\end{fct}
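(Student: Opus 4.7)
The plan is to unfold both sides of the equivalence until they reduce to the same statement, using only the definition of ``extending morphism'' and of ``separating cover''. The argument is entirely mechanical, so I would present it as a single short chain of equivalences rather than splitting into two implications.

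First, I would recall the definition: $\Kb$ is separating for $\Lb$ means that for every $K \in \Kb$, there exists some $L \in \Lb$ with $K \cap L = \emptyset$. Equivalently, for every $K \in \Kb$, the set $\{L \in \Lb \mid K \cap L \neq \emptyset\}$ is a strict subset of $\Lb$ (it misses at least one $L$). Second, I would invoke the hypothesis that $\rho$ extends $\Lb$ via $\delta$: by definition of the extending morphism, we have the equality $\delta(\rho(K)) = \{L \in \Lb \mid K \cap L \neq \emptyset\}$ for every language $K$.

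Combining these two observations yields the result: $\Kb$ is separating for $\Lb$ if and only if $\{L \in \Lb \mid K \cap L \neq \emptyset\} \neq \Lb$ for every $K \in \Kb$, if and only if $\delta(\rho(K)) \neq \Lb$ for every $K \in \Kb$. There is no real obstacle here; the statement is a direct translation of the combinatorial condition ``$\Kb$ is separating'' into algebraic language via the extending morphism $\delta$. The only thing to be slightly careful about is not to confuse ``$\delta(\rho(K)) \neq \Lb$'' with ``$\delta(\rho(K)) = \emptyset$'': the former only says that \emph{some} $L \in \Lb$ is disjoint from $K$, which is exactly what separation requires, while the latter would be the stronger condition that \emph{every} $L \in \Lb$ is disjoint from $K$.
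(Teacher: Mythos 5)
Your proof is correct and matches the paper's approach exactly; the paper simply declares the fact ``immediate from the definitions'' while you spell out the mechanical chain of equivalences (separating cover $\Leftrightarrow$ the set $\{L \in \Lb \mid K \cap L \neq \emptyset\}$ is a proper subset of $\Lb$ for each $K$ $\Leftrightarrow$ $\delta(\rho(K)) \neq \Lb$ for each $K$). Your closing remark distinguishing ``$\neq \Lb$'' from ``$= \emptyset$'' is a useful sanity check, though not strictly needed.
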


\begin{proof}
	Immediate from the definitions. Recall that \Kb is separating for \Lb if for every $K \in \Kb$, there exits $L \in \Lb$ such that $K \cap L = \emptyset$.
\end{proof}

We now show that for any finite set of regular languages \Lb, one may compute an extending \nice \Cs-compatible \mratm.

\begin{proposition} \label{prop:extend}
	Let \Cs be a finite \vari. Given as input a finite set of regular languages \Lb, one may compute a \nice \Cs-compatible \mratm $\rho: 2^{A^*} \to R$ which extends \Lb and the corresponding extending morphism $\delta: R \to 2^\Lb$.
\end{proposition}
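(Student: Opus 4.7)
The plan is to reduce the construction to a single monoid morphism that simultaneously recognizes every language in $\Lb$ and is $\Cs$-compatible, and then to package it as a rating map via the powerset semiring. Concretely, for each $L \in \Lb$, I would first compute a monoid morphism $\alpha_L : A^* \to M_L$ with accepting set $F_L \subseteq M_L$ such that $L = \alpha_L^{-1}(F_L)$. Using Lemma~\ref{lem:compat} and taking the Cartesian product of the $\alpha_L$'s together with the canonical morphism into $A^*/\sim_\Cs$, I obtain a single $\Cs$-compatible morphism $\alpha : A^* \to N$ where $N$ is a finite, effectively computable monoid. For every $L \in \Lb$, there is an explicit set $G_L \subseteq N$ (the preimage of $F_L$ under the projection $N \to M_L$) satisfying $L = \alpha^{-1}(G_L)$.

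Next, I would take $R = 2^N$ equipped with union as addition and the pointwise product $S \cdot T = \{st \mid s \in S,\ t \in T\}$ as multiplication. This is a finite idempotent semiring (and in particular a hemiring), which is explicitly computable from $N$. Define $\rho : 2^{A^*} \to R$ by $\rho(K) = \{\alpha(w) \mid w \in K\}$. The three axioms of a \mratm are then routine: $\rho(\emptyset) = \emptyset = 0_R$ is immediate, $\rho(K_1 \cup K_2) = \rho(K_1) \cup \rho(K_2)$ is additivity, and $\rho(K_1 K_2) = \rho(K_1) \cdot \rho(K_2)$ follows because $\alpha$ is a monoid morphism.

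I would then verify niceness and $\Cs$-compatibility. For niceness, $\rho(K)$ is a subset of the finite set $N$, so for any $K \subseteq A^*$ we may pick at most $|N|$ words $w_1,\dots,w_n \in K$ such that $\{\alpha(w_1),\dots,\alpha(w_n)\} = \rho(K)$, which yields $\rho(K) = \rho(w_1) + \cdots + \rho(w_n)$. For $\Cs$-compatibility, observe that for every word $w$, $\rho(w) = \{\alpha(w)\}$ is a singleton; hence if $\rho(w) = r$, then $r$ is a singleton $\{s\}$ with $\alpha(w) = s$, and $\Cs$-compatibility of $\alpha$ (Lemma~\ref{lem:compat}) forces all such $w$ to lie in the same $\sim_\Cs$-class. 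I declare that class to be $\ctype{r}$; for the remaining $r \in R$ (those not realized by any word), any $\sim_\Cs$-class will do.

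Finally, I would define the extending morphism by $\delta(S) = \{L \in \Lb \mid S \cap G_L \neq \emptyset\}$ for every $S \in R$. Clearly $\delta(\emptyset) = \emptyset$ and $\delta(S_1 \cup S_2) = \delta(S_1) \cup \delta(S_2)$, so $\delta$ is additive; and by construction $\delta(\rho(K)) = \{L \in \Lb \mid \alpha(K) \cap G_L \neq \emptyset\} = \{L \in \Lb \mid K \cap L \neq \emptyset\}$, which is exactly the extending condition. All the objects involved ($N$, $R = 2^N$, $\rho$ via the canonical morphism $\beta$ induced on $A^*$, and $\delta$) are finite and effectively computable from $\Lb$ and $\Cs$. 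I do not expect any significant obstacle here: the argument is a routine application of the product-monoid trick, and the only point requiring care is ensuring $\Cs$-compatibility is preserved, which is handled by including $A^*/\sim_\Cs$ as one of the factors used to build $N$.
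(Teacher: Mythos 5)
Your proof is correct and follows essentially the same strategy as the paper: pass to a powerset hemiring built from monoid morphisms recognizing the languages in $\Lb$, define $\rho$ as the induced image map $K \mapsto \alpha(K)$, and define $\delta$ by testing intersection with the accepting sets. The only difference is organizational—you take $R = 2^N$ for $N$ the single product monoid (with the $A^*/\sim_\Cs$ factor built in), whereas the paper first makes each $\alpha_i$ individually $\Cs$-compatible via Lemma~\ref{lem:compat} and uses the exponentially smaller product of powersets $R = 2^{M_1} \times \cdots \times 2^{M_n}$—but the verification of the \mratm axioms, niceness, $\Cs$-compatibility, and the extending-morphism condition goes through identically either way.
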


\begin{proof}
	Let $\Lb = \{L_1,\dots,L_n\}$. By Lemma~\ref{lem:compat}, for any $i \leq n$, one may compute a \Cs-compatible morphism $\alpha_i: A^* \to M_i$ which recognizes $L_i$. We let $F_i$ as the corresponding accepting set: $L_i = \alpha_i\inv(F_i)$. One may verify that for any $i \leq n$, the set $2^{M_i}$ is a finite idempotent hemiring. The addition is union and the multiplication is obtained by lifting the one of $M_i$: for $S,T \in 2^{M_i}$ their multiplication is $S \cdot T = \{st \mid s \in S \text{ and } t \in T\}$. We define,
	\[
	R = 2^{M_1} \times \cdots \times 2^{M_n}
	\]
	It is straightforward to verify that $R$ is a finite idempotent hemiring for the componentwise addition and multiplication. We define our \mratm $\rho: 2^{A^*} \to R$ as follows:
	\[
	\begin{array}{llll}
	\rho: & 2^{A^*} & \to     & R                               \\
	& K       & \mapsto & (\alpha_1(K),\dots,\alpha_n(K))
	\end{array}
	\]
	One may verify that $\rho$ is indeed a \nice \Cs-compatible \mratm (\Cs-compatibility comes from the fact that we started from \Cs-compatible morphisms). 
	
	It remains to explain why $\rho$ extends \Lb and how to compute an extending morphism $\delta: R \to 2^\Lb$. We define $\delta: R \to 2^\Lb$ as follows. Consider $r= (S_1,\dots,S_n) \in R$, for $i \leq n$, we have $L_i \in \delta(r)$ if and only if $S_i \cap F_i \neq \emptyset$. Clearly, one may compute $\delta$ from the morphism $\alpha_i$. Let us verify that it is an extending morphism. Let $K \subseteq A^*$, we show that $\delta(\rho(K)) = \{L \in \Lb \mid K \cap L \neq \emptyset\}$. Given $L_i \in \Lb$, we have to show that $L_i \in \delta(\rho(K))$ if and only if $K \cap L_i \neq \emptyset$. By definition, $L_i \in \delta(\rho(K))$ if and only if $\alpha_i(K) \cap F_i \neq \emptyset$. Since $L_i = \alpha_i\inv(F_i)$, this is equivalent to $K \cap L_i \neq \emptyset$, finishing the proof.
\end{proof}

\subsection{\Imprints and optimal covers}

We may now explain how we use \mratms to measure the quality of covers. This is based on a new notion called ``\imprints''. Consider a \mratm $\rho: 2^{A^*} \to R$ (possibly not \nice). For any finite set of languages \Kb, the $\rho$-\imprint of \Kb (denoted by $\prin{\rho}{\Kb} \subseteq R$) is the following set:
\[
\prin{\rho}{\Kb} = \{r \in R \mid \text{there exists $K \in \Kb$ such that $r \leq \rho(K)$}\} \subseteq R
\]
When using this notion, we will always have some language $L \subseteq A^*$ in hand and our objective will be to find the ``best possible'' cover \Kb of $L$. Intuitively, $\rho$-\imprints are designed for this purpose: given a candidate cover \Kb of $L$, we use the $\rho$-\imprint of \Kb to measure its ``quality''.

\medskip

This leads to the notion of optimality. Assume that some arbitrary lattice \Ds is fixed and consider a language $L$. An \emph{optimal \Ds-cover of $L$ for $\rho$} is a \Ds-cover of $L$ which has the smallest possible $\rho$-\imprint (with respect to inclusion). That is, \Kb is an optimal \Ds-cover of $L$ for $\rho$ if and only if,
\[
\prin{\rho}{\Kb} \subseteq \prin{\rho}{\Kb'} \quad \text{for every \Ds-cover $\Kb'$ of $L$}
\]
In general, there can be infinitely many optimal \Ds-covers of $L$ for $\rho$. However, there always exists at least one (we need the fact that \Ds is a lattice for this, see Lemma~4.15 in~\cite{pzcovering2} for the proof).

\begin{lemma} \label{lem:bgen:opt}
	Let \Ds be a lattice. Then, for any \mratm $\rho: 2^{A^*} \to R$ and any language $L \subseteq A^*$, there exists an optimal \Ds-cover of $L$ for $\rho$.
\end{lemma}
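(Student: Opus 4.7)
\medskip
\noindent
\textbf{Plan.} The proof rests on two simple ingredients: the finiteness of $R$ (which forces only finitely many distinct imprints to arise from $\Ds$-covers of $L$) and the closure of $\Ds$ under finite intersection (which allows us to combine two $\Ds$-covers into a finer one whose imprint is contained in both).

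First I would introduce the combination operation on covers: given two $\Ds$-covers $\Kb_1$ and $\Kb_2$ of $L$, define
\[
\Kb_1 \wedge \Kb_2 \;=\; \{K_1 \cap K_2 \mid K_1 \in \Kb_1,\ K_2 \in \Kb_2\}.
\]
This is a finite set of languages; each $K_1 \cap K_2$ lies in $\Ds$ because $\Ds$ is a lattice; and it covers $L$ since any $w \in L$ belongs to some $K_1 \in \Kb_1$ and some $K_2 \in \Kb_2$, hence to $K_1 \cap K_2$. Thus $\Kb_1 \wedge \Kb_2$ is again a $\Ds$-cover of $L$. The key monotonicity observation is that for any $K_1 \in \Kb_1$ and $K_2 \in \Kb_2$ we have $K_1 \cap K_2 \subseteq K_1$, so by Fact~\ref{fct:fadd} (applied to $\rho$), $\rho(K_1 \cap K_2) \leq \rho(K_1)$, and symmetrically $\rho(K_1 \cap K_2) \leq \rho(K_2)$. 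Consequently any $r \leq \rho(K_1 \cap K_2)$ belongs to both $\prin{\rho}{\Kb_1}$ and $\prin{\rho}{\Kb_2}$, yielding
\[
\prin{\rho}{\Kb_1 \wedge \Kb_2} \;\subseteq\; \prin{\rho}{\Kb_1} \cap \prin{\rho}{\Kb_2}.
\]

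Next I would use finiteness of $R$ to close the argument. Since $R$ is finite, it has only finitely many subsets, and in particular the family
\[
\Is \;=\; \{\,\prin{\rho}{\Kb'} \mid \Kb'\ \text{is a $\Ds$-cover of $L$}\,\}
\]
is a non-empty finite subset of $2^R$ (it is non-empty because $\{A^*\}$ is a $\Ds$-cover of $L$). Enumerate its elements as $I_1,\ldots,I_n$ and fix corresponding $\Ds$-covers $\Kb_1,\ldots,\Kb_n$ with $\prin{\rho}{\Kb_i} = I_i$. Set $\Kb = \Kb_1 \wedge \cdots \wedge \Kb_n$. By induction on the combination operation, $\Kb$ is a $\Ds$-cover of $L$ whose imprint satisfies $\prin{\rho}{\Kb} \subseteq I_1 \cap \cdots \cap I_n$. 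For any $\Ds$-cover $\Kb'$ of $L$, the imprint $\prin{\rho}{\Kb'}$ coincides with some $I_i$, so $\prin{\rho}{\Kb} \subseteq \prin{\rho}{\Kb'}$, establishing optimality.

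\medskip
\noindent
\textbf{Obstacles.} There is no serious obstacle: the argument is direct once one spots the right combinator on covers. The only point requiring a little care is to check that the intersection operation genuinely produces a $\Ds$-cover (which uses both the lattice property of $\Ds$ and the fact that covers are finite sets so the product remains finite) and that the monotonicity of $\rho$ along inclusions $K_1 \cap K_2 \subseteq K_i$ is enough to push the imprint down into $I_1 \cap I_2$. Note that the proof does not require $\rho$ to be \nice nor $\Ds$ to be \quotienting; closure under finite intersection and the finiteness of $R$ suffice.
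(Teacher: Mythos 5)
Your proof is correct and is the standard argument for this statement; the paper itself does not reproduce the proof (it refers to Lemma~4.15 of the covering-framework paper it cites), but your two ingredients --- the pairwise-intersection refinement $\Kb_1 \wedge \Kb_2$ of covers, which is again a $\Ds$-cover because $\Ds$ is a lattice, together with the observation that $\prin{\rho}{\Kb_1 \wedge \Kb_2} \subseteq \prin{\rho}{\Kb_1} \cap \prin{\rho}{\Kb_2}$ via monotonicity of $\rho$, and the finiteness of $2^R$ which bounds the family of attainable imprints --- are exactly the ones on which that proof rests. Your remark that the argument is non-constructive (one cannot effectively enumerate the imprints or the witnessing covers $\Kb_i$) also matches the paper's own comment immediately following the lemma.
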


The proof of Lemma~\ref{lem:bgen:opt} is non-constructive. In fact, given $L$ and $\rho: 2^{A^*} \to R$, computing an actual optimal \Ds-cover of $L$ for $\rho$ is a difficult problem in general. As seen in Theorem~\ref{thm:bgen:main} below, getting such algorithm yields a procedure for \Ds-covering. Before we can present this theorem, we need a key observation about optimal \Ds-covers.

\medskip
\noindent{\bf Optimal \imprints.} By definition, all optimal \Ds-covers of $L$ for $\rho$ have the same $\rho$-\imprint. Hence, this unique $\rho$-\imprint is a \emph{canonical} object for \Ds, $L$ and $\rho$. We call it the \emph{\Ds-optimal $\rho$-\imprint on $L$} and we denote it by $\opti{\Ds}{L,\rho}$:
\[
\opti{\Ds}{L,\rho} = \prin{\rho}{\Kb} \subseteq R \quad \text{for every optimal \Ds-cover \Kb of $L$ for $\rho$}.
\]

We complete this definition with a few properties of optimal \imprints. We start with two facts which will be useful (the proofs are available in~\cite{pzcovering2}, see Facts~4.16 and~4.17).

\begin{fct} \label{fct:inclus1}
	Consider a \mratm $\rho: 2^{A^*} \to R$ and a language $L$. Let $\Cs,\Ds$ be two lattices such that $\Cs \subseteq \Ds$. Then, $\opti{\Ds}{L,\rho} \subseteq \opti{\Cs}{L,\rho}$.
\end{fct}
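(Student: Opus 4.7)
The plan is to leverage the trivial containment $\Cs \subseteq \Ds$ together with the definition of optimality. The key observation is that any $\Cs$-cover of $L$ is automatically a $\Ds$-cover of $L$, since every language in $\Cs$ belongs to $\Ds$. Consequently, the collection of $\Ds$-covers of $L$ contains the collection of $\Cs$-covers of $L$, so the minimum $\rho$-\imprint attainable by ranging over $\Ds$-covers can only be smaller (for inclusion) than the one attainable by ranging over $\Cs$-covers.

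Concretely, I would first invoke Lemma~\ref{lem:bgen:opt} to pick an optimal $\Cs$-cover $\Kb_{\Cs}$ of $L$ for $\rho$, so that $\prin{\rho}{\Kb_{\Cs}} = \opti{\Cs}{L,\rho}$. Since every language in $\Kb_{\Cs}$ lies in $\Cs \subseteq \Ds$, the finite set $\Kb_{\Cs}$ is a (possibly non-optimal) $\Ds$-cover of $L$. I would then apply Lemma~\ref{lem:bgen:opt} again to pick an optimal $\Ds$-cover $\Kb_{\Ds}$ of $L$ for $\rho$. By the very definition of optimality, $\prin{\rho}{\Kb_{\Ds}} \subseteq \prin{\rho}{\Kb'}$ for every $\Ds$-cover $\Kb'$ of $L$. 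Instantiating this with $\Kb' = \Kb_{\Cs}$ yields
\[
\opti{\Ds}{L,\rho} \;=\; \prin{\rho}{\Kb_{\Ds}} \;\subseteq\; \prin{\rho}{\Kb_{\Cs}} \;=\; \opti{\Cs}{L,\rho},
\]
which is exactly the desired inclusion.

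There is no serious obstacle here: the statement is essentially a tautology once the notion of optimal \imprint has been unpacked, reflecting the general principle that enlarging the family of admissible covers can only shrink the infimum of their \imprints. The only ingredient one needs beyond the definitions is the existence of at least one optimal cover on each side, which is precisely what Lemma~\ref{lem:bgen:opt} guarantees.
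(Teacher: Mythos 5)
Your proof is correct and is the standard argument: since $\Cs\subseteq\Ds$, every $\Cs$-cover of $L$ is a $\Ds$-cover, so by definition the optimal $\Ds$-cover has \imprint included in that of any particular $\Cs$-cover, in particular an optimal one. The paper itself does not reproduce a proof for this fact (it defers to Fact~4.16 of the cited reference), so there is nothing to compare against beyond noting that your argument is the natural one and uses only Lemma~\ref{lem:bgen:opt} and the definition of optimal \imprints, exactly as one would expect.
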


\begin{fct} \label{fct:inclus2}
	Consider a \mratm $\rho: 2^{A^*} \to R$ and \Ds a lattice. Let $H,L$ be two languages such that $H \subseteq L$. Then, $\opti{\Ds}{H,\rho} \subseteq \opti{\Ds}{L,\rho}$.
\end{fct}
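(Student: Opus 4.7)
The plan is to derive the inclusion directly by exploiting the monotonicity of the cover relation with respect to the language being covered. The key observation is that any $\Ds$-cover of $L$ is automatically a $\Ds$-cover of every subset of $L$; combining this with optimality on the smaller language yields the claim.

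First, I would invoke Lemma~\ref{lem:bgen:opt} to obtain some optimal $\Ds$-cover $\Kb$ of $L$ for $\rho$, so that $\prin{\rho}{\Kb} = \opti{\Ds}{L,\rho}$ by the definition of \Ds-optimal $\rho$-\imprint. Next, I would observe that $\Kb$ is also a $\Ds$-cover of $H$: its members still belong to \Ds, and since $H \subseteq L \subseteq \bigcup_{K \in \Kb} K$, the set $\Kb$ covers $H$ as well.

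Then, I would apply Lemma~\ref{lem:bgen:opt} once more to pick an optimal $\Ds$-cover $\Kb'$ of $H$ for $\rho$, so that $\prin{\rho}{\Kb'} = \opti{\Ds}{H,\rho}$. Since $\Kb$ is a $\Ds$-cover of $H$, the very definition of optimality for $\Kb'$ gives $\prin{\rho}{\Kb'} \subseteq \prin{\rho}{\Kb}$. Chaining the two equalities with this inclusion yields
\[
\opti{\Ds}{H,\rho} \;=\; \prin{\rho}{\Kb'} \;\subseteq\; \prin{\rho}{\Kb} \;=\; \opti{\Ds}{L,\rho},
\]
which is the desired conclusion.

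There is no real obstacle here: the argument is a direct unwinding of the definitions, and the only nontrivial ingredient is the existence of optimal covers provided by Lemma~\ref{lem:bgen:opt} (which is where the hypothesis that \Ds is a lattice is used). Note in particular that neither niceness nor \Cs-compatibility of $\rho$ is needed; the fact holds for any \mratm.
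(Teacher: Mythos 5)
Your proof is correct. The paper itself does not spell out a proof for this fact but defers to an external reference (\cite{pzcovering2}, Facts~4.16 and~4.17); your argument is the natural, direct unwinding of the definitions one would expect there: take an optimal $\Ds$-cover of $L$, note it is also a $\Ds$-cover of the subset $H$, and compare against an optimal $\Ds$-cover of $H$ via the defining minimality of its $\rho$-\imprint. Your closing remark that neither niceness nor $\Cs$-compatibility is required is also accurate.
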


Moreover, we have the following important lemma which connects optimal \imprints to multiplication. Again, this lemma is proved in~\cite{pzcovering2}, (see Lemma~5.8)

\begin{lemma} \label{lem:closmult}
	Consider a \pvari \Ds. Let $L_1,L_2 \subseteq A^*$ be two languages and $\rho: 2^{A^*} \to R$ a \mratm. Then, for any $r_1 \in \opti{\Ds}{L_1,\rho}$ and $r_2 \in \opti{\Ds}{L_2,\rho}$, we have $r_1r_2 \in \opti{\Ds}{L_1L_2,\rho}$.	
\end{lemma}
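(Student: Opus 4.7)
The plan is to unfold the definition: since $\opti{\Ds}{L_1L_2,\rho}$ equals $\prin{\rho}{\Kb}$ for any optimal cover and, by minimality, is contained in $\prin{\rho}{\Kb}$ for an arbitrary cover, it suffices to show that for every $\Ds$-cover $\Kb$ of $L_1L_2$, some $K\in\Kb$ satisfies $r_1r_2\leq\rho(K)$. The argument will exploit the one hypothesis on $\Ds$ that has not yet been used in the preceding facts, namely closure under quotients.

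Fix such a $\Kb$. I build from $\Kb$ a finite family $\Kb_1\subseteq\Ds$ tailored so that any witness of $r_1$ inside $\Kb_1$ automatically carries multiplicative information about $r_2$. For each $K\in\Kb$ and each left quotient $H=u^{-1}K$ of $K$ satisfying $r_2\leq\rho(H)$, I put into $\Kb_1$ the language
\[
K\cdot H^{-1} \;:=\; \{\,u'\in A^* \mid u'H\subseteq K\,\} \;=\; \bigcap_{w\in H}Kw^{-1}.
\]
The key observation is that this intersection is in fact \emph{finite}: since $K$ is regular, it has only finitely many distinct right quotients $Kw^{-1}$, each of which lies in $\Ds$ by the quotienting axiom, and the lattice structure of $\Ds$ then places $K\cdot H^{-1}$ itself in $\Ds$.

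Next, $\Kb_1$ covers $L_1$. Given $u\in L_1$, the family $\{u^{-1}K\mid K\in\Kb\}$ is a $\Ds$-cover of $L_2$: any $v\in L_2$ satisfies $uv\in L_1L_2\subseteq\bigcup_{K\in\Kb}K$, hence $v\in u^{-1}K$ for some $K$. Applying $r_2\in\opti{\Ds}{L_2,\rho}$ to this cover produces $K\in\Kb$ with $r_2\leq\rho(u^{-1}K)$; setting $H=u^{-1}K$, the inclusion $uH\subseteq K$ shows $u\in K\cdot H^{-1}\in\Kb_1$.

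Finally, applying $r_1\in\opti{\Ds}{L_1,\rho}$ to the cover $\Kb_1$ produces some $K\in\Kb$ and a left quotient $H$ of $K$ witnessing both $r_2\leq\rho(H)$ and $r_1\leq\rho(K\cdot H^{-1})$. Monotonicity of multiplication in the canonical order of $R$, together with the multiplicativity of $\rho$, then yields
\[
r_1r_2 \;\leq\; \rho(K\cdot H^{-1})\cdot\rho(H) \;=\; \rho\bigl((K\cdot H^{-1})\cdot H\bigr) \;\leq\; \rho(K),
\]
using the tautological inclusion $(K\cdot H^{-1})\cdot H\subseteq K$ and Fact~\ref{fct:fadd}. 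The main subtlety, which I expect to be the principal obstacle, is the construction of $\Kb_1$ and in particular the verification that each $K\cdot H^{-1}$ belongs to $\Ds$; once this finite-intersection argument is in hand, the rest is a short algebraic computation.
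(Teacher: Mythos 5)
Your proof is correct. The reduction to showing $r_1r_2 \in \prin{\rho}{\Kb}$ for every $\Ds$-cover $\Kb$ of $L_1L_2$ is valid, the residual $K\cdot H^{-1}=\bigcap_{w\in H}Kw^{-1}$ does lie in $\Ds$ because $K$ is regular (so the intersection reduces to a finite one over the right quotients of $K$, each in $\Ds$ by the quotienting axiom, and $\Ds$ is a lattice), the covering argument via left quotients $u^{-1}K$ is sound, and the final chain $r_1r_2\leq\rho(K\cdot H^{-1})\rho(H)=\rho((K\cdot H^{-1})H)\leq\rho(K)$ is a correct use of multiplicativity and monotonicity of $\rho$.

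The paper does not include its own proof of this lemma --- it cites Lemma~5.8 of~\cite{pzcovering2} --- so a direct line-by-line comparison is not possible. That said, your argument is the standard residual-based proof of multiplicativity of optimal imprints for quotienting lattices, and it is in the same spirit as the cited source: one passes through a left-quotient cover of $L_2$ to pick a witness $H=u^{-1}K$, then bundles the words $u$ admitting such a witness into the right residual $K\cdot H^{-1}$, with regularity of $K$ supplying the finiteness needed to stay inside the lattice $\Ds$. The one point worth emphasizing explicitly, which you handled correctly but which is easy to overlook, is that the implicit blanket assumption of the paper that all classes consist of regular languages is what makes the ``infinite'' intersection $\bigcap_{w\in H}Kw^{-1}$ effectively finite; without regularity of $K$ the argument would fail.
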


Finally, we present a useful lemma which gives an alternate definition of \Ds-optimal \imprints when \Ds is a finite lattice. We shall use it in proofs together with stratifications. Given a preorder relation $\leqslant$ defined on $A^*$, a word $w \in A^*$ and a language $K \subseteq A^*$, we write $w \leqslant K$ to denote the fact that $w \leqslant u$ {\bf for all} $u \in K$.

\begin{lemma} \label{lem:finitedef}
	Assume that \Ds is a {\bf finite} lattice. Let $\rho:2^{A^*} \to R$ be a \mratm and $L$ a language. Given $r \in R$, the following are equivalent:
		\begin{enumerate}
			\item $r \in \opti{\Ds}{L,\rho}$.
			\item There exist $w \in L$ and $K \subseteq A^*$ such that $w \canod K$ and $r \leq \rho(K)$.
		\end{enumerate}	
\end{lemma}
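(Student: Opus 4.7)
The plan is to prove both directions by exploiting that $\Ds$ is a finite lattice, which by Lemma~\ref{lem:canosatur} means $\Ds$ consists exactly of the upper sets for $\canod$ and there are only finitely many such upper sets. It will also help to use the alternative description of the optimal \imprint as the intersection of the $\rho$-\imprints of all $\Ds$-covers of $L$: since every $\Ds$-cover's \imprint contains $\opti{\Ds}{L,\rho}$ (by minimality of optimal covers), $r$ belongs to $\opti{\Ds}{L,\rho}$ if and only if $r \in \prin{\rho}{\Kb}$ for every $\Ds$-cover \Kb of $L$.

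First I would show (2) $\Rightarrow$ (1). Suppose $w \in L$ and $K \subseteq A^*$ satisfy $w \canod K$ and $r \leq \rho(K)$. I would fix an arbitrary $\Ds$-cover \Kb of $L$ and exhibit $D \in \Kb$ with $r \leq \rho(D)$. Pick $D \in \Kb$ containing $w$. Since $D \in \Ds$, Lemma~\ref{lem:canosatur} ensures that $D$ is an upper set for $\canod$, so from $w \in D$ and $w \canod u$ for every $u \in K$ we get $K \subseteq D$. Fact~\ref{fct:fadd} then gives $\rho(K) \leq \rho(D)$, hence $r \leq \rho(D)$, so $r \in \prin{\rho}{\Kb}$. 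As \Kb was arbitrary, $r \in \opti{\Ds}{L,\rho}$.

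For the converse (1) $\Rightarrow$ (2), I would construct a single, finite $\Ds$-cover of $L$ whose \imprint directly produces the required $K$. For each $w \in L$, consider $U_w = \{u \in A^* \mid w \canod u\}$; this is an upper set for $\canod$, hence belongs to $\Ds$ by Lemma~\ref{lem:canosatur} (as in the proof of that lemma, $U_w$ is the intersection of all languages in $\Ds$ that contain $w$). The family $\Kb = \{U_w \mid w \in L\}$ clearly covers $L$ since $w \in U_w$, and it is finite because there are only finitely many upper sets for $\canod$. Applying the hypothesis $r \in \opti{\Ds}{L,\rho}$ to \Kb yields some $w \in L$ with $r \leq \rho(U_w)$; taking $K = U_w$ gives $w \canod K$ by definition of $U_w$, completing the proof.

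The substantive step is the construction of the principal up-sets $U_w$ in the second direction: the combination that every $U_w$ lies in $\Ds$ and that the family $\{U_w \mid w \in L\}$ is finite is exactly what lets us apply the optimality hypothesis to a single concrete cover. Everything else is bookkeeping with Lemma~\ref{lem:canosatur} and the monotonicity of $\rho$ recorded in Fact~\ref{fct:fadd}.
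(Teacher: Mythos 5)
Your proof is correct and follows the same approach as the paper's: the cover $\{U_w \mid w \in L\}$ of principal upper sets is exactly what the paper constructs, and both directions rest on the same ingredients (Lemma~\ref{lem:canosatur} giving $\Ds$ as the upper sets for $\canod$, and the monotonicity of $\rho$). The only cosmetic difference is in the $(2)\Rightarrow(1)$ direction, where you establish $r \in \prin{\rho}{\Kb}$ for an arbitrary $\Ds$-cover $\Kb$ and invoke that $\opti{\Ds}{L,\rho}$ is the intersection over all such imprints, whereas the paper fixes one optimal cover $\Hb$ and shows $r \in \prin{\rho}{\Hb}$ directly; these are equivalent reformulations.
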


\begin{proof}
	We start with preliminary terminology. For every $w \in A^*$, we let $K_w \subseteq A^*$ as the least upper set for \canod which contains $w$: $K_w = \{u \in A^* \mid w \canod v\}$. Recall that by Lemma~\ref{lem:canosatur}, the upper sets of \canod are exactly the languages in \Ds. In particular, there are finitely many languages $K_w$ (even though there are infinitely many words $w \in A^*$). It follows that the set $\Kb = \{K_u \mid u \in L\}$ is a \Ds-cover of $L$. We may now prove the lemma.
	
	Assume first that $r \in \opti{\Ds}{L,\rho}$. Since \Kb is a \Ds-cover of $L$, this implies that $r \in \prin{\rho}{\Kb}$ by definition of \Ds-optimal \imprints. Hence, we have $K \in \Kb$ such that $r \leq \rho(K)$. By definition, $K = K_w$ for some $w \in L$ which means that $w \canod K$ and the second assertion in the lemma holds. Conversely, assume that there exist $w \in L$ and $K \subseteq A^*$ such that $w \canod K$ and $r \leq \rho(K)$. Consider an arbitrary \Ds-optimal cover \Hb of $L$. By definition, $\prin{\rho}{\Hb} = \opti{\Ds}{L,\rho}$. Therefore, it suffices to show that $r \in \prin{\rho}{\Hb}$. Since \Hb is a cover of $L$ and $w \in L$, we have $H \in \Hb$ such that $w \in H$. Moreover, since $H \in \Ds$, it is a upper set for \canod by Lemma~\ref{lem:canosatur}. Consequently, $w \canod K$ and $w \in H$ imply that $K \subseteq H$ and we get that $r \leq \rho(K) \leq \rho(H)$. We conclude that $r \in \prin{\rho}{\Hb}$, finishing the proof.
\end{proof}

%
%

\subsection{Connection with the covering problem} We may now connect these definitions to the covering problem. We do so with the following theorem.

\begin{theorem} \label{thm:bgen:main}
	Consider a lattice \Ds. Let $L \subseteq A^*$ be a language and \Lb a finite set of languages. Moreover, let $\rho: 2^{A^*} \to R$ be a \mratm extending \Lb for the extending morphism $\delta: R \to 2^\Lb$. The following properties are equivalent:
	\begin{enumerate}
		\item $(L,\Lb)$ is \Ds-coverable.
		\item We have $\Lb \not\in \delta(\opti{\Ds}{L,\rho})$.
		\item Any optimal \Ds-cover of $L$ for $\rho$ is separating for \Lb.
	\end{enumerate} 
\end{theorem}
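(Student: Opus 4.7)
The plan is to prove the cycle $(3) \Rightarrow (1) \Rightarrow (2) \Rightarrow (3)$, relying on Fact~\ref{fct:extend} to translate ``separating'' into a statement about $\delta \circ \rho$, and on Lemma~\ref{lem:bgen:opt} to guarantee that optimal \Ds-covers exist. The central observation I will use throughout is that since $\delta$ is a morphism for addition of the idempotent hemiring $R$, it is monotone for the canonical order $\leq$: if $r \leq s$ (i.e.\ $r + s = s$), then $\delta(r) \cup \delta(s) = \delta(s)$, so $\delta(r) \subseteq \delta(s)$. Combined with the fact that $\delta$ takes values in $2^\Lb$ (hence every $\delta(r)$ is a subset of \Lb), this gives the only nontrivial ingredient.

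The implication $(3) \Rightarrow (1)$ is immediate: Lemma~\ref{lem:bgen:opt} furnishes some optimal \Ds-cover \Kb of $L$ for $\rho$, and by $(3)$ this \Kb is separating for \Lb, so $(L,\Lb)$ is \Ds-coverable.

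For $(1) \Rightarrow (2)$, I start from a separating \Ds-cover \Kb of $L$. By Fact~\ref{fct:extend}, $\delta(\rho(K)) \neq \Lb$ for every $K \in \Kb$, and since $\delta(\rho(K)) \in 2^\Lb$ this means $\delta(\rho(K)) \subsetneq \Lb$. Now take any $r \in \opti{\Ds}{L,\rho}$. Because $\opti{\Ds}{L,\rho} = \prin{\rho}{\Kb'}$ for every optimal \Ds-cover $\Kb'$ and $\prin{\rho}{\Kb'} \subseteq \prin{\rho}{\Kb}$ by the definition of optimality (the optimum being minimum among \emph{all} \Ds-covers), there exists $K \in \Kb$ with $r \leq \rho(K)$. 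Monotonicity of $\delta$ then gives $\delta(r) \subseteq \delta(\rho(K)) \subsetneq \Lb$, hence $\delta(r) \neq \Lb$. This shows $\Lb \not\in \delta(\opti{\Ds}{L,\rho})$.

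Finally, for $(2) \Rightarrow (3)$, let \Kb be any optimal \Ds-cover of $L$ for $\rho$, so $\prin{\rho}{\Kb} = \opti{\Ds}{L,\rho}$. For each $K \in \Kb$ we have $\rho(K) \leq \rho(K)$, hence $\rho(K) \in \prin{\rho}{\Kb} = \opti{\Ds}{L,\rho}$. By $(2)$, $\delta(\rho(K)) \neq \Lb$ for every $K \in \Kb$, and Fact~\ref{fct:extend} then yields that \Kb is separating for \Lb. No step should be a serious obstacle; the only point that deserves care is making sure that the single inclusion $\opti{\Ds}{L,\rho} \subseteq \prin{\rho}{\Kb}$ used in $(1) \Rightarrow (2)$ is justified by the definition of optimality (minimality of the optimal imprint among all \Ds-cover imprints), together with the monotonicity of $\delta$ noted above.
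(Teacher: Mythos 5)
Your proposal is correct and takes essentially the same approach as the paper: the paper also proves this by going around the three-cycle (starting at $(1)$ rather than $(3)$, but that makes no difference), using Lemma~\ref{lem:bgen:opt} for existence, Fact~\ref{fct:extend} to translate ``separating'' into the condition $\delta(\rho(K))\neq\Lb$, and the monotonicity of $\delta$ under the canonical order exactly as you spell it out.
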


\begin{proof}
	We show that $(1) \Rightarrow (2) \Rightarrow (3) \Rightarrow (1)$. Let us start with $(1) \Rightarrow (2)$. Assume that $(L,\Lb)$ is \Ds-coverable and let \Kb be a \Ds-cover of $L$ which is separating for \Lb. We show that $\Lb \not\in \delta(\opti{\Ds}{L,\rho})$. Given $r \in \opti{\Ds}{L,\rho}$, we have to prove that $\Lb \neq \delta(r)$. Since \Kb is a \Ds-cover of $L$, we have $\opti{\Ds}{L,\rho} \subseteq \prin{\rho}{\Kb}$. Hence, $r \in \prin{\rho}{\Kb}$ which yields $K \in \Kb$ such that $r \leq \rho(K)$, i.e. $\rho(K) + r = \rho(K)$. Since $\delta$ is a morphism this implies that $\delta(\rho(K)) \cup \delta(r) = \delta(\rho(K))$, i.e. $\delta(r) \subseteq \delta(\rho(K))$. Moreover, since \Kb is separating for \Lb, it follows from Fact~\ref{fct:extend} that $\delta(\rho(K)) \neq \Lb$. Consequently, $\delta(r) \subseteq \delta(\rho(K))$ yields $\Lb \neq \delta(r)$, finishing the proof.
	
	We turn to the direction $(2) \Rightarrow (3)$. Assume that $\Lb \not\in \delta(\opti{\Ds}{L,\rho})$. Consider an optimal \Ds-cover \Kb of $L$ for $\rho$. We show that \Kb is separating for \Lb. By definition of \Kb, for any $K \in \Kb$, we have $\rho(K) \in \opti{\Ds}{L,\rho}$ which means that $\delta(\rho(K)) \neq \Lb$ by hypothesis. Thus, it follows from Fact~\ref{fct:extend} that \Kb is separating for \Lb.
		
	We finish with the direction $(3) \Rightarrow (1)$. Since we know from Lemma~\ref{lem:bgen:opt} that there always exists an optimal \Ds-cover of $L$ for $\rho$, Item~(3) yields that there exists a \Ds-cover of $L$ which is separating for \Lb. Hence, $(L,\Lb)$ is \Ds-coverable.
\end{proof}

Theorem~\ref{thm:bgen:main} formally connects our framework to the covering problem. Let \Ds be some lattice. Given as input a regular language $L$ and a finite set of regular languages \Lb, Proposition~\ref{prop:extend} states that we may compute a \nice \mratm $\rho$ extending \Lb. By Theorem~\ref{thm:bgen:main}, deciding whether $(L,\Lb)$ is \Ds-coverable now amounts to computing \opti{\Ds}{L,\rho} (we outline the reduction precisely below). Additionally, getting a separating \Ds-cover (if there exists one) reduces to computing an optimal \Ds-cover of $L$ for $\rho$.

This is essentially what our algorithms do (for $\Ds =\pol{\Cs}$ or $\Ds =\pbpol{\Cs}$). However, rather than directly computing \opti{\Ds}{L,\rho}, we will work with some morphism $\alpha: A^* \to M$ recognizing $L$ and compute all sets \opti{\Ds}{\alpha\inv(s),\rho} for $s \in M$ simultaneously (this is how we exploit the hypothesis that $L$ is regular). As shown in the following proposition, this is sufficient information to compute \opti{\Ds}{L,\rho}.

\begin{proposition} \label{prop:isreg}
	Let \Ds be a lattice. Let $L$ be a regular language recognized by a morphism $\alpha: A^* \to M$ for the accepting set $F$ (i.e. $L = \alpha\inv(F)$). Moreover, let $\rho: 2^{A^*} \to R$ be a \mratm. For any $s \in M$, let $\Kb_s$ be an optimal \Ds-cover of $\alpha\inv(s)$ for $\rho$. The two following properties hold:
	\begin{itemize}
		\item We have $\opti{\Ds}{L,\rho} = \bigcup_{s \in F} \opti{\Ds}{\alpha\inv(s),\rho}$.
		\item The set $\Kb = \bigcup_{s \in F} \Kb_s$ is an optimal \Ds-cover of $L$.
	\end{itemize}	
\end{proposition}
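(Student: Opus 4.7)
The plan is to prove both items simultaneously by taking the candidate cover $\Kb = \bigcup_{s \in F} \Kb_s$ and showing that its $\rho$-imprint coincides with $\opti{\Ds}{L,\rho}$. This single computation yields the equality of the first item and certifies the optimality of the second.

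First I would establish the inclusion $\bigcup_{s \in F} \opti{\Ds}{\alpha\inv(s),\rho} \subseteq \opti{\Ds}{L,\rho}$. Since $L = \alpha\inv(F)$, we have $\alpha\inv(s) \subseteq L$ for every $s \in F$, so Fact~\ref{fct:inclus2} applied stratum by stratum gives the inclusion directly.

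Next I would verify that $\Kb$ is a \Ds-cover of $L$. It is a finite union of finite sets of languages, each in \Ds, hence itself a finite subset of \Ds. For the covering property, any $w \in L$ satisfies $\alpha(w) = s$ for some $s \in F$, hence $w \in \alpha\inv(s)$, and since $\Kb_s$ is a cover of $\alpha\inv(s)$, some $K \in \Kb_s \subseteq \Kb$ contains $w$. Applying the definition of optimality to \Kb then yields $\opti{\Ds}{L,\rho} \subseteq \prin{\rho}{\Kb}$. On the other hand, the imprint distributes over unions of cover sets, so
\[
\prin{\rho}{\Kb} \;=\; \bigcup_{s \in F} \prin{\rho}{\Kb_s} \;=\; \bigcup_{s \in F} \opti{\Ds}{\alpha\inv(s),\rho},
\]
using that each $\Kb_s$ is an optimal \Ds-cover of $\alpha\inv(s)$ for $\rho$.

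Combining these three ingredients gives the chain
\[
\opti{\Ds}{L,\rho} \;\subseteq\; \prin{\rho}{\Kb} \;=\; \bigcup_{s \in F} \opti{\Ds}{\alpha\inv(s),\rho} \;\subseteq\; \opti{\Ds}{L,\rho},
\]
so all containments are equalities. The outer equality is the first item of the proposition, and the identity $\prin{\rho}{\Kb} = \opti{\Ds}{L,\rho}$ together with the fact that \Kb is a \Ds-cover of $L$ is exactly the statement that \Kb is an optimal \Ds-cover of $L$ for $\rho$, giving the second item. I do not anticipate any serious obstacle here; the only subtlety to watch for is that $F$ must be finite (which it is, since $M$ is finite) so that \Kb remains a finite set of languages, as required in the definition of a cover.
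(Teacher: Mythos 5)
Your proof is correct and essentially mirrors the paper's argument: it uses the same three ingredients (that $\Kb$ covers $L$, that $\prin{\rho}{\Kb}$ decomposes as the union of the $\prin{\rho}{\Kb_s}$, and that each $\opti{\Ds}{\alpha\inv(s),\rho}$ sits inside $\opti{\Ds}{L,\rho}$). The only cosmetic difference is that you close the optimality argument via a sandwich of inclusions citing Fact~\ref{fct:inclus2} up front, whereas the paper re-derives that inclusion inline by comparing $\Kb$ directly against an arbitrary \Ds-cover $\Kb'$.
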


\begin{proof}
	We start with the second item. Since $L = \alpha\inv(F)$ and $\Kb_s$ is a \Ds-cover of $\alpha\inv(s)$ for any $s \in M$, we know that $\Kb = \bigcup_{s \in F} \Kb_s$ is a \Ds-cover of $L$. Let us show that it is optimal. Consider another \Ds-cover $\Kb'$ of $L$, we show that $\prin{\rho}{\Kb} \subseteq \prin{\rho}{\Kb'}$. Let $r \in \prin{\rho}{\Kb}$. By definition, $r \leq \rho(K)$ for some $K \in \Kb$. Moreover, $K \in \Kb_s$ for some $s \in F$. Since $\Kb_s$ is an optimal \Ds-cover of $\alpha\inv(s)$ for $\rho$, it follows that $r \in \opti{\Ds}{\alpha\inv(s),\rho}$. Finally, since $\Kb'$ is a \Ds-cover of $L$, it is also a \Ds-cover of $\alpha\inv(s) \subseteq L$ ($L = \alpha\inv(F)$ and $s \in F$). Hence, $\opti{\Ds}{\alpha\inv(s),\rho} \subseteq \prin{\rho}{\Kb'}$ which yields $r \in \prin{\rho}{\Kb'}$, finishing the proof.
	
	We finish with the first item. We just proved that $\Kb = \bigcup_{s \in F} \Kb_s$ is an optimal \Ds-cover of $L$. Thus, $\opti{\Ds}{L,\rho} = \prin{\rho}{\Kb}$. Moreover, it is immediate from the definition that $\prin{\rho}{\Kb}  = \bigcup_{s \in F} \prin{\rho}{\Kb_s}$. Hence, since $\Kb_s$ is an optimal \Ds-cover of $\alpha\inv(s)$ for $\rho$ which means that $\prin{\rho}{\Kb_s} = \opti{\Ds}{\alpha\inv(s),\rho}$, we obtain $\opti{\Ds}{L,\rho} = \bigcup_{s \in F} \opti{\Ds}{\alpha\inv(s),\rho}$.	
\end{proof}

Altogether, this means that we shall be looking for algorithms which take a morphism $\alpha: A^* \to M$ and a \nice \mratm $\rho: 2^{A^*} \to R$ as input and compute all \Ds-optimal $\rho$-\imprints $\opti{\Ds}{\alpha\inv(s),\rho}$ for $s \in M$. It will be convenient to have a single notation which records all these objects.

Given a lattice \Ds, a morphism $\alpha: A^* \to M$ and a \mratm $\rho:2^{A^*} \to R$, we write \popti{\Ds}{\alpha,\rho} for the following set,
\[
\popti{\Ds}{\alpha,\rho} = \{(s,r) \in M \times R \mid r \in \opti{\Ds}{\alpha\inv(s),\rho}\} \subseteq M \times R
\]
We call \popti{\Ds}{\alpha,\rho} the \emph{\Ds-optimal $\alpha$-pointed $\rho$-\imprint}. Clearly, it encodes all sets $\opti{\Ds}{\alpha\inv(s),\rho}$ for $s \in M$.  We may now summarize the connection between covering and our framework with the following proposition.

\begin{proposition} \label{prop:thereduction} 	
 	Consider a lattice \Ds and some finite \vari \Cs. Assume that there exists an algorithm for the following computational problem:
 	\begin{center}
 	\begin{tabular}{ll}
 		{\bf Input:}    & A \Cs-compatible morphism $\alpha: A^* \to M$ and, \\
 		                & a \nice \Cs-compatible \mratm $\rho:2^{A^*} \to R$.\\
 		{\bf Output:} & Compute the \Ds-optimal $\alpha$-pointed $\rho$-\imprint, \popti{\Ds}{\alpha,\rho}.
 	\end{tabular}
 	\end{center}
	Then, \Ds-covering is decidable.
\end{proposition}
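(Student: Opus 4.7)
The plan is a direct reduction assembling ingredients already established in the preceding subsections: Lemma~\ref{lem:compat}, Proposition~\ref{prop:extend}, Proposition~\ref{prop:isreg}, and Theorem~\ref{thm:bgen:main}. There is no real obstacle; the proposition is essentially a bookkeeping statement that our framework faithfully encodes covering, and the only work is to chain the reductions together.

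First, given an input pair $(L,\Lb)$ for \Ds-covering, I would build the objects needed to invoke the hypothesized algorithm. Apply Lemma~\ref{lem:compat} to $L$ to compute a \Cs-compatible morphism $\alpha: A^* \to M$ recognizing $L$, say $L = \alpha^{-1}(F)$ for some computed accepting set $F \subseteq M$. Then apply Proposition~\ref{prop:extend} to \Lb to compute a \nice \Cs-compatible \mratm $\rho:2^{A^*} \to R$ extending \Lb, together with the extending morphism $\delta: R \to 2^\Lb$. Both $\alpha$ and $\rho$ meet the input specification of the hypothesized algorithm, so we may run it to compute \popti{\Ds}{\alpha,\rho}.

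Next, I would extract \opti{\Ds}{L,\rho} from \popti{\Ds}{\alpha,\rho}. By definition,
\[
\popti{\Ds}{\alpha,\rho} = \{(s,r) \in M \times R \mid r \in \opti{\Ds}{\alpha^{-1}(s),\rho}\},
\]
so each slice $\opti{\Ds}{\alpha^{-1}(s),\rho}$ is directly readable. The first item of Proposition~\ref{prop:isreg} then gives
\[
\opti{\Ds}{L,\rho} = \bigcup_{s \in F} \opti{\Ds}{\alpha^{-1}(s),\rho},
\]
which is computable from \popti{\Ds}{\alpha,\rho} and $F$.

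Finally, I would decide \Ds-coverability using Theorem~\ref{thm:bgen:main}: the pair $(L,\Lb)$ is \Ds-coverable if and only if $\Lb \notin \delta(\opti{\Ds}{L,\rho})$. Since $R$ is finite, $\delta$ is a finitely representable map, and we have \opti{\Ds}{L,\rho} explicitly in hand, this condition is effectively testable. This completes the decision procedure for \Ds-covering. (If one additionally wants to exhibit a separating cover when it exists, the second item of Proposition~\ref{prop:isreg} says that assembling optimal \Ds-covers of each $\alpha^{-1}(s)$ for $s \in F$ yields an optimal \Ds-cover of $L$, which by Theorem~\ref{thm:bgen:main}(3) is automatically separating for \Lb whenever coverability holds; however this extra step requires the algorithm to produce an actual cover, not just its \imprint, and is not needed for mere decidability.)
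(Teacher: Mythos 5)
Your proof is correct and follows exactly the same reduction as the paper: Lemma~\ref{lem:compat} for $\alpha$, Proposition~\ref{prop:extend} for $\rho$ and $\delta$, the hypothesized algorithm for $\popti{\Ds}{\alpha,\rho}$, Proposition~\ref{prop:isreg} to recover $\opti{\Ds}{L,\rho}$, and Theorem~\ref{thm:bgen:main} for the final decision criterion. The closing remark about actually producing a separating cover is a reasonable extra observation but, as you note, not part of what the proposition claims.
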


We prove Proposition~\ref{prop:thereduction} by describing a procedure for \Ds-covering. Our input is a pair $(L,\Lb)$ where all languages in $\{L\} \cup \Lb$ are regular: we need to decide whether $(L,\Lb)$ is \Ds-coverable. The reduction to the problem described in Proposition~\ref{prop:thereduction} is based on the following steps:
	\begin{enumerate}
		\item Compute a \Cs-compatible morphism $\alpha: A^* \to M$ recognizing $L$. This is possible by Lemma~\ref{lem:compat}. We let $F \subseteq M$ as the accepting set: $L = \alpha\inv(F)$.
		\item Compute a \nice \Cs-compatible \mratm $\rho: 2^{A^*} \to R$ which extends \Lb and an extending morphism $\delta: R \to 2^{\Lb}$. This is possible by Proposition~\ref{prop:extend}.
		\item Use the procedure given by Proposition~\ref{prop:thereduction} to compute the \Ds-optimal $\alpha$-pointed $\rho$-\imprint, \popti{\Ds}{\alpha,\rho}.
		\item Compute \opti{\Ds}{L,\rho} from \popti{\Ds}{\alpha,\rho}. Recall that by Proposition~\ref{prop:isreg}, we have,
		\[
		\opti{\Ds}{L,\rho} = \bigcup_{s \in F} \opti{\Ds}{\alpha\inv(s),\rho}
		\]
		Moreover, for any $s \in M$, $\opti{\Ds}{\alpha\inv(s),\rho} = \{r \in R \mid (s,r) \in \popti{\Ds}{\alpha,\rho}\}$ by definition.
		\item By Theorem~\ref{thm:bgen:main}, $(L,\Lb)$ is \Ds-coverable if and only if we have $\Lb \not\in \delta(\opti{\Ds}{L,\rho})$. We may now check this condition.
	\end{enumerate}
This concludes the proof of Proposition~\ref{prop:thereduction}.

\medskip

In view of Proposition~\ref{prop:thereduction}, when we tackle \pol{\Cs}- and \pbpol{\Cs}-covering in the next two sections, we shall do so by presenting algorithms which compute \pocopti and \pbpopti from a \Cs-compatible morphism $\alpha: A^*: \to M$ and a \nice \mratm $\rho: 2^{A^*} \to R$ (which also needs to be \Cs-compatible in the case of \pbpol{\Cs}).

A key point is that we present these two algorithms as elegant characterization theorems. This is a design principle of our framework. Given an arbitrary \pvari \Ds (such as \pol{\Cs} or \pbpol{\Cs}), the idea is to characterize $\popti{\Ds}{\alpha,\rho} \subseteq M \times R$ as the least subset of $M \times R$ which contains basic elements and is closed under operations within a finite list (which depends on \Ds). In turn, this yields a least fixpoint algorithm for computing \popti{\Ds}{\alpha,\rho}: one starts from the set of basic elements and saturates it with the operations. 

\begin{remark}
	There lies the difference between our results for \pol{\Cs} and \pbpol{\Cs}. Our characterization of \pocopti holds for {\bf any} \mratm $\rho$  (and in particular for those which are not \nice).  On the other hand, our characterization of \pbpopti only holds for \nice \Cs-compatible \mratms.	In other words, our results provide more information about \pol{\Cs}. Of course, this is useless for \pol{\Cs}-covering: considering \nice \Cs-compatible \mratms suffices by Proposition~\ref{prop:thereduction}. However, it turns out that having a characterization of \pocopti which holds for {\bf all} \mratms (even if it is non-effective in general) is crucial for handling \pbpol{\Cs}-covering.
\end{remark}

\subsection{Generic properties of optimal pointed \imprints}

We finish with an important lemma which presents properties of optimal pointed \imprints (i.e. the objects that we now want to compute). These properties are \emph{generic}: they are satisfied by \popti{\Ds}{\alpha,\rho} for any \pvari \Ds. In particular, they are involved in our characterizations of both \pocopti and \pbpopti.

Given any \mratm $\rho: 2^{A^*} \to R$ and any morphism $\alpha: A^* \to M$, we shall write \ptriv{\alpha,\rho} for the following set,
\[
\ptriv{\alpha,\rho} = \{(\alpha(w),r) \in M \times R \mid w \in A^* \text{ and } r \leq \rho(w)\}
\]
It is straightforward to verify that given a morphism $\alpha: A^* \to M$ and a {\bf \nice} \mratm $\rho$ as input, one may compute \ptriv{\alpha,\rho}. Our key lemma is as follows.

\begin{lemma} \label{lem:satur}
	Let \Ds be a \pvari, $\alpha: A^* \to M$ a morphism and $\rho: 2^{A^*} \to R$ a \mratm. Then, $\popti{\Ds}{\alpha,\rho} \subseteq M \times R$ contains the set \ptriv{\alpha,\rho} and satisfies the two following closure properties:
	\begin{enumerate}
		\item {\bf Downset.} For any $(s,r) \in \popti{\Ds}{\alpha,\rho}$ and $r ' \leq r$, $(s,r') \in \popti{\Ds}{\alpha,\rho}$.
		\item {\bf Multiplication.} For any $(s_1,r_1),(s_2,r_2) \in \popti{\Ds}{\alpha,\rho}$, $(s_1s_2, r_1r_2) \in \popti{\Ds}{\alpha,\rho}$.
	\end{enumerate}	
\end{lemma}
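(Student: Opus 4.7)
The plan is to unwind the definition $\popti{\Ds}{\alpha,\rho} = \{(s,r) \mid r \in \opti{\Ds}{\alpha\inv(s),\rho}\}$ and verify each of the three assertions separately, using monotonicity of $\rho$ (Fact~\ref{fct:fadd}), Fact~\ref{fct:inclus2} and the multiplication lemma for optimal \imprints (Lemma~\ref{lem:closmult}).

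For the inclusion $\ptriv{\alpha,\rho} \subseteq \popti{\Ds}{\alpha,\rho}$, I take $w \in A^*$ and $r \leq \rho(w)$ and show that $r$ belongs to the $\rho$-\imprint of every \Ds-cover of $\alpha\inv(\alpha(w))$. Indeed, if $\Kb$ is such a cover, then since $w \in \alpha\inv(\alpha(w))$ there exists $K \in \Kb$ with $w \in K$; Fact~\ref{fct:fadd} then gives $\rho(w) \leq \rho(K)$, whence $r \leq \rho(K)$ and $r \in \prin{\rho}{\Kb}$. Since this holds for every \Ds-cover, it holds for an optimal one, and the definition of \Ds-optimal \imprint yields $r \in \opti{\Ds}{\alpha\inv(\alpha(w)),\rho}$, i.e., $(\alpha(w),r) \in \popti{\Ds}{\alpha,\rho}$.

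For the downset property I use that $\leq$ is transitive: if $(s,r) \in \popti{\Ds}{\alpha,\rho}$ and $r' \leq r$, pick any optimal \Ds-cover \Kb of $\alpha\inv(s)$ for $\rho$. Then $r \in \prin{\rho}{\Kb}$, so there is $K \in \Kb$ with $r \leq \rho(K)$, hence $r' \leq \rho(K)$ and $r' \in \prin{\rho}{\Kb} = \opti{\Ds}{\alpha\inv(s),\rho}$, giving $(s,r') \in \popti{\Ds}{\alpha,\rho}$.

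The multiplication property is the one that really uses the structural content of our framework. Assume $(s_i,r_i) \in \popti{\Ds}{\alpha,\rho}$ for $i = 1,2$, i.e., $r_i \in \opti{\Ds}{\alpha\inv(s_i),\rho}$. By Lemma~\ref{lem:closmult} applied to $L_1 = \alpha\inv(s_1)$ and $L_2 = \alpha\inv(s_2)$, I obtain $r_1 r_2 \in \opti{\Ds}{\alpha\inv(s_1)\alpha\inv(s_2),\rho}$. Since $\alpha$ is a morphism, $\alpha\inv(s_1)\alpha\inv(s_2) \subseteq \alpha\inv(s_1 s_2)$, so Fact~\ref{fct:inclus2} yields $\opti{\Ds}{\alpha\inv(s_1)\alpha\inv(s_2),\rho} \subseteq \opti{\Ds}{\alpha\inv(s_1 s_2),\rho}$, hence $r_1 r_2 \in \opti{\Ds}{\alpha\inv(s_1 s_2),\rho}$, which is exactly $(s_1 s_2, r_1 r_2) \in \popti{\Ds}{\alpha,\rho}$. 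No step is really the main obstacle: all three items follow quickly once we have Lemma~\ref{lem:closmult} and Fact~\ref{fct:inclus2} available, and indeed the whole point of this lemma is to package these generic closure properties for later use in the characterizations of \pocopti and \pbpopti.
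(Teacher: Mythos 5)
Your proof is correct and follows essentially the same route as the paper's: unwind the definition of $\popti{\Ds}{\alpha,\rho}$, verify the $\ptriv{\alpha,\rho}$ inclusion and the downset property directly from the definition of $\rho$-\imprints, and obtain multiplication from Lemma~\ref{lem:closmult} combined with $\alpha\inv(s_1)\alpha\inv(s_2) \subseteq \alpha\inv(s_1s_2)$ and Fact~\ref{fct:inclus2}. The only cosmetic difference is that you spell out the downset step via an explicit optimal cover, whereas the paper simply observes that the conclusion is immediate from $\opti{\Ds}{\alpha\inv(s),\rho}$ being a $\rho$-\imprint.
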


\begin{proof}
	Let us first show that \popti{\Ds}{\alpha,\rho} contains \ptriv{\alpha,\rho}. Let $(s,r) \in \ptriv{\alpha,\rho}$. By definition of \popti{\Ds}{\alpha,\rho}, we have to show that $r \in \opti{\Ds}{\alpha\inv(s),\rho}$. Let \Kb be an optimal \Ds-cover of $\alpha\inv(s)$ for $\rho$,   we have to show that $r \in \prin{\rho}{\Kb}$. By definition of \ptriv{\alpha,\rho}, there exists $w \in A^*$ such that $s = \alpha(w)$ and $r \leq \rho(w)$. Consequently $w \in \alpha\inv(s)$ and there exists $K \in \Kb$ such that $w \in K$. Thus, $r \leq \rho(w) \leq \rho(K)$ which means that $r \in \prin{\rho}{\Kb}$ by definition.
	
	It remains to prove that \popti{\Ds}{\alpha,\rho} is closed under downset and multiplication. We start with downset. Consider $(s,r) \in \popti{\Ds}{\alpha,\rho}$ and $r ' \leq r$. By definition, we have $r \in \opti{\Ds}{\alpha\inv(s),\rho}$. Since \opti{\Ds}{\alpha\inv(s),\rho} is a $\rho$-\imprint, it is immediate by definition that $r' \leq r$ implies that $r'  \in  \opti{\Ds}{\alpha\inv(s),\rho}$. Therefore, $(s,r') \in \popti{\Ds}{\alpha,\rho}$.
	
	Closure under multiplication follows from Lemma~\ref{lem:closmult}. Let $(s_1,r_1),(s_2,r_2) \in \popti{\Ds}{\alpha,\rho}$. It follows that $r_1 \in \opti{\Ds}{\alpha\inv(s_1),\rho}$ and $r_2 \in \opti{\Ds}{\alpha\inv(s_2),\rho}$. Therefore, since \Ds is a \pvari, Lemma~\ref{lem:closmult} yields $r_1r_2 \in \opti{\Ds}{\alpha\inv(s_1)\alpha\inv(s_2),\rho}$. Moreover, since $\alpha$ is a morphism, it is immediate that $\alpha\inv(s_1)\alpha\inv(s_2) \subseteq \alpha\inv(s_1s_2)$. Hence, Fact~\ref{fct:inclus2} yields $r_1r_2 \in \opti{\Ds}{\alpha\inv(s_1s_2),\rho}$ which means that 
	$(s_1s_2, r_1r_2) \in \popti{\Ds}{\alpha,\rho}$ by definition.
\end{proof}


\section{Covering for \pol{\Cs}}
\label{sec:polc}
In this section, we present the first of our two covering algorithms. Given an arbitrary finite \vari \Cs (which we fix for the section), we show that \pol{\Cs}-covering is decidable. As announced, we use the framework introduced in Section~\ref{sec:covers}: we present an effective characterization of \pol{\Cs}-optimal pointed \imprints. More precisely, our main theorem is a description of the set \pocopti for any \Cs-compatible morphism $\alpha: A^* \to M$ and any \mratm $\rho$ (\nice or not). Furthermore, this description yields an algorithm for computing \pocopti when $\rho$ is \nice.

\begin{remark} \label{rem:half:whatratms}
	Naturally, the algorithm only applies to \nice \mratms. It does not even make sense to speak of an algorithm which takes arbitrary \mratms as input: we cannot represent them finitely in general. However, the characterization itself applies to {\bf any} \mratm. While this is useless for \pol{\Cs}-covering, having this result will be crucial in Section~\ref{sec:pbpol} when we tackle \pbpol{\Cs}-covering. 
\end{remark}

We start by presenting an auxiliary result about factorization forests that we shall need for proving our characterization (we choose to isolate it from the main argument as we shall reuse it later when considering \pbpol{\Cs}). We then present our characterization and devote the remainder of the section to its proof.

\subsection{Covering languages of factorization forests}

When proving the difficult direction in our characterization of \pol{\Cs}-optimal pointed \imprints, we shall work with two objects: a morphism $\alpha: A^* \rightarrow M$ and a \mratm $\rho: 2^{A^*} \to R$. Our objective will be to build \pol{\Cs}-covers of all languages $\alpha\inv(s)$ for $s \in M$ which satisfy specific properties with respect to $\rho$. We shall use the factorization forest theorem: any word admits an $\alpha$-factorization forest of height at most $3|M|-1$ (see Section~\ref{sec:tools}, Theorem~\ref{thm:facto}). Specifically, we use an induction on the height of factorization forests. Our proof is based on a construction which takes as input some height $h$ and \pol{\Cs}-covers for the restrictions of the languages $\alpha\inv(s)$ to words admitting $\alpha$-factorization forests of height at most $h$. It combines them into larger \pol{\Cs}-covers which also contain the words admitting forests of height at most $h+1$.

The following lemma formalizes an important part of this construction which we shall later reuse for handling \pbpol{\Cs}. We assume that the morphism $\alpha: A^* \rightarrow M$ and the \mratm $\rho: 2^{A^*} \to R$ are fixed. Recall that given $s \in M$ and $h,m \in \nat$, $F^\alpha(s,h,m)$  (resp. $F^\alpha_I(s,h,m)$) denotes the language of all $w \in \alpha^{-1}(s)$ admitting an $\alpha$-factorization forest of height at most $h$ and idempotent height at most $m$ (resp. whose root is an idempotent node). 

\begin{lemma} \label{lem:inducidem}
	Let $h,m \geq 1$ and an idempotent $e \in M$. Let \Ub be a cover of $F^\alpha(e,h-1,m-1)$ and $V \subseteq A^*$ containing $\alpha^{-1}(e)$. There exists a cover \Kb of $F^\alpha_I(e,h,m)$ such that any $K \in \Kb$ is a concatenation $K = K_1 \cdots K_n$ with each $K_i$ having one of the two following forms:
	\begin{enumerate}
		\item $K_i$ is a language in \Ub, or,
		\item $K_i = U_1 \cdots U_\ell V U'_1 \cdots U'_{\ell'}$ where $U_1,\cdots,U_\ell, U'_1,\dots,U'_{\ell'} \in \Ub$ and there exists some {\bf idempotent} $f \in R$ for multiplication such that $\rho(U_1 \cdots U_\ell) = \rho(U'_1 \cdots U'_{\ell'}) = f$.
	\end{enumerate}
\end{lemma}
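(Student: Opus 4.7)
The plan is to apply Simon's factorization forest theorem to the sequence of \Ub-covers of the children of $w$'s root, and to exploit the absorbing power of $V$ to collapse idempotent stretches into bounded-size form-2 blocks.

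First I unpack $w$: since $w\in F^\alpha_I(e,h,m)$, its $\alpha$-factorization forest has an idempotent root with children $w_1,\dots,w_n$ of $\alpha$-value $e$, each lying in $F^\alpha(e,h-1,m-1)$. As \Ub covers $F^\alpha(e,h-1,m-1)$, for every $i$ I pick some $U_i\in\Ub$ with $w_i\in U_i$. Because $e$ is idempotent, every non-empty contiguous sub-word $w_iw_{i+1}\cdots w_j$ has $\alpha$-value $e$ and hence lies in $\alpha^{-1}(e)\subseteq V$; so $V$ is available as a catch-all to absorb any non-empty contiguous middle of the $w_i$'s.

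Next I apply Simon's factorization forest theorem (Theorem~\ref{thm:facto}) to the word $U_1\cdots U_n$ of length $n$ over the finite alphabet \Ub, with respect to the morphism $\hat\rho\colon\mathbf{U}^{+}\to T$, where $T$ is the finite sub-semigroup of $R$ generated by $\rho(\Ub)$. This yields a factorization forest of the sequence of height at most $3|T|-1$, whose bounded height keeps the number of blocks produced in the decomposition of $w$ bounded independently of $n$. From this forest I read off a flat decomposition $K=K_1\cdots K_N$ of $w$ by structural induction: at a leaf $U$, emit the form-1 block $U$; at a binary node, concatenate the decompositions of its two children; at an idempotent node whose children have common $\hat\rho$-value an idempotent $f$ of $R$, emit a single form-2 block $A\cdot V\cdot A'$ with bounded-length \Ub-prefix $A$ and \Ub-suffix $A'$, both of $\rho$-value $f$, while $V$ absorbs the entire middle sub-word of $w$.

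The main obstacle, and the technical heart of the lemma, is keeping \Kb finite, which requires in particular bounding the lengths of the \Ub-prefixes and \Ub-suffixes appearing in form-2 blocks. For this I invoke a pigeonhole argument inside the finite semigroup $T$: when the candidate \Ub-sequence realising $f$ has length exceeding $|T|$, two of its left partial products must coincide in $T$, so I may excise the intermediate \Ub-letters while preserving the value $f$; the $w$-projection of those excised letters has $\alpha$-value $e$ by the idempotent property above and can therefore be reassigned to the $V$-slot of the form-2 block. Iterating confines $|A|,|A'|\leq|T|\leq|R|$, which combined with the bounded number of blocks per $K$ yields the desired finite cover \Kb.
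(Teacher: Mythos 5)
Your overall setup is right and shares the paper's key tool: the paper also applies Simon's theorem to the sequence $\rho(U[w_1])\cdots\rho(U[w_k])$ (see Fact~\ref{fct:idemishere}). But the way you read the cover off the forest has a genuine gap. The problem is localized in your pigeonhole step. Suppose the first child of an idempotent node corresponds to $U_{i_1}\cdots U_{i_2}$ with $\rho(U_{i_1}\cdots U_{i_2})=f$, and two left partial products coincide at positions $j<j'$. Excising $U_{i_{j+1}}\cdots U_{i_{j'}}$ does preserve the $\rho$-value $f$ of the \emph{formal} concatenation $\tilde A=U_{i_1}\cdots U_{i_j}U_{i_{j'+1}}\cdots U_{i_2}$, but it destroys coverage: the contiguous subword $w_{i_1}\cdots w_{i_2}$ of $w$ is no longer contained in $\tilde A$, and the excised factor $w_{i_{j+1}}\cdots w_{i_{j'}}$ cannot be ``reassigned'' to the $V$-slot because that slot sits \emph{after} the entire prefix in the block $U_1\cdots U_\ell\, V\, U'_1\cdots U'_{\ell'}$. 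Moving it there would require permuting the word. The alternative reading---keep only $U_{i_1}\cdots U_{i_j}$ as the prefix and let $V$ absorb the rest---also fails, because nothing forces $\rho(U_{i_1}\cdots U_{i_j})$ to equal $f$, or even to be idempotent; the pigeonhole only produces a repeated partial product, not a contiguous idempotent stretch. The same obstruction appears on the suffix side.

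The paper circumvents exactly this issue by using Simon's theorem for a weaker but safer conclusion: Fact~\ref{fct:idemishere} only asserts that in any sequence of length $\geq k=2^{3|R|}$ there is some $i\le k$ and $q\le k$ with $\rho(U[w_i]\cdots U[w_{i+q-1}])$ idempotent, i.e.\ a \emph{contiguous} idempotent stretch of bounded length $\le k$ within the first $k$ positions. The cover is then built not by reading off the forest directly, but by a separate peeling induction (Lemma~\ref{lem:inducase3}) on the ``index'' of the sequence---the number of distinct idempotents occurring as such bounded contiguous products. When the same idempotent $f$ occurs both near the start and further along, the paper forms a form-2 block whose bounded prefix is the first occurrence of $f$ and whose bounded suffix is the \emph{last} occurrence, with $V$ absorbing the genuinely contiguous middle; otherwise it peels off at most $k+1$ letters as form-1 blocks and recurses. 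This is what both keeps $\ell,\ell'\le k$ and gives the finite bound $(k+1)(|R|+1)$ on the number of blocks. To repair your argument you would need to replace the partial-product pigeonhole with some device that produces a bounded \emph{contiguous} idempotent prefix---which is precisely what the paper extracts from the factorization forest---and then to give up on reading the whole decomposition off a single forest.
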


We now prove Lemma~\ref{lem:inducidem}. Consider $h,m \geq 1$ and an idempotent $e \in M$. Finally, we let \Ub as a cover of $F^\alpha(e,h-1,m-1)$ and $V \subseteq A^*$ such that $\alpha^{-1}(e) \subseteq V$. We construct a cover \Kb of $F^\alpha_I(e,h,m)$ satisfying the properties described in the lemma.

For the proof, we fix $k = 2^{3|R|}$. Given any $n \geq 1$, we define $\Kb_n$ as the set of all languages $K \subseteq A^*$ which are of the form $K = K_1 \cdots K_{p}$ with $p \leq n$ where each $K_i$ is of one of the two following kinds:
\begin{enumerate}
	\item $K_i \in \Ub$, or,
	\item $K_i = U_1 \cdots U_\ell V U'_1 \cdots U'_{\ell'}$ where $\ell,\ell' \leq k$, $U_1,\cdots,U_\ell, U'_1,\dots,U'_{\ell'} \in \Ub$ and there exists some {\bf idempotent} $f \in R$ such that $\rho(U_1 \cdots U_\ell) = \rho(U'_1 \cdots U'_{\ell'}) = f$.
\end{enumerate}

Observe that by definition all sets $\Kb_n$ are finite (each $K \in \Kb_n$ is a concatenation of at most $n \times (2k+1)$ languages in the finite set $\Ub \cup \{V\}$). Moreover all languages in $\Kb_n$ are of the form described in Lemma~\ref{lem:inducidem}. Thus, it suffices to prove that there exists $n \geq 1$ such that $\Kb_n$ is a cover of $F^\alpha_I(e,h,m)$. This is what we do.

We start with some terminology. Observe that by definition of $\alpha$-factorization forests, for any word $w \in F^\alpha_I(e,h,m)$, there exists a sequence of words $w_1,\dots,w_p \in F^\alpha(e,h-1,m-1)$ such that $w = w_1 \cdots w_p$. Given any such sequence $w_1,\dots,w_p \in F^\alpha(e,h-1,m-1)$, we associate a number that we call its \emph{index} (we shall use this number as an induction parameter). For the definition, we fix an arbitrary linear order on \Ub, the cover of $F^\alpha(e,h-1,m-1)$. Moreover, for any $w\in F^\alpha(e,h-1,m-1)$, we write $U[w]$ for the smallest language in \Ub containing $w$.

Consider a sequence $w_1,\dots,w_p \in F^\alpha(e,h-1,m-1)$ and an idempotent $f \in R$. We say that $f$ \emph{occurs in the sequence $w_1,\dots,w_\ell$} when there exists $i \leq p$ and $q \leq k$ such that,
\[
\rho(U[w_i] \cdots U[w_{i+q-1}]) = f
\]
Finally, we define the index of the sequence $w_1,\dots,w_p$ as the number of \emph{distinct idempotents} $f \in R$ that occur in $w_1,\dots,w_p$. Clearly, the index of $w_1,\dots,w_p$ is bounded by $|R|$ (the key point is that this bound is independent from the length $p$ of the sequence). We shall need the following fact which is proved using the factorization forest theorem.

\begin{fct} \label{fct:idemishere}
	Consider a sequence $w_1,\dots,w_p \in F^\alpha(e,h-1,m-1)$ of length $p \geq k =  2^{3|R|}$. Then the index of $w_1,\dots,w_p$ is at least one: an idempotent $f \in R$ occurs in $w_1,\dots,w_p$.
\end{fct}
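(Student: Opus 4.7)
The plan is to apply Simon's factorization forest theorem (Theorem~\ref{thm:facto}) to the sequence of elements $r_1,\dots,r_p$ of $R$, where $r_i = \rho(U[w_i])$. Viewing this sequence as a word over the alphabet $R$ and using the identity on $R$, extended to a monoid morphism from the free monoid over $R$ to $R^1 = R \cup \{1\}$ (adjoining an identity if $R$ has none for multiplication), Theorem~\ref{thm:facto} produces a factorization forest $\Fs$ of $r_1 \cdots r_p$ whose height $H$ is linear in $|R|$, comfortably below $3|R|$.

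First, I would argue that $\Fs$ must contain at least one idempotent node. Otherwise, every inner node of $\Fs$ would be binary, so $\Fs$ would be a binary tree of height at most $H$, and would therefore carry at most $2^H$ leaves. Consequently $p \leq 2^H$, which contradicts the hypothesis $p \geq k = 2^{3|R|}$.

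Second, I would pick the \emph{deepest} idempotent node $x$ of $\Fs$. By the choice of $x$, every subtree rooted at a child of $x$ contains no idempotent node and is therefore purely binary, of height at most $H-1$; hence it has at most $2^{H-1} \leq k$ leaves, and the length of its root label is bounded by the number of non-empty leaves. Consequently, each child of $x$ is labeled by a factor $r_i r_{i+1} \cdots r_{i+q-1}$ of length $q \leq k$. By the idempotent-node condition, the product of this factor in $R$ equals a fixed idempotent $f \in R$; since $\rho$ is a morphism of hemirings, this product coincides with $\rho(U[w_i] \cdots U[w_{i+q-1}])$. Unwinding the definition preceding the fact, this means that $f$ occurs in the sequence $w_1,\dots,w_p$, so the index of that sequence is at least one.

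The only real obstacle is a matter of constants: one must verify that the height bound furnished by Simon's theorem, combined with the observation about the deepest idempotent, yields a factor of length at most $k = 2^{3|R|}$. Since $k$ is set quite generously, it absorbs the constants of any standard formulation of Simon's bound, and the argument goes through without difficulty.
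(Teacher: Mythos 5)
Your proof is correct, but it takes a genuinely different route from the paper's. The paper applies Simon's factorization forest theorem only to the \emph{prefix} $r_1 \cdots r_k$ of length $k = 2^{3|R|}$, so that any factor appearing in the resulting forest automatically has length at most $k$ and starts at a position at most $k$; consequently, once the height bound forces the existence of an idempotent node, that node's label is already a valid witness for ``occurs,'' and no further selection is needed. You instead apply Simon's theorem to the \emph{entire} sequence $r_1 \cdots r_p$, which may be much longer than $k$; to control the factor length you then introduce a secondary structural argument, namely that the children of a deepest idempotent node span purely binary subtrees and hence carry at most $2^{H-1}$ leaves. Both approaches establish the fact. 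The paper's truncation trick is a bit cleaner because it sidesteps locating the deepest idempotent; your version avoids the truncation step at the cost of an extra structural observation.

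One point worth flagging, which is in fact shared with the paper's own proof: the map $\beta$ into $R$ is used as a monoid morphism, but a hemiring $R$ is only required to be a semigroup under multiplication, so one must adjoin an identity (as you do) or invoke the semigroup version of Simon's theorem. With $|R^1| = |R| + 1$, the naive bound $3|R^1| - 1 = 3|R| + 2$ slightly exceeds what $k = 2^{3|R|}$ literally accommodates, so the claim that the height is ``comfortably below $3|R|$'' is optimistic. This is harmless in substance, since $k$ is an internal parameter of Lemma~\ref{lem:inducidem} that can be enlarged by a constant factor without affecting anything downstream, but the sentence ``the argument goes through without difficulty'' deserves to be replaced by an actual check of the constants.
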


\begin{proof}
	For all $i \leq k$, we let $r_i = \rho(U[w_i]) \in R$. Consider the morphism $\beta: R^* \to R$ which is defined by $\beta(r) = r$ for all $r \in R$. It is immediate from Theorem~\ref{thm:facto} that the word $(r_1) \cdots (r_k) \in R^*$ admits a $\beta$-factorization forest of height at most $3|R| - 1$. Since $k = 2^{3|R|}$, this forest has to contain at least one idempotent node. Thus, we have $i \leq k$ and $q \leq k$ such that $r_i \cdots r_{i+q-1}$ is an idempotent $f$ of $R$. This concludes the proof.
\end{proof}

We may now come back to the proof of Lemma~\ref{lem:inducidem} and exhibit $n \geq 1$ such that $\Kb_n$ is a cover of $F^\alpha_I(e,h,m)$. We use the following lemma.

\begin{lemma} \label{lem:inducase3}
	Let $d \in \nat$ and consider a sequence $w_1,\dots,w_p \in F^\alpha(e,h-1,m-1)$ whose index is at most $d$. There exists a language $K \in \Kb_{(k+1)(d+1)}$ containing $w_1 \cdots w_p$.
\end{lemma}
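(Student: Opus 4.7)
The plan is to prove Lemma~\ref{lem:inducase3} by induction on $d$. For the base case $d = 0$, the hypothesis means no idempotent occurs in the sequence, so Fact~\ref{fct:idemishere} (taken contrapositively) forces $p < k$; hence $w_1 \cdots w_p$ lies in the product $U[w_1] \cdots U[w_p]$, a concatenation of at most $k-1 \leq k+1$ form-(1) languages, which belongs to $\Kb_{k+1} = \Kb_{(k+1)(0+1)}$.

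For the inductive step $d \geq 1$, first observe that if the sequence already has index strictly less than $d$, the induction hypothesis directly places it in $\Kb_{(k+1)d} \subseteq \Kb_{(k+1)(d+1)}$; so assume the index equals $d$. The strategy is to carve off a left chunk using at most $k$ pieces and recurse on a suffix of index $\leq d-1$, leaving one unit of slack for the recursion. Select the idempotent $f$ whose leftmost $f$-starting position $j_1$ is minimal among all idempotents occurring in $w_1, \ldots, w_p$. By this minimality, the prefix $w_1 \cdots w_{j_1 - 1}$ admits no idempotent occurrence, so Fact~\ref{fct:idemishere} yields $j_1 \leq k$ and the prefix is covered by at most $k-1$ form-(1) pieces $U[w_1], \ldots, U[w_{j_1-1}]$. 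Let $j_2$ denote the largest end of any $f$-occurrence, and let $[a_2, j_2]$ (length $q_2 \leq k$) be a witnessing occurrence. Then $w_{j_2+1} \cdots w_p$ contains no $f$-occurrence at all, hence has index at most $d-1$, and the inductive hypothesis provides a cover in $\Kb_{(k+1)d}$.

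In the well-separated case, where the leftmost $f$-occurrence $[j_1, b_1]$ (length $q_1 \leq k$) satisfies $b_1 + 1 \leq a_2 - 1$, the factors $w_{b_1 + 1} \cdots w_{a_2 - 1}$ form a non-empty concatenation of words with $\alpha$-image $e$, so by idempotence of $e$ this concatenation lies in $\alpha^{-1}(e) \subseteq V$. Thus the entire $f$-frame $w_{j_1} \cdots w_{j_2}$ embeds into the single form-(2) block $U[w_{j_1}] \cdots U[w_{b_1}] \cdot V \cdot U[w_{a_2}] \cdots U[w_{j_2}]$, and concatenating the prefix pieces, this form-(2) block, and the inductive cover of the suffix yields a cover of size at most $(k-1) + 1 + (k+1)d = (k+1)(d+1) - 1$, well within the target bound $(k+1)(d+1)$.

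The main obstacle is the complementary case $b_1 \geq a_2 - 1$, where the leftmost and rightmost $f$-occurrences overlap or are adjacent. Here the naive form-(2) piece for $f$ would require an empty middle, which is problematic because $V$ need not contain $\varepsilon$. On the other hand, the frame now has length at most $q_1 + q_2 \leq 2k$, and I expect the proof to exploit this tight bound by either selecting a different pair of $f$-occurrences which does leave a gap of at least one position, switching to a different occurring idempotent whose occurrences are more spread across the sequence, or enlarging the form-(2) block so that its middle absorbs at least one factor $w_i$ (using the idempotence $f \cdot f = f$ to validate the shifted boundary of one half). Handling this bad case cleanly — while still spending at most $k+1$ pieces on the extracted chunk — is where I expect the main work of the proof to lie.
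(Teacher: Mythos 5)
Your base case and the well-separated branch of the inductive step are both correct, and you have correctly located the difficulty, but the proposal stops short of resolving it: the case $b_1 \geq a_2 - 1$ (which in particular always occurs when $f$ has a single occurrence, or only overlapping ones) is left as a speculation about possible fixes rather than an argument, so this is a genuine gap.

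The paper avoids the overlap problem by not comparing the leftmost start with the rightmost end of $f$. Instead, it first fixes one occurrence $[i,\,i+q-1]$ of some idempotent $f$ with $i+q-1 \leq k$ (this is possible by Fact~\ref{fct:idemishere} applied to the first $k$ terms), and then splits on whether $f$ recurs \emph{strictly to the right of the position $i+q$}, i.e.\ whether $f$ occurs in the subsequence $w_{i+q+1},\dots,w_p$.
\begin{itemize}
  \item If it does not, no form-(2) block is needed at all: the suffix $w_{i+q+1},\dots,w_p$ already has index $\leq d-1$, so one takes $K = U[w_1]\cdots U[w_{i+q}]\cdot K'$, where the prefix uses $i+q \leq k+1$ form-(1) pieces. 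This branch absorbs exactly the single-occurrence and overlapping-occurrence scenarios that you flagged as problematic, without ever invoking $V$.
  \item If it does, one selects the \emph{largest} such starting position $j \geq i+q+1$. Since $j \geq i+q+1$, the middle block $w_{i+q},\dots,w_{j-1}$ has at least one factor, hence its $\alpha$-image is a nonzero power of $e$, i.e.\ $e$, so the middle lies in $\alpha^{-1}(e) \subseteq V$ and the form-(2) block $U[w_i]\cdots U[w_{i+q-1}]\cdot V\cdot U[w_j]\cdots U[w_{j+r-1}]$ is legitimate; maximality of $j$ ensures the suffix $w_{j+r},\dots,w_p$ has index $\leq d-1$.
\end{itemize}
So the form-(2) piece is only ever deployed in a situation where the non-empty middle is guaranteed by construction. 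Your idea of ``selecting a different pair of $f$-occurrences which does leave a gap'' is essentially the right instinct, but to make it work you must couple it with a separate branch (your missing case) where $f$ simply does not recur past the first occurrence and no $V$-block is used at all; that branch is exactly what closes the gap.
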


Lemma~\ref{lem:inducidem} is an immediate consequence of Lemma~\ref{lem:inducase3}: we obtain that $\Kb_n$ is a cover of $F^\alpha_I(e,h,m)$ for $n = (k+1)(|R|+1)$. Indeed, by definition, for any $w \in F^\alpha_I(e,h,m)$ there exist $w_1,\dots,w_p \in F^\alpha(e,h-1,m-1)$ such that $w = w_1 \cdots w_p$. Since the index of $w_1,\dots,w_p$ is at most $|R|$, it follows from Lemma~\ref{lem:inducase3} that there exist $K \in \Kb_n$ which contains $w = w_1 \cdots w_p$.

\medskip

It remains to prove Lemma~\ref{lem:inducase3}.	Let $d \in \nat$ and consider a sequence $w_1,\dots,w_p \in F^\alpha(e,h-1,m-1)$ whose index is at most $d$. We construct $K \in \Kb_{(k+1)(d+1)}$ which contains $w_1 \cdots w_p$. The argument is an induction on $d$.

Assume first that $d = 0$. In that case, there exists no idempotent $f \in R$ occurring in $w_1,\dots,w_p$. It is immediate from Fact~\ref{fct:idemishere} that $p < k$. Thus, we may simply choose,
\[
K = U[w_1] \cdots U[w_p] 
\]
Clearly $K \in \Kb_{(k+1)(d+1)}$ since $p < k$ and we have $w_1 \cdots w_p \in K$.

\medskip

Assume now that $d \geq 1$: some idempotent $f \in R$ occurs in $w_1,\dots,w_p$. Thus, there exists $i \leq p$ and $q \leq k$ such that,
\[
\rho(U[w_i] \cdots U[w_{i+q-1}]) = f
\]
Moreover, it follows from Fact~\ref{fct:idemishere}, that we may choose $i$ and $q$ so that $i + q- 1 \leq k$. Consider the sequence $w_{i+q+1},\dots,w_p$. We distinguish two sub-cases depending on whether $f$ occurs in $w_{i+q+1},\dots,w_p$ as well or not.

\medskip
\noindent
{\bf Case~1: The idempotent $f$ does not occur in $w_{i+q+1},\dots,w_p$.} In that case, the index of $w_{i+q+1},\dots,w_p$ is at most $d-1$. Thus, induction yields a language $K' \in \Kb_{(k+1)d}$ containing $w_{i+q+1} \cdots w_p$. We define our new language $K$ as follows:
\[
K = U[w_1] \cdots U[w_{i+q}] \cdot K'
\]
By definition, we have $w_1 \cdots w_p \in K$. Moreover, $K$ is the concatenation of a language in $\Kb_{k+1}$ (i.e. $U[w_1] \cdots U[w_{i+q}]$ since $i + q- 1 \leq k$) with a language of $\Kb_{(k+1)d}$ (i.e. $K'$). Thus, since $k+1 + (k+1)d = (k+1)(d+1)$, we obtain that $K \in \Kb_{(k+1)(d+1)}$.

\medskip
\noindent
{\bf Case~2: The idempotent $f$ occurs in $w_{i+q+1},\dots,w_p$.} By definition, this means that we get $j \geq i+q+1$ and $r \leq k$, such that,
\[
\rho(U[w_j] \cdots U[w_{j+r-1}]) = f
\]
We work with the largest such $j$. In other words, we choose $j$ so that $f$ does not occur in $w_{j+1},\dots,w_p$ and, therefore neither in $w_{j+r},\dots,w_p$. Hence, the index of $w_{j+r},\dots,w_p$ is at most $d-1$ and induction yields a language $K' \in \Kb_{(k+1)d}$ containing $w_{j+r} \cdots w_p$. We define our new language $K$ as follows:
\[
K = U[w_1] \cdots U[w_{i-1}] \cdot (U[w_i] \cdots U[w_{i+q-1}] \cdot V \cdot U[w_j] \cdots U[w_{j+r-1}]) \cdot K'
\]
Clearly, $U[w_1] \cdots U[w_{i-1}] \cdot (U[w_i] \cdots U[w_{i+q-1}] \cdot V \cdot U[w_j] \cdots U[w_{j+r-1}])\in \Kb_{k+1}$ since $i \leq k+1$. Thus, $K$ is the concatenation of a language in $\Kb_{k+1}$ with another language in $\Kb_{(k+1)d}$ and we get $K \in \Kb_{(k+1)(d+1)}$.

Finally, $w_1 \cdots w_{i+q-1} \in U[w_1] \cdots U[w_{i+q-1}]$ and $w_j \cdots w_p \in U[w_j] \cdots U[w_{j+r-1}] \cdot K'$ by definition. Moreover, since $w_{i+q},\dots,w_{j-1} \in F^\alpha(e,h-1,m-1)$ and $e$ is idempotent, we have $w_{i+q} \cdots w_{j-1} \in \alpha^{-1}(e) \subseteq V$. Altogether, we get $w_1 \cdots w_p\in K$, concluding the proof.

\subsection{Characterization}

We begin with the property characterizing \pol{\Cs}-optimal pointed \imprints. Consider a \Cs-compatible morphism $\alpha: A^* \to M$ and a \mratm $\rho: 2^{A^*} \to R$ (note that there is no constraint on $\rho$). In particular, recall that since $\alpha$ is \Cs-compatible, for any $s \in M$, $\ctype{s}$ is well-defined as a $\sim_\Cs$-class containing $\alpha\inv(s)$.

We say that a subset $S \subseteq M \times R$ is \emph{\pol{\Cs}-saturated} (for $\alpha$ and $\rho$) when it contains \ptriv{\alpha,\rho} and is closed under the following operations: 
\begin{enumerate}
	\item \emph{Downset}: For any $(s,r) \in S$ and $r ' \leq r$, we have $(s,r') \in S$.
	\item \emph{Multiplication}: For any $(s_1,r_1),(s_2,r_2) \in S$, we have $(s_1s_2,r_1r_2) \in S$.
	\item\label{op:half:polclos} \emph{\pol{\Cs}-closure}: For any pair of multiplicative idempotents $(e,f) \in S$, we have,
	\[
	(e,f \cdot \rho(\ctype{e}) \cdot f) \in S
	\]
\end{enumerate}

We may now state the main theorem of the section: the \pol{\Cs}-optimal $\alpha$-pointed $\rho$-\imprint \pocopti is the least \pol{\Cs}-saturated subset of~$M \times R$ (for inclusion).

\begin{theorem}[Characterization of \pol{\Cs}-optimal \imprints] \label{thm:half:mainpolc}
	Let $\alpha: A^* \to M$ be a \Cs-compatible morphism and $\rho:2^{A^*} \to R$ a \mratm. Then, $\pocopti$ is the least \pol{\Cs}-saturated subset of $M \times R$.
\end{theorem}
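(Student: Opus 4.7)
The statement asserts that $\pocopti$ equals the least \pol{\Cs}-saturated subset of $M \times R$, so both inclusions must be proven. For the \emph{soundness} direction I verify that $\pocopti$ is itself \pol{\Cs}-saturated: Lemma~\ref{lem:satur} already supplies the inclusion of $\ptriv{\alpha,\rho}$ and closure under downset and multiplication, leaving only the new \pol{\Cs}-closure operation to check. The \emph{completeness} direction is a structural induction on Simon's factorization forests, building \pol{\Cs}-covers whose $\rho$-imprints land inside any chosen saturated set.

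\textbf{Soundness of \pol{\Cs}-closure.} Fix a multiplicative idempotent $(e,f) \in \pocopti$, so that $f \in \opti{\pol{\Cs}}{\alpha\inv(e),\rho}$. Since $R$ is finite, the decreasing chain $\opti{\polp{\Cs}{k}}{\alpha\inv(e),\rho}$ stabilizes at $\opti{\pol{\Cs}}{\alpha\inv(e),\rho}$; pick $k$ past stabilization and work with the finite lattice $\polp{\Cs}{k}$. Lemma~\ref{lem:finitedef} yields $w \in \alpha\inv(e)$ whose $\polrelp{k}$-upper set $K$ satisfies $f \leq \rho(K)$. Set $m_0 = 2^{k+1}-1$ with $p$ the period of $\Cs$, and consider $w' = w^{pm_0}$, which still lies in $\alpha\inv(e)$ since $e$ is idempotent. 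By Lemma~\ref{lem:hintro:propreo2} applied with exponents $m = m'_1 = m'_2 = m_0$, $w' \polrelp{k} w^{pm_0} \cdot v \cdot w^{pm_0}$ for every $v \in \ctype{e}$ (as $w^p \canoc v$ whenever $v \sim_\Cs w^p$), and the precongruence of $\polrelp{k}$ combined with $w \polrelp{k} v$ for $v \in K$ replaces each copy of $w^{pm_0}$ by an arbitrary word in $K^{pm_0}$. Hence the $\polrelp{k}$-upper set of $w'$ contains $K^{pm_0} \cdot \ctype{e} \cdot K^{pm_0}$, whose $\rho$-value $\rho(K)^{pm_0} \cdot \rho(\ctype{e}) \cdot \rho(K)^{pm_0}$ dominates $f \cdot \rho(\ctype{e}) \cdot f$ using $f \leq \rho(K)$, compatibility of $\leq$ with multiplication, and $f^{pm_0} = f$ by idempotence. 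The reverse implication of Lemma~\ref{lem:finitedef} then places $f \cdot \rho(\ctype{e}) \cdot f$ in $\opti{\polp{\Cs}{k}}{\alpha\inv(e),\rho} = \opti{\pol{\Cs}}{\alpha\inv(e),\rho}$, as required.

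\textbf{Completeness and main obstacle.} Fix a \pol{\Cs}-saturated $S$ and set $S_s = \{r : (s,r) \in S\}$. I build, by induction on $(h,m)$, \pol{\Cs}-covers $\Kb_{s,h,m}$ of $F^\alpha(s,h,m)$ such that $\rho(K) \in S_s$ for every $K \in \Kb_{s,h,m}$; Theorem~\ref{thm:facto} at $h = m = 3|M|-1$ then yields $\opti{\pol{\Cs}}{\alpha\inv(s),\rho} \subseteq S_s$, finishing the proof. Fact~\ref{fct:factounion} decomposes $F^\alpha(s,h,m)$ into binary, idempotent, and leaf parts. The binary part is covered by $\Kb_{s_1,h-1,m} \cdot \Kb_{s_2,h-1,m}$ over all factorizations $s = s_1 s_2$, with closure of $S$ under multiplication ensuring each product has $\rho$-value in $S_s$. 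For the idempotent part (active only when $s = e$ is idempotent) I invoke Lemma~\ref{lem:inducidem} with $\Ub = \Kb_{e,h-1,m-1}$ and $V = \ctype{e} \in \Cs$: each piece of the resulting cover is either in $\Ub$ (handled by induction and multiplication) or of the form $U_1 \cdots U_\ell \cdot \ctype{e} \cdot U'_1 \cdots U'_{\ell'}$ with $\rho(U_1 \cdots U_\ell) = \rho(U'_1 \cdots U'_{\ell'}) = f$ idempotent, in which case induction puts $(e,f) \in S$ and \pol{\Cs}-closure supplies $(e, f \rho(\ctype{e}) f) \in S$, matching the $\rho$-value of the piece. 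For the base case $F^\alpha(s,0,0)$ I first derive $(1_M, \rho(\ctype{1_M})) \in S$ by applying \pol{\Cs}-closure to the idempotent $(1_M, \rho(\varepsilon)) \in \ptriv{\alpha,\rho}$ (using that $\{\varepsilon\}$ is neutral for language concatenation, so $\rho(\varepsilon) \cdot \rho(\ctype{1_M}) \cdot \rho(\varepsilon) = \rho(\ctype{1_M})$), and then cover each letter $a$ with $\alpha(a) = s$ by $\ctype{1_M} \cdot a \cdot \ctype{1_M} \in \pol{\Cs}$, multiplication closure giving the required $(s, \rho(\ctype{1_M}) \rho(a) \rho(\ctype{1_M})) \in S$. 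I expect the main obstacle to be the soundness of the \pol{\Cs}-closure operation: since no single word need witness the value $f$, the argument cannot remain at the level of individual words but must combine the stratification, Lemma~\ref{lem:finitedef}, and the specific propagation property of Lemma~\ref{lem:hintro:propreo2}.
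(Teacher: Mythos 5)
Your proof is correct and follows the same strategy as the paper's: soundness of \pol{\Cs}-closure is established by descending to a finite stratum $\polp{\Cs}{k}$ and combining Lemma~\ref{lem:finitedef} with the propagation Lemma~\ref{lem:hintro:propreo2}, while completeness is an induction on factorization-forest heights that uses Lemma~\ref{lem:inducidem} in the idempotent case and closes each piece with \pol{\Cs}-closure. The differences are purely presentational (you invoke stabilization of the decreasing chain $\opti{\polp{\Cs}{k}}{\alpha\inv(e),\rho}$ rather than choosing $k$ so that a fixed optimal cover lies in $\polk{\Cs}$, and you construct the $\polrelp{k}$-upper set of $w^{pm_0}$ and show it contains $K^{pm_0}\cdot\ctype{e}\cdot K^{pm_0}$ rather than taking $K = H^\ell\cdot\ctype{e}\cdot H^\ell$ directly as the witnessing language) and do not affect the argument.
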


When $\rho$ is \nice, Theorem~\ref{thm:half:mainpolc} yields an algorithm for computing \pocopti from $\alpha$ and $\rho$. Indeed, one may compute the least \pol{\Cs}-saturated subset of $M \times R$ with a least fixpoint procedure: one starts from \ptriv{\alpha,\rho} and saturates this set with the three operations in the definition (it is clear that all may be implemented). Combined with Proposition~\ref{prop:thereduction}, this yields the desired corollary.

\begin{corollary} \label{cor:half:corpolc}
	Let \Cs be a finite \vari. The \pol{\Cs}-covering problem is decidable.
\end{corollary}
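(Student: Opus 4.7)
The plan is to derive the decidability of \pol{\Cs}-covering from the characterization theorem~\ref{thm:half:mainpolc} by combining it with the generic reduction provided by Proposition~\ref{prop:thereduction}. By that proposition, it suffices to exhibit an algorithm which, on input a \Cs-compatible morphism $\alpha:A^*\to M$ together with a \nice \Cs-compatible \mratm $\rho:2^{A^*}\to R$, computes the set $\pocopti\subseteq M\times R$. Theorem~\ref{thm:half:mainpolc} identifies $\pocopti$ with the least \pol{\Cs}-saturated subset of the finite set $M\times R$, so the natural candidate is a least-fixpoint procedure: initialize a set $S\subseteq M\times R$ with $\ptriv{\alpha,\rho}$ and then repeatedly close it under the three operations (downset, multiplication, \pol{\Cs}-closure) until no new element is produced.

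First I would argue termination: since $M\times R$ is finite and $S$ only grows monotonically, the saturation loop halts after at most $|M|\cdot|R|$ insertions. Next I would verify that each step of the loop is effectively implementable. The initial set $\ptriv{\alpha,\rho}=\{(\alpha(w),r)\mid w\in A^*,\ r\leq\rho(w)\}$ is computable from $\alpha$ and $\rho$: as noted just before Lemma~\ref{lem:satur}, \nice-ness of $\rho$ makes this set finitely representable and computable by ranging $w$ over a transversal of $\alpha$ and taking downsets of the finitely many values $\rho(w)\in R$. The downset operation uses only the canonical order of the finite hemiring $R$, and multiplication uses the multiplications of $M$ and $R$; both are trivially implementable.

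The only genuine obstacle is the \pol{\Cs}-closure step, which for every pair of multiplicative idempotents $(e,f)\in S$ requires adding $(e,\, f\cdot\rho(\ctype{e})\cdot f)$ to $S$, and hence requires computing $\rho(\ctype{e})\in R$. The key observation is that, since \Cs is a \emph{finite} Boolean algebra, Lemma~\ref{lem:canoequiv} guarantees that $\sim_\Cs$ has finite index and each $\sim_\Cs$-class is a regular language; moreover, from a presentation of \Cs one can effectively enumerate these classes (each is a Boolean combination of the finitely many languages of \Cs). Because $\alpha$ is \Cs-compatible, the implicit map $s\mapsto\ctype{s}$ is fixed and computable, so the regular language $\ctype{e}$ is available. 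Finally, the remark following the definition of \nice \mratms records that $\rho$ can be evaluated on any regular language (the value is the finite sum of the values $\rho(w)$ for $w$ in the language, computed via a product construction with a recognizer, as detailed in Lemma~5.7 of~\cite{pzcovering2}); thus $\rho(\ctype{e})$ is computable and the closure step is effective.

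Putting the pieces together, the fixpoint procedure computes $\pocopti$ from $\alpha$ and $\rho$, and feeding this routine into the reduction of Proposition~\ref{prop:thereduction} yields a decision procedure for \pol{\Cs}-covering, which is exactly Corollary~\ref{cor:half:corpolc}. The only non-routine point to write up carefully is the effectiveness of evaluating a \nice \mratm on the regular language $\ctype{e}$, but as just explained this is already guaranteed by the standing properties of \nice \mratms and of finite \varis.
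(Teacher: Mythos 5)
Your proposal is correct and follows exactly the route the paper takes: reduce via Proposition~\ref{prop:thereduction} to computing \pocopti, invoke Theorem~\ref{thm:half:mainpolc} to identify \pocopti with the least \pol{\Cs}-saturated subset of $M\times R$, and compute that set by a terminating least-fixpoint saturation. The paper dismisses implementability of the three operations with ``it is clear that all may be implemented''; your careful justification of the \pol{\Cs}-closure step (evaluating the \nice \mratm $\rho$ on the regular $\sim_\Cs$-class $\ctype{e}$) simply fills in that detail and is consistent with the paper's standing remarks.
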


It remains to prove Theorem~\ref{thm:half:mainpolc}. The remainder of the section is devoted to this argument.  Let $\alpha: A^* \to M$ as a \Cs-compatible morphism and $\rho:2^{A^*} \to R$ as a \mratm. We  show that \pocopti is the least \pol{\Cs}-saturated subset of $M \times R$.

We separate the proof in two main steps. Intuitively, they correspond respectively to soundness and completeness of the least fixpoint procedure obtained from the theorem. First, we show that \pocopti is \pol{\Cs}-saturated. This corresponds to soundness (the least fixpoint procedure only computes elements in \pocopti). Then we show that \pocopti is included in any \pol{\Cs}-saturated subset $S  \subseteq  M \times R$. This corresponds to completeness (the least fixpoint procedure computes all elements  in \pocopti).

\subsection{Soundness}

We show that \pocopti is \pol{\Cs}-saturated. Since \pol{\Cs} is a \pvari, we already know from Lemma~\ref{lem:satur} that \pocopti contains \ptriv{\alpha,\rho} and is closed under downset and multiplication. Thus, we may focus on proving that \pocopti satisfies \pol{\Cs}-closure. Consider an idempotent $(e,f) \in \pocopti$. We show that,
\[
(e,f \cdot \rho(\ctype{e}) \cdot f) \in \pocopti
\]
By definition of \pocopti, this means proving that $f \cdot \rho(\ctype{e}) \cdot f \in \opti{\pol{\Cs}}{\alpha\inv(e),\rho}$. We fix an arbitrary optimal \pol{\Cs}-cover \Kb of $\alpha\inv(e)$ and show that $f \cdot \rho(\ctype{e}) \cdot f \in \prin{\rho}{\Kb}$.  We rely on the generic stratification of \pol{\Cs} defined in Section~\ref{sec:tools} (which we may use since \Cs is a finite \vari). Recall that the strata are denoted by \polk{\Cs} and the associated canonical preorders by \polrelk. By definition \Kb is a finite set of languages in \pol{\Cs}. Therefore, there exists some stratum $k \in \nat$ such that all languages in \Kb belong to \polk{\Cs}. Consequently, we have $\opti{\polk{\Cs}}{\alpha\inv(e),\rho} \subseteq \prin{\rho}{\Kb}$ and it now suffices to show that $f \cdot \rho(\ctype{e}) \cdot f \in \opti{\polk{\Cs}}{\alpha\inv(e),\rho}$. By Lemma~\ref{lem:finitedef}, we have to exhibit $w \in \alpha\inv(e)$ and $K \subseteq A^*$ such that $w \polrelk K$ and $f \cdot \rho(\ctype{e}) \cdot f \leq \rho(K)$. This is what we do.

\medskip

We first use our hypothesis that $(e,f) \in \pocopti$ to define $w$ and $K$. This implies that $f \in \opti{\pol{\Cs}}{\alpha\inv(e),\rho}$ and we get $f \in \opti{\polk{\Cs}}{\alpha\inv(e),\rho}$ by Fact~\ref{fct:inclus1}. Therefore, Lemma~\ref{lem:finitedef} yields $v \in \alpha\inv(e)$ and $H \subseteq A^*$ such that $v \polrelk H$ and $f \leq \rho(H)$. Let $p \in \nat$ be the period of \Cs (see Fact~\ref{fct:omegapower}) and $\ell = p2^{k+1}$. We define,
\[
w = v^{\ell}
\]
Clearly, $w \in \alpha\inv(e)$ since $v \in \alpha\inv(e)$ and $e \in M$ is idempotent. Finally, we let,
\[
K = H^\ell \cdot \ctype{e} \cdot H^\ell
\]
By definition of $K$ and since $\rho$ is a \mratm,
\[
(\rho(H))^\ell \cdot \rho(\ctype{e}) \cdot (\rho(H))^\ell = \rho(K)
\]
Moreover, since $f \leq \rho(H)$ by hypothesis and $f$ is idempotent, we obtain,
\[
f \cdot \rho(\ctype{e}) \cdot f \leq \rho(K)
\]
It remains to show that $w \polrelk K$. The proof is based on  Lemma~\ref{lem:hintro:propreo2}. Consider $u \in K$. We show that $w \polrelk u$. This will imply that $w \polrelk K$. By definition, we have $K = H^\ell \cdot \ctype{e} \cdot H^\ell$ and by hypothesis, we have $v \polrelk H$. Thus, since $u \in K$ and \polrelk is compatible with concatenation (see Lemma~\ref{lem:canoquo}), we obtain that there exists some $x \in \ctype{e}$ such that,
\[
v^{\ell} \cdot x \cdot v^{\ell} \polrelk u
\]
Therefore, it now suffices to show that $w \polrelk v^{\ell} \cdot x \cdot v^{\ell}$. It will then be immediate from transitivity that $w \polrelk u$ as desired. Recall that $p$ denotes the period of $\Cs$ and $\ell = p2^{k+1}$. Since $v \in \alpha\inv(e)$ and $e$ is an idempotent of $M$, we have $v^p \in \alpha\inv(e) \subseteq \ctype{e}$. Thus, $x$ and $v^p$ belong to the same $\sim_\Cs$-class: \ctype{e}. We get $v^p \sim_\Cs x$ and  Lemma~\ref{lem:hintro:propreo2} yields that
\[
v^{\ell} \polrelk v^{\ell} \cdot x \cdot v^{\ell}
\]
Since $w = v^\ell$ by definition this is exactly the desired property which concludes the proof.

\subsection{Completeness}

We turn to completeness. Recall that we have a \Cs-compatible morphism $\alpha: A^* \to M$ and a \mratm $\rho:2^{A^*} \to R$. Consider $S \subseteq M \times R$ which is \pol{\Cs}-saturated. Our objective is to prove that $\pocopti \subseteq S$.

\medskip

We present a generic construction which builds \pol{\Cs}-covers $\Kb_s$ of $\alpha\inv(s)$ for all $s \in M$ such that for any $K \in \Kb_s$, we have $(s,\rho(K)) \in S$. By definition of \prin{\rho}{\Kb_s} and since $S$ is closed under downset, (it is \pol{\Cs}-saturated), this will imply,
\[
\prin{\rho}{\Kb_s} \subseteq \{r \mid (s,r) \in S\}
\]
Thus, since $\opti{\pol{\Cs}}{\alpha\inv(s),\rho} \subseteq \prin{\rho}{\Kb_s}$ by definition of \pol{\Cs}-optimal $\rho$-\imprints, we shall obtain,
\[
\opti{\pol{\Cs}}{\alpha\inv(s),\rho} \subseteq \{r \mid (s,r) \in S\}
\]
Finally, by definition of \pocopti, we shall obtain the desired inclusion: $\pocopti \subseteq S$ (recall that $\pocopti  = \{(s,r) \mid r \in \opti{\pol{\Cs}}{\alpha\inv(s),\rho}\}$).

\begin{remark}
	This construction yields a generic method for building optimal \pol{\Cs}-covers of the sets $\alpha\inv(s)$ for $s \in M$. Indeed, we may apply it in the special case when $S = \pocopti$ since we already showed above that \pocopti is \pol{\Cs}-saturated. In this case, we get \pol{\Cs}-covers $\Kb_s$ of $\alpha\inv(s)$	for all $s \in M$ such that $\prin{\rho}{\Kb_s}  = \opti{\pol{\Cs}}{\alpha\inv(s),\rho}$.
\end{remark}

The construction is based on the factorization forest theorem of Simon which we presented in Section~\ref{sec:tools}. Recall that for any $s \in M$ and $h,m \in \nat$, we write $F^\alpha(s,h,m)$ for the language of all words in $\alpha^{-1}(s)$ which admit an $\alpha$-factorization forest of height at most $h$ and idempotent height at most $m$. We use the following proposition.

\begin{proposition} \label{prop:compolc}
	Let $s \in M$ and $h,m \in \nat$. There exists a \pol{\Cs}-cover \Kb of $F^\alpha(s,h,m)$ such that for any $K \in \Kb$, we have $(s,\rho(K)) \in S$.
\end{proposition}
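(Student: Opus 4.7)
The plan is to prove the proposition by induction on $h$, with the statement quantified over all $m \in \nat$ and $s \in M$ at each step. The decomposition of Fact~\ref{fct:factounion}, $F^\alpha(s,h,m) = F^\alpha_B(s,h,m) \cup F^\alpha_I(s,h,m) \cup F^\alpha(s,0,0)$, reduces the task at each level to covering each of the three pieces separately and taking the union of the three resulting covers. For the base case $h = 0$, one has $F^\alpha(s,0,m) = F^\alpha(s,0,0) \subseteq \{\varepsilon\} \cup A$, so only $F^\alpha(s,0,0)$ must be handled. The key preliminary observation is that $(1_M, \rho(\varepsilon)) \in \ptriv{\alpha,\rho} \subseteq S$ is idempotent in both coordinates (here $\rho(\varepsilon)$ is idempotent and acts as a multiplicative identity on the image of $\rho$). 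Hence a first application of the \pol{\Cs}-closure operation yields $(1_M, \rho(\ctype{1_M})) \in S$. Then $\varepsilon$ (if present) can be covered by $\ctype{1_M} \in \Cs \subseteq \pol{\Cs}$, and each letter $a \in F^\alpha(s,0,0)$ by the marked concatenation $K_a = \ctype{1_M} \cdot a \cdot \ctype{1_M} \in \pol{\Cs}$, which contains $a = \varepsilon a \varepsilon$. Combining $(1_M, \rho(\ctype{1_M})) \in S$ with $(\alpha(a), \rho(a)) \in \ptriv{\alpha,\rho} \subseteq S$ via multiplication closure then delivers $(s, \rho(K_a)) \in S$.

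For the inductive step $h \geq 1$, the binary case is routine: writing $F^\alpha_B(s,h,m) = \bigcup_{s_1 s_2 = s} F^\alpha(s_1, h-1, m) \cdot F^\alpha(s_2, h-1, m)$ and applying the induction hypothesis to each factor, one forms pairwise concatenations of the resulting cover languages; these remain in \pol{\Cs} by Theorem~\ref{thm:pclos} and have $\rho$-values in $S$ by multiplication closure. The idempotent case is the main obstacle, and is where the \pol{\Cs}-closure operation finally pays for itself. Here $s = e$ must be idempotent, and the plan is to apply the induction hypothesis at $(h-1, m-1)$ to obtain a cover $\Ub$ of $F^\alpha(e, h-1, m-1)$ with $(e, \rho(U)) \in S$ for every $U \in \Ub$, and then to invoke Lemma~\ref{lem:inducidem} with $V = \ctype{e}$; this is a valid choice because $\alpha^{-1}(e) \subseteq \ctype{e}$ by $\Cs$-compatibility of $\alpha$ and $\ctype{e} \in \Cs \subseteq \pol{\Cs}$.

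Lemma~\ref{lem:inducidem} then supplies a cover whose members are concatenations $K = K_1 \cdots K_n$ of building blocks of two types, both of which lie in \pol{\Cs} since \pol{\Cs} is closed under concatenation. A block $K_i \in \Ub$ satisfies $(e, \rho(K_i)) \in S$ directly by the induction hypothesis. For a block of the form $K_i = U_1 \cdots U_\ell \cdot V \cdot U'_1 \cdots U'_{\ell'}$ with $\rho(U_1\cdots U_\ell) = \rho(U'_1 \cdots U'_{\ell'}) = f$ idempotent in $R$, multiplication closure applied to the family $\{(e, \rho(U_j))\}_j \subseteq S$ (together with $e^\ell = e$ since $e$ is idempotent) places $(e, f)$ into $S$; as $(e,f)$ is now an idempotent pair, the \pol{\Cs}-closure operation produces $(e, f \cdot \rho(\ctype{e}) \cdot f) = (e, \rho(K_i)) \in S$. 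A final round of multiplication closure (using $e^n = e$) propagates $(e, \rho(K_i)) \in S$ for each $i$ into $(e, \rho(K)) \in S$, closing the induction.
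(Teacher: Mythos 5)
Your proposal is correct and follows essentially the same route as the paper's proof: induction on $h$, with the base case handled via $\ctype{\varepsilon}a\ctype{\varepsilon}$ (the paper writes $\ctype{\varepsilon}$ where you write $\ctype{1_M}$, but these are equal) plus the two applications of \pol{\Cs}-closure and multiplication, the binary case via concatenations of recursively obtained covers, and the idempotent case via Lemma~\ref{lem:inducidem} with $V = \ctype{e}$ followed by the same derivation of $(e,f)$ and $(e, f\cdot\rho(\ctype{e})\cdot f)$ in $S$. No gaps.
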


Before proving the proposition, we use it to finish the proof of Theorem~\ref{thm:half:mainpolc}. Consider $s \in M$. Our objective is to build a \pol{\Cs}-cover $\Kb_s$ of $\alpha\inv(s)$ such that,
\[
(s,\rho(K)) \in S \quad \text{for all $K \in \Kb_s$}
\]
Let $h = m = 3|M| - 1$. We obtain from the factorization forest theorem (i.e. Theorem~\ref{thm:facto}) that $\alpha^{-1}(s) = F^\alpha(s,h,m)$. Thus, Proposition~\ref{prop:compolc} yields the desired \pol{\Cs}-cover $\Kb_s$ of $\alpha^{-1}(s)$ such that for any $K \in \Kb_s$, we have $(s,\rho(K)) \in S$.

\medskip

It remains to prove Proposition~\ref{prop:compolc}. Let $h,m \in \nat$ and $s \in M$. Our objective is to build a \pol{\Cs}-cover \Kb of $F^\alpha(s,h,m)$ which satisfies the following property:
\begin{equation} \label{eq:goalpolc}
\text{For all $K \in \Kb$,} \quad (s,\rho(K)) \in S
\end{equation}
The construction is an induction on $h$. We start by considering the base case: $h = 0$.

\begin{remark}
	The proof is actually independent from the idempotent height $m$: this parameter is only useful for the \pbpol{\Cs} proof which we present in Section~\ref{sec:theproof}. We are forced to make it apparent because we intend to use Lemma~\ref{lem:inducidem} whose statement is designed to accommodate both \pol{\Cs} and \pbpol{\Cs}.
\end{remark}

\subsubsection{Base case: Leaves}

Assume that $h = 0$. It follows that all words in $F^\alpha(s,0,m)$ are either empty or made of a single letter $a \in A$. We distinguish two cases depending on whether $s = 1_M$ or not. If $s \neq 1_M$, we define,
\[
\Kb = \{\ctype{\varepsilon}a\ctype{\varepsilon} \mid a \in A \text{ and } \alpha(a) = s\}
\]
Otherwise, $s = 1_M$ and we define,
\[
\Kb = \{\ctype{\varepsilon}a\ctype{\varepsilon} \mid a \in A \text{ and } \alpha(a) = s\} \cup \{\ctype{\varepsilon}\}
\]
Clearly, all languages in \Kb belong to \pol{\Cs} (they are marked concatenations of the language $\ctype{\varepsilon} \in \Cs$ with itself). Moreover, it is also immediate that \Kb  is a cover of $F^\alpha(s,0,m)$. Indeed, a word $w \in F^\alpha(s,0,m)$ is either empty or a single letter and $\alpha(w) = s$. Thus, if $w = \varepsilon$, we have $w \in \ctype{\varepsilon}$ and if $w = a \in A$, we have $w \in \ctype{\varepsilon}a\ctype{\varepsilon}$.

It remains to show that \Kb satisfies~\eqref{eq:goalpolc}. Given $K \in \Kb$, we have to show that $(s,\rho(K)) \in S$. By definition of \Kb this is an immediate consequence of the following lemma.

\begin{lemma} \label{lem:half:leaflemma}
	We have $(1_M,\rho(\ctype{\varepsilon})) \in S$. Moreover, given any letter $a \in A$, we have  $(\alpha(a),\rho(\ctype{\varepsilon}a\ctype{\varepsilon})) \in S$.
\end{lemma}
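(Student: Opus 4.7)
The plan is to derive both assertions directly from the three axioms that $S$ is required to satisfy as a \pol{\Cs}-saturated set: the inclusion $\ptriv{\alpha,\rho} \subseteq S$, closure under multiplication, and \pol{\Cs}-closure. No fact beyond these axioms and the hemiring-morphism property of $\rho$ is needed.

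For the first statement, I would start from the pair $(1_M,\rho(\varepsilon))=(\alpha(\varepsilon),\rho(\varepsilon))$, which lies in $\ptriv{\alpha,\rho}\subseteq S$ by taking $w=\varepsilon$ and $r=\rho(\varepsilon)$. This pair consists of two multiplicative idempotents: $1_M\cdot 1_M=1_M$ trivially, and $\rho(\varepsilon)\cdot\rho(\varepsilon)=\rho(\varepsilon\cdot\varepsilon)=\rho(\varepsilon)$ since $\rho$ is a hemiring morphism. Moreover, because $\alpha$ is \Cs-compatible and $\varepsilon\in\alpha^{-1}(1_M)\subseteq\ctype{1_M}$, we have $\ctype{1_M}=\ctype{\varepsilon}$. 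Applying the \pol{\Cs}-closure axiom to this idempotent pair yields
\[
(1_M,\ \rho(\varepsilon)\cdot\rho(\ctype{\varepsilon})\cdot\rho(\varepsilon))\in S,
\]
and since $\rho$ is a hemiring morphism, $\rho(\varepsilon)\cdot\rho(\ctype{\varepsilon})\cdot\rho(\varepsilon)=\rho(\{\varepsilon\}\cdot\ctype{\varepsilon}\cdot\{\varepsilon\})=\rho(\ctype{\varepsilon})$. Hence $(1_M,\rho(\ctype{\varepsilon}))\in S$.

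The second statement then follows by a single multiplication. For any letter $a\in A$, the pair $(\alpha(a),\rho(a))$ lies in $\ptriv{\alpha,\rho}\subseteq S$ by taking $w=a$ and $r=\rho(a)$. Applying closure under multiplication to the three elements $(1_M,\rho(\ctype{\varepsilon}))$, $(\alpha(a),\rho(a))$, $(1_M,\rho(\ctype{\varepsilon}))$ of $S$ produces
\[
(1_M\cdot\alpha(a)\cdot 1_M,\ \rho(\ctype{\varepsilon})\cdot\rho(a)\cdot\rho(\ctype{\varepsilon}))=(\alpha(a),\rho(\ctype{\varepsilon}a\ctype{\varepsilon}))\in S,
\]
where the $R$-component is collapsed via the hemiring-morphism property of $\rho$.

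I do not anticipate any genuine obstacle. The only conceptual step is to notice that the trivial pair $(1_M,\rho(\varepsilon))$ is automatically idempotent in both coordinates, so the \pol{\Cs}-closure axiom is immediately applicable and in fact serves precisely to inject the class $\ctype{\varepsilon}$ into the second coordinate; the rest is bookkeeping with the definitions of \ctype{1_M} and of a hemiring morphism.
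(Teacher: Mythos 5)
Your proof is correct and follows essentially the same line as the paper's: you obtain $(1_M,\rho(\varepsilon))\in\ptriv{\alpha,\rho}$, observe it is a pair of idempotents, apply \pol{\Cs}-closure together with the identification $\ctype{1_M}=\ctype{\varepsilon}$, and then get the second statement by multiplying with $(\alpha(a),\rho(a))\in\ptriv{\alpha,\rho}$. The only cosmetic difference is that you establish $\ctype{1_M}=\ctype{\varepsilon}$ before invoking \pol{\Cs}-closure while the paper does it just after; the content is identical.
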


\begin{proof}
	We begin with the first property: $(1_M,\rho(\ctype{\varepsilon})) \in S$. By definition, we know that $(1_M,\rho(\varepsilon)) = (\alpha(\varepsilon),\rho(\varepsilon)) \in \ptriv{\alpha,\rho}$. Therefore, since $S$ is \pol{\Cs}-saturated, we have $(1_M,\rho(\varepsilon)) \in S$. Moreover, $(1_M,\rho(\varepsilon))$ is clearly a pair of idempotents. Therefore, since $S$ is \pol{\Cs}-saturated, we get from \pol{\Cs}-closure that,
	\[
	(1_M,\rho(\ctype{1_M})) = (1_M,\rho(\varepsilon) \cdot \rho(\ctype{1_M}) \cdot \rho(\varepsilon)) \in S
	\]
	Since $\varepsilon \in \alpha\inv(1_M) \subseteq \ctype{1_M}$, we have 
	$\ctype{\varepsilon} = \ctype{1_M}$ and we get $(1_M,\rho(\ctype{\varepsilon})) \in S$ as desired.
	
	\medskip
	
	It remains to prove that given $a \in A$, we have  $(\alpha(a),\rho(\ctype{\varepsilon}a\ctype{\varepsilon})) \in S$. By definition, we have $(\alpha(a),\rho(a)) \in \ptriv{\alpha,\rho}$. Hence, $(\alpha(a),\rho(a))  \in S$ since $S$ is \pol{\Cs}-saturated. Since $(1_M,\rho(\ctype{\varepsilon})) \in S$, it now follows from closure under multiplication that,
	\[	
	(1_M \cdot \alpha(a) \cdot 1_M,\rho(\ctype{\varepsilon}a\ctype{\varepsilon})) \in S
	\]
	We get as desired that $(\alpha(a),\rho(\ctype{\varepsilon}a\ctype{\varepsilon})) \in S$.
\end{proof}

\subsubsection{Inductive case}

We now assume that $h \geq 1$. Recall that our objective is to build a \pol{\Cs}-cover \Kb of $F^\alpha(s,h,m)$ which satisfies~\eqref{eq:goalpolc}. We decompose $F^\alpha(s,h,m)$ as the union of three languages that we cover independently. Recall that $F^\alpha_B(s,h,m)$ (resp. $F^\alpha_I(s,h,m)$) denotes the language of all words in $\alpha^{-1}(s)$ admitting an $\alpha$-factorization forest of height of at most $h$, of idempotent height at most $m$ and whose root is a \emph{binary node} (resp. \emph{idempotent node}). The construction is based on the two following lemmas.

\begin{lemma} \label{lem:pbinary}
	There exists a \pol{\Cs}-cover $\Kb_B$ of $F^\alpha_B(s,h,m)$ such that for all $K \in \Kb_B$, we have $(s,\rho(K)) \in S$.
\end{lemma}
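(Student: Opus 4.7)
The plan is to handle the binary case by a direct outer induction on the forest height $h$, using the fact that a binary root decomposes the word into two parts each admitting a forest of strictly smaller height. The idempotent height $m$ plays no active role here: it simply rides along on both children.

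More precisely, I would invoke the outer induction hypothesis on Proposition~\ref{prop:compolc} at height $h-1$: for every $t \in M$ there is a \pol{\Cs}-cover $\Kb_t$ of $F^\alpha(t,h-1,m)$ such that $(t,\rho(K)) \in S$ for every $K \in \Kb_t$. Then I would define
\[
\Kb_B \;=\; \bigcup_{\substack{s_1,s_2 \in M \\ s_1 s_2 = s}} \bigl\{\, K_1 K_2 \;\big|\; K_1 \in \Kb_{s_1},\; K_2 \in \Kb_{s_2} \,\bigr\}.
\]
This is a finite family, and every $K_1 K_2 \in \Kb_B$ belongs to \pol{\Cs} because \pol{\Cs} is closed under concatenation by Theorem~\ref{thm:pclos}.

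Next I would verify that $\Kb_B$ covers $F^\alpha_B(s,h,m)$. Take $w \in F^\alpha_B(s,h,m)$. By definition of binary root, $w = w_1 w_2$ where each $w_i$ is the label of a child admitting an $\alpha$-factorization forest of height at most $h-1$ and idempotent height at most $m$; let $s_i = \alpha(w_i)$, so $s_1 s_2 = s$. Then $w_i \in F^\alpha(s_i, h-1, m)$, so $\Kb_{s_i}$ contains some $K_i$ with $w_i \in K_i$, and $w \in K_1 K_2 \in \Kb_B$.

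Finally, for the \imprint condition, consider an arbitrary $K = K_1 K_2 \in \Kb_B$ arising from a decomposition $s = s_1 s_2$ with $K_i \in \Kb_{s_i}$. The induction hypothesis gives $(s_1, \rho(K_1)), (s_2, \rho(K_2)) \in S$, and since $\rho$ is a \mratm we have $\rho(K_1 K_2) = \rho(K_1) \cdot \rho(K_2)$. Because $S$ is \pol{\Cs}-saturated, closure under multiplication yields $(s, \rho(K)) = (s_1 s_2, \rho(K_1)\rho(K_2)) \in S$, as required. There is no real obstacle here: the binary case only uses closure of \pol{\Cs} under concatenation (Theorem~\ref{thm:pclos}) and closure of $S$ under multiplication; the genuinely hard case is the idempotent one, which will require Lemma~\ref{lem:inducidem} and the \pol{\Cs}-closure operation~(\ref{op:half:polclos}).
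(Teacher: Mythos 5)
Your proof is correct and follows the same route as the paper's own argument: induction at height $h-1$ yields covers of $F^\alpha(t,h-1,m)$ for each $t\in M$, you take all products $K_1K_2$ along factorizations $s=s_1s_2$, and closure of \pol{\Cs} under concatenation plus closure of $S$ under multiplication finish the job. The only difference is notational (you write $\Kb_t$ where the paper writes $\Ub_t$).
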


\begin{lemma} \label{lem:pidem}
	There exists a \pol{\Cs}-cover $\Kb_I$ of $F^\alpha_I(s,h,m)$ such that for all $K \in \Kb_I$, we have $(s,\rho(K)) \in S$.
\end{lemma}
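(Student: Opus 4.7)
The plan is to split into cases on whether $s$ is an idempotent of $M$. If $s$ is not idempotent, then $F^\alpha_I(s,h,m) = \emptyset$: an idempotent root has $\geq 3$ children all mapping to the same idempotent $e$, forcing $s = e^n = e$ to be idempotent. Thus $\Kb_I = \emptyset$ works trivially. From now on, assume $s = e$ is idempotent.

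Next, I invoke the induction hypothesis on the parameter $h$ (the statement of Proposition~\ref{prop:compolc} at height $h-1$ and idempotent height $m-1$) to obtain a \pol{\Cs}-cover $\Ub$ of $F^\alpha(e,h-1,m-1)$ such that $(e,\rho(U)) \in S$ for every $U \in \Ub$. I then set $V = \ctype{e}$. By \Cs-compatibility of $\alpha$ we have $\alpha^{-1}(e) \subseteq \ctype{e} = V$, and $\ctype{e}$ is a $\sim_\Cs$-class, hence a language of \Cs by Lemma~\ref{lem:canoequiv}. With these ingredients, Lemma~\ref{lem:inducidem} yields a cover $\Kb_I$ of $F^\alpha_I(e,h,m)$ whose elements decompose as $K = K_1 \cdots K_n$ with each $K_i$ either a language in $\Ub$ (type~(a)) or of the sandwich form $U_1 \cdots U_\ell \cdot \ctype{e} \cdot U'_1 \cdots U'_{\ell'}$ with $\rho(U_1 \cdots U_\ell) = \rho(U'_1 \cdots U'_{\ell'}) = f$ for some multiplicative idempotent $f \in R$ (type~(b)).

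It remains to verify the two required properties of $\Kb_I$. For \pol{\Cs}-membership, each piece is a concatenation of languages that all lie in $\pol{\Cs}$ (elements of $\Ub$ by induction, and $\ctype{e} \in \Cs \subseteq \pol{\Cs}$), so by closure of \pol{\Cs} under concatenation (Theorem~\ref{thm:pclos}), each $K \in \Kb_I$ belongs to \pol{\Cs}. For the membership $(e,\rho(K)) \in S$, I analyze each piece: for type~(a) with $K_i = U \in \Ub$, we have $(e,\rho(U)) \in S$ by the induction hypothesis; for type~(b), closure of $S$ under multiplication gives $(e,f) = (e^\ell, \rho(U_1)\cdots\rho(U_\ell)) \in S$, and since both $e$ and $f$ are idempotents, \pol{\Cs}-closure yields $(e,\, f \cdot \rho(\ctype{e}) \cdot f) = (e,\rho(K_i)) \in S$. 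Finally, closure under multiplication applied $n-1$ times, together with $e^n = e$, yields $(e,\rho(K)) \in S$ as desired.

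The crux of the argument is the type~(b) step: the sandwich structure $U_1 \cdots U_\ell \cdot \ctype{e} \cdot U'_1 \cdots U'_{\ell'}$ produced by Lemma~\ref{lem:inducidem} is precisely engineered so that the combined image in $R$ takes the shape $f \cdot \rho(\ctype{e}) \cdot f$ expected by the \pol{\Cs}-closure operation defining \pol{\Cs}-saturated sets. No genuine obstacle should arise here beyond carefully chaining the closure operations of $S$; the combinatorial heavy lifting has already been absorbed into Lemma~\ref{lem:inducidem}.
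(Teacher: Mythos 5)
Your proof is correct and follows essentially the same path as the paper: the same reduction to the idempotent case, the same use of Lemma~\ref{lem:inducidem} with $\Ub$ obtained by induction on $h$ and $V = \ctype{e}$, and the same chaining of closure under multiplication and \pol{\Cs}-closure to establish $(e,\rho(K)) \in S$ for the sandwich-form factors. The paper isolates the type~(b) step as a small standalone fact (Fact~\ref{fct:polclos}), but the argument is otherwise identical to yours.
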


Before we show these two results, let us use them to finish the inductive case. Let $\Kb_B$ and $\Kb_I$ be as defined in the two above lemmas. Fact~\ref{fct:factounion} yields the following:
\[
F^\alpha(s,h,m) = F^\alpha_B(s,h,m) \cup F^\alpha_I(s,h,m) \cup F^\alpha(s,0,0)
\]
Since $h \geq 1$, we obtain from induction on $h$ that we have a \pol{\Cs}-cover $\Kb'$ of $F^\alpha(s,0,0)$ such that for all $K \in \Kb$, we have $(s,\rho(K)) \in S$. Thus, it suffices to define $\Kb = \Kb_B \cup \Kb_I \cup \Kb'$. By definition, we know that \Kb is a \pol{\Cs}-cover of $F^\alpha(s,h,m)$ which satisfies~\eqref{eq:goalpolc} as desired. It remains to prove the two lemmas.

\medskip
\noindent
{\bf Proof of Lemma~\ref{lem:pbinary}.} Using induction we obtain that for all $t \in M$, there exists a \pol{\Cs}-cover $\Ub_t$ of $F^\alpha(t,h-1,m)$ such that for any $U \in \Ub_t$, we have $(t,\rho(U)) \in S$.  We use these new \pol{\Cs}-covers $\Ub_t$ to build the desired $\Kb_B$. We define,
\[
\Kb_B = \{K_1K_2 \mid \text{there exist $t_1,t_2 \in M$ such that $s = t_1t_2$, $K_1 \in \Ub_{t_1}$ and $K_2 \in \Ub_{t_2}$}\}
\]

We need to verify that $\Kb_B$ satisfies the desired properties. We start by proving that it is a \pol{\Cs}-cover of $F_B^\alpha(s,h,m)$. Clearly, all languages in $\Kb_B$ belong to \pol{\Cs} (this follows from Theorem~\ref{thm:pclos} since they are all concatenations of two languages in \pol{\Cs}). Hence, it suffices to verify that $\Kb_B$ is a cover of $F_B^\alpha(s,h,m)$. Let $w \in F_B^\alpha(s,h,m)$, we exhibit $K \in \Kb_B$ such that $w \in K$. By definition, $w$ is the root label of some $\alpha$-factorization forest of height at most $h$, of idempotent height at most $m$ and whose root is a binary node. This exactly says that $w$ admits a decomposition $w = w_1w_2$ with $w_1 \in F^\alpha(\alpha(w_1),h-1,m)$, $w_2 \in F^\alpha(\alpha(w_2),h-1,m)$. Since $\Ub_{\alpha(w_1)}$ and $\Ub_{\alpha(w_2)}$ are covers of $F^\alpha(\alpha(w_1),h-1,m)$ and $F^\alpha(\alpha(w_2),h-1,m)$ respectively, there exist $K_1 \in \Ub_{\alpha(w_1)}$ and $K_2 \in \Ub_{\alpha(w_2)}$ such that $w_1 \in K_1$ and $w_2 \in K_2$. Thus, $w = w_1w_2 \in K_1K_2$ which is an element of $\Kb_B$ by definition since $\alpha(w_1)\alpha(w_2) =  \alpha(w) = s$. 

It remains to verify that for any $K \in \Kb_B$, we have $(s,\rho(K)) \in S$. By construction, $K = K_{1}K_{2}$ where $K_1 \in \Ub_{t_1}$ and $K_2 \in \Ub_{t_2}$ for $t_1,t_2 \in M$ such that $s = t_1t_2$. Moreover, we know that $(t_1,\rho(K_1)) \in S$ and $(t_2,\rho(K_2)) \in S$ by definition of $\Ub_{t_1}$ and $\Ub_{t_2}$. Therefore, since $S$ is \pol{\Cs}-saturated, we obtain from closure under multiplication that $(s,\rho(K)) =  (t_1t_2,\rho(K_1K_2)) \in S$.

\medskip
\noindent
{\bf Proof of Lemma~\ref{lem:pidem}.} Recall that our objective here is to construct a \pol{\Cs}-cover $\Kb_I$ of $F^\alpha_I(s,h,m)$ such that for all $K \in \Kb_I$, we have $(s,\rho(K)) \in S$. Observe that we may assume without loss of generality that $s$ is an idempotent of $M$. Indeed, otherwise we have $F^\alpha_I(s,h,m) = \emptyset$ and we may simply choose $\Kb_I = \emptyset$. We shall write $s = e \in M$ to underline the fact that $s$ is idempotent: we have to cover $F^\alpha_I(e,h,m)$. This is where we use Lemma~\ref{lem:inducidem}. Applying it requires a cover \Ub of $F^\alpha(e,h-1,m-1)$ and a language $V$ which contains $\alpha^{-1}(e)$. Let us first define these objects.

\medskip

We build \Ub by induction. More precisely, it is immediate from induction on the height $h$ that there exists a \pol{\Cs}-cover \Ub of $F^\alpha(e,h-1,m-1)$ such that for any $U \in \Ub$, we have $(e,\rho(U)) \in S$. Furthermore, use $\ctype{e}$ as the language $V$ (note that $\ctype{e}$ contains $\alpha^{-1}(e)$ by definition of \Cs-compatible morphisms).

We now have everything we need for applying Lemma~\ref{lem:inducidem}. We obtain a cover $\Kb_I$ of $F^\alpha_I(e,h,m)$ such that any $K \in \Kb_I$ is a concatenation $K = K_1 \cdots K_n$ where each $K_i$ is of one of the two following kinds:
\begin{enumerate}
	\item $K_i$ is a language in \Ub, or,
	\item $K_i = U_1 \cdots U_\ell \ctype{e} U'_1 \cdots U'_{\ell'}$ where $U_1, \dots,U_\ell,U'_1,\dots,U'_{\ell'} \in \Ub$ and there exists an \emph{idempotent} $f \in R$ such that $\rho(U_1 \cdots U_\ell) = \rho(U'_1 \cdots U'_{\ell'}) = f$.
\end{enumerate}

Clearly, any $K \in \Kb_I$ belongs to \pol{\Cs}: this follows from Theorem~\ref{thm:pclos} since it is a concatenation of languages in \pol{\Cs} by definition ($\ctype{e} \in \Cs \subseteq \pol{\Cs}$ by Lemma~\ref{lem:canoequiv} since it is $\sim_\Cs$-class). Therefore, $\Kb_I$ is a \pol{\Cs}-cover of $F^\alpha_I(e,h,m)$. It remains to prove that for any $K \in \Kb_I$, we have $(e,\rho(K)) \in S$. We need the following fact which is proved using \pol{\Cs}-closure.

\begin{fct} \label{fct:polclos}
	Consider an idempotent $f \in R$ such that $f = \rho(U_1 \cdots U_\ell)$ with $U_1,\dots,U_\ell \in \Ub$. Then, we have $(e,f \cdot \rho(\ctype{e}) \cdot f) \in S$.
\end{fct}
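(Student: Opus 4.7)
The plan is to chain the three defining closure properties of $S$ (downset, multiplication, and \pol{\Cs}-closure) to lift the hypothesis $U_1,\dots,U_\ell \in \Ub$ into the target membership. The key observation is that $\Ub$ was built inductively so that every $U \in \Ub$ satisfies $(e,\rho(U)) \in S$, and that $e$ is already idempotent by assumption.

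First I would assemble $(e,f) \in S$. Since each $U_i$ lies in $\Ub$, the inductive construction of $\Ub$ gives $(e,\rho(U_i)) \in S$ for $i=1,\dots,\ell$. Applying closure under multiplication $\ell-1$ times (and using that $\rho$ is a hemiring morphism, so $\rho(U_1)\cdots\rho(U_\ell) = \rho(U_1\cdots U_\ell) = f$, together with $e^\ell = e$ since $e$ is idempotent in $M$), we conclude
\[
(e,f) \;=\; (e^\ell,\rho(U_1)\cdots\rho(U_\ell)) \;\in\; S.
\]

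Next I would invoke \pol{\Cs}-closure. By hypothesis $f$ is a multiplicative idempotent of $R$, and $e$ is a multiplicative idempotent of $M$, so $(e,f)$ is a pair of idempotents in $S$. Operation~(\ref{op:half:polclos}) in the definition of \pol{\Cs}-saturation then directly yields
\[
(e,\; f \cdot \rho(\ctype{e}) \cdot f) \;\in\; S,
\]
which is exactly the conclusion of the fact.

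There is no real obstacle here: the only thing to be careful about is that we are reading off $(e,f) \in S$ from what has already been established about $\Ub$ (the hypothesis on $\Ub$ produced during the inductive construction) before applying \pol{\Cs}-closure, rather than trying to produce $f \cdot \rho(\ctype{e}) \cdot f$ in a single step. The argument uses each of the three saturation properties exactly where the definition advertises them, so no additional combinatorics on factorization forests is needed at this stage.
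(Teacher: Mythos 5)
Your proof is correct and follows the paper's argument exactly: establish $(e,f)\in S$ by repeated use of closure under multiplication starting from $(e,\rho(U_i))\in S$ (known from the construction of $\Ub$), then apply \pol{\Cs}-closure to the idempotent pair $(e,f)$. (Your opening line mentions chaining downset as well, but—as in the paper—downset is never actually needed here.)
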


\begin{proof}
	By definition of \Ub, we know that $(e,\rho(U_i)) \in S$ for all $i \leq \ell$. Therefore, since $S$ is \pol{\Cs}-saturated and $e$ is idempotent, closure under multiplication yields that,
	\[
	(e,f) = (e^\ell,\rho(U_1 \cdots U_\ell)) \in S
	\]
	Since $(e,f)$ is an idempotent, we get from \pol{\Cs}-closure that $(e,f \cdot \rho(\ctype{e}) \cdot f) \in S$.	
\end{proof}

We may now prove that $(e,\rho(K)) \in S$ for any $K \in \Kb_I$. Consider $K \in \Kb_I$. By definition $K = K_1 \cdots K_n$ where all languages $K_i$ are as described in the two items above. Clearly, $(e,\rho(K_i)) \in S$ for any $i \leq n$. If $K_i$ is as described in the first item ($K_i \in \Ub$), this is by definition of \Ub. Otherwise, $K_i$ is as described in the second item and this is by Fact~\ref{fct:polclos}. Therefore, since $S$ is \pol{\Cs}-saturated and $e$ is idempotent, we obtain from closure under multiplication that,
\[
(e,\rho(K)) = (e^n,\rho(K_1) \cdots \rho(K_n))  \in S 
\]
This concludes the argument for Lemma~\ref{lem:pidem}.


\section{Covering for \pbpol{\Cs}}
\label{sec:pbpol}
We now turn to our main result: \pbpol{\Cs}-covering is decidable for any finite \vari \Cs. Again, our approach is based on optimal \imprints: we present an effective characterization  of \pbpol{\Cs}-optimal pointed \imprints. For the sake of avoiding clutter, we shall assume that \Cs is fixed for the section.

This characterization is more involved than the one we already obtained for \pol{\Cs}. First, it applies to a more restricted class of \mratms. Specifically, we present a characterization of \pbpopti which holds when $\alpha: A^* \to M$ is a \Cs-compatible morphism and $\rho$ is a \emph{\nice \Cs-compatible} \mratm. In other words, we now need $\rho$ to be \nice and \Cs-compatible (which was not the case for \pol{\Cs}).

\begin{remark}
	Of course, having a characterization restricted to this special case is enough to obtain the desired \pbpol{\Cs}-covering algorithm by Proposition~\ref{prop:thereduction}. However, the fact that we now require $\rho$ to be \nice (which was not the case for \pol{\Cs}) is significant. This explains why the arguments of this paper do not extend to higher levels in concatenation hierarchies. The proof of our characterization for \pbpol{\Cs} relies heavily on the fact that we have a characterization for \pol{\Cs} which holds for {\bf all} \mratms.
\end{remark}

A second point is that our characterization of \pbpol{\Cs}-optimal pointed \imprints actually involves two distinct objects:
\begin{itemize}
	\item As desired, it describes \pbpopti, the \pbpol{\Cs}-optimal $\alpha$-pointed $\rho$-\imprint.
	\item It also describes a second object: \tpocopti, which is the \pol{\Cs}-optimal $\beta$-pointed $\tau$-\imprint where $\beta$ is the canonical morphism associated to $\rho$ (i.e. its restriction to $A^*$), and $\tau$ is an auxiliary \mratm built from $\alpha$ and $\rho$.
\end{itemize}

The key idea here is that our descriptions of \pbpopti and \tpocopti are mutually dependent. Reformulated from an algorithmic point of view, this means that we get a least fixpoint procedure which computes \pbpopti and \tpocopti simultaneously (although $\tau$ is not \nice and can only be used implicitly in the algorithm).

\begin{remark}
	The presence of this second object \tpocopti explains why we shall need to reuse our characterization of \pol{\Cs}-optimal pointed \imprints (i.e. Theorem~\ref{thm:half:mainpolc}). In particular, the \mratm $\tau$ will {\bf not} be \nice. Hence, it will be important that Theorem~\ref{thm:half:mainpolc} holds for all \mratms.
\end{remark}

The section is organized as follows. First, we explain how the auxiliary \mratm $\tau$ is defined from $\alpha$ and $\rho$. Then, we present our characterization of \pbpol{\Cs}-optimal pointed \imprints. Finally, we concentrate on proving our characterization (note that we postpone the difficult direction of this proof to the next section).

\subsection{Auxiliary \mratm}

Let \Ds be a \pvari which is \emph{closed under concatenation}. Consider a \mratm $\rho: 2^{A^*} \to R$ and a morphism $\alpha: A^* \to M$. We build a new \mratm $\drat{\Ds}: 2^{A^*} \to 2^{M \times R}$ (which we shall often simply write ``$\tau$'' when $\alpha$, $\rho$ and \Ds are clear from the context). We shall later use this definition in the special case when $\Ds = \pbpol{\Cs}$. This makes sense since we showed that $\pbpol{\Cs}$ is a \pvari closed under concatenation in Theorem~\ref{thm:pclos}. Let us first explain why the set $2^{M \times R}$ is a hemiring.

Since $2^{M \times R}$ is a set of subsets, it is an idempotent commutative monoid for union. Thus, we simply use union as our addition (the neutral element is $\emptyset$ and the order is inclusion). It remains to define our multiplication.  Given $T_1,T_2 \in 2^{M \times R}$, we define,
\[
T_1 \cdot T_2 = \{(s_1s_2,r) \mid \text{there exists $(s_1,r_1) \in T_1$ and $(s_2,r_2) \in T_2$ such that $r \leq r_1r_2$}\}
\]
One may verify that this is indeed a semigroup multiplication which distributes over the addition (i.e. union) and that $\emptyset$ (the neutral element for union) is a zero. It follows that $2^{M \times R}$ is an idempotent hemiring.

\begin{remark} \label{rem:standmult}
	Our multiplication is not the most immediate one: $T_1 \cdot T_2$ is {\bf not} the set of all multiplications between elements of $T_1$ and $T_2$. It contains more elements: we make sure that $T_1 \cdot T_2$ is closed under downset (if $(s,r) \in T_1 \cdot T_2$ and $r' \leq r$, then $(s,r') \in T_1 \cdot T_2$). We shall need this for proving that $\drat{\Ds}: 2^{A^*} \to 2^{M \times R}$ is a \mratm.
	
	Observe that for this multiplication, $2^{M \times R}$ is a hemiring but not a semiring. There exists no neutral element for multiplication since a set $T \in 2^{M \times R}$ which is not closed under downset cannot be equal to any multiplication.	
\end{remark}

We are now ready to define our new \mratm $\drat{\Ds}: 2^{A^*} \to 2^{M \times R}$. We use the following definition:
\[
\begin{array}{llll}
	\drat{\Ds}: & 2^{A^*} & \to     & 2^{M \times R}                                                                 \\
	      & K       & \mapsto & \{(s,r) \in M \times R \mid \text{$r \in \opti{\Ds}{K \cap \alpha\inv(s),\rho}$}\}
\end{array}
\]
Let us verify that $\drat{\Ds}$ is indeed a \mratm. Note that it is important here that \Ds is a \pvari closed under concatenation.

\begin{lemma} \label{lem:tauismratm}
	The map $\drat{\Ds}$ is a \mratm.
\end{lemma}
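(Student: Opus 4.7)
The plan is to check the three axioms of a \mratm one by one, relying on generic properties of optimal \imprints that were already proved earlier. First, since the empty cover is trivially a \Ds-cover of $\emptyset$, one has $\opti{\Ds}{\emptyset,\rho} = \emptyset$, so $\drat{\Ds}(\emptyset) = \emptyset = 0_{2^{M \times R}}$, and axiom~(\ref{itm:bgen:fzer}) is immediate.

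For additivity (axiom~(\ref{itm:bgen:fadd})) I would first establish the intermediate identity $\opti{\Ds}{H_1 \cup H_2,\rho} = \opti{\Ds}{H_1,\rho} \cup \opti{\Ds}{H_2,\rho}$ for arbitrary $H_1,H_2 \subseteq A^*$. The inclusion $\supseteq$ follows directly from Fact~\ref{fct:inclus2}. For $\subseteq$, take optimal \Ds-covers $\Kb_i$ of $H_i$ for $\rho$: then $\Kb_1 \cup \Kb_2$ is a \Ds-cover of $H_1 \cup H_2$ whose $\rho$-\imprint equals $\opti{\Ds}{H_1,\rho} \cup \opti{\Ds}{H_2,\rho}$, so the \Ds-optimal \imprint of $H_1\cup H_2$ is contained in it. Applying this to $H_i = K_i \cap \alpha\inv(s)$ and using $(K_1 \cup K_2) \cap \alpha\inv(s) = (K_1 \cap \alpha\inv(s)) \cup (K_2 \cap \alpha\inv(s))$ yields $\drat{\Ds}(K_1 \cup K_2) = \drat{\Ds}(K_1) \cup \drat{\Ds}(K_2)$.

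The main step is multiplicativity (axiom~(\ref{itm:bgen:fmul})). The key observation is the decomposition
\[
K_1 K_2 \cap \alpha\inv(s) \;=\; \bigcup_{s_1 s_2 = s} \bigl(K_1 \cap \alpha\inv(s_1)\bigr)\bigl(K_2 \cap \alpha\inv(s_2)\bigr),
\]
which holds since $\alpha$ is a morphism. Iterating the additivity of $\opti{\Ds}{\cdot,\rho}$ over this \emph{finite} union (recall $M$ is finite) reduces the problem to handling each pair $(s_1,s_2)$ with $s_1 s_2 = s$. For the inclusion $\supseteq$, suppose $(s_1,r_1) \in \drat{\Ds}(K_1)$, $(s_2,r_2) \in \drat{\Ds}(K_2)$: Lemma~\ref{lem:closmult} gives $r_1 r_2 \in \opti{\Ds}{(K_1 \cap \alpha\inv(s_1))(K_2 \cap \alpha\inv(s_2)),\rho}$, and Fact~\ref{fct:inclus2} bumps this up to $\opti{\Ds}{K_1 K_2 \cap \alpha\inv(s_1 s_2),\rho}$, so $(s_1 s_2, r_1 r_2) \in \drat{\Ds}(K_1 K_2)$; downset closure of optimal \imprints then also gives $(s_1 s_2, r) \in \drat{\Ds}(K_1 K_2)$ for every $r \leq r_1 r_2$. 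For the harder inclusion $\subseteq$, assume $(s,r) \in \drat{\Ds}(K_1 K_2)$; by the decomposition and additivity, pick $s_1, s_2$ with $s_1 s_2 = s$ and $r \in \opti{\Ds}{(K_1 \cap \alpha\inv(s_1))(K_2 \cap \alpha\inv(s_2)),\rho}$. Take optimal \Ds-covers $\Kb_i$ of $K_i \cap \alpha\inv(s_i)$; because \Ds is closed under concatenation, the set $\{H_1 H_2 \mid H_i \in \Kb_i\}$ is a \Ds-cover of the concatenation, so there exist $H_i \in \Kb_i$ with $r \leq \rho(H_1 H_2) = \rho(H_1)\rho(H_2)$, where $\rho(H_i) \in \opti{\Ds}{K_i \cap \alpha\inv(s_i),\rho}$; this is exactly what the definition of multiplication in $2^{M \times R}$ requires to conclude $(s,r) \in \drat{\Ds}(K_1) \cdot \drat{\Ds}(K_2)$.

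The only non-routine step is the $\subseteq$ direction of multiplicativity, but it is handled cleanly by combining the morphism decomposition of $K_1 K_2 \cap \alpha\inv(s)$ with the concatenation-closure of \Ds and Lemma~\ref{lem:closmult}; everything else is bookkeeping with the definitions of $2^{M \times R}$ and of optimal \imprints.
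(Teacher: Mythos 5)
Your proof is correct and follows essentially the same route as the paper. The only cosmetic difference is in the multiplicativity step: you first write out the explicit decomposition $K_1K_2 \cap \alpha\inv(s) = \bigcup_{s_1s_2=s}(K_1 \cap \alpha\inv(s_1))(K_2 \cap \alpha\inv(s_2))$, apply the additivity identity for $\opti{\Ds}{\cdot,\rho}$ (which you isolate as a reusable intermediate fact) to localize to a single pair $(s_1,s_2)$, and only then build a cover of that one product; the paper instead builds, in one go, a \Ds-cover $\Ub$ of $K_1K_2 \cap \alpha\inv(s)$ ranging over all decompositions $s = s_1s_2$ simultaneously and extracts the same $H_1,H_2$ from a member of it. Both versions rely on the same ingredients — Fact~\ref{fct:inclus2}, Lemma~\ref{lem:closmult}, closure of \Ds under concatenation, and the downset-closed multiplication in $2^{M\times R}$ — so the arguments are logically equivalent; your factoring through the union identity is perhaps slightly more modular but does not change the substance.
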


\begin{proof}
	We have to show that $\drat{\Ds}$ is a hemiring morphism. Let us first consider addition (which is union for both $2^{A^*}$ and $2^{M \times R}$). Clearly $\drat{\Ds}(\emptyset) = \emptyset$. Indeed,
	\[
	\drat{\Ds}(\emptyset) = \{(s,r) \in M \times R \mid r \in \opti{\Ds}{\emptyset,\rho}\} = \{(s,r) \in M \times R \mid r \in \emptyset\} = \emptyset
	\]
	We now show that for any $K_1,K_2 \subseteq A^*$, we have $\drat{\Ds}(K_1 \cup K_2) = \drat{\Ds}(K_1) \cup \drat{\Ds}(K_2)$. We start with the right to left inclusion. Let $(s,r) \in \drat{\Ds}(K_1) \cup \drat{\Ds}(K_2)$ and by symmetry, assume that $(s,r) \in \drat{\Ds}(K_1)$. It follows that $r \in \opti{\Ds}{K_1 \cap \alpha\inv(s),\rho}$. Since $K_1 \cap \alpha\inv(s) \subseteq (K_1 \cup K_2) \cap \alpha\inv(s)$, it then follows from Fact~\ref{fct:inclus2} that $r \in \opti{\Ds}{(K_1 \cup K_2) \cap \alpha\inv(s),\rho}$ which exactly says that $(s,r) \in \drat{\Ds}(K_1 \cup K_2)$. It remains to treat the left to right inclusion. Let $(s,r) \in \drat{\Ds}(K_1 \cup K_2)$. By definition, $r \in \opti{\Ds}{(K_1 \cup K_2) \cap \alpha\inv(s),\rho}$. Consider optimal \Ds-covers (for $\rho$) $\Ub_1,\Ub_2$ of $(K_1 \cap \alpha\inv(s))$ and $(K_2 \cap \alpha\inv(s))$ respectively. By definition, $\Ub_1 \cup \Ub_2$ is a \Ds-cover of $(K_1 \cup K_2) \cap \alpha\inv(s)$.  Thus, we have $\opti{\Ds}{(K_1 \cup K_2) \cap \alpha\inv(s),\rho} \subseteq \prin{\rho}{\Ub_1 \cup \Ub_2}$ and we obtain $r \in \prin{\rho}{\Ub_1 \cup \Ub_2}$. Therefore, there exists $U \in \Ub_1 \cup \Ub_2$ such that $r \leq \rho(U)$. By symmetry assume that $U \in \Ub_1$, we show that $(s,r) \in \drat{\Ds}(K_1)$ (when $U \in \Ub_2$, one may show that $(s,r) \in \drat{\Ds}(K_2)$). By definition, we have $r \in \prin{\rho}{\Ub_1}$ and since $\Ub_1$ is an optimal \Ds-cover of $K_1 \cap \alpha\inv(s)$, we know that $\prin{\rho}{\Ub_1} = \opti{\Ds}{K_1 \cap \alpha\inv(s),\rho}$. Thus, $r \in \opti{\Ds}{K_1 \cap \alpha\inv(s),\rho}$ which exactly means that $(s,r) \in \drat{\Ds}(K_1)$.

\medskip

	This concludes the proof for addition. We turn to multiplication. We show that $\drat{\Ds}(K_1K_2) = \drat{\Ds}(K_1) \cdot \drat{\Ds}(K_2)$. We start with the inclusion $\drat{\Ds}(K_1) \cdot \drat{\Ds}(K_2) \subseteq \drat{\Ds}(K_1K_2)$. Let $(s,r) \in \drat{\Ds}(K_1) \cdot \drat{\Ds}(K_2)$. Thus, we have $(s_1,r_1) \in \drat{\Ds}(K_1)$ and $(s_2,r_2) \in \drat{\Ds}(K_2)$ such that $s = s_1s_2$ and $r \leq r_1r_2$. By definition of $\drat{\Ds}$, we have	$r_1 \in \opti{\Ds}{K_1 \cap \alpha\inv(s_1),\rho}$ and $r_2 \in \opti{\Ds}{K_2 \cap \alpha\inv(s_2),\rho}$. Since \Ds is a \pvari, the following result is immediate from Lemma~\ref{lem:closmult},
	\[
	r_1r_2 \in \opti{\Ds}{(K_1 \cap \alpha\inv(s_1)) \cdot (K_2 \cap \alpha\inv(s_2)),\rho}
	\]
	Observe that $(K_1 \cap \alpha\inv(s_1)) \cdot (K_2 \cap \alpha\inv(s_2)) \subseteq K_1K_2 \cap \alpha\inv(s_1)\alpha\inv(s_2)$. Moreover, since $\alpha$ is a morphism, we have $\alpha\inv(s_1)\alpha\inv(s_2) \subseteq \alpha\inv(s_1s_2)$. Altogether, this means that we have $(K_1 \cap \alpha\inv(s_1)) \cdot (K_2 \cap \alpha\inv(s_2)) \subseteq K_1K_2 \cap \alpha\inv(s_1s_2)$.  Therefore, we then obtain from Fact~\ref{fct:inclus2} that $r_1r_2 \in \opti{\Ds}{K_1K_2 \cap \alpha\inv(s_1s_2),\rho}$. Thus, we also have $r \in \opti{\Ds}{K_1K_2 \cap \alpha\inv(s_1s_2),\rho}$ by definition of \imprints since $r \leq r_1r_2$. It follows that $(s,r) \in \drat{\Ds}(K_1K_2)$ by definition of $\drat{\Ds}$.
	
	\medskip
	
	We finish with the converse inclusion. Let $(s,r) \in \drat{\Ds}(K_1K_2)$. By definition, we have $r \in \opti{\Ds}{K_1K_2 \cap \alpha\inv(s),\rho}$. For any $t \in M$ and $i \in \{1,2\}$, we define $\Ub_{i,t}$ as an optimal \Ds-cover of $K_i \cap \alpha\inv(t)$. Consider the following finite set of languages \Ub,
	\[
	\Ub = \{U_1U_2 \mid \text{there exists $s_1,s_2 \in M$ s.t. $s_1s_2 = s$, $U_1 \in \Ub_{1,s_1}$ and $U_2 \in \Ub_{2,s_2}$}\}
	\]
	Observe that \Ub is a \Ds-cover of $K_1K_2 \cap \alpha\inv(s)$. Clearly all languages in \Ub belong to \Ds since \Ds is closed under concatenation by hypothesis. Let us show that \Ub is a cover of $K_1K_2 \cap \alpha\inv(s)$. Consider $w \in K_1K_2 \cap \alpha\inv(s)$, we exhibit $U \in \Ub$ such that $w \in U$. Since $w \in K_1K_2$, we have $w = w_1w_2$ with $w_1 \in K_1$ and $w_2 \in K_2$. Let $s_1 = \alpha(w_1)$ and $s_2 = \alpha(w_2)$. Altogether, this means that $w_1 \in K_1 \cap \alpha\inv(s_1)$ and $w_2 \in K_2 \cap \alpha\inv(s_2)$. Therefore, we have $U_1 \in \Ub_{1,s_1}$ and $U_2 \in \Ub_{2,s_2}$ such that $w_1 \in U_1$ and $w_2 \in U_2$. This yields $w \in U_1U_2$. Finally, $s_1s_2 = \alpha(w) = s$ which yields that $U_1U_2 \in \Ub$ by definition.
	
	We may now finish the argument and show that $(s,r) \in \drat{\Ds}(K_1) \cdot \drat{\Ds}(K_2)$.	 Recall that $r \in \opti{\Ds}{K_1K_2 \cap \alpha\inv(s),\rho}$. Thus, since \Ub is a \Ds-cover of $K_1K_2 \cap \alpha\inv(s)$, we have $r \in \prin{\rho}{\Ub}$. It follows that there exists $U \in \Ub$ such that $r \leq \rho(U)$. By definition of \Ub, $U = U_1U_2$  with $U_1 \in \Ub_{1,s_1}$ and $U_2 \in \Ub_{2,s_2}$ where $s_1,s_2 \in M$ satisfy $s_1s_2 = s$. Let $r_1 = \rho(U_1)$ and $r_2 = \rho(U_2)$. Since $\Ub_{1,s_1}$ and $\Ub_{2,s_2}$ are optimal \Ds-covers of $K_1 \cap \alpha\inv(s_1)$ and $K_2 \cap \alpha\inv(s_2)$ respectively, we have $r_1 \in \opti{\Ds}{K_1 \cap \alpha\inv(s_1),\rho}$ and $r_2 \in \opti{\Ds}{K_2 \cap \alpha\inv(s_2),\rho}$. It follows that $(s_1,r_1) \in \drat{\Ds}(K_1)$ and $(s_2,r_2) \in \drat{\Ds}(K_2)$. Since $r \leq \rho(U) = \rho(U_1) \cdot \rho(U_2) = r_1r_2$, it follows by definition of our multiplication that $(s,r) \in \drat{\Ds}(K_1) \cdot \drat{\Ds}(K_2)$ which concludes the proof (note that we used the hypothesis that $\drat{\Ds}(K_1) \cdot \drat{\Ds}(K_2)$ is closed under downset by definition).
\end{proof}
 
\begin{remark} \label{ex:weak:notnice}
	An important point is that $\drat{\Ds}$ is {\bf not} \nice in general even when $\rho$ itself is \nice. Moreover, given a (regular) language $K \subseteq A^*$, computing its image $\drat{\Ds}(K)$ is a difficult task: one needs to compute all \Ds-optimal $\rho$-imprints $\opti{\Ds}{K \cap \alpha\inv(s),\rho}$ for $s \in M$.
	
	Let us provide an example in which $\drat{\Ds}$ is not \nice. Consider the class \Ds which contains all finite languages and $A^*$. One may verify that \Ds is a \pvari. Moreover, we let $M = \{1_M\}$ as the trivial monoid and $\alpha: A^* \to M$ as the only possible morphism. Finally, let  $N = \{1_N,s\}$ be the monoid whose multiplication is defined by $ss = 1_N$ (and $1_N$ is a neutral element). Clearly, $2^N$ is a finite hemiring: the addition is union and the multiplication is obtained by lifting the one of $N$. We let $\rho: 2^{A^*} \to 2^N$ as the \nice \mratm defined by $\rho(a) = \rho(b) = \{s\}$ (for every language $K$, $1_N \in \rho(K)$ if $K$ contains a word of even length and $s \in \rho(K)$ if $K$ contains a word of odd length). We show that in this case the \mratm $\drat{\Ds}$ is not \nice. By definition, for every $K \subseteq A^*$, we have,
 	\[
 	\drat{\Ds}(K) = \{(1_M,r) \mid \text{$r \in \opti{\Ds}{K,\rho}$}\}
 	\]
 	Since the only infinite language in \Ds is $A^*$, a \Ds-cover of $A^*$ must contain it. Hence, since $\rho(A^*) = \{1_N,s\}$, we have $\opti{\Ds}{A^*,\rho} = \{\emptyset,\{1_N\},\{s\},\{1_N,s\}\}$. However, given some word $w \in A^*$, since $\{w\} \in \Ds$, it is simple to verify that,
 	\[
 	\opti{\Ds}{\{w\},\rho} = \left\{\begin{array}{ll}
 	\{\emptyset,\{1_N\}\} & \text{if $w$ has even length} \\
 	\{\emptyset,\{s\}\} & \text{if $w$ has odd length} 
 	\end{array}\right.
 	\]
 	Altogether, it follows that,
 	\[
 	\bigcup_{w \in A^*} \drat{\Ds}(w) = \{(1_M,\emptyset),(1_M,\{1_N\}),(1_M,\{s\})\} \neq \drat{\Ds}(A^*)
 	\]
 	Therefore, $\drat{\Ds}$ is not \nice.	
 \end{remark}
 
 \begin{remark}
	We use these definitions in the case when $\Ds = \pbpol{\Cs}$ and consider the \mratm $\tau = \drata$. We intend to consider \tpocopti, the \pol{\Cs}-optimal $\beta$-pointed $\tau$-\imprint (where $\beta$ is the canonical morphism associated to $\rho$ but this is unimportant for the moment). Since we proved a characterization of \pol{\Cs}-optimal pointed \imprints (Theorem~\ref{thm:half:mainpolc}), we already have a lot of information on \tpocopti. However, what we {\bf cannot} do is use this characterization to directly compute \tpocopti from $\alpha$ and $\rho$. Indeed, implementing the third operation (\pol{\Cs}-closure), requires evaluating $\tau(V)$ when $V$ is some $\sim_\Cs$-class.
\end{remark}

The main consequence of these observations is that we never manipulate $\drat{\Ds}$ in algorithms. It is only an object that we shall use when proving the correction of our characterization of \pbpol{\Cs}-optimal pointed \imprints.

\subsection{Characterization}

We now turn to our characterization of \pbpol{\Cs}-optimal pointed \imprints. Let $\alpha: A^* \to M$ be a \Cs-compatible morphism and $\rho: 2^{A^*} \to R$ be a \Cs-compatible \mratm. We define a notion of \emph{\pbpol{\Cs}-saturated subset of $M \times R$}. Our theorem then states that when $\rho$ is \nice, the least such subset is exactly \pbpopti, the \pbpol{\Cs}-optimal $\alpha$-pointed \imprint on $\rho$.

\begin{remark}
	We do not need the hypothesis that $\rho$ is \nice to define \pbpol{\Cs}-saturated subsets of $M \times R$: the definition makes sense for any \Cs-compatible \mratm. However, we shall need this hypothesis to prove that the least one is \pbpopti.
\end{remark}

\begin{remark}
	Since $\alpha$ is \Cs-compatible, for any $s \in M$, \ctype{s}, is well-defined as a $\sim_\Cs$-class containing $\alpha\inv(s)$.  Moreover, since $\rho$ is \Cs-compatible, for any $r \in R$, $\ctype{r}$ is well-defined as a $\sim_{\Cs}$-class such that for any $w \in A^*$ satisfying $r = \rho(w)$, we have $w \in \ctype{r}$.
\end{remark}

As announced, it turns out that our characterization simultaneously describes two sets. The first one is $\pbpopti \subseteq M \times R$: this is the set that we want to compute. The second one is $\tpocopti \subseteq R \times 2^{M \times R}$ (where $\beta: A^* \to R_{A^*}$ is the canonical morphism associated to $\rho$ and $\tau: 2^{A^*} \to  2^{M \times R}$ is the \mratm \drata built from \pbpol{\Cs}, $\rho$ and $\alpha$): this is an auxiliary object needed for the computation. This is reflected in the definition of \pbpol{\Cs}-saturated objects: it describes the properties satisfied by these two sets. More precisely, the definition applies to pairs $(S,\Ts)$ where $S \subseteq M \times R$ and $\Ts \subseteq R \times 2^{M \times R}$.

\begin{remark}
	We use the multiplication on $2^{M \times R}$ defined at the beginning of the section. It is not the most natural one. Given $T_1,T_2 \in 2^{M \times R}$, their multiplication is,
	\[
	T_1 \cdot T_2 = \{(s_1s_2,r) \mid \text{there exists $(s_1,r_1) \in T_1$ and $(s_2,r_2) \in T_2$ such that $r \leq r_1r_2$}\}
	\]
\end{remark}

Given a pair $(S,\Ts)$ with $S \subseteq M \times R$ and $\Ts \subseteq R \times 2^{M \times R}$ we say that $(S,\Ts)$ is \pbpol{\Cs}-saturated (for $\alpha$ and $\rho$) when the following conditions are satisfied:

\begin{itemize}
	\item The set $S$ contains \ptriv{\alpha,\rho} and is closed under the following operations:
	\begin{enumerate}
		\item {\bf Downset:} If $(s,r) \in S$ and $r' \leq r$, then $(s,r') \in S$.
		\item {\bf Multiplication:} for any $(s_1,r_1),(s_2,r_2) \in S$, we have $(s_1s_2,r_1r_2) \in S$. 
		\item {\bf \pbpol{\Cs}-closure.}\label{op:pbp:lower} For any $(r,T) \in \Ts$ and any idempotent $(e,f) \in T \subseteq M \times R$,
		\[
		(e, f \cdot (r + \rho(\varepsilon))  \cdot f) \in S
		\]
	\end{enumerate}
	\item For any $w \in A^*$, if $r = \rho(w)$ and $s = \alpha(w)$, then $(r,\{(s,r') \mid r' \leq r\}) \in \Ts$. Moreover, \Ts is closed under the following operations:
	
	\begin{enumerate}
		\setcounter{enumi}{3}
		\item {\bf Downset:} If $(r,T) \in \Ts$ and $T' \subseteq T$, then $(r,T') \in \Ts$.
		\item {\bf Multiplication:} For any $(r_1,T_1),(r_2,T_2) \in \Ts$, we have $(r_1r_2,T_1T_2) \in \Ts$.
		\item {\bf Nested closure.}\label{op:pbp:upper} For any idempotent $(f,E) \in \Ts$,
		\[
		(f,E \cdot T \cdot E) \in \Ts \quad \text{where $T = \{(s,r) \in S \mid \ctype{f} = \ctype{s}\}$}	
		\]	
	\end{enumerate} 
\end{itemize}

\begin{remark}
	Observe that the conditions on $S$ and $\Ts$ are mutually dependent. On one hand, the \pbpol{\Cs}-closure operation generates elements of $S$ from elements of \Ts. On the other hand,  nested closure does the opposite, it builds element of \Ts from those in $S$.
\end{remark}

Finally, we say that a subset $S \subseteq M \times R$ is \pbpol{\Cs}-saturated (for $\alpha$ and $\rho$), when there exists another subset $\Ts \subseteq R \times 2^{M \times R}$ such that the pair $(S,\Ts)$ is \pbpol{\Cs}-saturated.

We may now state the main theorem of this section. It turns out that {\bf when $\rho$ is \nice}, the least \pbpol{\Cs}-saturated subset of $M \times R$ is \pbpopti.

\begin{theorem}[Characterization of \pbpol{\Cs}-optimal \imprints] \label{thm:pbp:algo}
	Let $\alpha: A^* \to M$ be a \Cs-compatible morphism and $\rho: 2^{A^*} \to R$ a {\bf \nice} \Cs-compatible \mratm. Then, \pbpopti is the least \pbpol{\Cs}-saturated subset of $M \times R$. 
\end{theorem}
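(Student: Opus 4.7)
The plan is to mirror the two-step structure of the proof of Theorem~\ref{thm:half:mainpolc}. First I would establish soundness by exhibiting a $\pbpol{\Cs}$-saturated pair whose first component is $\pbpopti$. Then I would prove completeness: for every $\pbpol{\Cs}$-saturated pair $(S,\Ts)$, one must have $\pbpopti \subseteq S$. The natural candidate for the pair is $(\pbpopti,\tpocopti)$, where $\beta:A^*\to R_{A^*}$ is the canonical morphism of $\rho$ and $\tau=\drata$ is the auxiliary \mratm constructed via Lemma~\ref{lem:tauismratm}; note that $\beta$ is $\Cs$-compatible by Fact~\ref{fct:canoiscompat}, so that Theorem~\ref{thm:half:mainpolc} applies to $(\beta,\tau)$.

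For soundness, the basic closure properties come for free: Lemma~\ref{lem:satur}, applied once to $(\alpha,\rho)$ and the \pvari $\pbpol{\Cs}$ and once to $(\beta,\tau)$ and the \pvari $\pol{\Cs}$, gives that $\pbpopti$ contains $\ptriv{\alpha,\rho}$ and is downset- and multiplication-closed, and that $\tpocopti$ enjoys the analogous properties (a short calculation using the fact that singletons lie in $\pbpol{\Cs}$, Lemma~\ref{lem:pbpfinite}, handles the $\Ts$-side base elements). The two non-trivial operations are nested closure and $\pbpol{\Cs}$-closure. Nested closure is almost immediate: Theorem~\ref{thm:half:mainpolc} applied to $(\beta,\tau)$ says that $\tpocopti$ is itself $\pol{\Cs}$-saturated, hence stable under $\pol{\Cs}$-closure, so for an idempotent $(f,E)\in\tpocopti$ we have $(f,\, E\cdot\tau(\ctype{f})\cdot E)\in\tpocopti$. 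A direct unfolding using the $\Cs$-compatibility of $\alpha$ shows that $\tau(\ctype{f}) = \{(s,r)\in\pbpopti \mid \ctype{s}=\ctype{f}\}$, which is exactly the set $T$ appearing in the statement of nested closure. For $\pbpol{\Cs}$-closure I would proceed analogously to the $\pol{\Cs}$-closure case of Theorem~\ref{thm:half:mainpolc}, but with $\pbpol{\Cs}$ in place of $\pol{\Cs}$: stratify $\pbpol{\Cs}$ via Lemma~\ref{lem:comptrick} and the generic construction of Section~\ref{sec:tools}, unfold the hypothesis $(e,f)\in T\subseteq \tau(V)$ for a well-chosen $V$ extracted from an optimal $\pol{\Cs}$-cover of $\beta\inv(r)$, and pump with Lemmas~\ref{lem:hintro:propreo1} and~\ref{lem:hintro:propreo2}; the summand $\rho(\varepsilon)$ in $f\cdot(r+\rho(\varepsilon))\cdot f$ accounts for the degenerate case in which the pumped $V$-factor collapses to the empty word.

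For completeness, I would follow the strategy of Proposition~\ref{prop:compolc} and build, for every $s\in M$, a $\pbpol{\Cs}$-cover of $\alpha\inv(s)$ whose members $K$ satisfy $(s,\rho(K))\in S$, by induction on the height of $\alpha$-factorization forests. Leaves and binary nodes are adapted verbatim from Lemma~\ref{lem:half:leaflemma} and Lemma~\ref{lem:pbinary} using only downset and multiplication on the $S$-side. The idempotent case is where the two components of the saturated pair truly interact. Using the induction hypothesis one obtains $\pbpol{\Cs}$-covers of lower-height slices of $\alpha\inv(e)$ (tracked through $S$) and, by invoking the $\pol{\Cs}$-covering construction of Section~\ref{sec:polc} applied to $(\beta,\tau)$, $\pol{\Cs}$-covers whose $\tau$-imprints land inside $\Ts$ (this uses crucially that our $\pol{\Cs}$ characterization was proved for arbitrary \mratms, in particular for the non-\nice $\tau$). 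Feeding the latter into the $\pbpol{\Cs}$-closure operation produces the missing idempotent-node witnesses on the $S$-side, which Lemma~\ref{lem:inducidem} then assembles into a $\pbpol{\Cs}$-cover of the whole $\alpha\inv(e)$.

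The main obstacle is the interdependence of the two inductions. Unlike in the $\pol{\Cs}$ case, one cannot recurse purely on $\alpha$-forest height: at an idempotent node, producing the required witnesses in $S$ requires elements of $\Ts$ that were themselves generated by nested closure from earlier stages of the $S$-side construction, while those elements of $\Ts$ are in turn produced by running a $\pol{\Cs}$-style induction on $\beta$-forests. A careful bookkeeping of both heights and idempotent heights, combined with the observation that the factorization-forest bounds of Theorem~\ref{thm:facto} apply uniformly to $\alpha$ and $\beta$, is what guarantees termination. This interplay is precisely what forces the simultaneous characterization of $\pbpopti$ and $\tpocopti$ and makes the non-\nice auxiliary \mratm $\tau$ indispensable for the correctness proof, even though it is never manipulated by the final algorithm.
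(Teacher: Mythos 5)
Your soundness sketch is essentially the paper's: the pair $(\pbpopti,\tpocopti)$ is used, nested closure follows from Theorem~\ref{thm:half:mainpolc} applied to $(\beta,\tau)$ together with the identity $\tau(\ctype{f})=\{(s,r)\in\pbpopti\mid\ctype{s}=\ctype{f}\}$, and $\pbpol{\Cs}$-closure is proved by stratifying $\pol{\compc{\Ds}}$ (via Lemma~\ref{lem:comptrick} and Fact~\ref{fct:pbp:thefinitepvari}) and pumping with Lemma~\ref{lem:hintro:propreo2}.

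The completeness sketch, however, has a genuine gap. You propose, at idempotent nodes, to invoke the $\pol{\Cs}$ construction for $(\beta,\tau)$ so as to obtain $\pol{\Cs}$-covers whose $\tau$-\imprints land inside $\Ts$. To get that, you would need $\Ts$ to be $\pol{\Cs}$-saturated \emph{for $\tau$}, i.e.\ stable under $\pol{\Cs}$-closure with $\tau$: for any idempotent $(f,E)\in\Ts$, $(f,E\cdot\tau(\ctype{f})\cdot E)\in\Ts$. What $\pbpol{\Cs}$-saturation actually gives you (nested closure) is $(f,E\cdot T\cdot E)\in\Ts$ with $T=\{(s,r)\in S\mid\ctype{f}=\ctype{s}\}$, so you would have to know $\tau(\ctype{f})\subseteq T$, i.e.\ $\tau(\ctype{f})\subseteq S$. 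But $\tau(\ctype{f})$ is built from the $\pbpol{\Cs}$-optimal $\rho$-\imprints of the languages $\ctype{f}\cap\alpha\inv(s)$, which is essentially the object $\pbpopti$ you are trying to prove lies inside~$S$. Your double induction on heights and idempotent heights cannot break this: $\tau(\ctype{f})$ sees words of arbitrary height, so it is not an inductive quantity.

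This is exactly why the paper does \emph{not} work with $\tau$ in the completeness direction. It replaces $\tau$ by a weaker auxiliary \mratm $\gamma$ parametrized by~$S$, defined via \emph{good languages} (languages $G$ closed under infixes and with $\tau(G)\subseteq S$): $\gamma(K)$ retains only those pairs $(s,r)\in\tau(K\cap G)$ witnessed by some good~$G$. With this definition the problematic inclusion $\gamma(\ctype{f})\subseteq T$ holds by construction (any such pair lies in $\tau(G)\subseteq S$), so Theorem~\ref{thm:half:mainpolc} applied to $(\beta,\gamma)$ yields $\gpocopti\subseteq\Ts$ without circularity. The induction on idempotent height then shows up not in the bookkeeping you describe but in Lemma~\ref{lem:pbp:thefinalone}: one must exhibit a concrete good language $G$ (words of smaller idempotent height) with $(e,f)\in\tau(V\cap G)$, and the goodness of~$G$ is itself certified by the induction hypothesis on~$m$. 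A secondary but related omission: the completeness induction must run over $\eta$-factorization forests for $\eta(w)=(\alpha(w),\typ{w}{\Ds})$ where $\Ds\subseteq\bpol{\Cs}$ is a finite Boolean algebra derived from optimal $\pol{\Cs}$-covers for~$\gamma$, not over $\alpha$-forests; the $\sim_\Ds$-class plays the role of $V$ in Lemma~\ref{lem:inducidem} (whereas for $\pol{\Cs}$ that role was played by $\ctype{e}$). Without the pair ($\gamma$, good languages) and the refined morphism $\eta$, the construction you outline does not go through.
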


By definition, of \pbpol{\Cs}-saturated sets Theorem~\ref{thm:pbp:algo} yields a least fixpoint algorithm for computing the set \pbpopti from a \Cs-compatible morphism $\alpha: A^* \to M$ and a \nice \Cs-compatible \mratm $\rho 2^{A^*} \to R$. Let us briefly explain how to proceed.

One may compute the least pair $(S,\Ts)$ with $S \subseteq M \times R$ and $\Ts \subseteq R \times 2^{M \times R}$ which is \pbpol{\Cs}-saturated. We use a least fixpoint procedure which starts from the pair $(\ptriv{\alpha,\rho},\Ts')$ where $\Ts'$ contains all pairs $(r,\{(s,r') \mid r' \leq r\})$ where $r = \rho(w)$ and $s = \alpha(w)$ for some $w \in A^*$. Then, we saturate this starting pair with the six operations in the definition which can clearly be implemented. Once we have this least \pbpol{\Cs}-saturated pair $(S,\Ts)$ in hand, we obtain from Theorem~\ref{thm:pbp:algo} that $S$ is exactly the desired set \pbpopti.

\begin{remark}
	Theorem~\ref{thm:pbp:algo} does not describe the least subset $\Ts \subseteq R \times 2^{M \times R}$ such that the pair $(\pbpopti,\Ts)$ is \pbpol{\Cs}-saturated (which we also compute with our least fixpoint procedure). This is because, we are mainly interested in \pbpopti: \Ts is only an auxiliary object that is required for the computation. However, a byproduct of our proof is that \Ts is exactly \tpocopti where $\beta: A^* \to R_{A^*}$ is the canonical morphism which is associated to $\rho$ and $\tau: 2^{A^*} \to  2^{M \times R}$ is the \mratm \drata.
\end{remark}

\begin{remark}
	The hypothesis that $\rho$ is \nice in Theorem~\ref{thm:pbp:algo} is mandatory: the statement does not hold for arbitrary \mratms. This fact is rather intuitive. One may verify from its definition that the least \pbpol{\Cs}-saturated subset of $M \times R$ may only contain pairs $(s,r)$ where the element $r \in R$ is built by summing and multiplying elements of the form $\rho(w) \in R$ (i.e. images of singletons). Thus, considering \pbpol{\Cs}-saturated subsets only makes sense when the \mratm $\rho$ is characterized by the images of singletons: this is exactly the definition of \nice \mratms.
\end{remark}

We complete the characterization by stating the desired corollary. It is now immediate from Proposition~\ref{prop:thereduction} that \pbpol{\Cs}-covering is decidable.

\begin{corollary} \label{cor:pbp:main}
	Let \Cs be a finite \vari. Then, \pbpol{\Cs}-covering is decidable.
\end{corollary}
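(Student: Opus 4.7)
The plan is to derive the corollary immediately by combining Theorem~\ref{thm:pbp:algo} with Proposition~\ref{prop:thereduction}. By Proposition~\ref{prop:thereduction}, it suffices to exhibit an algorithm that, given as input a \Cs-compatible morphism $\alpha: A^* \to M$ and a \nice \Cs-compatible \mratm $\rho: 2^{A^*} \to R$, computes the \pbpol{\Cs}-optimal $\alpha$-pointed $\rho$-\imprint $\pbpopti$. This reduction is legitimate because $\pbpol{\Cs}$ is a \pvari: apply Theorem~\ref{thm:pclos} to \Cs to obtain the \pvari $\pol{\Cs}$, then Lemma~\ref{lem:bclos} to get the \vari $\bpol{\Cs}$, and finally Theorem~\ref{thm:pclos} once more.

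To produce such an algorithm I would transcribe Theorem~\ref{thm:pbp:algo} into a standard least fixpoint procedure, working in the finite product lattice $2^{M \times R} \times 2^{R \times 2^{M \times R}}$ ordered componentwise by inclusion. The seed is the pair $(S_0,\Ts_0)$, where $S_0 = \ptriv{\alpha,\rho}$ and $\Ts_0$ collects the initial entries $(\rho(w),\{(s,r') \mid r' \leq \rho(w)\})$ singled out by the definition of \pbpol{\Cs}-saturation. Both are computable: $S_0$ is produced directly from $\alpha$ and $\rho$ thanks to $\rho$ being \nice (as already observed just before Lemma~\ref{lem:satur}), while the pairs $(\alpha(w),\rho(w))$ for $w \in A^*$ range over a finite subset of $M \times R$ and can be exhausted by a straightforward reachability search through the images of the letters of $A$. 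The iteration then closes $(S_i,\Ts_i)$ under the six operations of the definition; each is an elementary computation on finite data, the only subtle bookkeeping being that nested closure invokes the maps $s \mapsto \ctype{s}$ and $r \mapsto \ctype{r}$, which are available thanks to the \Cs-compatibility of $\alpha$ and $\rho$, while \pbpol{\Cs}-closure uses only the constant $\rho(\varepsilon)$.

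Since the ambient lattice is finite, the iteration stabilizes after finitely many steps at the least \pbpol{\Cs}-saturated pair $(S,\Ts)$, and Theorem~\ref{thm:pbp:algo} guarantees that the first component equals \pbpopti. Feeding this algorithm into the reduction encapsulated by Proposition~\ref{prop:thereduction} yields decidability of \pbpol{\Cs}-covering. No serious obstacle arises at this stage: all of the difficulty is sealed inside Theorem~\ref{thm:pbp:algo} (and, transitively, inside Theorem~\ref{thm:half:mainpolc} together with the analysis of $\drat{\pbpol{\Cs}}$ that supports it); the present corollary is a routine bookkeeping step, essentially the observation that saturation under finitely many effective operations on finite sets is algorithmic.
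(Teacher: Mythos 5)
Your proof is correct and follows essentially the same path as the paper: invoke Proposition~\ref{prop:thereduction}, then note that Theorem~\ref{thm:pbp:algo} characterizes $\pbpopti$ as the first component of the least $\pbpol{\Cs}$-saturated pair, which is computable by a least fixpoint iteration over the finite lattice $2^{M \times R} \times 2^{R \times 2^{M \times R}}$ starting from $\bigl(\ptriv{\alpha,\rho},\Ts'\bigr)$ and closing under the six saturation operations. The paper gives exactly this sketch in the paragraph preceding the corollary; you spell out slightly more explicitly why the seed set $\Ts'$ is computable (reachability through letter images) and why nested closure is effective (the $\Cs$-compatibility of $\alpha$ and $\rho$ makes $s \mapsto \ctype{s}$ and $r \mapsto \ctype{r}$ available), which is a reasonable elaboration rather than a deviation.
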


We turn to the proof of Theorem~\ref{thm:pbp:algo}. In this section, we concentrate on the ``easy'' direction which corresponds to soundness of the least fixpoint procedure: for $\alpha: A^* \to M$ and $\rho: 2^{A^*} \to R$ satisfying the conditions of the theorem, we show that $\pbpopti \subseteq M \times R$ is indeed \pbpol{\Cs}-saturated. The converse direction (i.e. that \pbpopti is the least such subset of $M \times R$) requires more work and we postpone it to the next section.

\subsection{Soundness}

We fix a \Cs-compatible morphism $\alpha: A^* \to M$ and a \nice \Cs-compatible \mratm $\rho: 2^{A^*} \to R$ for the proof. We show that \pbpopti is \pbpol{\Cs}-saturated. 

\begin{remark}
	We do not use the hypothesis that $\rho$ is \nice here. This is only needed for the other direction of the proof.
\end{remark}

By definition, we first need to choose a subset $\Ts \subseteq R \times 2^{M \times R}$ and then prove that the pair $(\pbpopti,\Ts)$ is \pbpol{\Cs}-saturated. Let $\beta: A^* \to R_{A^*}$ be the canonical morphism which is associated to $\alpha$ (recall that $R_{A^*} \subseteq R$ is the monoid $R_{A^*} = \{\rho(w) \mid w \in A^*\}$). Note that $\beta$ is \Cs-compatible by Fact~\ref{fct:canoiscompat}. Moreover, we let $\tau: 2^{A^*} \to  2^{M \times R}$ as the \mratm \drata:
\[
\begin{array}{llll}
\tau: & 2^{A^*} & \to     & 2^{M \times R}                                                                 \\
& K       & \mapsto & \{(s,r) \in M \times R \mid \text{$r \in \opti{\pbpol{\Cs}}{K \cap \alpha\inv(s),\rho}$}\}
\end{array}
\]
Consider the set $\tpocopti \subseteq R_{A^*} \times 2^{M \times R} \subseteq R \times 2^{M \times R}$. As expected, we prove that,
\[
\text{The pair $(\pbpopti,\tpocopti)$ is \pbpol{\Cs}-saturated}
\]
Since \pol{\Cs} and \pbpol{\Cs} are both \pvaris, we already know from Lemma~\ref{lem:satur} that \pbpopti contains \ptriv{\alpha,\rho} and is closed under downset and multiplication. Moreover, we know that \tpocopti is closed under downset and multiplication. Therefore, we may concentrate on proving the three following properties:
\begin{enumerate}	
	\item For any $w \in A^*$, if $r = \rho(w)$ and $s = \alpha(w)$, then $(r,\{(s,r') \mid r' \leq r\}) \in \tpocopti$.
	\item \tpocopti satisfies nested closure (i.e. Operation~\eqref{op:pbp:upper}).
	\item \pbpopti satisfies \pbpol{\Cs}-closure (i.e. Operation~\eqref{op:pbp:lower}).
\end{enumerate}

\subsubsection*{First property} Consider $w \in A^*$, $r = \rho(w)$ and $s = \alpha(w)$. We have to show that the pair $(r,\{(s,r') \mid r' \leq r\})$ belongs to \tpocopti.

Since $\beta(w) = \rho(w)$ (this is the definition of $\beta$), we have $r = \beta(w)$. We show that, 
\[
\{(s,r') \mid r' \leq r\} \subseteq \tau(w)
\]
By definition, it will follow that $(r,\{(s,r') \mid r' \leq r\}) \in \ptriv{\beta,\tau}$ which concludes the proof since $\ptriv{\beta,\tau} \subseteq \tpocopti$ by Lemma~\ref{lem:satur}.

Let $r' \leq r$, we have to show that $(s,r') \in \tau(w)$, i.e. $r' \in \opti{\pbpol{\Cs}}{\{w\} \cap \alpha\inv(s),\rho}$ by definition of $\tau$. Since $s = \alpha(w)$, it is immediate that we have $\{w\} \cap \alpha\inv(s) = \{w\}$. Therefore, we need to show that $r' \in \opti{\pbpol{\Cs}}{\{w\},\rho}$. This is immediate: any cover \Kb of $\{w\}$ must contain a language $K$ such that $w \in K$ which implies that $r' \in \prin{\rho}{\Kb}$ since $r' \leq r = \rho(w)$. This is in particular true for optimal \pbpol{\Cs}-covers of $\{w\}$ which yields $r' \in \opti{\pbpol{\Cs}}{\{w\},\rho}$.

\subsubsection*{Second property} We prove that nested closure is satisfied. Consider an idempotent $(f,E) \in \tpocopti \subseteq R \times 2^{M \times R}$. We have to show that,
\[
(f,E \cdot T \cdot E) \in \tpocopti \quad \text{where $T = \{(s,r) \in \pbpopti \mid \ctype{f} = \ctype{s}\}$}	
\]
We use Theorem~\ref{thm:half:mainpolc}, our characterization of \pol{\Cs}-optimal pointed \imprints. It states that \tpocopti is \pol{\Cs}-saturated (for $\tau$). In particular, it satisfies \pol{\Cs}-closure which yields,
\[
(f,E \cdot \tau(\ctype{f}) \cdot E) \in \tpocopti
\]
Therefore, it suffices to show that $T = \tau(\ctype{f})$ to conclude the proof. This is what we do. By definition of $\tau$, we know that,
\[
\tau(\ctype{f}) = \{(s,r) \in M \times R \mid \text{$r \in \opti{\pbpol{\Cs}}{\ctype{f} \cap \alpha\inv(s),\rho}$}\}
\]
Since $\alpha: A^* \to M$ is \Cs-compatible, we know that any $\alpha\inv(s)$ is included in the $\sim_{\Cs}$-class \ctype{s}. Therefore, the intersection $\ctype{f} \cap \alpha\inv(s)$ is either equal to $\alpha\inv(s)$ (when $\ctype{f} = \ctype{s}$) or empty (when $\ctype{f} \neq \ctype{s}$). In the latter case, \opti{\pbpol{\Cs}}{\ctype{f} \cap \alpha\inv(s),\rho} is empty. Therefore, we obtain that,
\[
\tau(\ctype{f}) = \{(s,r) \in M \times R \mid \text{$\ctype{f} = \ctype{s}$ and $r \in \opti{\pbpol{\Cs}}{\alpha\inv(s),\rho}$}\}
\]
Moreover, we have $\pbpopti = \{(s,r) \in M \times R \mid \text{$r \in \opti{\pbpol{\Cs}}{\alpha\inv(s),\rho}$} \}$ by definition. Therefore, this can be reformulated as,
\[
\tau(\ctype{f}) = \{(s,r) \in \pbpopti \mid \ctype{f} = \ctype{s}\} = T
\]
This concludes the proof.

\subsubsection*{Third property} We show that \pbpol{\Cs}-closure is satisfied. Consider a pair $(r,T) \in \tpocopti \subseteq R \times 2^{M \times R}$. Moreover, let $(e,f) \in T \subseteq M \times R$ be an idempotent. We have to show that,
\[
(e, f \cdot (r + \rho(\varepsilon))  \cdot f) \in \pbpopti
\]
This requires more work. We shall need the following simple fact. Recall that given a \pvari \Ds, we write $\compc{\Ds}$ for the \pvari containing all complements of languages in \Ds.

\begin{fct} \label{fct:pbp:thefinitepvari}
	There exists a {\bf finite} \pvari $\Ds \subseteq \pol{\Cs}$ such that,
	\[\pbpopti = \popti{\pol{\compc{\Ds}}}{\alpha,\rho}\]
\end{fct}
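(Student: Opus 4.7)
The plan is to exploit the stratification of $\pol{\Cs}$ from Section~\ref{sec:tools} together with Lemma~\ref{lem:comptrick}. Since $\Cs$ is a finite \vari, it is in particular a Boolean algebra, so Lemma~\ref{lem:comptrick} rewrites $\pbpol{\Cs} = \pol{\compc{\pol{\Cs}}}$ and therefore $\pbpopti = \popti{\pol{\compc{\pol{\Cs}}}}{\alpha,\rho}$. The task reduces to finding a finite \pvari $\Ds \subseteq \pol{\Cs}$ for which $\popti{\pol{\compc{\Ds}}}{\alpha,\rho}$ equals this common value, and the stratification $\pol{\Cs} = \bigcup_{k \in \nat} \polk{\Cs}$ is where I plan to extract $\Ds$, since each $\polp{\Cs}{k}$ is a finite \pvari.

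First, for every $s \in M$, Lemma~\ref{lem:bgen:opt} supplies an optimal $\pol{\compc{\pol{\Cs}}}$-cover $\Kb_s$ of $\alpha^{-1}(s)$ for $\rho$. Each $\Kb_s$ is a finite set whose members are built from finitely many languages of $\compc{\pol{\Cs}}$ using unions, intersections and marked concatenations. Every such underlying language is the complement of some language in $\pol{\Cs}$, hence belongs to $\compc{\polp{\Cs}{k}}$ for some $k \in \nat$. Since both $M$ and the $\Kb_s$ are finite, only finitely many underlying languages are involved overall, so the maximum $k^*$ of their indices is finite. Set $\Ds = \polp{\Cs}{k^*}$: this is a finite \pvari contained in $\pol{\Cs}$, and each $\Kb_s$ is by construction a $\pol{\compc{\Ds}}$-cover of $\alpha^{-1}(s)$.

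It then remains to check that the pointed imprints coincide. The inclusion $\pol{\compc{\Ds}} \subseteq \pol{\compc{\pol{\Cs}}}$ combined with Fact~\ref{fct:inclus1} yields $\opti{\pol{\compc{\pol{\Cs}}}}{\alpha^{-1}(s),\rho} \subseteq \opti{\pol{\compc{\Ds}}}{\alpha^{-1}(s),\rho}$ for every $s \in M$. Conversely, $\Kb_s$ being a $\pol{\compc{\Ds}}$-cover of $\alpha^{-1}(s)$ gives $\opti{\pol{\compc{\Ds}}}{\alpha^{-1}(s),\rho} \subseteq \prin{\rho}{\Kb_s}$, and the optimality of $\Kb_s$ for $\pol{\compc{\pol{\Cs}}}$ identifies this upper bound with $\opti{\pol{\compc{\pol{\Cs}}}}{\alpha^{-1}(s),\rho}$. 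The two imprints therefore agree on every fiber $\alpha^{-1}(s)$, which is exactly the desired equality $\popti{\pol{\compc{\Ds}}}{\alpha,\rho} = \pbpopti$. The one delicate point I foresee is the uniform choice of $k^*$ across all $s \in M$: it is legitimate precisely because $M$ is finite and each optimal cover is itself finite with a finite syntactic description in terms of languages of $\compc{\pol{\Cs}}$.
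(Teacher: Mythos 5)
Your proposal is correct and takes essentially the same approach as the paper's own proof: start from optimal $\pbpol{\Cs} = \pol{\compc{\pol{\Cs}}}$-covers of each $\alpha^{-1}(s)$, use finiteness to find a single stratum $\polk{\Cs}$ so that all their languages lie in $\pol{\compc{\polk{\Cs}}}$, and set $\Ds = \polk{\Cs}$. The only difference is that you spell out the concluding imprint equality (both inclusions, via Fact~\ref{fct:inclus1} and optimality of $\Kb_s$), which the paper compresses into the single phrase ``It now suffices to choose $\Ds = \polk{\Cs}$.''
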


\begin{proof}
	For all $s \in M$, let $\Kb_s$ be an optimal $\pbpol{\Cs}$-cover of $\alpha\inv(s)$ (for $\rho$). Recall that we showed in Lemma~\ref{lem:comptrick} that $\pbpol{\Cs} = \pol{\compc{\pol{\Cs}}}$. Thus, since there are finitely many languages in the sets $\Kb_s$ for $s \in M$, there exists some $k \in \nat$ such that all these languages belong to $\pol{\compc{\polk{\Cs}}}$. Here, \polk{\Cs} denotes a stratum in our stratification of \pol{\Cs} into finite \pvaris (see Section~\ref{sec:tools} for details). It now suffices to choose $\Ds = \polk{\Cs}$.
\end{proof}

In view of Fact~\ref{fct:pbp:thefinitepvari}, it now suffices to show that,	
\[
(e, f \cdot (r + \rho(\varepsilon))  \cdot f)  \in \popti{\pol{\compc{\Ds}}}{\alpha,\rho}
\]
By definition, this amounts to proving that $f \cdot (r + \rho(\varepsilon))  \cdot f \in \opti{\pol{\compc{\Ds}}}{\alpha\inv(e),\rho}$. We fix an arbitrary optimal \pol{\compc{\Ds}}-cover \Kb of $\alpha\inv(e)$ and show that $f \cdot (r + \rho(\varepsilon))  \cdot f \in \prin{\rho}{\Kb}$. Since $\compc{\Ds}$ is a finite \pvari, we have a stratification of $\pol{\compc{\Ds}}$ which we introduced in Section~\ref{sec:tools}. Recall that the strata are denoted by \polk{\compc{\Ds}} and the associated canonical preorder relations by \polrelk. Since \Kb is a finite set of languages in \pol{\compc{\Ds}}, we have a stratum $k \in \nat$ such that all languages in \Kb belong to \polk{\compc{\Ds}}.  Consequently, we have $\opti{\polk{\compc{\Ds}}}{\alpha\inv(e),\rho} \subseteq \prin{\rho}{\Kb}$ and it now suffices to show that $f \cdot (r + \rho(\varepsilon))  \cdot f \in \opti{\polk{\compc{\Ds}}}{\alpha\inv(e),\rho}$. By Lemma~\ref{lem:finitedef}, we have to exhibit $w \in \alpha\inv(e)$ and $K \subseteq A^*$ such that $w \polrelk K$ and $f \cdot (r + \rho(\varepsilon))  \cdot f \leq \rho(K)$. This is what we do.

\medskip

We first use our hypotheses that $(r,T) \in \tpocopti$ and $(e,f) \in T$ to introduce a few objects that we need to define $w$ and $K$. We do so in the following lemma. Recall that \canod denotes the canonical preorder associated to the finite \pvari \Ds (which is compatible with word concatenation by Lemma~\ref{lem:canoquo}).

\begin{lemma} \label{lem:pbp:theobjs}
	There exists $u,v \in A^*$ and $H \subseteq A^*$ satisfying the following conditions: \begin{itemize}
		\item $\rho(u) = r$, $\alpha(v) = e$ and $f \leq \rho(H)$.
		\item $u \canod v$ and $v \polrelk H$.
	\end{itemize}
\end{lemma}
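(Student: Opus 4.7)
The plan is to extract $u$, $v$ and $H$ through two successive applications of Lemma~\ref{lem:finitedef}: first at the level of the finite \pvari $\Ds \subseteq \pol{\Cs}$ supplied by Fact~\ref{fct:pbp:thefinitepvari} in order to pin down $u$ together with an auxiliary language $K'$, and then inside the finite stratum $\polk{\compc{\Ds}}$ in order to pin down $v$ and $H$ inside that $K'$.

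First I would translate the hypothesis $(r,T)\in\tpocopti$ into $T\in\opti{\pol{\Cs}}{\beta\inv(r),\tau}$ and, via Fact~\ref{fct:inclus1} applied to $\Ds\subseteq\pol{\Cs}$, into $T\in\opti{\Ds}{\beta\inv(r),\tau}$. Since $\Ds$ is a finite lattice (and $\tau$ is a \mratm by Lemma~\ref{lem:tauismratm}), Lemma~\ref{lem:finitedef} produces a word $u\in\beta\inv(r)$ and a set $K'\subseteq A^*$ satisfying $u\canod K'$ and $T\leq\tau(K')$, which in the hemiring $2^{M\times R}$ just means $T\subseteq\tau(K')$. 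Since $\beta$ is the canonical morphism associated with $\rho$, we have $\rho(u)=\beta(u)=r$ at no cost.

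Next I would combine $(e,f)\in T$ with $T\subseteq\tau(K')$ to obtain $(e,f)\in\tau(K')$, which by the very definition of $\tau$ is $f\in\opti{\pbpol{\Cs}}{K'\cap\alpha\inv(e),\rho}$. The inclusion $\polk{\compc{\Ds}}\subseteq\pol{\compc{\Ds}}\subseteq\pbpol{\Cs}$ holds: the first step is the definition of the stratification of $\pol{\compc{\Ds}}$, and the second follows from $\Ds\subseteq\pol{\Cs}$, which forces $\compc{\Ds}\subseteq\bpol{\Cs}$ (complements of $\pol{\Cs}$-languages sit in the Boolean closure $\bpol{\Cs}$) and hence $\pol{\compc{\Ds}}\subseteq\pol{\bpol{\Cs}}=\pbpol{\Cs}$. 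Two successive applications of Fact~\ref{fct:inclus1} therefore deposit $f$ in $\opti{\polk{\compc{\Ds}}}{K'\cap\alpha\inv(e),\rho}$. Since $\polk{\compc{\Ds}}$ is finite, a second invocation of Lemma~\ref{lem:finitedef} delivers $v\in K'\cap\alpha\inv(e)$ together with $H\subseteq A^*$ such that $v\polrelk H$ and $f\leq\rho(H)$.

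The remaining verifications are bookkeeping: $\alpha(v)=e$ is immediate from $v\in\alpha\inv(e)$, and $u\canod v$ follows from $v\in K'$ combined with $u\canod K'$. The subtle step is really the choreography of the two applications, with two different finite lattices controlling two different canonical preorders ($\canod$ for the outer layer, $\polrelk$ for the inner one); what makes it go through is that the second application is performed \emph{inside} the set $K'$ produced by the first, which forces the word $v$ to witness both comparabilities simultaneously and thus yields the chain $u\canod v\polrelk H$ demanded by the lemma.
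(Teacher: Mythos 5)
Your proof is correct and follows essentially the same route as the paper: translate $(r,T)\in\tpocopti$ into $T\in\opti{\Ds}{\beta\inv(r),\tau}$ via Fact~\ref{fct:inclus1}, extract $u$ and an intermediate language (your $K'$, the paper's $L$) via Lemma~\ref{lem:finitedef}, feed $(e,f)\in T\subseteq\tau(K')$ through the definition of $\tau$ to get $f\in\opti{\pbpol{\Cs}}{K'\cap\alpha\inv(e),\rho}$, push down to the finite stratum $\polk{\compc{\Ds}}$ via the chain $\polk{\compc{\Ds}}\subseteq\pol{\compc{\Ds}}\subseteq\pbpol{\Cs}$ and Fact~\ref{fct:inclus1}, and apply Lemma~\ref{lem:finitedef} a second time. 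The only difference is that you spell out why $\pol{\compc{\Ds}}\subseteq\pbpol{\Cs}$ holds, a step the paper asserts without comment.
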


\begin{proof}
	We first define $u,v \in A^*$ and $H \subseteq A^*$. Since $(r,T) \in \tpocopti$ and $\Ds \subseteq \pol{\Cs}$ it follows from Fact~\ref{fct:inclus1} that $(r,T) \in \popti{\Ds}{\beta,\tau}$. This implies that $T \in \opti{\Ds}{\beta\inv(r),\tau}$. Therefore, Lemma~\ref{lem:finitedef} yields $u \in \beta\inv(r)$ and $L \subseteq A^*$ such that $T \subseteq \tau(L)$ and $u \canod L$. Moreover, since $(e,f) \in T \subseteq \tau(L)$, we obtain by definition of $\tau$ that,
	\[
	f \in \opti{\pbpol{\Cs}}{L \cap \alpha\inv(e),\rho} 
	\]
	Since $\Ds \subseteq \pol{\Cs}$, we have $\polk{\compc{\Ds}} \subseteq \pol{\compc{\Ds}} \subseteq \pbpol{\Cs}$. Therefore, it follows from Fact~\ref{fct:inclus1} that $f \in \opti{\polk{\compc{\Ds}}}{L \cap \alpha\inv(e),\rho}$. Using Lemma~\ref{lem:finitedef}, this yields $v \in L \cap \alpha\inv(e)$ and $H \subseteq A^*$ such that $v \polrelk H$ and $f \leq \rho(H)$.
	
	Let us verify that $u,v \in A^*$ and $H \subseteq A^*$ satisfy the conditions described in the lemma. We have $\rho(u) = r$ since $u \in \beta\inv(r)$ which means that $\rho(u) =\beta(u) = r$. Moreover, $\alpha(v) = e$,  $f \leq \rho(H)$ and $v \polrelk H$	hold by definition of $v$ and $H$. Finally, $u \canod v$ holds since $v \in L$ and we have $u \canod L$.	
\end{proof}

We are now ready to define the appropriate $w \in \alpha\inv(e)$ and $K \subseteq A^*$ such that $w \polrelk K$ and $f \cdot (r + \rho(\varepsilon))  \cdot f \leq \rho(K)$. We fix $u,v \in A^*$ and $H \subseteq A^*$ as defined in Lemma~\ref{lem:pbp:theobjs}. Let $p \geq 1$ be the period of $\compc{\Ds}$ (see Fact~\ref{fct:omegapower} for the definition) and $\ell = p \times 2^{k+1}$. We define,
\[
w = v^{\ell} \quad \text{and} \quad K = H^{\ell + p-1} u H^\ell \cup H^{\ell}
\]
Clearly, $w \in \alpha\inv(e)$ since $\alpha(v) = e$ and $e$ is an idempotent of $M$. It remains to show that $w \polrelk K$ and $f \cdot (r + \rho(\varepsilon)) \cdot f \leq \rho(K)$. We start with the latter. By definition and since $\rho$ is a \mratm,
\[
\begin{array}{lll}
\rho(K) & = & (\rho(H))^{\ell+p-1} \cdot \rho(u) \cdot (\rho(H))^\ell + (\rho(H))^{\ell} \\
& = & (\rho(H))^{\ell+p-1} \cdot \rho(u) \cdot (\rho(H))^\ell + (\rho(H))^{\ell-1} \cdot \rho(\varepsilon) \cdot \rho(H)
\end{array}
\]
By definition in Lemma~\ref{lem:pbp:theobjs}, we have $\rho(u) = r$ and $f \leq \rho(H)$. Thus, since $f$ is idempotent, we obtain as desired that,
\[
f \cdot (r + \rho(\varepsilon)) \cdot f = f \cdot \rho(u) \cdot f + f \cdot \rho(\varepsilon) \cdot f \leq \rho(K)
\]

We finish with the proof that $w \polrelk K$. Let $x \in K$, we show that $w \polrelk x$. By definition, we have $K = H^{\ell + p-1} u H^\ell \cup H^{\ell}$. Moreover, we know that $v \polrelk H$ by definition in Lemma~\ref{lem:pbp:theobjs}. Thus, since $x \in K$ and \polrelk is compatible with concatenation (see Lemma~\ref{lem:canoquo}), we obtain that one of the two following properties hold:
\begin{itemize}
	\item $v^{\ell} \polrelk x$ (when $x \in H^\ell$) or,
	\item $v^{\ell+p-1} \cdot u \cdot v^{\ell} \polrelk x$ (when $x \in H^{\ell + p-1} u H^\ell$).
\end{itemize}
Thus, by transitivity, it suffices to show that $w \polrelk v^{\ell}$ and $w \polrelk v^{\ell+p-1} \cdot u \cdot v^{\ell}$. The former is immediate: we have $w = v^{\ell}$ by definition.

It remains to show that $w \polrelk v^{\ell+p-1} \cdot u \cdot v^{\ell}$. Recall that \polrelk is the canonical preorder of \polk{\compc{\Ds}}. By definition in Lemma~\ref{lem:pbp:theobjs}, we have $u \canod v$. Moreover, since \canod is compatible with concatenation, this implies $v^{p-1}u \canod v^p$. It follows that $v^p \leqslant_{\compc{\Ds}} v^{p-1}u$. Indeed, $v^{p-1}u \canod v^p$ means that for any $L \in \Ds$, we have $v^{p-1}u \in L \Rightarrow v^p \in L$. The contrapositive then states that for any $L \in \Ds$, we have $v^p \not\in L \Rightarrow v^{p-1}u \not\in L$. Finally, since the languages of $\compc{\Ds}$ are the complements of those in \Ds, it follows that for all $L \in \compc{\Ds}$, we have $v^p \in L \Rightarrow v^{p-1}u \in L$, i.e. $v^p \leqslant_{\compc{\Ds}} v^{p-1}u$. Therefore, since $w = v^\ell$ and $\ell = p2^{k+1}$ where $p$ is the period of $\compc{\Ds}$, we obtain as desired that $w \polrelk v^{\ell} \cdot v^{p-1}u \cdot v^{\ell}$ from Lemma~\ref{lem:hintro:propreo2}.


\section{Completeness in Theorem~\ref{thm:pbp:algo}}
\label{sec:theproof}
This last section is devoted to proving the difficult direction in Theorem~\ref{thm:pbp:algo}. It corresponds to completeness in the least fixpoint algorithm for \pbpol{\Cs}-covering.

\medskip

We let $\alpha: A^* \to M$ as a \Cs-compatible morphism and $\rho: 2^{A^*} \to R$ as a \nice \Cs-compatible \mratm. Recall that Theorem~\ref{thm:pbp:algo} states that \pbpopti  is the smallest \pbpol{\Cs}-saturated subset of $M \times R$. We have already showed in the previous section that \pbpopti is \pbpol{\Cs}-saturated (this was the soundness proof). Therefore, it remains to show that it is the smallest such subset. This is the purpose of this section.

Thus, we fix some arbitrary \pbpol{\Cs}-saturated subset $S$ of $M \times R$ and prove that it includes \pbpopti. That is, we show,
\[
\pbpopti \subseteq S
\]
The argument is more involved than what we did for \pol{\Cs} in Section~\ref{sec:polc}. Indeed, the definition of \pbpol{\Cs}-saturated sets is involved and this is reflected in our argument.

We write $\beta: A^* \to R_{A^*}$ for the canonical morphism associated to $\rho$ (which is \Cs-compatible by Fact~\ref{fct:canoiscompat}). Moreover, we write $\tau: 2^{A^*} \to 2^{M \times R}$ for the \mratm \drata:
\[
\begin{array}{llll}
\tau: & 2^{A^*} & \to     & 2^{M \times R}                                                                 \\
& K       & \mapsto & \{(s,r) \in M \times R \mid \text{$r \in \opti{\pbpol{\Cs}}{K \cap \alpha\inv(s),\rho}$}\}
\end{array}
\]
Since $S$ is \pbpol{\Cs}-saturated, the definition yields a subset $\Ts \subseteq R \times 2^{M \times R}$ such that the pair $(S,\Ts)$ is \pbpol{\Cs}-saturated. We actually prove the two following inclusions,
\[
\pbpopti \subseteq S \quad \text{and} \quad \tpocopti \subseteq \Ts 
\]
Of course, the difficulty is that these two properties are mutually dependent. This is not surprising since the conditions which make $S$ and $\Ts$ \pbpol{\Cs}-saturated are mutually dependent themselves. We cope with this problem by following three steps:
\begin{enumerate}
	\item {\bf Step~1: \pol{\Cs} argument:} First, we investigate the set \Ts. Let us point out that we do not yet show that $\tpocopti \subseteq \Ts$: getting this inclusion requires information on the set $S$ and we do not have any at this point. Instead, we cope with this problem by introducing a new \mratm $\gamma: 2^{A^*} \to 2^{M \times R}$ which is intuitively a weaker variant of $\tau$ parametrized by the set $S$ (for any $K$, $\gamma(K)$ is a restriction of the set $\tau(K) \subseteq M \times R$ which depends on $S$). The main result of this step is as follows:
	\[
	\gpocopti \subseteq \Ts
	\]
	The key idea is that the definition of $\gamma$ depends on $S$. Therefore, later in the proof, whenever we exhibit new elements in $S$, this simultaneously proves the existence of new elements in \gpocopti. Consequently, the above inclusion yields that these new elements belong to \Ts as well (and we may use them to exhibit even more elements in $S$). This is based on Theorem~\ref{thm:half:mainpolc} (our characterization of \pol{\Cs}-optimal pointed \imprints): we apply it to the \mratm $\gamma$.
	
	\item {\bf Step~2: a finite \vari $\Ds \subseteq \bpol{\Cs}$:} This step reexamines the results obtained in Step~1. We use our new \mratm $\gamma$ to define a key object of the proof: a finite \vari $\Ds \subseteq \bpol{\Cs}$. Intuitively, \Ds is designed to contain any language involved in an optimal \pol{\Cs}-cover of $\beta\inv(r)$ for $\gamma$ where $r \in R_{A^*}$.
	
	\item {\bf Step~3: \pbpol{\Cs}-argument:} This step uses all preliminary results that we have obtained in the previous ones to finally prove the desired property:
	\[
	\pbpopti \subseteq S 
	\]
\end{enumerate}

\subsection{\texorpdfstring{Step~1: \pol{\Cs} argument}{Step~1: Pol(C) argument}}

We start by defining the auxiliary \mratm $\gamma: 2^{A^*} \to 2^{M \times R}$ that we announced above. Then, we use Theorem~\ref{thm:half:mainpolc} to prove that,
\[
\gpocopti \subseteq \Ts
\]
As we explained, $\gamma$ is designed as a weaker variant of $\tau$ parametrized by the set $S \subseteq M \times R$. We use this new object to cope with the fact that we do not yet have enough information on $S$ to prove directly that $\tpocopti \subseteq \Ts$. The definition is based on the notion of good languages which we present now.

\medskip
\noindent
{\bf Good languages.} Given some language $G \subseteq A^*$, we shall say that $G$ is \emph{good} when the two following conditions are satisfied:
\begin{itemize}
	\item $G$ is closed under infixes: for any $u,v_1,v_2 \in A^*$, if $v_1uv_2 \in G$, then $u \in G$.
	\item $\tau(G) \subseteq S$.
\end{itemize}

Note that we already know at least one simple good language: the empty one ($\emptyset$) which trivially satisfies both conditions.

\begin{remark} \label{rem:pbp:univgood}
	Our ultimate goal in the proof is to prove that $\pbpopti \subseteq S$. Thus, since $\tau(A^*) = \{(s,r) \mid r \in \opti{\pbpol{\Cs}}{\alpha\inv(s),\rho}\} = \pbpopti$ by definition, this means that the universal language $A^*$ is good. However, we {\bf do not know this yet}. While this will be implicit, a key point in later steps of the proof is that it exhibits more and more elements within $S$. Therefore, we shall obtain increasingly large good languages until we prove that $A^*$ itself is good.
\end{remark}

We shall need the following fact which is used for combining two good languages into a single larger one.

\begin{fact} \label{fct:pbp:concatgood}
	Let $G_1,G_2$ be good languages. Then $G_1G_2 \cup G_1 \cup G_2$ is a good language as well.
\end{fact}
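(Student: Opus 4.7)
The plan is to verify the two defining conditions of a good language separately for $G = G_1 G_2 \cup G_1 \cup G_2$.

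For \emph{infix closure}, I would do a straightforward case analysis. Given $w \in G$ with $w = v_1 u v_2$, either $w$ lies in $G_1$ or $G_2$ (in which case infix closure of $G_i$ immediately gives $u \in G_i \subseteq G$), or $w = w_1 w_2$ with $w_i \in G_i$. In the latter case, the occurrence of $u$ inside $w_1 w_2$ is either entirely within $w_1$, entirely within $w_2$, or straddles the split point. The first two subcases give $u \in G_1$ or $u \in G_2$ by infix closure of $G_1$ resp.\ $G_2$; in the straddling subcase, write $u = u_1 u_2$ where $u_1$ is an infix (in fact a suffix) of $w_1$ and $u_2$ is an infix (a prefix) of $w_2$, so that $u_1 \in G_1$ and $u_2 \in G_2$ by infix closure, yielding $u \in G_1 G_2 \subseteq G$.

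For the condition $\tau(G) \subseteq S$, I would exploit that $\tau$ is a \mratm (Lemma~\ref{lem:tauismratm}), so it is a hemiring morphism. Since addition in $2^{M \times R}$ is union, this gives
\[
\tau(G_1 G_2 \cup G_1 \cup G_2) \;=\; \tau(G_1) \cdot \tau(G_2) \;\cup\; \tau(G_1) \;\cup\; \tau(G_2).
\]
The inclusions $\tau(G_1) \subseteq S$ and $\tau(G_2) \subseteq S$ hold by goodness of $G_1$ and $G_2$, so only the product $\tau(G_1) \cdot \tau(G_2) \subseteq S$ needs attention. Here I would invoke the fact that $(S, \Ts)$ is \pbpol{\Cs}-saturated, so $S$ is closed under both multiplication and downset. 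A generic element $(s,r) \in \tau(G_1) \cdot \tau(G_2)$ has the form $s = s_1 s_2$ with $r \leq r_1 r_2$ for some $(s_1,r_1) \in \tau(G_1)$ and $(s_2,r_2) \in \tau(G_2)$; closure of $S$ under multiplication gives $(s_1 s_2, r_1 r_2) \in S$, and closure under downset then delivers $(s, r) \in S$.

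I do not expect a genuine obstacle here: the statement is a clean algebraic compatibility lemma and both ingredients (the definition of multiplication in $2^{M \times R}$, which builds in downset closure, and the saturation properties of $S$) are tailored precisely to make this kind of combination work. The only mild point of care is remembering to justify the straddling subcase for infix closure correctly, which is why I would write out the decomposition $u = u_1 u_2$ explicitly rather than invoking infix closure on the concatenation directly.
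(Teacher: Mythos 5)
Your proof is correct and follows the same route as the paper's: both split the concatenation $G_1G_2 \cup G_1 \cup G_2$ via the hemiring morphism $\tau$ and then discharge $\tau(G_1)\cdot\tau(G_2)\subseteq S$ using closure under multiplication and downset. The only difference is that you spell out the case analysis for infix closure (including the straddling case) where the paper simply writes ``clearly,'' which is a harmless elaboration.
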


\begin{proof}
	Clearly, $G_1G_2 \cup G_1 \cup G_2$  is closed under infixes since this was the case for both $G_1$ and $G_2$. Hence, we may concentrate on the second property in the definition of good languages. Since $\tau$ is a \mratm, we have,
	\[
	\tau(G_1G_2 \cup G_1 \cup G_2) = \tau(G_1) \cdot \tau(G_2) \cup \tau(G_1) \cup \tau(G_2)
	\]
	We already know that $\tau(G_1) \subseteq S$ and $\tau(G_2) \subseteq S$ since $G_1,G_2$ are good. Thus, it suffices to verify that $\tau(G_1) \cdot \tau(G_2) \subseteq S$.  By definition, $(s,r) \in \tau(G_1) \cdot \tau(G_2)$ satisfies $s = s_1s_2$ and $r \leq r_1r_2$ for $(s_1,r_1) \in \tau(G_1) \subseteq S$ and $(s_2,r_2) \in \tau(G_2) \subseteq S$. Since $S$ is \pbpol{\Cs}-saturated, it is closed under downset and multiplication which yields $(s,r) \in S$.
\end{proof}

\noindent
{\bf Definition of $\gamma$.} We now use good languages to introduce our new \mratm $\gamma: 2^{A^*} \to 2^{M \times R}$. As we explained, it is defined from $\tau$. Recall that $2^{M \times R}$ is a finite hemiring which we already used as the evaluation set for $\tau$. In particular, recall that it uses a special multiplication which we defined above. We define,
\[
\begin{array}{llll}
\gamma: & 2^{A^*} & \to     & 2^{M \times R}                                     \\
& K       & \mapsto & \{(s,r) \mid \text{$(s,r) \in \tau(K \cap G)$ for some good language $G$}\}
\end{array}
\]
Note that the definition of $\gamma$ depends on $S \subseteq M \times R$ since this set constrains which languages are good.

\begin{remark} \label{rem:pbp:thisiswierd}
	This new map $\gamma$ is strongly related to $\tau$. An apparent connection is that for any $K \subseteq A^*$, we have $\gamma(K) \subseteq \tau(K)$. Indeed, given $(s,r) \in \gamma(K)$, we have $(s,r) \in \tau(K \cap G)$ for some good language $G$. Moreover, we know that $\tau(K \cap G) \subseteq \tau(K)$ since $K \cap G \subseteq K$. Thus, we get that $(s,r) \in \tau(K)$. However, the connection between $\gamma$ and $\tau$ is in fact much stronger. Since we shall later obtain that $A^*$ is good (see Remark~\ref{rem:pbp:univgood}), the converse inclusion holds as well: $\gamma(K) \subseteq \tau(K)$. Thus, $\gamma$ and $\tau$ are actually the same object. Unfortunately, this is an information that we are not able to use. Indeed, we only obtain it at the end of our proof: the fact that $A^*$ is good follows from our end goal ($\pbpopti \subseteq S$).
	
	Another way to explain the situation is as follows. We need to use $\gamma$ since we are not able to prove the inclusion $\tpocopti\subseteq \Ts$ directly. However, it turns out that when we shall finally obtain $\pbpopti\subseteq S$ (the proof relies on the inclusion $\gpocopti \subseteq \Ts$),  we will also realize that $\gamma$ was $\tau$ all along. 
\end{remark}

Before we can use $\gamma$, we first need to verify that it is indeed a \mratm. We do so in the next lemma (this is a consequence of the fact that $\tau$ is a \mratm itself).

\begin{lemma} \label{lem:pbp:itisgood}
	The map $\gamma$ is a \mratm.
\end{lemma}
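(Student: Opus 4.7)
The plan is to show the three hemiring-morphism axioms for $\gamma$, leveraging the fact that $\tau$ is already known to be a \mratm (Lemma~\ref{lem:tauismratm}) together with the two defining properties of good languages (infix-closure and $\tau(G) \subseteq S$) and the combination result of Fact~\ref{fct:pbp:concatgood}. Recall that the addition on $2^{M \times R}$ is union, so the order coincides with inclusion.

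First, I would dispatch $\gamma(\emptyset) = \emptyset$ immediately: for any good $G$, $\tau(\emptyset \cap G) = \tau(\emptyset) = \emptyset$, so no pair witnesses membership. Next, for additivity, $\gamma(K_1 \cup K_2) = \gamma(K_1) \cup \gamma(K_2)$, I would argue inclusion in both directions. The inclusion $\supseteq$ follows from monotonicity of $\tau$ applied to $K_i \cap G \subseteq (K_1 \cup K_2) \cap G$. For $\subseteq$, given $(s,r) \in \tau((K_1 \cup K_2) \cap G)$ for some good $G$, distributing intersection over union and using $\tau(A \cup B) = \tau(A) \cup \tau(B)$ splits the witness into one for $K_1$ or one for $K_2$ (reusing the same $G$ in both cases).

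The real work is multiplicativity, $\gamma(K_1 K_2) = \gamma(K_1) \cdot \gamma(K_2)$, and this is where I expect the only genuine obstacle. For the inclusion $\supseteq$, I would take $(s,r) \in \gamma(K_1) \cdot \gamma(K_2)$, unpack the witnesses into good languages $G_1, G_2$ and pairs $(s_i, r_i) \in \tau(K_i \cap G_i)$ with $s = s_1 s_2$ and $r \leq r_1 r_2$. Setting $G = G_1 G_2 \cup G_1 \cup G_2$, which is good by Fact~\ref{fct:pbp:concatgood}, I note $(K_1 \cap G_1)(K_2 \cap G_2) \subseteq K_1 K_2 \cap G$, apply $\tau$, and invoke the morphism property of $\tau$ together with the fact that the multiplication on $2^{M \times R}$ absorbs downward $\leq$-comparisons (Remark~\ref{rem:standmult}) to land $(s,r)$ in $\tau(K_1 K_2 \cap G) \subseteq \gamma(K_1 K_2)$.

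The converse direction $\subseteq$ is the step that genuinely uses infix-closure, which is the whole point of that clause in the definition of ``good''. Given $(s,r) \in \tau(K_1 K_2 \cap G)$ for some good $G$, the key observation is that $K_1 K_2 \cap G \subseteq (K_1 \cap G)(K_2 \cap G)$: any factorization $w = u_1 u_2 \in G$ with $u_i \in K_i$ forces $u_1, u_2 \in G$ because $G$ is closed under infixes. Monotonicity of $\tau$ and its morphism property then give $(s,r) \in \tau(K_1 \cap G) \cdot \tau(K_2 \cap G)$, and unpacking the definition of the multiplication on $2^{M \times R}$ yields pairs $(s_i, r_i) \in \tau(K_i \cap G) \subseteq \gamma(K_i)$ with $s = s_1 s_2$ and $r \leq r_1 r_2$, so $(s,r) \in \gamma(K_1) \cdot \gamma(K_2)$. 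The only subtle point worth underlining is that one cannot in general expect $K_1 K_2 \cap G$ to equal $(K_1 \cap G)(K_2 \cap G)$ (the reverse inclusion needs $G$ to be closed under concatenation, which good languages need not be), but the one inclusion that does hold is precisely what the morphism axiom requires given that the order on $2^{M \times R}$ is inclusion.
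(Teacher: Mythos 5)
Your proposal is correct and follows essentially the same route as the paper: it dispatches $\gamma(\emptyset)=\emptyset$ and additivity the same way, and for multiplicativity it uses Fact~\ref{fct:pbp:concatgood} for the $\supseteq$ direction and infix-closure of good languages (to get $K_1K_2 \cap G \subseteq (K_1\cap G)(K_2\cap G)$) for the $\subseteq$ direction, exactly as in the paper.
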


\begin{proof}
	We need to prove that $\gamma$ is a hemiring morphism. We start by proving that it is a monoid morphism for addition. Clearly, $\gamma(\emptyset) = \emptyset$. Indeed,
	\[
	\gamma(\emptyset) = \{(s,r) \mid \text{$(s,r) \in \tau(\emptyset)$}\} = \emptyset
	\]
	We now show that for any $K_1,K_2 \subseteq A^*$, we have $\gamma(K_1 \cup K_2) = \gamma(K_1) \cup \gamma(K_2)$.  Assume first that $(s,r) \in \gamma(K_1) \cup \gamma(K_2)$. By symmetry assume that $(s,r) \in \gamma(K_1)$. By definition, we have a good language $G$ such that $(s,r) \in \tau(K_1 \cap G)$. Thus, since $K_1 \cap G \subseteq (K_1 \cup K_2) \cap G$, we get $(s,r) \in \tau((K_1 \cup K_2) \cap G)$ which means that $(s,r) \in \gamma(K_1 \cup K_2)$ as desired. Conversely, assume that $(s,r) \in \gamma(K_1 \cup K_2)$. By definition, there exists a good language $G$ such that $(s,r) \in \tau((K_1 \cup K_2) \cap G)$. Thus, since $\tau$  is a morphism itself, we have $(s,r) \in \tau(K_1\cap G) \cup \tau(K_2 \cap G)$ and it follows that $(s,r) \in \gamma(K_1) \cup \gamma(K_2)$. 
	
	\medskip
	
	We now show that $\gamma$ is a semigroup morphism for multiplication. Let $K_1,K_2 \subseteq A^*$. We show that $\gamma(K_1K_2) = \gamma(K_1) \cdot \gamma(K_2)$. Assume first that $(s,r) \in \gamma(K_1) \cdot \gamma(K_2)$. Thus, by definition of our multiplication on $2^{M \times R}$, we have $(s_1,r_1) \in \gamma(K_1)$ and $(s_2,r_2) \in \gamma(K_2)$ such that $s = s_1s_2$ and $r \leq r_1r_2$. It then follows from the definition of $\gamma$ that there exist two good languages $G_1,G_2$ such that,
	\[	
	(s_1,r_1) \in \tau(K_1 \cap G_1) \quad \text{and} \quad (s_2,r_2) \in \tau(K_2 \cap G_2)
	\]
	Since we already know that $\tau$ is \tame, it follows that $(s,r) \in \tau((K_1\cap G_1)(K_2\cap G_2))$. Moreover, because $(K_1 \cap G_1)(K_2 \cap G_2) \subseteq K_1K_2 \cap (G_1G_2 \cup G_1 \cup G_2)$, we get $(s,r) \in \tau(K_1K_2 \cap (G_1G_2 \cup G_1 \cup G_2))$.  Finally, since we know that $G_1G_2 \cup G_1 \cup G_2$ is good by Fact~\ref{fct:pbp:concatgood}, it follows that $(s,r) \in \gamma(K_1K_2)$ by definition of $\gamma$.
	
	We finish with the converse inclusion. Let $(s,r) \in \gamma(K_1K_2)$. By definition, we get a good language $G$ such that $(s,r) \in \tau(K_1K_2 \cap G)$. Since $G$ is closed under infixes (by definition of good languages), it is simple to verify that $K_1K_2 \cap G \subseteq (K_1 \cap G)(K_2 \cap G)$. Thus, it follows that,
	\[
	(s,r) \in \tau((K_1 \cap G)(K_2 \cap G)) = \tau(K_1 \cap G) \cdot \tau(K_2 \cap G)
	\]
	Therefore, we have $(s_1,r_1) \in \tau(K_1 \cap G)$ and $(s_2,r_2) \in \tau(K_2 \cap G)$ such that $s =s_1s_2$ and $r \leq r_1r_2$. By definition, $(s_1,r_1)\in \gamma(K_1)$ and $(s_2,r_2) \in \gamma(K_2)$. Thus, we obtain that,
	$(s,r) \in \gamma(K_1) \cdot \gamma(K_2)$.
\end{proof}

We turn to the main result in Step~1. Using Theorem~\ref{thm:half:mainpolc} (i.e. our characterization of \pol{\Cs} optimal pointed \imprints) we prove that \gpocopti is included in \Ts.  Recall that $\Ts \subseteq R \times 2^{M \times R}$ is a set such that the pair $(S,\Ts)$ is \pbpol{\Cs}-saturated (for $\rho$).

\begin{proposition} \label{prop:pbp:level1}
	We have $\gpocopti \subseteq \Ts$.
\end{proposition}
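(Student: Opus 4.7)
The plan is to apply Theorem~\ref{thm:half:mainpolc} to the morphism $\beta: A^* \to R_{A^*}$ and the \mratm $\gamma: 2^{A^*} \to 2^{M \times R}$. Both are \Cs-compatible: $\beta$ by Fact~\ref{fct:canoiscompat}, and $\gamma$ inherits \Cs-compatibility because each $\gamma(w)$ is a subset of $\tau(w)$ (see Remark~\ref{rem:pbp:thisiswierd}) and only the $\sim_\Cs$-class of $w$ is relevant for the values of $\alpha$ and $\rho$ on $w$. Theorem~\ref{thm:half:mainpolc} then states that \gpocopti is the \emph{least} \pol{\Cs}-saturated subset of $R_{A^*} \times 2^{M \times R}$ for $\beta$ and $\gamma$. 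Consequently, it suffices to verify that $\Ts$ itself is \pol{\Cs}-saturated for $\beta$ and $\gamma$, i.e., that it contains $\ptriv{\beta,\gamma}$ and is closed under downset, multiplication, and \pol{\Cs}-closure.

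Closure of $\Ts$ under downset and multiplication is already built into the definition of \pbpol{\Cs}-saturated pair $(S,\Ts)$. For the containment $\ptriv{\beta,\gamma} \subseteq \Ts$, I would first show that for any word $w \in A^*$, $\gamma(w) \subseteq \{(\alpha(w),r') \mid r' \leq \rho(w)\}$. Indeed, any cover of the singleton $\{w\}$ must contain a language including $w$, so $\opti{\pbpol{\Cs}}{\{w\},\rho} = \{r' \mid r' \leq \rho(w)\}$, and thus $\tau(\{w\} \cap G)$ is contained in $\{(\alpha(w),r') \mid r' \leq \rho(w)\}$ for every good $G$. Combined with the axiom for \pbpol{\Cs}-saturation that puts $(\rho(w),\{(\alpha(w),r') \mid r' \leq \rho(w)\})$ in $\Ts$, and downset closure of $\Ts$, this yields every pair in $\ptriv{\beta,\gamma}$.

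The crux is verifying \pol{\Cs}-closure: for any idempotent $(f,E) \in \Ts$, we need $(f,\, E \cdot \gamma(\ctype{f}) \cdot E) \in \Ts$. Nested closure of $\Ts$ (which is part of \pbpol{\Cs}-saturation) gives $(f,\, E \cdot T \cdot E) \in \Ts$ for $T = \{(s,r) \in S \mid \ctype{f} = \ctype{s}\}$. Since the multiplication on $2^{M \times R}$ is monotone in both arguments, it will suffice to establish the inclusion $\gamma(\ctype{f}) \subseteq T$ and then apply downset closure of $\Ts$. I view this inclusion as the main (though localized) obstacle of the argument: it is where the interplay between goodness, \Cs-compatibility, and the definition of $\gamma$ must all be pinned down simultaneously.

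To prove $\gamma(\ctype{f}) \subseteq T$, take $(s,r) \in \gamma(\ctype{f})$. By definition of $\gamma$, there exists a good language $G$ with $(s,r) \in \tau(\ctype{f} \cap G)$. On the one hand, $\tau$ is monotone (being an additive morphism between idempotent hemirings), so $\tau(\ctype{f} \cap G) \subseteq \tau(G) \subseteq S$ by the second axiom of goodness; hence $(s,r) \in S$. On the other hand, \Cs-compatibility of $\alpha$ forces $\alpha\inv(s) \subseteq \ctype{s}$, so if $\ctype{s} \neq \ctype{f}$, then $\ctype{f} \cap G \cap \alpha\inv(s) = \emptyset$ and no pair of the form $(s,r)$ could lie in $\tau(\ctype{f} \cap G)$; this compels $\ctype{s} = \ctype{f}$. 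Putting both observations together gives $(s,r) \in T$, completing the proof.
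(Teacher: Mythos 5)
Your argument is correct and follows the paper's proof of Proposition~\ref{prop:pbp:level1} step for step: apply Theorem~\ref{thm:half:mainpolc} to $\beta$ and $\gamma$, then verify that $\Ts$ is \pol{\Cs}-saturated for $\gamma$ by reducing each closure condition to the corresponding piece of the \pbpol{\Cs}-saturation of $(S,\Ts)$.

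Two small remarks on justification, neither of which invalidates the argument. First, you assert that $\gamma$ is \Cs-compatible; this is not needed, since Theorem~\ref{thm:half:mainpolc} only requires the morphism (here $\beta$) to be \Cs-compatible and the rating map to be a \mratm (established in Lemma~\ref{lem:pbp:itisgood}). Moreover your reason for it (``$\gamma(w)\subseteq\tau(w)$'') would not by itself give \Cs-compatibility if the inclusion were strict on some singletons, so the claim is best dropped. Second, when you derive $\opti{\pbpol{\Cs}}{\{w\},\rho}=\{r'\mid r'\leq\rho(w)\}$, the observation that ``any cover of $\{w\}$ contains a language including $w$'' only gives the $\supseteq$ inclusion; the $\subseteq$ inclusion, which is the one you actually use, comes from the fact that $\{w\}\in\pbpol{\Cs}$ (Lemma~\ref{lem:pbpfinite}), as the paper notes explicitly.
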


\begin{proof}
	We know that \gpocopti is the least \pol{\Cs}-saturated subset of $R_{A^*} \times 2^{M \times R}$ (for $\gamma$) by Theorem~\ref{thm:half:mainpolc}. Thus, it suffices to show that \Ts is \pol{\Cs}-saturated for $\gamma$. It will then be immediate that $\gpocopti \subseteq \Ts$ as desired.
	
	\begin{remark}
		Note that we are able to use Theorem~\ref{thm:half:mainpolc} since $\beta$ is \Cs-compatible and $\gamma$ is a \mratm. However, let us point out that $\gamma$ need not be \nice in general. Therefore, it is important here that this is not a requirement for applying Theorem~\ref{thm:half:mainpolc}.
	\end{remark}
	
	Recall that by hypothesis, the pair $(S,\Ts)$ is \pbpol{\Cs}-saturated for $\rho$. Hence, we already know that \Ts is closed under downset and multiplication. It remains to show that $\ptriv{\beta,\gamma} \subseteq \Ts$ and that \Ts satisfies \pol{\Cs}-closure.
	
	We start with $\ptriv{\beta,\gamma} \subseteq \Ts$. Let $(r,T) \in \ptriv{\beta,\gamma}$, we show that $(r,T) \in \Ts$. By definition of \ptriv{\beta,\gamma}, there exists $w \in A^*$ such that $\rho(w) = \beta(w) = r$ and $T \subseteq \gamma(w)$. Let $s = \alpha(w)$ and consider the set $T' = \{(s,r') \mid r' \leq r\}$. Since $(S,\Ts)$ is \pbpol{\Cs}-saturated for $\rho$, it is immediate by definition that $(r,T') \in \Ts$. We show that $T \subseteq T'$. Since we already established that \Ts is closed under downset, this will imply that $(r,T) \in \Ts$. Let $(t,r') \in T$. We have to show that $s = t$ and $r' \leq r$. Since $T \subseteq \gamma(w)$, we have $(t,r') \in \gamma(w)$. Furthermore, we have $(t,r') \in \tau(w)$ since we observed in Remark~\ref{rem:pbp:thisiswierd} that $\gamma(w) \subseteq \tau(w)$.  By definition of $\tau$, this means that $r' \in \opti{\pbpol{\Cs}}{\{w\} \cap \alpha\inv(t),\rho}$. Therefore, $\{w\} \cap \alpha\inv(t) \neq \emptyset$ which means that $\{w\} \cap \alpha\inv(t) = \{w\}$ and $t = \alpha(w) = s$. Moreover, since \pbpol{\Cs} contains all finite languages by Lemma~\ref{lem:pbpfinite}, $\Kb = \{\{w\}\}$ is a \pbpol{\Cs}-cover of $\{w\} = \{w\} \cap \alpha\inv(t)$. Hence, we have $r' \in \opti{\pbpol{\Cs}}{\{w\} \cap \alpha\inv(t),\rho} \subseteq \prin{\rho}{\Kb}$ which means any $r' \leq \rho(w) = r$ by definition.
	
	We turn to \pol{\Cs}-closure. Let $(f,E) \in \Ts$ be an idempotent. We have to show that,
	\[
	(f, E \cdot \gamma(\ctype{f})  \cdot E) \in \Ts
	\]
	Let $T = \{(s,r) \in S \mid \ctype{f} = \ctype{s}\}$. Since $(S,\Ts)$ is \pbpol{\Cs}-saturated for $\rho$, we get from nested closure (i.e. Operation~\eqref{op:pbp:upper}) that,
	\[
	(f, E \cdot T  \cdot E) \in \Ts
	\]
	Hence, by closure under downset, it suffices to show that $\gamma(\ctype{f}) \subseteq T$. Let $(s,r) \in \gamma(\ctype{f})$. By definition, we know that $(s,r) \in \tau(\ctype{f} \cap G)$ for some good language $G$. Therefore, $(s,r) \in \tau(G)$ which is a subset of $S$ by the second item in the definition of good languages. We get that $(s,r) \in S$. Moreover, $(s,r) \in \tau(\ctype{f})$ which means that we have $r \in \opti{\pbpol{\Cs}}{\ctype{f} \cap \alpha\inv(s),\rho}$ . Therefore, $\ctype{f} \cap \alpha\inv(s) \neq \emptyset$. Since $\alpha$ is \Cs-compatible, this implies that $\ctype{f} = \ctype{s}$. Altogether, we get that $(s,r) \in T$ which concludes the proof.
\end{proof}

\subsection{\texorpdfstring{Step~2: An auxiliary class $\Ds \subseteq \bpol{\Cs}$}{Step~2: An auxiliary class D}}

We use the \mratm $\gamma$ introduced in the previous step to define a finite \vari \Ds included in \bpol{\Cs}. This object will serve as a key ingredient when proving the third step. Moreover, we reformulate Proposition~\ref{prop:pbp:level1} as a property on the equivalence classes of $\sim_{\Ds}$ (the canonical equivalence associated to \Ds). This is the formulation that we shall actually use later.

\medskip

Given any $r \in R_{A^*}$, we may define $\Hb_r$ as an arbitrary \emph{optimal} \pol{\Cs}-cover of $\beta\inv(r)$ for $\gamma$. Furthermore, we let $\Hb =  \cup_{r \in R_{A^*}} \Hb_r$. Observe that by definition, \Hb is a finite set of languages in \pol{\Cs}. Therefore, there exists a stratum \polk{\Cs} in our stratification of \pol{\Cs} (we may use it since \Cs is a finite \vari, see Section~\ref{sec:tools}) such that all $H \in \Hb$ belong to \polk{\Cs}. We define,
\[
\Ds = \bool{\polk{\Cs}}
\]
By definition, \Ds is a finite \vari and any $H \in \Hb$ belongs to \Ds. We shall work with the canonical equivalence $\sim_\Ds$ (over $A^*$) associated to \Ds. Note that by definition, we have the following fact.

\begin{fact} \label{fct:pbp:cinpbp}
	We have $\Ds \subseteq \bpol{\Cs} \subseteq \pbpol{\Cs}$. In particular, any $\sim_{\Ds}$-class $V$ is a language of \pbpol{\Cs}.
\end{fact}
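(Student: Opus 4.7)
The plan is to check the two inclusions in the chain and then deduce the ``In particular'' statement from Lemma~\ref{lem:canoequiv}. Neither step should require new ideas; the content has already been set up in Sections~\ref{sec:prelims} and~\ref{sec:concat}.

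For the first inclusion $\Ds \subseteq \bpol{\Cs}$, I would start from the stratification property that $\polk{\Cs} \subseteq \pol{\Cs}$ for every $k$, which is built into the definition of the strata in Section~\ref{sec:tools}. Then I would invoke the (immediate) monotonicity of the Boolean closure operator: if $\Cs_1 \subseteq \Cs_2$, then $\bool{\Cs_1} \subseteq \bool{\Cs_2}$. Applying this to $\polk{\Cs} \subseteq \pol{\Cs}$ gives $\Ds = \bool{\polk{\Cs}} \subseteq \bool{\pol{\Cs}} = \bpol{\Cs}$. The second inclusion $\bpol{\Cs} \subseteq \pbpol{\Cs}$ is literally the inclusion $\Cs' \subseteq \pol{\Cs'}$ applied to $\Cs' = \bpol{\Cs}$, which holds by the very definition of polynomial closure. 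Combining these two yields the chain $\Ds \subseteq \bpol{\Cs} \subseteq \pbpol{\Cs}$.

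For the ``In particular'' assertion, I would first note that $\Ds$ is a finite \vari: $\polk{\Cs}$ is a finite \pvari by the construction in Section~\ref{sec:tools}, and its Boolean closure is then a \vari by Lemma~\ref{lem:bclos} and remains finite because applying Boolean operations to a finite set of languages produces only finitely many distinct languages. Hence Lemma~\ref{lem:canoequiv} applies: the canonical preorder \canod is in fact the equivalence $\sim_\Ds$, and the languages in $\Ds$ are exactly the unions of $\sim_\Ds$-classes. Since any single $\sim_\Ds$-class $V$ is trivially a union of $\sim_\Ds$-classes, we get $V \in \Ds$. The inclusion chain already established then yields $V \in \pbpol{\Cs}$.

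There is no real obstacle here; the fact is a bookkeeping lemma extracting two observations from prior definitions (the stratification and monotonicity of $Bool$) and one from Lemma~\ref{lem:canoequiv}. I would write it as two short displays.
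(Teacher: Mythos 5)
Your proof is correct and matches the paper's (implicit) reasoning: the paper simply says the fact holds ``by definition'' without spelling it out, and your expansion — stratum inclusion plus monotonicity of Boolean closure for $\Ds \subseteq \bpol{\Cs}$, the trivial inclusion $\Cs' \subseteq \pol{\Cs'}$ for $\bpol{\Cs} \subseteq \pbpol{\Cs}$, and Lemma~\ref{lem:canoequiv} (noting $\Ds$ is a finite Boolean algebra) for $V \in \Ds$ — is exactly the intended argument.
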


We now reformulate Proposition~\ref{prop:pbp:level1} as a property on the $\sim_\Ds$-classes. We state it in the following lemma.

\begin{lemma} \label{lem:pbp:level1}
	Consider a $\sim_\Ds$-class $V \subseteq A^*$. Let $v \in V$ and $r = \rho(v)$. Then,
	\[
	(r,\gamma(V)) \in \Ts
	\]
\end{lemma}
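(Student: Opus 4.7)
The plan is to derive this lemma as a direct consequence of Proposition~\ref{prop:pbp:level1}, which already gives us the inclusion $\gpocopti \subseteq \Ts$. So the task reduces to showing that $(r, \gamma(V)) \in \gpocopti$, i.e.\ that $\gamma(V) \in \opti{\pol{\Cs}}{\beta\inv(r),\gamma}$. The finite \vari $\Ds$ and the family of covers $\Hb_r$ introduced above were defined precisely so that this becomes essentially automatic.

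First I would note that since $v \in V$ satisfies $\beta(v) = \rho(v) = r$, we have $v \in \beta\inv(r)$. By construction, $\Hb_r$ is an optimal \pol{\Cs}-cover of $\beta\inv(r)$ for $\gamma$, so in particular it is a cover, and therefore some $H \in \Hb_r$ contains $v$. Moreover, by the choice of the stratum $\polk{\Cs}$ at the start of Step~2, every language in $\Hb \supseteq \Hb_r$ belongs to $\polk{\Cs}$, hence to $\Ds = \bool{\polk{\Cs}}$. By Lemma~\ref{lem:canoequiv}, any language in $\Ds$ is a union of $\sim_\Ds$-classes, so $H$ contains the entire $\sim_\Ds$-class of~$v$. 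Since $V$ is exactly that class, we obtain $V \subseteq H$.

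Next I would apply Fact~\ref{fct:fadd} to the \mratm $\gamma$ (which is indeed a \mratm by Lemma~\ref{lem:pbp:itisgood}): from $V \subseteq H$ we get $\gamma(V) \leq \gamma(H)$ in the canonical order of $2^{M\times R}$. By definition of $\rho$-imprints applied to $\gamma$, this exactly means that $\gamma(V) \in \prin{\gamma}{\Hb_r}$. Finally, because $\Hb_r$ is an optimal \pol{\Cs}-cover of $\beta\inv(r)$ for $\gamma$, we have $\prin{\gamma}{\Hb_r} = \opti{\pol{\Cs}}{\beta\inv(r),\gamma}$, and therefore $(r,\gamma(V)) \in \gpocopti$. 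Invoking Proposition~\ref{prop:pbp:level1} concludes the proof.

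There is no genuine obstacle: the lemma is a bookkeeping step whose role is to repackage the conclusion of Step~1 in a form adapted to the upcoming Step~3 (where we will want to feed $\sim_\Ds$-classes into the pair $(S,\Ts)$). All the hard work was already absorbed into Proposition~\ref{prop:pbp:level1} and into the choice of the stratum $k$ that ensures $\Hb \subseteq \Ds$; the argument above just consists in observing that $\sim_\Ds$-classes are saturated by languages of $\Ds$, hence by any member of any $\Hb_r$.
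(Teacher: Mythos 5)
Your proof is correct and follows essentially the same approach as the paper: pick $H\in\Hb_r$ containing $v$, observe $V\subseteq H$ because $H\in\Ds$ is a union of $\sim_\Ds$-classes, and combine with Proposition~\ref{prop:pbp:level1}. The only (immaterial) difference is the order of the last two steps: you first place $(r,\gamma(V))$ in $\gpocopti$ using that imprints are downward closed by definition, and then apply Proposition~\ref{prop:pbp:level1}; the paper first places $(r,\gamma(H))$ in $\Ts$ via the proposition and then invokes the downset closure of $\Ts$ to conclude $(r,\gamma(V))\in\Ts$.
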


\begin{proof}
	By definition, $r = \rho(v) = \beta(v)$. Therefore, $v \in \beta\inv(r)$ and since $\Hb_r$ is a cover of $\beta\inv(r)$ by definition, we have $H \in \Hb_r$ such that $v \in H$. Moreover, $\Hb_r$ is an optimal \pol{\Cs}-cover of $\beta\inv(r)$ for $\gamma$ by definition. Hence,
	\[
	(r,\gamma(H)) \in \gpocopti
	\] 
	By Proposition~\ref{prop:pbp:level1}, this yields $(r,\gamma(H)) \in \Ts$. Finally, we have $H \in \Ds$ by definition. Therefore, since $V$ is $\sim_\Ds$-class such that $H \cap V \neq \emptyset$ ($v$ is in the intersection), we get $V \subseteq H$. It follows that $\gamma(V) \subseteq \gamma(H)$ since $\gamma$ is a \mratm. Finally, we know that \Ts is closed under downset since $(S,\Ts)$ is \pbpol{\Cs}-saturated which yields $(r,\gamma(V)) \in \Ts$, finishing the proof of Lemma~\ref{lem:pbp:level1}.
\end{proof}

\subsection{\texorpdfstring{Step~3: \pbpol{\Cs}-argument}{Step~3: PBPol(C)-argument}}

We now come back to our main objective: proving the inclusion $\pbpopti \subseteq S$. As expected, we shall rely heavily on the objects introduced in the previous steps: the \mratm $\gamma$ and the finite \vari $\Ds \subseteq \bpol{\Cs}$. Our approach is based on the following proposition.

\begin{proposition} \label{prop:pbp:thegoal}
	For any $s \in M$, there exists a \pbpol{\Cs}-cover $\Kb_s$ of $\alpha\inv(s)$ such that for any $K \in \Kb_s$, we have $(s,\rho(K)) \in S$
\end{proposition}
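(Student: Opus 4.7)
The plan is to establish a strengthened claim by induction on the pair $(h,m)$: for every $s \in M$ and every $h,m \in \nat$, there exists a \pbpol{\Cs}-cover $\Kb(s,h,m)$ of $F^\alpha(s,h,m)$ such that $(s,\rho(K)) \in S$ for all $K \in \Kb(s,h,m)$. Proposition~\ref{prop:pbp:thegoal} will then follow by taking $h = m = 3|M|-1$ and invoking Simon's factorization forest theorem (Theorem~\ref{thm:facto}) to identify $\alpha^{-1}(s)$ with $F^\alpha(s,3|M|-1,3|M|-1)$. The base case $h = 0$ is immediate: the language $F^\alpha(s,0,m)$ is a finite set (empty, $\{\varepsilon\}$, or a set of letters), so it can be covered by singletons, which lie in \pbpol{\Cs} by Lemma~\ref{lem:pbpfinite} and satisfy $(s,\rho(w)) \in \ptriv{\alpha,\rho} \subseteq S$.

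For the inductive step I decompose $F^\alpha(s,h,m)$ via Fact~\ref{fct:factounion} into contributions from binary roots, idempotent roots, and leaves. The binary case is treated exactly as in Lemma~\ref{lem:pbinary}, concatenating inductively built covers and appealing to multiplicative closure of $S$ together with closure of \pbpol{\Cs} under concatenation (Theorem~\ref{thm:pclos}). The heart of the proof is the idempotent case, where $s = e$ is idempotent and we must cover $F^\alpha_I(e,h,m)$. Here I feed Lemma~\ref{lem:inducidem} with the inductively supplied \pbpol{\Cs}-cover $\Ub$ of $F^\alpha(e,h-1,m-1)$ and a carefully chosen language $V \supseteq \alpha^{-1}(e)$, producing a cover whose elements are concatenations $K_1 \cdots K_n$ with each $K_i$ either a member of $\Ub$ or a ``middle-block'' language of the form $U_1 \cdots U_\ell\,V\,U'_1 \cdots U'_{\ell'}$, subject to $\rho(U_1 \cdots U_\ell) = \rho(U'_1 \cdots U'_{\ell'}) = f$ for an idempotent $f \in R$. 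By closure of $S$ under multiplication and the inductive property of $\Ub$, it suffices to certify $(e, f \cdot \rho(V) \cdot f) \in S$ for every middle-block factor.

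This certification is the main obstacle, and the plan is to deploy the \pbpol{\Cs}-closure axiom of \pbpol{\Cs}-saturation. I would refine $V$ according to the $\sim_\Ds$-classes $V_1,\dots,V_n$ that meet $\alpha^{-1}(e)$; for each $V_j$, fixing a representative $v_j \in V_j \cap \alpha^{-1}(e)$ with $r_j = \rho(v_j)$, Lemma~\ref{lem:pbp:level1} furnishes $(r_j, \gamma(V_j)) \in \Ts$, and if I can establish $(e,f) \in \gamma(V_j)$ then \pbpol{\Cs}-closure delivers $(e, f \cdot (r_j + \rho(\varepsilon)) \cdot f) \in S$. The hardest step is verifying $(e,f) \in \gamma(V_j)$: by definition this requires producing a good language $G$ with $(e,f) \in \tau(V_j \cap G)$, which I would construct from the cumulative inductive covers at idempotent height $m-1$, leveraging Proposition~\ref{prop:idheight} to obtain closure under infixes and the outer induction hypothesis to guarantee $\tau(G) \subseteq S$. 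The most delicate bookkeeping—and the point at which I expect the majority of the technical effort to be concentrated—is reconciling the sum $\rho(V)$ produced by the middle block with the single term $r_j + \rho(\varepsilon)$ delivered by \pbpol{\Cs}-closure; this should be achievable by arranging $\Ds$ fine enough that $\sim_\Ds$ essentially refines the kernel of $\rho$ on $\alpha^{-1}(e)$, so that idempotency of $R$'s addition collapses $\rho(V_j)$ to a single value compatible with $r_j + \rho(\varepsilon)$.
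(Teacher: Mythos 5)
Your overall architecture---induction on a pair of heights, decomposition via Fact~\ref{fct:factounion}, Lemma~\ref{lem:inducidem} for idempotent roots, good languages, \pbpol{\Cs}-closure for the middle blocks---matches the paper. But there is a genuine gap: you propose to run the induction over $F^\alpha(s,h,m)$, i.e.\ over $\alpha$-factorization forests, whereas the paper introduces the auxiliary morphism $\eta\colon A^*\to P$ with $P = M\times (A^*/\!\sim_\Ds)$ and runs the induction over $\eta$-factorization forests. This is not a cosmetic difference. The idempotent case with $\eta$ forces the node type $p=(e,V)$ to be idempotent, which simultaneously makes $e$ idempotent in $M$ \emph{and} $V$ idempotent in $A^*/\!\sim_\Ds$; consequently every piece $w_i$ of the factorization lies in $\eta^{-1}(p)\subseteq V$, and every finite product $w_1\cdots w_n$ again has $\sim_\Ds$-type $V$. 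This control over the $\sim_\Ds$-type is exactly what lets the paper (Lemma~\ref{lem:pbp:thefinalone}) certify $(e,f)\in\tau(V\cap G)$, because $(F^\eta(p,h-1,m-1))^n\subseteq V\cap G$. With $\alpha$-factorization forests the pieces lie only in $\alpha^{-1}(e)$, which can straddle many $\sim_\Ds$-classes, and neither the languages of the cover $\Ub$ nor the products $w_1\cdots w_n$ have a predictable $\sim_\Ds$-type. Your plan to ``refine $V$ according to the $\sim_\Ds$-classes that meet $\alpha^{-1}(e)$'' does not repair this: Lemma~\ref{lem:inducidem} asks for a single language $V\supseteq\alpha^{-1}(e)$ for the middle block, and even if you take $V$ to be the union $\bigcup_j V_j$ of relevant classes, you then need $(e,f)\in\gamma(V_j)$ for \emph{every} $j$ contributing to $\rho(V)$, which the good-language construction cannot deliver when the factorization pieces are not themselves $\sim_\Ds$-typed.

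A second misunderstanding concerns how the sum $\rho(V)$ is handled. You conjecture that $\Ds$ should be chosen fine enough that $\sim_\Ds$ refines the kernel of $\rho$ on $\alpha^{-1}(e)$, so that $\rho(V_j)$ collapses to a single value. That is not what the paper does, and it is not what $\Ds$ is for: $\Ds$ is fixed in Step~2 from the optimal \pol{\Cs}-covers $\Hb_r$ of $\beta^{-1}(r)$ for the auxiliary \mratm $\gamma$, and there is no reason it should refine $\ker\beta$. The reconciliation you worry about is done much more directly: niceness of $\rho$ gives $\rho(V) = \rho(v_1)+\cdots+\rho(v_\ell)$ with each $v_i\in V$, and then one applies \pbpol{\Cs}-closure once per $i$ to get $(e,\,f\cdot(\rho(v_i)+\rho(\varepsilon))\cdot f)\in S$, multiplies these $\ell$ pairs using idempotency of $e$ and multiplicative closure of $S$, distributes, uses $f\rho(\varepsilon)=\rho(\varepsilon)f=f$, and closes under downset. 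No collapse of $\rho(V_j)$ to a singleton is needed or achievable. Finally, one small word of caution: your ``induction on the pair $(h,m)$'' must take $m$ as the primary parameter, because the good language $G$ used for Lemma~\ref{lem:pbp:thefinalone} has smaller idempotent height but larger (bounded by $h+n$) height; a lexicographic induction with $h$ first would not terminate.
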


The proof of Proposition~\ref{prop:pbp:thegoal} is constructive: we explain how to build the \pbpol{\Cs}-covers $\Kb_s$. Once we have them in hand, by definition of \pbpol{\Cs}-optimal $\rho$-\imprints, it will follow that for any $s \in M$, we have,
\[
\opti{\pbpol{\Cs}}{\alpha\inv(s),\rho} \subseteq \prin{\rho}{\Kb_s} \subseteq \{r \mid (s,r) \in S\}
\]
Thus, since $\pbpopti = \{(s,r)\in M \times R \mid r \in \opti{\pbpol{\Cs}}{\alpha\inv(s),\rho}\}$ by definition, it will follow that $\pbpopti \subseteq S$ as desired, finishing the proof.

\medskip

We now focus on proving Proposition~\ref{prop:pbp:thegoal}. The argument is based on the factorization forest theorem of Simon. We refer the reader to Section~\ref{sec:tools} for details on this result. In particular, this is where we need the non-standard notion of \emph{idempotent height} which we introduced together with factorization forests.

\medskip

Consider the finite \vari $\Ds \subseteq \bpol{\Cs}$ that we defined in Step~2. Since \Ds is closed under quotients, it follows from Lemma~\ref{lem:canoquo} that $\sim_\Ds$ (the associated canonical equivalence on $A^*$) is a congruence of finite index for word concatenation. Therefore the quotient set ${A^*}/{\sim_\Ds}$ is a finite monoid and the map $w \mapsto \typ{w}{\Ds}$ (which associates its $\sim_\Ds$-class to any word $w \in A^*$) is a morphism. Consequently, the Cartesian product $P = M \times ({A^*}/{\sim_\Ds})$ is a finite monoid for the componentwise multiplication and the following map $\eta: A^* \to P$ is a morphism:
\[
\begin{array}{llll}
\eta: & A^* & \to     & P                        \\
& w   & \mapsto & (\alpha(w),\typ{w}{\Ds})
\end{array}
\]
We shall work with $\eta$-factorization forests. We are now ready to start the proof of Proposition~\ref{prop:pbp:thegoal} and construct the \pbpol{\Cs}-covers $\Kb_s$. We use an induction which we state in the next proposition. Recall that for any $p \in P$ and any $h,m \in \nat$, we write,
\[
F^\eta(p,h,m) \subseteq \eta^{-1}(p) \subseteq A^*
\]
for the language of all words in $\eta^{-1}(p)$ which admit an $\eta$-factorization forest of height at most $h$ and idempotent height at most $m$.

\begin{proposition} \label{prop:pbp:level2}
	Let $h,m \in\nat$ and $p = (s,V)\in P$ (i.e. $s\in M$ and $V$ a $\sim_{\Ds}$-class). There exists a \pbpol{\Cs}-cover \Kb of $F^\eta(p,h,m)$ such that for any $K \in \Kb$, we have $(s,\rho(K)) \in S$.
\end{proposition}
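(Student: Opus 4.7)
The plan is to prove Proposition~\ref{prop:pbp:level2} by induction on $h$, following the overall blueprint of Proposition~\ref{prop:compolc} but with the richer morphism $\eta$ (so that the $\sim_\Ds$-type is tracked at every node of the factorisation forest) and with the \pbpol{\Cs}-closure operation replacing the \pol{\Cs}-closure in the idempotent case. The base case $h = 0$ is easy: words in $F^\eta(p,0,m)$ are either $\varepsilon$ or single letters, finitely many in number, so they can be covered by singletons, which lie in \pbpol{\Cs} by Lemma~\ref{lem:pbpfinite}; the required $(s,\rho(K)) \in S$ follows directly from $\ptriv{\alpha,\rho} \subseteq S$.

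For the inductive step $h \geq 1$, I will decompose $F^\eta(p,h,m) = F^\eta_B(p,h,m) \cup F^\eta_I(p,h,m) \cup F^\eta(p,0,0)$ via Fact~\ref{fct:factounion}. The third piece is handled by the base case, and the binary piece $F^\eta_B(p,h,m)$ is treated exactly as in Lemma~\ref{lem:pbinary}: enumerate decompositions $p = p_1 p_2$ in $P$, invoke induction at height $h-1$ on the components, form the concatenated cover, and use closure of $S$ under multiplication. The idempotent piece $F^\eta_I(p,h,m)$, for which $p = (e,\mathcal{V})$ must be an idempotent of $P$ (so that $e$ is idempotent in $M$ and $\mathcal{V}$ is an idempotent $\sim_\Ds$-class), is the heart of the argument. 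I will apply Lemma~\ref{lem:inducidem} to $\eta$ with the idempotent $p$, with $\Ub$ the \pbpol{\Cs}-cover produced by induction at $(h-1,m-1)$, and with the parameter language $V$ chosen to be $\mathcal{V}$ itself (which contains $\eta^{-1}(p)$ and lies in \pbpol{\Cs} by Fact~\ref{fct:pbp:cinpbp}). This yields cover elements $K = K_1 \cdots K_n$ in which each $K_i$ is either a member of $\Ub$ or of the form $U_1 \cdots U_\ell \mathcal{V} U'_1 \cdots U'_{\ell'}$ with $\rho(U_1 \cdots U_\ell) = \rho(U'_1 \cdots U'_{\ell'}) = f$ an idempotent of $R$; every such language is in \pbpol{\Cs}, and verifying the \imprint condition on the full product reduces, by multiplication in $S$ and the idempotence of $e$, to checking $(e,\rho(K_i)) \in S$ for each piece. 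The pieces in $\Ub$ satisfy this by induction.

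The hardest step, and where I expect the main obstacle, is the form 2 pieces, for which $\rho(K_i) = f \cdot \rho(\mathcal{V}) \cdot f$. The plan is as follows: from induction and multiplication, $(e,f) \in S$ is an idempotent of $M \times R$; Lemma~\ref{lem:pbp:level1} supplies $(\rho(v),\gamma(\mathcal{V})) \in \Ts$ for each $v \in \mathcal{V}$; the crucial step is to exhibit $(e,f) \in \gamma(\mathcal{V})$, which amounts to constructing an infix-closed \emph{good} language $G$, assembled from the languages $U_j$ generated by the induction, such that $f$ appears as the $\rho$-image of some element in an optimal \pbpol{\Cs}-cover of $\mathcal{V} \cap G \cap \alpha^{-1}(e)$. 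Once $(e,f) \in \gamma(\mathcal{V})$ is available, \pbpol{\Cs}-closure delivers $(e,f(r+\rho(\varepsilon))f) \in S$ for each $r \in \beta(\mathcal{V})$, and the niceness of $\rho$ — which gives $\rho(\mathcal{V}) = \sum_{r \in \beta(\mathcal{V})} r$ — combined with the idempotence of $f$ and downset closure in $S$ should yield the required $(e, f \cdot \rho(\mathcal{V}) \cdot f) \in S$. The two main technical issues will be: first, the inductive statement must be strengthened to track, alongside the cover, an infix-closed good language whose $\tau$-image lies in $S$, so that $(e,f) \in \gamma(\mathcal{V})$ can be witnessed at the right moment; and second, because $S$ is not closed under addition, the passage from the individual elements $f(r+\rho(\varepsilon))f$ to their sum $f\rho(\mathcal{V})f$ cannot be performed inside $S$ and must instead be absorbed into the construction of the cover itself — presumably by refining Lemma~\ref{lem:inducidem} so that the middle $\mathcal{V}$ in each form 2 cover element is replaced by a finite family of simpler languages whose $\rho$-images collectively realise all of $\beta(\mathcal{V})$, each refined cover element then receiving its \imprint membership from a single application of \pbpol{\Cs}-closure.
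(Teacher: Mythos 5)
Your overall blueprint matches the paper: induction on $h$ (and implicitly $m$), the base case via singletons and Lemma~\ref{lem:pbpfinite}, the split through Fact~\ref{fct:factounion}, and the idempotent case handled by Lemma~\ref{lem:inducidem} applied to $\eta$ with $V$ the $\sim_\Ds$-class and the \pbpol{\Cs}-closure rule. But there are two substantive gaps in the idempotent case, precisely where you flag uncertainty.

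First, taking $\Ub$ to be the cover produced by induction at $(h-1,m-1)$ blocks the key step $(e,f) \in \gamma(V)$. To witness this, one must exhibit a good $G$ with $(e,f) \in \tau(V \cap G)$, that is, with $f \in \opti{\pbpol{\Cs}}{V \cap G \cap \alpha\inv(e),\rho}$. Membership in an \emph{optimal} \imprint is a universal condition over all \pbpol{\Cs}-covers, so pointing to a single cover ``assembled from the $U_j$'' with $\rho$-image $f$ proves nothing. What is needed is $\rho(U_i) \in \opti{\pbpol{\Cs}}{F^\eta(p,h-1,m-1),\rho}$, which requires $\Ub$ itself to be an \emph{optimal} \pbpol{\Cs}-cover of $F^\eta(p,h-1,m-1)$; one then gets $(e,f) \in \tau((F^\eta(p,h-1,m-1))^n)$ by multiplicativity of $\tau$ and chooses $G$ to be the set of infixes of words in $F^\eta(p,h+n-2,m-1)$ so that $(F^\eta(p,h-1,m-1))^n \subseteq V \cap G$. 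The paper therefore takes $\Ub$ to be an arbitrary optimal \pbpol{\Cs}-cover, \emph{not} the inductive one, and recovers the invariant $(e,\rho(U)) \in S$ indirectly: compare $\Ub$ against the inductive cover $\Ub'$, use optimality to get $\rho(U) \leq \rho(U')$ for some $U' \in \Ub'$, and close under downset. Showing $G$ is good requires no strengthened induction hypothesis: Proposition~\ref{prop:idheight} shows infixes of $F^\eta(p,h+n-2,m-1)$ admit forests of height $h+n$ and idempotent height $m-1$, and induction on $m$ (the dominant parameter) applies.

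Second, your worry about passing from $f(r_i + \rho(\varepsilon))f$ to $f\rho(V)f$ is misplaced, and the proposed fix (refine Lemma~\ref{lem:inducidem} so the middle $V$ becomes a finite family) is not needed. One \emph{multiplies} the $\ell$ pairs $(e, f(r_i+\rho(\varepsilon))f)$ (valid since $e$ is idempotent), obtaining $(e, \prod_i f(r_i+\rho(\varepsilon))f) \in S$. Because $f = \rho(U_1\cdots U_n)$, we have $\rho(\varepsilon)\cdot f = f\cdot\rho(\varepsilon) = f$; distributing the product over the sums and picking, for each $i$, the term with $r_i$ in position $i$ and $\rho(\varepsilon)$ elsewhere yields $f r_i f$, so $f\cdot(\sum_i r_i)\cdot f = f\rho(V)f$ is $\leq$ the full product. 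Downset closure of $S$ then finishes. You should drop the proposed refinement of Lemma~\ref{lem:inducidem}; the sum is absorbed inside $S$ by this multiplication/distributivity trick, which is exactly why the closure rule is stated with $r + \rho(\varepsilon)$ rather than $r$ alone.
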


Let us first use this result to prove Proposition~\ref{prop:pbp:thegoal} and finish the completeness argument. Consider $s \in M$. Our objective is to build a \pbpol{\Cs}-cover $\Kb_s$ of $\alpha\inv(s)$ such that,
\[
(s,\rho(K)) \in S \quad \text{for all $K \in \Kb_s$}
\]
Let $h = m = 3|P| - 1$. We obtain the factorization forest theorem (i.e. Theorem~\ref{thm:facto}) that for any $\sim_{\Ds}$-class $V$, we have  $\eta^{-1}((s,V)) = F^\eta((s,V),h,m)$. Thus, for any $\sim_{\Ds}$-class $V$, Proposition~\ref{prop:pbp:level2} yields a \pbpol{\Cs}-cover $\Kb_{s,V}$ of $\eta^{-1}((s,V))$ such that for any $K \in \Kb_{s,V}$, we have $(s,\rho(K)) \in S$. Observe that by definition of $\eta$, it is immediate that,
\[
\alpha\inv(s) = \bigcup_{V \in {A^*}/{\sim_\Ds}} \eta\inv((s,V))
\]
Therefore, it suffices to define $\Kb_s$ as the union of all sets $\Kb_{s,V}$ where $V$ is some $\sim_{\Ds}$-class. It is immediate by definition that $\Kb_s$ is a \pbpol{\Cs}-cover of $\alpha\inv(s)$ and that $(s,\rho(K)) \in S$ for all $K \in \Kb_s$. This concludes the proof of Proposition~\ref{prop:pbp:thegoal}.

\medskip

It now remains to prove Proposition~\ref{prop:pbp:level2}. Let $h,m \in \nat$ and $p = (s,V) \in P$. Our objective is to build a \pbpol{\Cs}-cover \Kb of $F^\eta(p,h,m)$ which satisfies the following property:
\begin{equation} \label{eq:pbp:goalpbpolc}
\text{For all $K \in \Kb$,} \quad (s,\rho(K)) \in S
\end{equation}
We proceed by induction on \emph{two} parameters which are listed below by order of importance:
\begin{enumerate}
	\item The idempotent height $m$.
	\item The height $h$.
\end{enumerate}
We start with the induction base: $h = 0$. Let us point out that for this case, the argument is actually slightly simpler than what we did for the corresponding case in the \pol{\Cs} proof. This is because \pbpol{\Cs} always contains all finite languages (this may or may not be the case for \pol{\Cs} depending on the finite \vari \Cs).

\medskip
\noindent
{\bf Base case: Leaves.} Assume that $h = 0$. All words in $F^\eta(p,0,m)$ are either empty or made of a single letter $a \in A$. In particular $F^\eta(p,0,m)$ is a \emph{finite language}. Thus, the following set \Kb is finite:
\[
\Kb = \{\{w\} \mid w \in F^\eta(p,0,m)\}
\]
Clearly, \Kb is a finite cover of $F^\eta(p,0,m)$. Moreover, it is also a \pbpol{\Cs}-cover since \pbpol{\Cs} contains all finite languages (see Lemma~\ref{lem:pbpfinite}). It remains to show that~\eqref{eq:pbp:goalpbpolc} is satisfied.

Let $K \in \Kb$. We show that $(s,\rho(K)) \in S$. Since $K \in \Kb$, we know that $K = \{w\}$ for some $w \in F^\eta(p,0,m)$. In particular, we have $\eta(w) = p$ which means that $\alpha(w) = s$ by definition of $\eta$ since $p = (s,V)$. Thus, we have,
\[
(s,\rho(K)) = (\alpha(w),\rho(w)) \in \ptriv{\alpha,\rho}
\]
This concludes the proof. Indeed, we know that $S$ is \pbpol{\Cs}-saturated which means that $\ptriv{\alpha,\rho} \subseteq S$ by definition.

\medskip
\noindent
{\bf Inductive case.} We now assume that $h \geq 1$. Recall that our objective is to build a \pbpol{\Cs}-cover \Kb of $F^\eta(p,h,m)$ which satisfies~\eqref{eq:pbp:goalpbpolc}. We decompose $F^\eta(p,h,m)$ as the union of three languages that we cover independently.

Recall that $F^\eta_B(p,h,m)$ (resp. $F^\eta_I(p,h,m)$) denotes the language of all nonempty words in $\eta^{-1}(p)$ which admit an $\eta$-factorization forest of height at most $h$, of idempotent height at most $m$ and whose root is a \emph{binary node} (resp. \emph{idempotent node}). The construction is based on the two following lemmas.

\begin{lemma} \label{lem:pbp:binary}
	There exists a \pbpol{\Cs}-cover $\Kb_B$ of $F^\eta_B(p,h,m)$ such that for all $K \in \Kb_B$, we have $(s,\rho(K)) \in S$.
\end{lemma}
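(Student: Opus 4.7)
The plan is to mirror the construction used for Lemma~\ref{lem:pbinary} in the \pol{\Cs} proof, since the binary case only needs to exploit the closure of $S$ under multiplication, which comes for free from the assumption that $(S,\Ts)$ is \pbpol{\Cs}-saturated. We are in the inductive step of Proposition~\ref{prop:pbp:level2} (the primary induction parameter is the idempotent height $m$, the secondary one is the height $h$, and $h \geq 1$). Hence the secondary induction hypothesis provides, for every $p' = (t,W) \in P$, a \pbpol{\Cs}-cover $\Ub_{p'}$ of $F^\eta(p',h-1,m)$ such that $(t,\rho(U)) \in S$ for every $U \in \Ub_{p'}$.

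Given these covers, I would define
\[
\Kb_B = \{\, U_1 U_2 \mid \exists\, p_1,p_2 \in P \text{ with } p_1 p_2 = p,\ U_1 \in \Ub_{p_1},\ U_2 \in \Ub_{p_2}\,\}.
\]
Writing $p_i = (s_i, V_i)$, note that $p_1 p_2 = p = (s,V)$ forces $s_1 s_2 = s$ in $M$. Each language $U_1 U_2$ belongs to \pbpol{\Cs} by Theorem~\ref{thm:pclos} (which says \pbpol{\Cs} is closed under concatenation), so $\Kb_B$ is a set of \pbpol{\Cs} languages. To see it covers $F^\eta_B(p,h,m)$, take $w$ in this language: its $\eta$-factorization forest has a binary root, so $w = w_1 w_2$ with each $w_i$ admitting an $\eta$-factorization forest of height at most $h-1$ and idempotent height at most $m$; thus $w_i \in F^\eta(\eta(w_i), h-1, m)$, and we find $U_i \in \Ub_{\eta(w_i)}$ containing $w_i$. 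Since $\eta(w_1)\eta(w_2) = \eta(w) = p$, the product $U_1 U_2$ belongs to $\Kb_B$ and contains $w$.

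Finally, for any $K = U_1 U_2 \in \Kb_B$, the induction hypothesis gives $(s_1,\rho(U_1)) \in S$ and $(s_2,\rho(U_2)) \in S$. Because $(S,\Ts)$ is \pbpol{\Cs}-saturated, $S$ is closed under multiplication, so
\[
(s,\rho(K)) = (s_1 s_2,\, \rho(U_1)\rho(U_2)) \in S,
\]
as required. No obstacle arises in this lemma: the binary case is the easy counterpart of Lemma~\ref{lem:pbinary}, and the genuine difficulty of the overall argument will be concentrated in the companion lemma handling the idempotent case, where one has to invoke Lemma~\ref{lem:inducidem} together with the \pbpol{\Cs}-closure operation and the inductive information on~$\Ts$ supplied by Lemma~\ref{lem:pbp:level1} (that is where the drop in the idempotent height $m$, rather than in $h$, is used to justify the recursion).
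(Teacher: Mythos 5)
Your proposal matches the paper's proof essentially verbatim: the paper also invokes the secondary induction on $h$ to obtain covers $\Ub_q$ of $F^\eta(q,h-1,m)$ for each $q\in P$, forms $\Kb_B$ as the set of products $K_1K_2$ over decompositions $p=q_1q_2$, and concludes using closure of $S$ under multiplication. The argument is correct.
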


\begin{lemma} \label{lem:pbp:idem}
	There exists a \pbpol{\Cs}-cover $\Kb_I$ of $F^\eta_I(p,h,m)$ such that for all $K \in \Kb_I$, we have $(s,\rho(K)) \in S$.
\end{lemma}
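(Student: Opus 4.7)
The plan is to mirror the structure of Lemma~\ref{lem:pidem} (the analogous result for \pol{\Cs}), substituting \pbpol{\Cs}-closure for \pol{\Cs}-closure at the crucial step. Without loss of generality $p$ is idempotent in $P$; otherwise $F^\eta_I(p,h,m) = \emptyset$ and $\Kb_I := \emptyset$ works. Write $p = (s, V_p)$: then $s$ is idempotent in $M$ and $V_p$ is an idempotent $\sim_\Ds$-class. The nested induction of Proposition~\ref{prop:pbp:level2} (outer on $m$, inner on $h$) supplies a \pbpol{\Cs}-cover $\Ub$ of $F^\eta(p, h-1, m-1)$ such that $(s, \rho(U)) \in S$ for every $U \in \Ub$.

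Next I apply Lemma~\ref{lem:inducidem} to the morphism $\eta$, the idempotent $p$, the cover $\Ub$, and the choice $V := V_p$. This is legitimate since $\eta^{-1}(p) \subseteq V_p$, and $V_p \in \pbpol{\Cs}$ by Fact~\ref{fct:pbp:cinpbp}. The resulting cover $\Kb_I$ of $F^\eta_I(p,h,m)$ consists of concatenations of factors, where each factor is either a language of $\Ub$ or of the form $U_1 \cdots U_\ell \, V_p \, U'_1 \cdots U'_{\ell'}$ with $\rho(U_1 \cdots U_\ell) = \rho(U'_1 \cdots U'_{\ell'}) = f$ for some idempotent $f \in R$. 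Closure of \pbpol{\Cs} under union and concatenation (Theorem~\ref{thm:pclos}) guarantees that $\Kb_I$ is indeed a \pbpol{\Cs}-cover. To conclude that $(s, \rho(K)) \in S$ for every $K \in \Kb_I$, closure of $S$ under multiplication (with $s^2 = s$) reduces the claim to each factor; factors of the first type are handled directly by induction, and for a factor of the second type, multiplying the inductive conclusions $(s, \rho(U_j)) \in S$ across $j$ gives $(s, f) \in S$, an idempotent in $S$. What remains is the key identity $(s, f \cdot \rho(V_p) \cdot f) \in S$.

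The hard part will be this identity. My strategy uses \nice-ness of $\rho$ to expand $\rho(V_p) = \sum_{r \in \beta(V_p)} r$ as a finite sum, where $\beta(V_p) = \{\rho(v) : v \in V_p\} \subseteq R_{A^*}$. For each $r \in \beta(V_p)$, applying Lemma~\ref{lem:pbp:level1} to a witness $v \in V_p$ with $\rho(v) = r$ yields $(r, \gamma(V_p)) \in \Ts$. I then verify that $(s, f) \in \gamma(V_p)$: this is where the good-language machinery enters, using $(s, f) \in S$ together with the existence of a good language $G$ for which $(s, f) \in \tau(V_p \cap G)$, extracted from the already-constructed inductive covers. \pbpol{\Cs}-closure applied with $T = \gamma(V_p)$ then delivers $(s, f \cdot (r + \rho(\varepsilon)) \cdot f) \in S$ for every $r \in \beta(V_p)$. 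To assemble these into the required identity, I multiply them together in $S$, repeatedly exploiting the idempotences $s^2 = s$ and $f^2 = f$ to simplify the resulting product; the expanded sum dominates $f \cdot \rho(V_p) \cdot f$, whereupon downset closure finishes the argument. Establishing $(s, f) \in \gamma(V_p)$ and carrying out the multiplicative combination so that the additive sum $\sum_r f r f$ is genuinely dominated are the most delicate points of the proof.
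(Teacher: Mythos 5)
Your proposal follows the overall skeleton of the paper's argument, but it contains a genuine gap rooted in the choice of $\Ub$. You take $\Ub$ directly from the induction of Proposition~\ref{prop:pbp:level2}, so that $(s,\rho(U))\in S$ for all $U\in\Ub$ is immediate. The paper instead deliberately takes $\Ub$ to be an arbitrary \emph{optimal} \pbpol{\Cs}-cover of $F^\eta(p,h-1,m-1)$, and then separately recovers $(e,\rho(U))\in S$ by invoking induction to build a second cover $\Ub'$ with that property and transferring it via the optimality inclusion $\prin{\rho}{\Ub}\subseteq\prin{\rho}{\Ub'}$. That detour is not stylistic: the optimality of $\Ub$ is precisely what you need in the step you yourself flag as the most delicate, namely establishing $(s,f)\in\gamma(V_p)$.

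Concretely, to show $(s,f)\in\gamma(V_p)$ you must produce a good language $G$ with $(s,f)\in\tau(V_p\cap G)$, i.e.\ with $f\in\opti{\pbpol{\Cs}}{V_p\cap G\cap\alpha\inv(s),\rho}$. The paper derives this from $\rho(U_i)\in\opti{\pbpol{\Cs}}{F^\eta(p,h-1,m-1),\rho}$ for each $i$ --- a membership that holds \emph{because} $\Ub$ is an optimal \pbpol{\Cs}-cover of that language --- and then multiplies through $\tau$ to land in $\tau\bigl((F^\eta(p,h-1,m-1))^n\bigr)\subseteq\tau(V_p\cap G)$ for the good language $G$ consisting of all infixes of words in $F^\eta(p,h+n-2,m-1)$ (whose goodness is where the induction on $m$ is actually spent). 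If $\Ub$ is just some cover produced by induction, there is no reason for $\rho(U_i)$ to lie in the optimal $\pbpol{\Cs}$-imprint of $F^\eta(p,h-1,m-1)$, and your phrase ``extracted from the already-constructed inductive covers'' does not supply a replacement mechanism. You also do not specify the good language $G$, nor observe that the element $(s,f)\in S$ (which you do have) is irrelevant to membership in $\gamma(V_p)$, which is defined via $\tau$ and not via $S$ directly. So the proposal needs to be repaired by switching to an optimal $\Ub$, proving a version of Lemma~\ref{lem:thesimpleone} to recover $(s,\rho(U))\in S$, and giving the explicit infix-closed $G$ together with the induction-on-$m$ argument that $\tau(G)\subseteq S$.
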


Before we show these two results, let us use them to finish the proof. Let $\Kb_B$ and $\Kb_I$ be as defined in the lemmas. By Fact~\ref{fct:factounion}, $F^\eta(p,h,m)$ is equal to the following union:
\[
F^\eta(p,h,m) = F^\eta_B(p,h,m) \cup F^\eta_I(p,h,m) \cup F^\eta(p,0,0)
\]
Moreover, since $h \geq 1$, induction on $h$ in Proposition~\ref{prop:pbp:level2} yields a \pbpol{\Cs}-cover $\Kb'$ of $F^\eta(p,0,0)$ such that for all $K \in \Kb'$, we have $(s,\rho(K)) \in S$. Thus, it suffices to define $\Kb = \Kb_B \cup \Kb_I \cup \Kb'$. By definition, we know that \Kb is a \pbpol{\Cs}-cover of $F^\eta(p,h,m)$ which satisfies~\eqref{eq:pbp:goalpbpolc}.

It remains to prove the two lemmas. We start with Lemma~\ref{lem:pbp:binary} which is simpler. The argument is essentially identical to the one we used when proving the corresponding result for \pol{\Cs} in Section~\ref{sec:polc}.

\subsection{Proof of Lemma~\ref{lem:pbp:binary}}

For all $q \in P$, consider the language $F^\eta(q,h-1,m)$. Using induction on our second parameter (the height $h$), we get that for all $q = (t,W) \in P$ (i.e. $t \in M$ and $W$ is a $\sim_{\Ds}$-class), there exists a \pbpol{\Cs}-cover $\Ub_q$ of $F^\eta(q,h-1,m)$ such that for any $U \in \Ub_q$, we have $(t,\rho(U)) \in S$. Consider the following set:
\[
\Kb_B = \{K_1K_2 \mid \text{there exists $q_1,q_2 \in P$ such that $p = q_1q_2$, $K_1 \in \Ub_{q_1}$ and $K_2 \in \Ub_{q_2}$}\}
\]
One may now verify that $\Kb_B$ is a \pbpol{\Cs}-cover of $F_B^\eta(p,h,m)$ and that $(s,\rho(K)) \in S$ for $K \in \Kb_B$. The proof is left to the reader as it is identical to that of Lemma~\ref{lem:pbinary} in Section~\ref{sec:polc}. In particular, when proving that $(s,\rho(K)) \in S$ for $K \in \Kb_B$, one uses the fact that $S$ is \pbpol{\Cs}-saturated (specifically closure under multiplication).

\subsection{Proof of Lemma~\ref{lem:pbp:idem}}

We turn to Lemma~\ref{lem:pbp:idem}. The proof is more involved. In particular, let us point out that this is where we finally use the preliminary results that we obtained in Steps~1 and~2 (i.e. the \mratm $\gamma$ and Lemma~\ref{lem:pbp:level1}).

\medskip

Our objective is to construct a \pbpol{\Cs}-cover $\Kb_I$ of $F_I^\eta(p,h,m)$ such that for all $K \in \Kb_I$, we have $(s,\rho(K)) \in S$. Note that we may assume without loss of generality that $m \geq 1$ and $p$ is an idempotent of $P$. Indeed, otherwise we have $F_I^\eta(p,h,m) = \emptyset$ and we may simply choose $\Kb_I = \emptyset$. Note that since we have $p = (s,V)$, this means that $s$ is an idempotent of $M$ and $V$ is an idempotent of ${A^*}/{\sim_{\Ds}}$.

In the sequel, we shall mostly use the hypothesis that $s \in M$ is idempotent. Hence, we rename it $e$ to underline this hypothesis: we want to build a \pbpol{\Cs}-cover $\Kb_I$ of $F_I^\eta(p,h,m)$ (with $p = (e,V)$) such that for all $K \in \Kb_I$, we have $(e,\rho(K)) \in S$.

This proof is where the argument departs from what we did for \pol{\Cs} in Section~\ref{sec:polc}. On one hand, the construction itself remains similar: it is still based on Lemma~\ref{lem:inducidem}. However, there is a subtle difference and more importantly, showing that $(e,\rho(K)) \in S$ for all $K \in \Kb_I$ is much more involved (this is where we shall need to use induction on the idempotent height $m$ and the work we did in Steps~1 and~2).

Recall that applying Lemma~\ref{lem:inducidem} requires a cover \Ub of $F^\eta(p,h-1,m-1)$ and a language containing $\eta^{-1}(p)$. We first define these two objects. We define \Ub as an optimal \pbpol{\Cs}-cover of $F^\eta(p,h-1,m-1)$. Moreover, recall $p = (e,V)$ where $V$ is a $\sim_\Ds$-class. Clearly, we have $\eta^{-1}(p) \subseteq V$ by definition of $\eta$.

\begin{remark}
	There is a subtle but crucial difference with what we did for \pol{\Cs}. The  \pbpol{\Cs}-cover \Ub is {\bf not} obtained from induction. Instead, it is an arbitrary optimal \pbpol{\Cs}-cover of $F^\eta(p,h-1,m-1)$. We shall use induction later for a different purpose.
\end{remark}

Before, we use \Ub and $V$ together with Lemma~\ref{lem:inducidem} to build our \pbpol{\Cs}-cover $\Kb_I$ of $F^\eta_I(p,h,m)$, let us present two important properties of these objects that we shall need later. Both are obtained by induction in Proposition~\ref{prop:pbp:level2}. The first one involves \Ub only.

\begin{lemma} \label{lem:thesimpleone}
	For any $n \geq 1$, $U_1,\dots,U_n \in \Ub$, we have $(e,\rho(U_1 \cdots U_n)) \in S$.
\end{lemma}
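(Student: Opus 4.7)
The plan is to exploit the optimality of $\Ub$ together with the outer induction driving Proposition~\ref{prop:pbp:level2}. First, I would reduce to proving the case $n = 1$: namely, that for every single $U \in \Ub$, the pair $(e,\rho(U))$ lies in $S$. The general case then follows immediately by iterating closure of $S$ under multiplication (a clause of $S$ being \pbpol{\Cs}-saturated), using that $e$ is idempotent so that $(e,\rho(U_1 \cdots U_n)) = (e^n, \rho(U_1) \cdots \rho(U_n)) \in S$.

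For the case $n = 1$, the key observation is that the pair $(h-1, m-1)$ is strictly smaller than $(h,m)$ in the lexicographic ordering used for the outer induction (with $m$ as the primary parameter), since in this subsection we have assumed $m \geq 1$. Applying the induction hypothesis of Proposition~\ref{prop:pbp:level2} at these parameters, with the same $p = (e,V)$, yields a \pbpol{\Cs}-cover $\Kb'$ of $F^\eta(p,h-1,m-1)$ satisfying $(e, \rho(K)) \in S$ for every $K \in \Kb'$.

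Now I would invoke the optimality of $\Ub$. By definition, $\Ub$ is an optimal \pbpol{\Cs}-cover of $F^\eta(p,h-1,m-1)$ for $\rho$, so $\prin{\rho}{\Ub} = \opti{\pbpol{\Cs}}{F^\eta(p,h-1,m-1),\rho} \subseteq \prin{\rho}{\Kb'}$. Hence for every $U \in \Ub$, there exists $K \in \Kb'$ with $\rho(U) \leq \rho(K)$. Combining $(e, \rho(K)) \in S$ with closure of $S$ under downset, we obtain $(e,\rho(U)) \in S$, completing the case $n=1$ and thus the proof.

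There is no genuine obstacle in this lemma: the real content of Proposition~\ref{prop:pbp:level2} lies in the subsequent steps of the idempotent case, where this lemma is merely a bookkeeping device enabling the application of Lemma~\ref{lem:inducidem}. The only point requiring care is verifying that the inductive parameters are strictly smaller than the current ones, which is precisely the reason for carrying the idempotent height $m$ as a separate induction variable alongside $h$.
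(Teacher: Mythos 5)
Your proof is correct and follows the paper's argument essentially verbatim: reduce to $n=1$ via closure under multiplication and idempotence of $e$, then use the induction hypothesis of Proposition~\ref{prop:pbp:level2} at $(h-1,m-1)$ to get a cover $\Kb'$, compare against $\Ub$ via optimality ($\prin{\rho}{\Ub} \subseteq \prin{\rho}{\Kb'}$), and finish with closure under downset. The only cosmetic difference is that the paper names the inductively-obtained cover $\Ub'$ where you call it $\Kb'$, and the paper remarks that the choice of induction parameter is immaterial here since both $h$ and $m$ strictly decrease, consistent with your observation.
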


\begin{proof}
	We start by handling the special case when $n = 1$ and then use it to treat the general case. Consider $U \in \Ub$. We show that $(e,\rho(U)) \in S$. Using induction in Proposition~\ref{prop:pbp:level2} (the induction parameter that we use here is unimportant since both $h$ and $m$ have decreased), we get a \pbpol{\Cs}-cover $\Ub'$ of $F^\eta(p,h-1,m-1)$ such that $(e,\rho(U')) \in S$ for any $U' \in \Ub'$. Since \Ub is an optimal \pbpol{\Cs}-cover of $F^\eta(p,h-1,m-1)$ by definition, we have,
	\[
	\prin{\rho}{\Ub} \subseteq \prin{\rho}{\Ub'}
	\]
	Moreover, since $U \in \Ub$, we have $\rho(U) \in \prin{\rho}{\Ub}$ and therefore, $\rho(U) \in \prin{\rho}{\Ub'}$ by the above inclusion. This yields $U' \in \Ub'$ such that $\rho(U) \leq \rho(U')$. Finally, by definition of $\Ub'$, we have $(e,\rho(U')) \in S$. Thus, since $S$ is \pbpol{\Cs}-saturated and therefore closed under downset, this yields $(e,\rho(U)) \in S$ as desired.
	
	\medskip
	
	It remains to treat the general case. We consider $U_1,\dots,U_n \in \Ub$. By the special case treated above, we already know that for all $i \leq n$, we have $(e,\rho(U_i)) \in S$. Therefore, since $S$ is \pbpol{\Cs}-saturated and $e$ is idempotent, we obtain from closure under multiplication that,
	\[
	(e,\rho(U_1 \cdots U_n)) = (e^n,\rho(U_1) \cdots \rho(U_n))  \in S 
	\]
	This concludes the proof of Lemma~\ref{lem:thesimpleone}.
\end{proof}

Moreover, we have the following additional property of \Ub and $V$ which is where we need induction on the idempotent height $m$ in Proposition~\ref{prop:pbp:level2} and the machinery introduced in Step~1 and~2 (i.e. the \mratm $\gamma$ and Lemma~\ref{lem:pbp:level1}). Moreover, this is also where $\rho$ being \nice is important. 

\begin{lemma} \label{lem:themainthing}
	Let $U_1,\dots,U_n \in \Ub$ such that $\rho(U_1 \cdots U_n)$ is an idempotent $f$ of $R$.  We have,
	\[
	(e,f \cdot \rho(V) \cdot f) \in S
	\]
\end{lemma}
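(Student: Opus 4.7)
The plan is to establish $(e, f \rho(V) f) \in S$ by applying the \pbpol{\Cs}-closure rule of the saturated pair $(S,\Ts)$ via Lemma~\ref{lem:pbp:level1}, after placing the idempotent pair $(e,f)$ inside $\gamma(V)$; the \nice-ness of $\rho$ is then used to reassemble $\rho(V) = \rho(v_1) + \cdots + \rho(v_k)$ from single-word contributions produced by closure.

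First, I would construct a good witness language $G$ using induction on the idempotent height $m$ in Proposition~\ref{prop:pbp:level2}. Applying the induction hypothesis at idempotent height $m-1$ and height $h+n$ yields, for each $p' \in P$, a \pbpol{\Cs}-cover $\Kb_{p'}$ of $F^\eta(p', h+n, m-1)$ whose members $K$ all satisfy $(\pi_1(p'), \rho(K)) \in S$. I define $G$ to be the language of all infixes of words in $F^\eta(p, h-1, m-1)^n$. By construction $G$ is closed under taking infixes, and Proposition~\ref{prop:idheight} yields $G \subseteq \bigcup_{p' \in P} F^\eta(p', h+n, m-1)$. For each $s \in M$, the finite union $\bigcup_W \Kb_{(s,W)}$ (over $\sim_\Ds$-classes $W$) is then a \pbpol{\Cs}-cover of $G \cap \alpha^{-1}(s)$ whose members $K$ all satisfy $(s, \rho(K)) \in S$; downset closure of $S$ gives $\tau(G) \subseteq S$, so $G$ is good.

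Next I would place $(e,f)$ in $\gamma(V)$. Because $p = (e, V)$ is idempotent in $P$, chaining $n-1$ binary nodes (each labeled $p$) above any $w_1, \ldots, w_n \in F^\eta(p, h-1, m-1)$ produces an $\eta$-factorization forest for $w_1 \cdots w_n$ of height at most $h + n - 2$ and idempotent height still $m-1$, so $F^\eta(p, h-1, m-1)^n \subseteq G$. Idempotency of $V$ in ${A^*}/{\sim_\Ds}$ and of $e$ in $M$ also gives $F^\eta(p, h-1, m-1)^n \subseteq V \cap \alpha^{-1}(e)$. Optimality of $\Ub$ places each $\rho(U_i)$ in $\opti{\pbpol{\Cs}}{F^\eta(p, h-1, m-1), \rho}$, so iterating Lemma~\ref{lem:closmult} yields $f = \prod_i \rho(U_i) \in \opti{\pbpol{\Cs}}{F^\eta(p, h-1, m-1)^n, \rho}$; Fact~\ref{fct:inclus2} promotes this to $f \in \opti{\pbpol{\Cs}}{V \cap G \cap \alpha^{-1}(e), \rho}$, whence $(e, f) \in \tau(V \cap G) \subseteq \gamma(V)$.

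Finally I would conclude via \pbpol{\Cs}-closure and multiplication in $S$. Since $\rho$ is \nice, fix $v_1, \ldots, v_k \in V$ with $\rho(V) = \rho(v_1) + \cdots + \rho(v_k)$. Lemma~\ref{lem:pbp:level1} produces $(\rho(v_i), \gamma(V)) \in \Ts$ for each $i$, and \pbpol{\Cs}-closure applied to the idempotent $(e,f) \in \gamma(V)$ yields $(e, f(\rho(v_i) + \rho(\varepsilon)) f) \in S$. Multiplying these $k$ elements inside $S$ and simplifying using $e^k = e$, $f^2 = f$, and the identities $\rho(\varepsilon) r = r \rho(\varepsilon) = r$ valid on the image of $\rho$, the resulting product in the right coordinate expands by distributivity into a sum that in particular contains each $f \rho(v_i) f$ (select $\rho(v_i)$ in the $i$-th factor and $\rho(\varepsilon)$ elsewhere). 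The product therefore dominates $\sum_i f \rho(v_i) f = f \rho(V) f$, and downset closure of $S$ finally gives $(e, f \rho(V) f) \in S$. The main obstacle lies in the middle step: verifying that the candidate $G$ is both closed under infixes and satisfies $\tau(G) \subseteq S$, and that $f$ really belongs to the optimal imprint of $V \cap G \cap \alpha^{-1}(e)$; this is where the induction on $m$, Proposition~\ref{prop:idheight}, the optimality of $\Ub$, and Lemma~\ref{lem:closmult} all come together.
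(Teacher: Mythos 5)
Your proposal is correct and follows essentially the same strategy as the paper's proof: construct a good witness $G$ via induction on the idempotent height and Proposition~\ref{prop:idheight}, show $(e,f)\in\tau(V\cap G)\subseteq\gamma(V)$, then invoke niceness of $\rho$, Lemma~\ref{lem:pbp:level1}, and \pbpol{\Cs}-closure followed by multiplication and downset. The only cosmetic differences are your choice of $G$ as the infixes of $F^\eta(p,h-1,m-1)^n$ (a subset of the paper's choice, the infixes of $F^\eta(p,h+n-2,m-1)$, which works just as well) and your direct use of Lemma~\ref{lem:closmult} together with Fact~\ref{fct:inclus2} in place of the paper's appeal to the morphism property of $\tau$; both routes are equivalent.
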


The proof of Lemma~\ref{lem:themainthing} is quite involved. Thus, we postpone it to the end of our argument.  Let us first use \Ub and $V$ to build our \pbpol{\Cs}-cover of $F_I^\eta(p,h,m)$ and finish the proof of Lemma~\ref{lem:pbp:idem}.

\medskip

We apply Lemma~\ref{lem:inducidem} together with our \pbpol{\Cs}-cover \Ub of $F^\eta(p,h-1,m-1)$ and $V$ which contains $\eta^{-1}(p)$. This yields a cover $\Kb_I$ of $F_I^\eta(p,h,m)$ such that any $K \in \Kb_I$ is a concatenation $K = K_1 \cdots K_n$ where each $K_i$ is of one of the two following kinds:
\begin{enumerate}
	\item $K_i$ is a language in \Ub, or,
	\item $K_i = U_1 \cdots U_\ell V U'_1 \cdots U'_{\ell'}$ where $U_1,\cdots,U_\ell, U'_1,\dots,U'_{\ell'} \in \Ub$ and there exists an \emph{idempotent} $f \in R$ such that $\rho(U_1 \cdots U_\ell) = \rho(U'_1 \cdots U'_{\ell'}) = f$.
\end{enumerate}
Clearly, any $K \in \Kb_I$ belongs to \pbpol{\Cs}. Indeed, $K$ is a concatenation of languages in \pbpol{\Cs} by definition ($V \in \pbpol{\Cs}$ by Fact~\ref{fct:pbp:cinpbp} since it is $\sim_\Ds$-class) and \pbpol{\Cs} is closed under concatenation by Theorem~\ref{thm:pclos}. Thus, $\Kb_I$ is a \pbpol{\Cs}-cover of $F^\eta_I(p,h,m)$.

It remains to prove that for any $K \in \Kb_I$, we have $(e,\rho(K)) \in S$. This is where we use Lemmas~\ref{lem:thesimpleone} and~\ref{lem:themainthing}. Let $K \in \Kb_I$, by definition we have  $K = K_1 \cdots K_n$ where all languages $K_i$ are as described in the two items above. Observe that for all $i \leq n$, we have $(e,\rho(K_i)) \in S$. If $K_i$ is as described in the first item ($K_i \in \Ub$), this is by Lemma~\ref{lem:thesimpleone}. Otherwise, when $K_i$ is as described in the second item, this is by Lemma~\ref{lem:themainthing}. Therefore, since $S$ is \pbpol{\Cs}-saturated and $e$ is idempotent, we get from closure under multiplication that,
\[
(e,\rho(K)) = (e^n,\rho(K_1) \cdots \rho(K_n))  \in S 
\]
This concludes the proof of Lemma~\ref{lem:pbp:idem}. It remains to show Lemma~\ref{lem:themainthing}.

\subsection{Proof of Lemma~\ref{lem:themainthing}}

Let $U_1,\dots,U_n \in \Ub$ such that $\rho(U_1 \cdots U_n)$ is an idempotent $f$ of $R$. We have to show that,
\[
(e,f \cdot \rho(V) \cdot f) \in S
\]
This is where we use induction on the idempotent height $m$ and the preliminary results that we obtained in Steps~1 and~2 (specifically, we rely on the \mratm $\gamma$ and Lemma~\ref{lem:pbp:level1}). More precisely, proving that $(e,f \cdot \rho(V) \cdot f) \in S$ is achieved using \pbpol{\Cs}-closure. Applying this operation requires elements within the set \Ts such that $(S,\Ts)$ is \pbpol{\Cs}-saturated. In Lemma~\ref{lem:pbp:thefinalone} below, we use induction on $m$ to prove the existence of specific elements in $S$. This gives us more information on $\gamma$ (since its definition depends on $S$). Together with Lemma~\ref{lem:pbp:level1}, this information on $\gamma$ allows us to exhibit the right elements in \Ts to apply \pbpol{\Cs}-closure for proving that $(e,f \cdot \rho(V) \cdot f) \in S$.

\begin{lemma} \label{lem:pbp:thefinalone}
	We have $(e,f) \in \gamma(V)$.  
\end{lemma}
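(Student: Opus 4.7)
The plan is to exhibit a good language $G \subseteq A^*$ such that $(e,f) \in \tau(V \cap G)$; by the definition of $\gamma$, this immediately yields $(e,f) \in \gamma(V)$. The starting observation is that, since $\Ub$ is an optimal \pbpol{\Cs}-cover of $F^\eta(p,h-1,m-1)$, we have $\rho(U_i) \in \opti{\pbpol{\Cs}}{F^\eta(p,h-1,m-1),\rho}$ for every $i$. Iterating Lemma~\ref{lem:closmult} gives
\[
f = \rho(U_1)\cdots\rho(U_n) \in \opti{\pbpol{\Cs}}{F^\eta(p,h-1,m-1)^n,\rho}.
\]
It therefore suffices to enclose the concatenation language $F^\eta(p,h-1,m-1)^n$ within $V \cap G \cap \alpha\inv(e)$ for some good $G$: Fact~\ref{fct:inclus2} will then deliver $f \in \opti{\pbpol{\Cs}}{V \cap G \cap \alpha\inv(e),\rho}$, i.e.\ $(e,f) \in \tau(V \cap G)$.

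The choice of $G$ is as follows. Set $H = h - 1 + \lceil \log_2 n\rceil$ and let
\[
G = \{u \in A^* \mid u \text{ is an infix of some } w \in F^\eta(p,H,m-1)\}.
\]
By construction $G$ is closed under infixes. The inclusion $F^\eta(p,h-1,m-1)^n \subseteq G$ holds because given $w_1,\dots,w_n \in F^\eta(p,h-1,m-1)$ one combines their individual $\eta$-factorization forests using a balanced binary tree of $n$ leaves: binary nodes do not add idempotent layers, so the resulting forest for $w_1\cdots w_n$ has height at most $H$ and idempotent height at most $m-1$. Since $p$ is idempotent in $P$ we have $p^n = p$, hence $w_1\cdots w_n \in F^\eta(p,H,m-1) \subseteq G$. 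Using once more that $p = (e,V)$ and that $V$ and $e$ are idempotents in their respective monoids, we also obtain $F^\eta(p,h-1,m-1)^n \subseteq V \cap \alpha\inv(e)$, so all three required inclusions hold simultaneously.

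The main obstacle is to verify the second condition in the definition of a good language, namely $\tau(G) \subseteq S$; this is where the outer induction on the idempotent height $m$ in Proposition~\ref{prop:pbp:level2} enters. By Proposition~\ref{prop:idheight}, any infix of a word admitting an $\eta$-factorization forest of height at most $H$ and idempotent height at most $m-1$ itself admits an $\eta$-factorization forest of height at most $H+2$ and idempotent height at most $m-1$. Hence
\[
G \subseteq \bigcup_{q \in P} F^\eta(q,H+2,m-1),
\]
a \emph{finite} union since $P$ is finite. Applying the induction hypothesis of Proposition~\ref{prop:pbp:level2} at the parameters $(H+2,m-1)$, which is valid because the idempotent height has strictly decreased, we obtain for each $q = (s',W') \in P$ a \pbpol{\Cs}-cover $\Kb_q$ of $F^\eta(q,H+2,m-1)$ with $(s',\rho(K)) \in S$ for every $K \in \Kb_q$. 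For each $s \in M$, the union $\bigcup_{q : \pi_1(q)=s} \Kb_q$ is then a \pbpol{\Cs}-cover of $G \cap \alpha\inv(s)$, and closure of $S$ under downset yields $\opti{\pbpol{\Cs}}{G \cap \alpha\inv(s),\rho} \subseteq \{r \mid (s,r) \in S\}$; that is, $\tau(G) \subseteq S$. Thus $G$ is good, and the proof is complete.
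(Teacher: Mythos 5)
Your argument is correct and follows essentially the same route as the paper: define $G$ as the infix closure of a set $F^\eta(p,\,\cdot\,,m-1)$ of bounded forest height, show $F^\eta(p,h-1,m-1)^n \subseteq V\cap G\cap\alpha^{-1}(e)$ to get $(e,f)\in\tau(V\cap G)$, and establish goodness of $G$ via Proposition~\ref{prop:idheight} together with the induction on the idempotent height $m$ in Proposition~\ref{prop:pbp:level2}. The only minor deviations are a slightly tighter height bound $H=h-1+\lceil\log_2 n\rceil$ from balancing the binary combination (the paper uses $h+n-2$ from a linear chain) and phrasing the step $(e,f)\in\tau(V\cap G)$ via Lemma~\ref{lem:closmult} and Fact~\ref{fct:inclus2} rather than directly through the morphism property of $\tau$; neither affects the substance.
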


Before we prove Lemma~\ref{lem:pbp:thefinalone}, let us show that $(e,f \cdot \rho(V) \cdot f) \in S$ using \pbpol{\Cs}-closure. This is where we use the hypothesis that $\rho$ is {\bf \nice}. It implies that we have words $v_1,\dots,v_\ell \in V$ such that, $\rho(v_1) + \cdots + \rho(v_\ell) = \rho(V)$.

For all $i \leq \ell$, we let $r_i = \rho(v_i)$. Consequently, we have $\rho(V) = r_1 + \cdots + r_\ell$. Since all words $v_i$ belong to the $\sim_{\Ds}$-class $V$ by definition, it follows from Lemma~\ref{lem:pbp:level1} that for all $i \leq \ell$, we have,
\[
(r_i,\gamma(V)) \in \Ts
\]
Moreover, we have $(e,f) \in \gamma(V)$ by Lemma~\ref{lem:pbp:thefinalone}. Therefore, we may use the hypothesis that $(S,\Ts)$ is \pbpol{\Cs}-saturated to apply \pbpol{\Cs}-closure (i.e. Operation~\eqref{op:pbp:lower}) for each $i \leq \ell$ which yields that,
\[
(e,f \cdot (r_i + \rho(\varepsilon)) \cdot f) \in S
\]
Since $e$ is idempotent, we may now use closure under multiplication to obtain,
\[
\left(e,\prod_{1 \leq i \leq \ell}f \cdot (r_i + \rho(\varepsilon)) \cdot f\right) \in S
\]
Recall that by hypothesis, we have $f = \rho(U_1 \dots U_n)$. Thus, we have $\rho(\varepsilon) \cdot f = f \cdot \rho(\varepsilon) = f$. We may now distribute the multiplication in the right component to obtain that,
\[
f \cdot \left(r_1 + \cdots +r_\ell\right) \cdot f \leq \prod_{1 \leq i \leq \ell}\left(f \cdot (r_i + \rho(\varepsilon)) \cdot f\right)
\]
Since $r_1 + \cdots + r_\ell = \rho(V)$, the above can be rewritten as follows,
\[
f \cdot \rho(V) \cdot f \leq \prod_{1 \leq i \leq \ell}\left(f \cdot (r_i + \rho(\varepsilon)) \cdot f\right)
\]
Combined with closure under downset, this yields as desired that,
\[
(e,f \cdot \rho(V) \cdot f) \in S
\]
This concludes the proof of Lemma~\ref{lem:themainthing}. It remains to prove Lemma~\ref{lem:pbp:thefinalone}.

\begin{proof}[Proof of Lemma~\ref{lem:pbp:thefinalone}]	
	Our objective is to show that $(e,f) \in \gamma(V)$. Recall that by definition of $\gamma$, we have,
	\[
	\gamma(V) = \{(s,r) \mid \text{$(s,r) \in \tau(V \cap G)$ for some good language $G$}\}
	\]
	Therefore, we must exhibit some good language $G \subseteq A^*$ such that $(e,f) \in \tau(V \cap G)$.
	
	Recall that $f = \rho(U_1 \cdots U_n)$ for $U_1,\dots,U_n \in \Ub$. We define $G$ as the language of all words which are an infix of some other word in $F^\eta(p,h+n-2,m-1)$. It now remains to show that $G$ is indeed good and that $(e,f) \in \tau(V \cap G)$. The two proofs are independent.
	
	\medskip
	\noindent
	{\bf First property: $G$ is good.} We have to show that $G$ is closed under infixes and $\tau(G) \subseteq S$. The former is immediate by definition of $G$. Therefore, we may concentrate on the latter: $\tau(G) \subseteq S$. This is where we need induction on the idempotent height $m$ in Proposition~\ref{prop:pbp:level2}. 
	
	By definition, all words in $G$ are an infix of some other word admitting an $\eta$-factorization forest of height at most $h+n-2$ and of idempotent height at most $m-1$. Therefore, it follows from Proposition~\ref{prop:idheight} that all words in $G$ admit an $\eta$-factorization forest of height at most $h+n$ and idempotent height at most $m-1$. Thus, we have,
	\[
	G \subseteq \bigcup_{q \in P} F^\eta(q,h+n,m-1)
	\]
	Since $\tau$ is a \mratm (whose evaluation hemiring $2^{M \times R}$ uses union as addition), it follows that,
	\[
	\tau(G) \subseteq \bigcup_{q \in P} \tau(F^\eta(q,h+n,m-1))
	\]
	Therefore, it suffices to show that for all $q \in P$, we have $\tau(F^\eta(q,h+n,m-1)) \subseteq S$. Let $q \in P$, we have $q = (t,W)$ with $t \in M$ and $W$ a $\sim_{\Ds}$-class. Consider $(s,r) \in \tau(F^\eta(q,h+n,m-1))$. We show that $(s,r) \in S$. By definition of $\tau$, we have,
	\[
	r \in \opti{\pbpol{\Cs}}{F^\eta(q,h+n,m-1) \cap \alpha\inv(s),\rho}
	\]
	This implies that $F^\eta(q,h+n,m-1) \cap \alpha\inv(s) \neq \emptyset$. Since $F^\eta(q,h+n,m-1) \subseteq \eta\inv(q)$ which is itself a subset of $\alpha\inv(t)$ by definition of $\eta$, this implies that $s = t$. Consequently, we have $F^\eta(q,h+n,m-1) \subseteq \alpha\inv(s)$. Altogether, we obtain,
	\[
	r \in \opti{\pbpol{\Cs}}{F^\eta(q,h+n,m-1),\rho}
	\]
	Using induction on the idempotent height $m$ in Proposition~\ref{prop:pbp:level2} (our most important parameter), we get a \pbpol{\Cs}-cover $\Kb'$ of $F^\eta(q,h+n,m-1)$ such that $(s,\rho(K'))\in S$ for any $K' \in \Kb'$ (recall that $q = (t,W)$ and $t = s$). Since $\Kb'$ is a \pbpol{\Cs}-cover of $F^\eta(q,h+n,m-1)$,
	\[
	\opti{\pbpol{\Cs}}{F^\eta(q,h+n,m-1),\rho} \subseteq \prin{\rho}{\Kb'}
	\]
	Therefore, we have $r \in \prin{\rho}{\Kb'}$ and we obtain $K' \in \Kb'$ such that $r \leq \rho(K')$. Finally, since $(s,\rho(K')) \in S$ by definition of $\Kb'$ and $S$ is \pbpol{\Cs}-saturated, closure under downset yields $(s,r) \in S$.

	\medskip
	\noindent
	{\bf Second property: $(e,f) \in \tau(V \cap G)$.} Recall that $f = \rho(U_1 \cdots U_n)$ where $U_1,\dots,U_n \in \Ub$. Moreover, by definition, \Ub is an optimal \pbpol{\Cs}-cover of $F^\eta(p,h-1,m-1)$ (for $\rho$). Therefore, we know that for all $i \leq n$, we have,
	\[
	\rho(U_i) \in \opti{\pbpol{\Cs}}{F^\eta(p,h-1,m-1),\rho}
	\]
	Moreover, since $p = (e,V)$, we have $F^\eta(p,h-1,m-1) \subseteq \eta\inv(p) \subseteq \alpha\inv(e)$ which means that $F^\eta(p,h-1,m-1) \cap \alpha\inv(e) = F^\eta(p,h-1,m-1)$. Thus, for all $i \leq n$, we have,
	\[
	\rho(U_i) \in \opti{\pbpol{\Cs}}{F^\eta(p,h-1,m-1) \cap \alpha\inv(e),\rho}
	\]
	By definition of $\tau$, this exactly says that for all $i \leq n$,
	\[
	(e,\rho(U_i)) \in \tau(F^\eta(p,h-1,m-1))
	\]
	Therefore, since $e$ is an idempotent of $M$ and $f = \rho(U_1 \cdots U_n)$, we may use the fact that $\tau$ is a morphism to get,
	\[
	(e,f) = (e^n,\rho(U_1 \cdots U_n)) \in \tau((F^\eta(p,h-1,m-1))^n)
	\]
	We now prove that $(F^\eta(p,h-1,m-1))^n \subseteq V \cap G$. By definition of \ratms, it will then follow that $\tau((F^\eta(p,h-1,m-1))^n) \subseteq \tau(V \cap G)$.  This implies as desired that $(e,f) \in \tau(V \cap G)$.
	
	Let $w \in (F^\eta(p,h-1,m-1))^n$. By definition, $w = w_1 \cdots w_n$ with $w_i \in F^\eta(p,h-1,m-1)$ for all $i$. In particular $\eta(w_i) = p$ for all $i$ and since $p$ is idempotent, this yields $\eta(w) = p = (e,V)$. Consequently, $\typ{w}{\Ds} = V$ by definition of $\eta$ and we get that $w \in V$. It remains to show that $w \in G$. Since all words $w_i$ belong to $F^\eta(p,h-1,m-1)$, they admit $\eta$-factorization forests of height at most $h-1$ and idempotent height at most $m-1$. Therefore $w = w_1 \cdots w_n$ admits an $\eta$-factorization forest of height at most $h-1+n-1$ and idempotent height at most $m-1$. Indeed, it suffices to combine the forests of the factors $w_i$ with binary nodes. This means that $w \in F^\eta(p,h+n-2,m-1)$. Hence, by definition of $G$, we have $w \in G$. This concludes the proof that $(e,f) \in \tau(V \cap G)$ and that of Lemma~\ref{lem:pbp:thefinalone}.
\end{proof}


\section{Conclusion}
\label{sec:conc}
We showed that the separation and covering problems are decidable for all classes of the form \pol{\Cs} and \pbpol{\Cs} where \Cs is an arbitrary finite \vari. As we explained, these results have important consequences for the quantifier alternation hierarchies within first-order logic overs words: we get that separation and covering are decidable for the level \siwt. This result may be lifted to the stronger logic \siws{3} with a transfer result of~\cite{pzsucc}. Finally, one also gets that the membership problem is decidable for \siw{4} and \siws{4} using a reduction theorem of~\cite{pzqalt}.

Let us point out that while this is not apparent in the above summary, we still know more about \pol{\Cs} than we do about \pbpol{\Cs}.  Indeed, our \pol{\Cs}-covering algorithm is based on a characterization of \emph{\pol{\Cs}-optimal pointed \imprints} which holds for {\bf any} \mratm. This result is stronger than the decidability of \pol{\Cs}-covering (one only needs to consider \nice \mratms for this). Moreover, the fact that we have this stronger result is precisely why we are able to handle the larger class \pbpol{\Cs}. On the other hand, while we also have a characterization of \pbpol{\Cs}-optimal pointed \imprints, it only holds for \nice \mratms. Therefore, a natural next move is trying to generalize this characterization to all \emph{all} \mratms. Indeed, such a result could be the key to solving covering for one more level in concatenation hierarchies.

Another interesting (and much simpler) objective is analyzing the complexity of the decision problems that we have just solved. Naturally, this depends on the finite \vari \Cs for both \pol{\Cs} and \pbpol{\Cs}. We leave this for further work.


\bibliographystyle{alpha}
\bibliography{main}

\end{document}